\numberwithin{equation}{section}
\newtheorem*{Haag's Theorem}{Haag's Theorem}  
\newtheorem{Theorem}{Theorem}[section]
\newtheorem{Proposition}[Theorem]{Proposition}
\newtheorem{Corollary}[Theorem]{Corollary}                     
\newtheorem{Claim}[Theorem]{Claim}
\newtheorem{Definition}[Theorem]{Definition}
\newcommand{\Al}{\mathcal{A}}    
\newcommand{\B}{\mathscr{B}}     
\newcommand{\C}{\mathbb{C}}     
\newcommand{\D}{\mathfrak{D}}    \newcommand{\De}{\mathscr{D}}  
\newcommand{\F}{\mathscr{F}}      \newcommand{\fl}{\overline{V}_{+}}     
\newcommand{\Hi}{\mathfrak{H}}    \newcommand{\Ha}{\mathcal{H}}  
\newcommand{\G}{\mathcal{G}}             
\newcommand{\Lo}{\mathscr{L}^{\uparrow}_{+}}       \newcommand{\La}{\mathcal{L}}
\newcommand{\Mi}{\mathbb{M}} 
\newcommand{\N}{\mathbb{N}}
\newcommand{\Op}{\mathcal{O}}    \newcommand{\ol}{\overline}
\newcommand{\Po}{\mathscr{P}^{\uparrow}_{+}}
\newcommand{\Pd}{\mathcal{P}}
\newcommand{\R}{\mathbb{R}}       
\newcommand{\Sw}{\mathscr{S}}
\newcommand{\id}{\mbox{id}}        
\newcommand{\supp}{\mbox{\small supp}}     \newcommand{\Sum}{\mbox{$\sum$}}
\newcommand{\te}{\otimes}      \newcommand{\Tu}{\mathcal{T}}    \newcommand{\Ti}{\mathsf{T}}  
\newcommand{\U}{\mathsf{U}}    
\newcommand{\W}{\mathcal{W}}   \newcommand{\wt}[1]{\widetilde{#1}}  
\newcommand{\hs}[1]{\hspace{#1cm} }
\newcommand{\graph}[5]{\hspace{#1cm} \raisebox{#2cm}{\includegraphics[scale=#3]{#4.pdf}} \hspace{#5cm} }
\newcommand{\la}{\langle}   \newcommand{\ra}{\rangle}
\begin{document}

\title{Haag's theorem in renormalised quantum field theories}

\author{Lutz Klaczynski}  

\address{Department of Physics, Humboldt University Berlin, 12489 Berlin, Germany}
\email{klacz@mathematik.hu-berlin.de}
\date{\today}

\begin{abstract}
We review a package of no-go results in axiomatic quantum field theory with Haag's theorem at its centre. 
Since the concept of operator-valued distributions in this framework comes very close to what we believe canonical quantum fields are about, 
these results are of consequence to quantum field theory: they suggest the seeming absurdity that this highly victorious theory is 
incapable of describing interactions. We single out unitarity of the interaction picture's intertwiner as the most salient provision 
of Haag's theorem and critique canonical perturbation theory to argue 
that renormalisation bypasses Haag's theorem by violating this very assumption. 
\end{abstract}

\maketitle

\tableofcontents

\section{Introduction}\label{sec:Intro}                                                              

Quantum field theory (QFT) is undoubtedly one of the most successful physical theories. Besides the often 
cited extraordinary precision with which the anomalous magnetic moment of the electron had been computed 
in quantum electrodynamics (QED), this framework enabled theorists to \emph{predict the existence 
of hitherto unknown particles}. 

As Dirac was trying to make sense of the negative energy solutions of the 
equation which was later named after him, he proposed the existence of a positively charged 
electron \cite{Schw94}. 
This particle, nowadays known as the positron, presents an early example of a so-called \emph{antiparticle}. 
It is fair to say that it was the formalism he was playing with that \emph{led} him to think of 
such entities. And here we have theoretical physics at its best: the formulae under investigation only make 
sense provided an entity so-and-so exists. Here are Dirac's words:

\begin{quotation}
 ''A hole, if there were one, would be a new kind of particle, unknown to experimental physics, having the same mass and opposite
 charge to the electron. We may call such a particle an anti-electron. We should not expect to find any of them in nature, on account
 of their rapid rate of recombination with electrons, but if they could be produced experimentally in high vacuum, they would be quite stable
 and amenable to observations''\cite{Dir31}.
\end{quotation}

Of course, the positron was not the only particle to be predicted by quantum field theory. W and Z bosons,
ie the carrier particles of the weak force, both bottom and top quark and probably 
also the Higgs particle are all examples of matter particles whose existence was in some sense 
necessitated by theory prior to their discovery.  

Yet canonical QFT presents itself as a stupendous and intricate jigsaw puzzle. While some massive chunks 
are for themselves coherent, we shall see that some connecting pieces are still only tenuously locked, 
though simply taken for granted by many practising physicists, both of phenomenological \emph{and}
of theoretical creed.

\subsection{Constructive and axiomatic quantum field theory}

In the light of this success, it seems ironic that so far physically realistic quantum field 
theories like the standard model (SM) and its subtheories quantum electrodynamics (QED) and quantum chromodynamics (QCD) all 
defy a mathematically rigorous description \cite{Su12}. 

Take QED. While gauge transformations are classically well-understood 
as representations of a unitary group acting on sections of a principle bundle \cite{Blee81}, it is not 
entirely clear what becomes of them once the theory is quantised \cite{StroWi74,Stro13}. However, 
Wightman and G\aa rding have shown that the quantisation of the free electromagnetic field due to Gupta and Bleuler 
is mathematically consistent in the context of Krein spaces (see \cite{Stro13} and references there, p.156).

Drawing on the review article \cite{Su12}, we make the following observations as to what the state of affairs
broadly speaking currently is. 

\begin{itemize}
 \item \textsc{First}, all approaches to construct quantum field models in a way seen as mathematically sound and rigorous employ methods from operator 
       theory and stochastic analysis, the latter only in the Euclidean case.  
\end{itemize}
This is certainly natural given the corresponding heuristically very
successful notions used in Lagrangian quantum field theory and the formalism of functional integrals. 

These endeavours are widely known under the label \emph{constructive quantum field theory}, where a common objective of those approaches was to obtain 
a theory of quantum fields with some reasonable properties. \emph{Axiomatic quantum field theory} refined these properties further to a system 
of axioms. Several more or less equivalent such axiomatic systems have been proposed, the most 
prominent of which are: 
\begin{enumerate}
 \item the so-called Wightman axioms \cite{StreatWi00,Streat75},
 \item their Euclidean counterparts Osterwalder-Schrader axioms \cite{OSchra73, Stro13} and
 \item a system of axioms due to Araki, Haag and Kastler \cite{HaKa64,Ha96,Stro13}.
\end{enumerate}
These axioms were enunciated in an attempt to clarify and discern what a 
quantum field theory should or could reasonably be. 

In contrast to this, the proponents of the somewhat 
idiosyncratic school of \emph{axiomatic S-matrix theory} tried to discard the notion of quantum fields all together by setting axioms for 
the S-matrix \cite{Sta62}. 
However, it lost traction when it was trumped by QCD in describing
the strong interaction and later merged into the toolshed of string theory \cite{Ri14}. 

\begin{itemize}
 \item \textsc{Second}, efforts were made in two directions. In the constructive approach, models were built and then proven to conform with 
 these axioms \cite{GliJaf68,GliJaf70}, whereas on the axiomatic side, the general properties of quantum fields defined in such a way were investigated 
 under the proviso that they exist.
\end{itemize}
Among the achievements of the axiomatic community are rigorous proofs of the PCT and also the spin-statistics theorem \cite{StreatWi00}.

\begin{itemize}
 \item \textsc{Third}, within the constructive framework, the first attempts started with superrenormalisable QFTs to stay clear of ultraviolet (UV) 
       divergences. 
\end{itemize}
The emerging problems with these models had been resolved immediately:
the infinite volume divergences encountered there were cured by a finite number of subtractions, 
once the appropriate counterterms had been identified \cite{GliJaf68,GliJaf70}. 

\begin{itemize}
 \item \textsc{Fourth}, however, these problems exacerbated to serious and to this day unsurmountable obstructions as soon as the realm of renormalisable 
       theories was entered.
\end{itemize}
In the case of $(\phi^{4})_{d}$, the critical dimension
turned out to be $d=4$, that is, rigorous results were attained only for the cases $d=2,3$. The issue
there is that UV divergences cannot be defeated by a finite number of substractions. To our mind, it is 
their 'prolific' nature which lets these divergences preclude any nonperturbative treatment in the spirit 
of the constructive approach. For this introduction, suffice it to assert that a nonperturbative definition
of renormalisation for renormalisable fields is clearly beyond constructive methods of the above type. 

In particular, the fact that the regularised renormalisation $Z$ factors can only be expected to have asymptotic perturbation series is 
obviously not conducive to their rigorous treatment. Although formally appearing in nonperturbative
treatments as factors, they can a priori only be defined in terms of their perturbation series \cite{Ost86}. 

However, for completeness, we mention \cite{Schra76} in which a possible path towards the (in some sense 
implicit) construction of $(\phi^{4})_{4}$ in the context of the lattice approach was discussed.
As one might expect, the remaining problem was to prove the existence of the renormalised limit to the 
continuum theory.

\subsection{Haag's theorem and other triviality results} 

Around the beginning of the 1950s, soon after QED had been successfully laid out and heuristically shown to be renormalisable by its founding fathers \cite{Dys49b}, 
there was a small group of mathematical physicists who detected inconsistencies in its formulation. 

Their prime concern turned out to be the \emph{interaction picture} of a quantum field theory \cite{vHo52,Ha55}. In particular, Haag concluded that it cannot exist unless 
it is trivial, ie only describing a free theory. Rigorous proofs for these suspicions could at the time not be given for a simple reason: in order to 
prove that a mathematical object does not exist or that it can only have certain characteristics, one has to say and clarifiy what kind of mathematical 
thing \emph{it} actually is or what it is supposed to be.   

But the situation changed when QFT was put on an axiomatic footing by Wightman and collaborators who made a number of reasonable assumptions and proved
that the arguments put forward earlier against the interaction picture and \emph{Dyson's matrix} were well-founded \cite{WiHa57}. This result was then
called \emph{Haag('s) theorem}. It entails in particular that if a quantum field purports to be unitarily equivalent to a free field, it must be free 
itself.

Other important issues were the canonical (anti)commuation relations and the ill-definedness of quantum fields at sharp spacetime points. The ensuing 
decade brought to light a number of triviality results of the form ''If X is a QFT with properties so-and-so, then it is trivial'', where 'trivial'
comes in 3 types, with increasing strength: the quantum fields are free fields, identity operators or vanishing. We shall see examples of all three types
in this work.

The alternative formalism involving path integrals, although plagued by ill-definedness from the start \cite{AlHoMa08}, proved to be viable for lower 
spacetime dimensions in a Euclidean formulation \cite{GliJaf81}. Schrader showed that a variant of Haag's triviality verdict also emerges there, albeit 
in a somewhat less devastating form \cite{Schra74}.

\subsection{What to make of it}

But in the light of the above-mentioned success of quantum field theories, the question about what to make of it is unavoidable. 
People found different answers.

On the physics side, the no-go results were widely ignored (apart from a few exceptions) or misunderstood and belittled as mathematical footnotes to the 
success story that QFT surely is (there will be quotes in the main text). 
Only confirming this, the author had several conversations with practising theoretical physicists (young and middle-aged) who had never heard of Haag's theorem and 
pertinent results\footnote{We do not claim this to be representative, but we believe it is.}. 

On the mathematical physics side, the verdict was accepted and put down to the impossibility to implement relativistic quantum interactions in Fock space.
And indeed, without much mathematical expertise, the evidence is clear: the UV divergences encountered in perturbation theory leave no doubt that something
must be utterly wrong. Some were of the opinion that renormalisation only distracted the minds away from trying to find an appropriate new 
theory, as Buchholz and Haag paraphrase Heisenberg's view in \cite{BuHa00}. 

Our philosophical stance on this is that \emph{renormalised quantum field theory}, despite being a puzzle, provides us with 
\emph{peepholes} through which we are allowed to glimpse at least some parts of that 'true' theory. 
Moreover, renormalisation follows rules which have a neat underlying algebraic structure and are not those of a random whack-a-mole game.

There are a vast number of more or less viable attempts to give QFT a sound mathematical meaning. Neither did we have the space nor the expertise to do all
of them justice and include them here in our treatment. We have therefore chosen to direct our focus on the axiomatic approach: 
Haag's theorem was not just first formulated in this context \cite{StreatWi00} but it is, as we find, conceptually closer to the canonical Lagrangian theory than any 
other competing formalism. 

\subsection{Outline} 

This work is to a large extent a review and a collection of material scattered widely over the research literature. 
Apart from some rearrangements and minor changes, it comprises the first part of the author's PhD thesis \cite{Kla15}. 
Experts in axiomatic quantum field theory will find a compendium of the bits and pieces they (for the most part) 
already know, while all other readers will learn of some interesting aspects from axiomatic quantum field theory. 
The text consists roughly of 3 parts. 

Part \ref{part1} (§§\ref{sec:RepIssH} to \ref{sec:whatodo}) covers the salient material we got hold of concerning the history of Haag's theorem. 
§\ref{sec:RepIssH} takes the reader on a journey through the history of Haag's theorem and the preceding developments that led up to it, 
starting with what we call 'the representation issue'. 
In §\ref{sec:OthVersHT} we discuss some other versions of Haag's theorem that emerged in subsequent decades, while in §\ref{sec:EuHTHM} we describe 
an intriguing Euclidean variant showing how the triviality dictum is in fact circumvented. §\ref{sec:whatodo} has an account of the reactions from 
mathematical and theoretical physicists. It documents how Haag's theorem slowly slid into oblivion. 

Part \ref{part2} (§§\ref{sec:Fock} to \ref{sec:AxG}) explores what is wrong with the canonical formalism and has an introduction to the Wightman axioms.  
As is well-known, the canonical formalism is no mathematically coherent framework.
For, if one tries to translate its notions, in particular the idea of an \emph{operator field}, into the 
language of operator theory, mathematical inconsistencies arise. The general pattern of the 
associated problems is that whenever an object is overfraught with conditions, a canonical computation 
brings about nonsensical results, while a strict mathematical treatment finds that the object must be 
trivial to maintain well-definedness. §\ref{sec:Fock} expounds the intricasies of implementing interactions in Fock space and §\ref{sec:PowTHM} reviews
widely unknown negative results about the canonical (anti)commutation relations by Powers and Baumann which suggest that interacting fields cannot be 
expected to adhere to them. §\ref{sec:WaveRC} is a critical discussion of 
the wave-function renormalisation $Z$ factor's status as a nonperturbative object. §\ref{sec:CanQuant} surveys a triviality theorem by Wightman which serves as  
motivation for the introduction of the Wightman axioms, presented and discussed in §\ref{sec:WightAx}. The reader will then be equipped with the 
necessary notions to understand the details of the proof of Haag's theorem, presented and scrutinised in §\ref{sec:ProofHTHM}. 
However, Wightman's system of axioms is hardly capable of accommodating gauge theories, as the pertinent results presented in §\ref{sec:AxG} show. 

Finally, Part \ref{part3} (§§\ref{sec:GeMaLo} to \ref{sec:RCircH}) goes through the song and dance of how the highly successful 
canonical formalism manages to describe interactions by renormalisation. §\ref{sec:GeMaLo} discusses the Gell-Mann Low formula as the very 
identity that Haag's theorem is directly at odds with. Time ordering, a crucial feature of the Gell-Mann Low formula, does not allow to answer the
question as to whether the canonical commutation relations are satisfied. We have included an elementary treatment of this question in §\ref{sec:CCRQuest}.
Another problem incurred by time ordering are the divergences encountered in perturbation theory which we discuss in §\ref{sec:DivInt}. We then 
review how the canonical formalism resolves this issue in §\ref{sec:Ren} to prepare the ground for showing in §\ref{sec:RCircH} how Haag's theorem 
is essentially evaded: by an effective mass shift embodied by the self-energy. 

The appendix contains some proofs we deemed too technical for the main text, included here merely for the readers' convenience.

\part{Historical account}\label{part1}

\section{The representation issue \& Haag's theorem}\label{sec:RepIssH}                              
Contrary to what the theorem's name suggests, it is the result of a collective effort and not of a single author. Yet it was 
Haag who put out the seminal paper in which some of the mathematical problems of the canonical formalism 
were first circumscribed, in particular those associated to the interaction picture representation of 
quantum field theory (QFT). 

\subsection{Inequivalent representations}\label{sec:IneqRep}   

Right at the outset, when quantum mechanics came into being in the 1920s, there was what one may call 
the \emph{representation problem}. At the time, no one saw that the two competing formalisms - wave 
mechanics as developed by Schr\"odinger and matrix mechanics put forward by Heisenberg, Born
and Jordan - were in fact equivalent. 

Yet their proponents hardly appreciated each other's work. 
In 1926, Einstein wrote in a letter to Schr\"odinger, that he was convinced 
``that you have made a decisive advance with your quantum condition, just as I am equally convinced
that the Heisenberg-Born route is off the track''. Soon after, Schr\"odinger remarked in a note
to a paper that their route left him ``discouraged, if not repelled, 
by what appeared ... a rather difficult method ... defying visualisation'' \cite{Rue11}. And
Heisenberg told Pauli, ``the more I think of the physical
part of the Schr\"odinger theory, the more detestable I find it. What Schr\"odinger writes 
about visualisation makes scarcely any sense, in other words I think it is shit\footnote{ ''Ich finde
es Mist``. (see \cite{Strau01})}.''\cite{Rue11}

\subsection*{Canonical commutation relations} 
However, both formalisms had something in common: they were dealing with an algebra generated by 
operators $\{ Q_{1},...,Q_{n}\}$ and $\{ P_{1}, ..., P_{n} \}$ corresponding
to the canonical position and momentum variables of Hamiltonian mechanics, which satisfy the 
\emph{canonical commutation rules} (or \emph{relations})
\begin{equation}\label{QMCCR}
[Q_{j},Q_{l}]=0 = [P_{j},P_{l}] \ ,  \hs{1}  [Q_{j},P_{l}]= i \delta_{jl}  \ \hs{2} \mbox{(CCR)}
\end{equation}
for all $j,l \in \{1,...,n\}$ on a Hilbert space $\Hi$. In Heisenberg's matrix mechanics, these objects
are matrices with infinitely many entries, whereas in Schr\"odinger's wave mechanics they are represented
by the two operators
\begin{align}\label{SchRep}
 (Q_{j}\psi)(x) = x_{j} \psi(x) \ , \hs{2} (P_{l} \psi)(x) = -i \partial_{l} \psi(x) 
\end{align}
which act on square-integrable wavefunctions $\psi \in L^{2}(\R^{n})=\Hi$ (see any textbook on quantum 
mechanics, eg \cite{Strau13}). 

\subsection*{Stone-von Neumann theorem} The dispute over which theory was the right one was settled 
when von Neumann took the cues given to him by Stone and proved in 1931
that both formulations of quantum mechanics are \emph{equivalent} in the sense that 
both are \emph{unitarily equivalent representations of the canonical commutation rules} (\ref{QMCCR}) 
if their exponentiations
\begin{align}
 U(a)=\exp(ia \cdot P) \ ,  \hs{2} V(b)=\exp(ib \cdot Q)    \hs{1} a,b \in \R^{n}
\end{align}
are so-called \emph{Weyl unitaries} \cite{vNeu31}. \index{Weyl unitaries} 
In the case of the Schr\"odinger representation, these Weyl unitaries are given by the two families of 
bounded operators defined through
\begin{align}
 (U_{S}(a)\psi)(x) = \psi(x+a) \hs{2} (V_{S}(b)\psi)(x) = e^{i b \cdot x} \psi(x) 
\end{align}
for $\psi \in L^{2}(\R^{n})$. The CCR (\ref{QMCCR}) now take what is called the \emph{Weyl form of the CCR},
\begin{equation}
  U_{S}(a)V_{S}(b)=e^{i a \cdot b} V_{S}(b)U_{S}(a) \ \hs{2} \mbox{(Weyl CCR)} .
\end{equation}
The Stone-von Neumann theorem makes the assertion that all Weyl unitaries conforming with these relations
are unitarily equivalent to a finite direct sum of Schr\"odinger representations:  

\begin{Theorem}[Stone-von Neumann]\label{SvN}
 Let $\{ U(a): a \in \R^{n} \}$ and $\{ V(b): b \in \R^{n} \}$ be irreducible Weyl unitaries 
 on a separable Hilbert space $\Hi$, ie two weakly continuous families of unitary operators such that 
 $U(a)U(b)=U(a+b)$, $V(a)V(b)=V(a+b)$ and 
\begin{align}\label{WCCR}
 U(a)V(b)=e^{i a \cdot b} V(b)U(a)  \hs{2} \mbox{(Weyl CCR)}
\end{align}
for all $a,b \in \R^{n}$. Then there is a Hilbert space isomorphism $W \colon \Hi \rightarrow L^{2}(\R^{n})$
such that
\begin{align}
 WU(a)W^{-1} = U_{S}(a) \hs{1} WV(a)W^{-1}=V_{S}(a),
\end{align}
where $U_{S}$ and $V_{S}$ are the Schr\"odinger representation Weyl unitaries. If the above Weyl unitaries in (\ref{WCCR})
are reducible, then each irreducible subrepresentation is unitarily equivalent to the Schr\"odinger 
representation.
\end{Theorem}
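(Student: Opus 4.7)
My plan follows von Neumann's original idea: construct in the abstract representation $(U,V)$ a distinguished ``vacuum'' vector $\Omega'\in\Hi$ playing the role of the Gaussian ground state $\Omega(x)=\pi^{-n/4}e^{-|x|^{2}/2}$ of the Schr\"odinger representation, and then define the intertwiner $W$ on the linear span of the orbit $\{U(a)V(b)\Omega': a,b\in\R^{n}\}$ by $W\colon U(a)V(b)\Omega'\mapsto U_{S}(a)V_{S}(b)\Omega$. The Weyl CCR (\ref{WCCR}) together with a good enough characterisation of $\Omega'$ will force the inner products $\la U(a)V(b)\Omega', U(a')V(b')\Omega'\ra$ to agree numerically with their Schr\"odinger counterparts, so that $W$ extends automatically to an isometry; irreducibility will then give density of the orbit and hence surjectivity.

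\textbf{The vacuum projector.} To produce $\Omega'$ I would form the weakly defined Gaussian-smeared operator
$$ E \;=\; \frac{1}{(2\pi)^{n}}\int_{\R^{2n}} e^{-(|a|^{2}+|b|^{2})/4}\, U(a)V(b)\, da\, db,$$
the integral making sense because of the postulated weak continuity of $U$ and $V$. Reordering products of the form $U(a)V(b)U(a')V(b')$ via (\ref{WCCR}) and carrying out the Gaussian integrals, a direct computation yields $E^{*}=E$, $E^{2}=E$, and more sharply the multiplication rule $E\,U(a)V(b)\,E = c(a,b)\,E$ with an explicit Gaussian weight $c(a,b)$. Thus $E$ is an orthogonal projection, and on any nonzero vector $\Omega'\in\text{ran}(E)$ of unit norm it acts as the rank-one projection $\la\Omega',\cdot\ra\Omega'$.

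\textbf{Non-vanishing and the isometry.} Non-triviality $E\neq 0$ I would deduce from the coherent-state resolution of the identity
$$ \frac{1}{(2\pi)^{n}}\int_{\R^{2n}} U(a)V(b)\,E\,V(b)^{*}U(a)^{*}\, da\, db \;=\; \id_{\Hi},$$
itself verified purely algebraically from the Weyl CCR and the identity $E\,U(a)V(b)\,E=c(a,b)\,E$; were $E$ to vanish, so would the identity. Fixing a unit vector $\Omega'\in\text{ran}(E)$, the matrix element $\la U(a)V(b)\Omega',U(a')V(b')\Omega'\ra$ reduces, by Weyl CCR and the multiplication rule, to a universal function of $(a,b,a',b')$ alone; the identical computation in the Schr\"odinger representation with $\Omega$ in place of $\Omega'$ yields the same number, so the prescription $W\colon U(a)V(b)\Omega'\mapsto U_{S}(a)V_{S}(b)\Omega$ is well-defined and inner-product preserving on the dense span $\text{span}\{U(a)V(b)\Omega'\}$.

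\textbf{Conclusion and reducible case.} By irreducibility this span is dense in $\Hi$, so $W$ extends uniquely to an isometry $\Hi\to L^{2}(\R^{n})$; its image is closed and invariant under $U_{S}(a)$ and $V_{S}(b)$, hence coincides with $L^{2}(\R^{n})$ by the separately verified irreducibility of the Schr\"odinger representation, and the intertwining relations $WU(a)W^{-1}=U_{S}(a)$, $WV(b)W^{-1}=V_{S}(b)$ hold on the dense orbit and thus everywhere. If $(U,V)$ is instead reducible, separability of $\Hi$ lets one iterate: pick successive unit vectors $\Omega'_{k}\in\text{ran}(E)$ orthogonal to the cyclic subspaces already constructed to obtain an at most countable decomposition $\Hi=\bigoplus_{k}\Hi_{k}$ with each $\Hi_{k}$ unitarily equivalent to the Schr\"odinger representation by the irreducible case. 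The principal technical obstacle is the algebraic bookkeeping behind $E^{2}=E$ and the multiplication rule $E\,U(a)V(b)\,E=c(a,b)\,E$: both rest on repeated application of the Weyl relations together with Gaussian moment integrals, and coaxing the normalising constants into mutual consistency is where the proof is most apt to stumble.
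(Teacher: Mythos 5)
The paper offers no in-house proof of Theorem \ref{SvN}; it simply cites von Neumann's 1931 paper and Emch's monograph, and your proposal is in essence exactly von Neumann's original argument (Gaussian-smeared Weyl operators producing a rank-one ``vacuum'' projection, then an intertwiner defined on the orbit of a vector in its range). So in spirit you have reproduced the cited proof rather than found a different route.

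Two points in your sketch do need repair, and they are more than normalisation bookkeeping. First, with the paper's convention $U(a)V(b)=e^{ia\cdot b}V(b)U(a)$, the operator $E=\tfrac{1}{(2\pi)^{n}}\int e^{-(|a|^{2}+|b|^{2})/4}\,U(a)V(b)\,da\,db$ as you wrote it is \emph{not} self-adjoint: one computes $\bigl(U(a)V(b)\bigr)^{\dagger}=e^{-ia\cdot b}\,U(-a)V(-b)$, so $E^{\dagger}=\tfrac{1}{(2\pi)^{n}}\int e^{-(|a|^{2}+|b|^{2})/4}e^{-ia\cdot b}\,U(a)V(b)\,da\,db\neq E$. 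You must build $E$ from the symmetrised Weyl operators $S(a,b):=e^{-ia\cdot b/2}U(a)V(b)$, which satisfy $S(a,b)^{\dagger}=S(-a,-b)$ and the twisted product rule $S(a,b)S(a',b')=e^{\frac{i}{2}(a\cdot b'-a'\cdot b)}S(a+a',b+b')$ (up to sign conventions); only then do $E^{\dagger}=E$, $E^{2}=E$ and $E\,S(a,b)\,E=e^{-(|a|^{2}+|b|^{2})/4}E$ come out. Second, your non-vanishing argument via the ``coherent-state resolution of the identity'' is not purely algebraic: the integral $\int S(a,b)\,E\,S(a,b)^{\dagger}\,da\,db$ is an operator-valued integral over a set of infinite measure of uniformly bounded operators, so it is not absolutely convergent and cannot be justified by formally permuting integrals. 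The standard (von Neumann) fix is to argue on matrix coefficients: if $E=0$, then for all vectors $\Phi,\Psi$ the Gaussian-weighted function $(a,b)\mapsto e^{-(|a|^{2}+|b|^{2})/4}\la\Phi,S(a,b)\Psi\ra$ has vanishing Fourier transform (obtained by conjugating $E$ with $S(x,y)$, which only shifts the Gaussian and inserts a character), hence $\la\Phi,S(a,b)\Psi\ra=0$ for almost every $(a,b)$, hence everywhere by weak continuity, and evaluation at $(a,b)=(0,0)$ gives $\la\Phi,\Psi\ra=0$ for all $\Phi,\Psi$, a contradiction. With these two corrections the rest of your outline (isometry on the cyclic orbit, density by irreducibility, countable orthogonal decomposition in the reducible case using separability) goes through and is the standard proof.
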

\begin{proof}
 See \cite{vNeu31} or any book on the mathematics of quantum mechanics, for example \cite{Em09}.
\end{proof}
The reason why this theorem had to be phrased in terms of the Weyl CCR (\ref{WCCR}) and not the CCR is, 
as von Neumann pointed out in \cite{vNeu31}, that the CCR (\ref{QMCCR}) can certainly not hold on the 
whole Hilbert space $L^{2}(\R^{n})$ since the operators in (\ref{SchRep}) are \emph{unbounded}.
This is easy to see: $\int d^{n}x \ |\psi(x)|^{2} < \infty$ does \emph{not} imply 
$\int d^{n}x \ |x_{j}\psi(x)|^{2} < \infty$. Moreover, if (\ref{QMCCR}) were valid everywhere in 
the Hilbert space, one could take the trace of both sides yielding a contradiction\footnote{I thank David Broadhurst for pointing this out to me.}.

Since both the CCR algebra of the Schr\"odinger and the Heisenberg representation of quantum mechanics
generate irreducible Weyl unitaries, the issue was indeed settled.
However, some questions remained:
\begin{itemize}
 \item are there representations of the CCR (\ref{QMCCR}) which do \emph{not}
       generate Weyl unitaries and are therefore not unitarily equivalent to the Schr\"odinger and hence
       also not to the Heisenberg representation?
 \item On what conditions do they give rise to a representation of the Weyl CCR? 
\end{itemize}
Dixmier \cite{Dix58} found one particular answer to this latter question. 

\begin{Theorem}[Dixmier]\label{Dixm}
Let $Q,P$ be two closed symmetric operators on a Hilbert space $\Hi$ with common stable domain $\D$, ie
$P \D \subset \D$ and $Q \D \subset \D $. Assume the operator 
\begin{equation}
H=P^{2} + Q^{2}
\end{equation}
is essentially self-adjoint on $\Hi$. If $Q$ and $P$ satisfy the CCR algebra $[Q,P]=i$
on $\D$ then $\Hi$ decomposes into a direct sum of subspaces on each of which their restrictions are unitarily 
equivalent to the Schr\"odinger representation.  
\end{Theorem}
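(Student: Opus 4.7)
The plan is to reduce the statement to the Stone--von Neumann theorem (Theorem~\ref{SvN}) by exhibiting, on each irreducible invariant subspace, an orthonormal basis on which $Q$ and $P$ act as the standard Schr\"odinger pair. The bridge from the algebraic CCR $[Q,P] = i$ on $\D$ to the Weyl CCR needed for Theorem~\ref{SvN} is precisely what the essential self-adjointness of $H = P^{2} + Q^{2}$ is designed to provide: it grants a well-defined spectral calculus for $\bar{H}$ and, via Nelson's analytic-vector theorem, certifies that $Q$ and $P$ integrate to one-parameter unitary groups obeying the Weyl form of the CCR.

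First I would introduce the ladder operators $a := (Q + iP)/\sqrt{2}$ and $a^{*} := (Q - iP)/\sqrt{2}$, well defined on $\D$ thanks to stability of the domain. Using $[Q,P] = i$, a short calculation yields $[a, a^{*}] = 1$ and $H = 2a^{*}a + 1$ on $\D$, so $\la \psi, H\psi \ra = 2\|a\psi\|^{2} + \|\psi\|^{2}$ for $\psi \in \D$, forcing $\bar{H} \geq 1$. The standard ladder argument shows that if $\bar{H}\psi = \lambda\psi$ then $\bar{H}(a\psi) = (\lambda - 2)(a\psi)$; iterating and invoking the lower bound produces a ground state $\psi_{0}$ with $a\psi_{0} = 0$ and $\bar{H}\psi_{0} = \psi_{0}$. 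Setting $\psi_{n} := (a^{*})^{n}\psi_{0}/\sqrt{n!}$ yields an orthonormal Fock-like tower on which $a$, $a^{*}$ and hence $Q$, $P$ act in the textbook fashion; the closed linear span $\Hi_{\psi_{0}}$ is invariant under the closures $\bar{Q}$ and $\bar{P}$ and carries a Schr\"odinger representation via the unitary identification $\psi_{n} \mapsto h_{n}$ sending the basis to the Hermite functions in $L^{2}(\R)$.

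To exhaust $\Hi$, I would apply a Zorn's lemma argument to maximal families $\{\psi_{0}^{(\alpha)}\}_{\alpha \in I}$ of mutually orthogonal ground states (unit vectors annihilated by $a$ with $\bar{H}$-eigenvalue $1$); the associated subspaces $\Hi^{(\alpha)}$ are pairwise orthogonal, invariant and carry Schr\"odinger representations. Their direct sum must equal $\Hi$, for otherwise the restriction of $\bar{H}$ to the orthogonal complement is again self-adjoint, bounded below by $1$, and inherits a ladder structure, so it would contain a further ground state orthogonal to all $\psi_{0}^{(\alpha)}$, contradicting maximality. On each summand, irreducibility together with the Weyl form of the CCR (obtained by exponentiating $\bar Q$ and $\bar P$ with the aid of Nelson's theorem) allow one to invoke Theorem~\ref{SvN} and conclude unitary equivalence to the Schr\"odinger representation.

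The main obstacle is the delicate interplay between algebraic manipulations that take place on $\D$ and spectral-theoretic assertions that live on all of $\Hi$. In particular, one has to verify that the orthogonal complement of the Fock subspaces built so far is genuinely invariant under the closures of $Q$ and $P$, and that $\bar H$ restricted to it still admits a lowest eigenvector produced by the ladder. Without essential self-adjointness of $H$ there is no guarantee that $\bar H$ has a proper spectral theorem, nor that $Q$ and $P$ are essentially self-adjoint; with it, Nelson's theorem supplies a dense set of analytic vectors stable under $Q$ and $P$, which is precisely what makes the whole bookkeeping of eigenspaces and orthogonal complements go through rigorously.
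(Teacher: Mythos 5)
Your overall strategy (ladder operators $a=(Q+iP)/\sqrt{2}$, a ground state annihilated by $a$, Hermite towers, exhaustion by orthogonal invariant subspaces) is the classical Rellich--Dixmier route; the paper itself gives no proof of Theorem \ref{Dixm} and simply defers to \cite{Dix58}, so the relevant question is whether your sketch closes the argument, and it does not. The genuine gap sits exactly where the hypothesis of essential self-adjointness has to do its work. The identities $[a,a^{*}]=1$, $H=2a^{*}a+1$ and $\bar{H}(a\psi)=(\lambda-2)a\psi$ are available only on $\D$, whereas an eigenvector of $\bar{H}$ need not lie in $\D$, nor even in the domain of the closure of $a$; worse, nothing in your argument shows that $\bar{H}$ has \emph{any} eigenvalue: a self-adjoint operator bounded below by $1$ can have purely continuous spectrum, so ``iterating and invoking the lower bound'' does not produce a ground state. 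Proving that $\sigma(\bar{H})\subset\{1,3,5,\dots\}$ consists of eigenvalues, with eigenvectors lying in domains on which the ladder relations are valid, is precisely the analytic core of Dixmier's proof and cannot be dismissed as the standard ladder argument; the same domain problem resurfaces when you assert that the orthogonal complement of your Fock towers is invariant under the unbounded closures $\bar{Q},\bar{P}$ and again carries a ladder structure.

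Your appeal to Nelson is also miscalibrated. The analytic-vector theorem only yields essential self-adjointness of a single symmetric operator possessing a dense set of analytic vectors; what would actually bridge the algebraic CCR to the Weyl CCR is Nelson's integrability theorem for representations of the Heisenberg Lie algebra, whose hypothesis is exactly essential self-adjointness of $P^{2}+Q^{2}$ on $\D$ and whose conclusion is that the representation exponentiates to Weyl unitaries. You neither invoke that theorem precisely nor construct a dense, $Q$- and $P$-invariant family of analytic vectors (say from spectral subspaces of $\bar{H}$), which a self-contained proof along your lines would require. Note also that once the Weyl relations are secured, your Zorn construction becomes superfluous: Theorem \ref{SvN} as stated in the paper already covers reducible Weyl pairs and delivers the direct-sum decomposition into Schr\"odinger representations immediately, and reduction of subspaces is then handled by the bounded Weyl unitaries rather than by the unbounded $\bar{Q},\bar{P}$ — using the unitaries earlier, as your exhaustion step implicitly does, is circular until the Weyl relations have been established.
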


That $H=P^{2} + Q^{2}$ is essentially self-adjoint seems physically reasonable as this operator corresponds
to the Hamiltonian of the harmonic oscillator, the much beloved workhorse of quantum mechanics. 
In case the assumptions of Dixmier's theorem are not given, a number
of examples of representations which are unitarily inequivalent to the Schr\"odinger representation have
been found \cite{Su01}. 

Surely, most examples of inequivalent representations are physically pathological. 
However, the interesting question is whether there is an example of physical relevance. And, yes, there is.
Reeh found one such example \cite{Ree88}: a (nonrelativistic, quantum mechanical) electron 
in the exterior of an infinitely long cylinder 
with a magnetic flux running through it. 

To arrive at the model, one has to let the cylinder become 
infinitely thin, that is, in Reeh's description, become the $z$-axis. In doing so, he 
clearly stayed within the range of acceptable habits of a theoretical physicist. 

Because the system is translationally invariant along the $z$-axis, there are only two degrees 
of freedom, ie in the above setting of (\ref{QMCCR}) we have $n=2$. The canonical momentum operators are 
\begin{equation}
 p_{x} = -i \partial_{x} + e A_{1}(x,y) , \hs{2} p_{y} = -i \partial_{y} + e A_{2}(x,y),
\end{equation}
where $A_{1}(x,y)$ and $A_{2}(x,y)$ are the components of the electromagnatic vector potential 
and $e$ is the electron's charge. 

This particular example, although it satisfies the CCR, is \emph{not 
unitarily equivalent to the Schr\"odinger representation}. Regarding Theorem \ref{Dixm}, Reeh closes his paper by pointing out 
that $p_{x}^{2}+p_{y}^{2}$ is not essentially self-adjoint, in agreement with Dixmier's result.
What we learn from this is that 
\begin{itemize}
 \item \textsc{firstly}, even when the system has a finite number of degrees of freedom, not all representations of the CCR (\ref{QMCCR}) are unitarily equivalent
      to the Schr\"odinger representation and
 \item \textsc{secondly}, this need not worry us. It rather suggests that unitary equivalence is too strong a notion for physical equivalence.
\end{itemize}
Reeh's example suggests that we abandon the view that every quantum-mechanical system should lie in the unitary equivalence class of the Schr\"odinger 
representation.

In fact, a much weaker yet still physically sensible notion of equivalence has been put 
forward by Haag and Kastler in \cite{HaKa64}. The authors essentially propose a form of weak equality 
of operators, namely that two observables $A$ and $B$ are equivalent if their matrix elements cannot 
be distinguished by measurement, that is, for a subset $\D$ of state vectors, which describe the set 
of all possible experimental setups, one has
\begin{equation}
| \la \Psi | (A-B) \Phi \ra | < \varepsilon  \hs{2} \forall \Psi, \Phi \in \D ,
\end{equation}
in which $\varepsilon>0$ is below any conceivable lower measuring limit\footnote{The mathematical notion behind this is that of a convex topology
induced by a system of \emph{seminorms}: each pair of elements in $\D$ defines a seminorm.}. 

\subsection{Fock space}\label{sec:Focks} 
It was in one of the early papers on quantum field theory (QFT) in 1929 by Heisenberg and Pauli
that the notion of what is nowadays known as \emph{Fock space} first emerged \cite{HeiPau29}. A bit later, 
this concept was explored more completely by Fock \cite{Fo32} \index{Fock space} and rephrased 
in rigorous mathematical form by Cook \cite{Co53}. 
This setting seemed to be appropriate and make sense even for relativistic particles. 
Because the Schr\"odinger representation of nonrelativistic many-particle systems can also be phrased in these terms, the Fock space 
became 'the Schr\"odinger representation of QFT' (see eg \cite{Di11}). 

Since Reeh's counterexample of a perfectly physical but nevertheless non-Schr\"odinger representation
of the CCR in quantum mechanics was discovered rather late (1988!) and was not known at the time, the 
representation issue continued to be given plenty of attention.
 
It became topical again in the 1950s when Friedrichs constructed what he called \emph{myriotic 
representations} of the CCR (\ref{QMCCR}) on Fock space, also known as 'strange representations'.
These representations are defined by the absence of any number operator and are obtained by passing to the limit of a countably infinite number of
degrees of freedom \cite{Fried53}, ie $n \rightarrow \infty$ in (\ref{QMCCR}). The Stone-von Neumann theorem is in 
this case no longer applicable. 

In 1954, Wightman and G\aa rding published results proving that for this limit, there exists, as they put it, a 
``maze of irreducible inequivalent representations'' \cite{GaWi54}. However, we shall see in §\ref{sec:PowTHM} that neither the CCR for bosons nor 
the anticommutation relations (CAR) for fermions are features that fully interacting theories can be expected to possess, at least in $d\geq 4$ spacetime
dimensions. 
Therefore, the representation problem may actually be a \emph{pseudo problem}. Of course, in the case of quantum mechanics, the CCR express the
fundamental \emph{Heisenberg uncertainty principle} which one is not willing to abandon.  
However, since there is no analogue of the position operator in QFT, it is not clear how this principle 
can be implemented through the CCR or CAR in a relativistic 
quantum theory\footnote{Although for example the energy-time uncertainty is generally presumed to be true 
and employed in the interpretation of virtual off-shell particles in Feynman diagrams, there is no obvious
connection to the CCR/CAR in QFT.}!

\subsection*{Van Hove phenomenon} 
Prior to these developments, van Hove was one of the first authors who tried to rigorously 
define a Hamiltonian of a massive interacting scalar field $\phi$. The interaction he studied consists of a 
finite number of fixed point sources \cite{vHo52}, the Hamiltonian being
\begin{equation}
 H_{g}= H_{0} + g H_{I},
\end{equation}
where $H_{0}=\sum_{\mathbf{k}}\epsilon_{\mathbf{k}} a_{\mathbf{k}}^{\dagger}a_{\mathbf{k}}$ is the free and 
$H_{I}=\sum_{s=1}^{l}\beta_{s}\phi(\mathbf{x}_{s})$ the interacting part. The point sources sit at 
positions $\mathbf{x}_{s}$ with strength $g \beta_{s}$. Introducing a momentum cutoff $\kappa > 0$ such that 
$\epsilon_{\mathbf{k}}=0$ if $|\mathbf{k}| > \kappa$, he considered the two vacuum states $\Phi_{0}$
and $\Phi_{0}(g)$ of $H_{0}$ and $H_{g}$, respectively, and found for their overlap
\begin{equation}
 \la \Phi_{0} | \Phi_{g} \ra \rightarrow 0 \hs{1} \mbox{as} \hs{0.5} \kappa \rightarrow \infty
 \hs{2} (\mbox{'van Hove phenomenon'}) ,
\end{equation}
ie when the cutoff was removed by taking the limit, the Hilbert spaces of states turned out to be 
orthogonal for $H_{0}$ and $H_{g}$. He also found this to be the case for the energy eigenstates 
\begin{equation}
 H_{g} \Phi_{n}(g) = E_{n}(g)  \Phi_{n}(g)  
\end{equation}
of energy $E_{n}(g)$ for different values of $g$, ie $\la \Phi_{n}(g') | \Phi_{m}(g) \ra \rightarrow 0$
as the cutoff was removed, for all $n,m$ and $g' \neq g$. 

He concluded that ``the stationary states of the field interacting with the sources are no 
linear combinations of the stationary states of the free theory''. A first sign that something
may be wrong with the interaction picture, as Coleman wrote in a short review of van Hove's paper
``it suggests that there is no mathematical justification for using the interaction
representation and that the occasional successes of renormalization methods are lucky flukes 
...''\footnote{Coleman's review of van Hove's paper is available at 
www.ams.org/mathscinet, keywords: author ``van Hove'', year 1952. } .

\subsection{Strange representations}\label{sec:strangeRep}
Such was the backdrop against which Haag argued in his seminal publication \cite{Ha55} that the 
interaction picture cannot exist. The salient points he made were the following. 

\textsc{First}, it is very easy to find strange representations of the CCR (\ref{QMCCR}) in the case of 
infinitely many degrees of freedom: a seemingly innocuous transformation like 
\begin{equation}\label{stratra}
 q_{\alpha} \mapsto \widetilde{q}_{\alpha} = c q_{\alpha} , \hs{2} 
 p_{\alpha} \mapsto \widetilde{p}_{\alpha} = c^{-1} p_{\alpha} 
\end{equation}
for any $c \notin \{0,1\}$ of the canonical variables $\{ q_{\alpha}, p_{\alpha} \}$ leads to a 
strange representation of the CCR, ie a representation for which 
there is no number operator and no vacuum state. 

\textsc{Second}, Dyson's matrix $V=\mathsf{U}(0,-\infty)$ \emph{cannot} exist, ie the operator that evolves 
interaction picture states from the infinitely far past at $t=-\infty$, where the particles are free, 
to the present at $t=0$, where they (may) interact. 

As regards the first point, let us see why (\ref{stratra}) produces a strange representation. 
We follow Haag and write $c = \exp(\varepsilon)$ with $\varepsilon \neq 0$. For the annihilators and 
creators,
\begin{equation}
 a_{\alpha} = \frac{1}{\sqrt{2}}(q_{\alpha}+i p_{\alpha}) , \hs{2} a^{\dagger}_{\alpha} = \frac{1}{\sqrt{2}}(q_{\alpha}-i p_{\alpha})
\end{equation}
this transformation takes the form of a 'Bogoliubov transformation':
\begin{align}\label{stratraCA}
\begin{split}
 a_{\alpha} \mapsto \widetilde{a}_{\alpha} &= \cosh \varepsilon \ a_{\alpha} + \sinh \varepsilon \ a^{\dagger}_{\alpha}  \\
 a^{\dagger}_{\alpha} \mapsto \widetilde{a}^{\dagger}_{\alpha} 
 &= \sinh \varepsilon \  a_{\alpha} + \cosh \varepsilon \ a^{\dagger}_{\alpha},
\end{split}
\end{align}
which is easy to check.  We let $\alpha=1,2,...,N$, where $N<\infty$ for the moment and $\Hi$ be the Fock space generated by applying
the elements of the CCR algebra $\la a_{\alpha}, a^{\dagger}_{\alpha} : \alpha \in \{1,...,N\} \ra_{\C}$ 
to the vacuum state. 'Strangeness' of the representation 
$\{ \widetilde{a}_{\alpha}, \widetilde{a}^{\dagger}_{\alpha} \}$ will be incurred only when the limit $N \rightarrow \infty$ is taken.
Notice that the generator of the above transformation (\ref{stratra}) is given by 
\begin{equation}\label{Bogol}
 T = \frac{i}{2}\sum_{\alpha=1}^{N}  [a^{\dagger}_{\alpha}a^{\dagger}_{\alpha}-a_{\alpha}a_{\alpha}],
\end{equation}
ie $V=\exp(i \varepsilon T)$ is the transformation such that 
$\widetilde{a}_{\alpha} = V a_{\alpha} V^{-1}$ and 
$\widetilde{a}^{\dagger}_{\alpha} = V a^{\dagger}_{\alpha} V^{-1}$
which is unitary as long as $N$ is finite. When $N \rightarrow \infty$, this operator will map any vector 
of the Fock representation to a vector with infinite norm. Let $\Psi_{0}$ denote the Fock vacuum of the
original representation. Due to
\begin{equation}
 \widetilde{a}_{\alpha} V \Psi_{0} = V a_{\alpha} V^{-1} V \Psi_{0} = V a_{\alpha} \Psi_{0} = 0  
\end{equation}
we see that the new representation does also have a vacuum. We denote it by 
$\widetilde{\Psi}_{0}:= V\Psi_{0}$.
We abbreviate $\tau_{\alpha} := a^{\dagger}_{\alpha}a^{\dagger}_{\alpha} - a_{\alpha}a_{\alpha}$ and 
compute the vacua's overlap,
\begin{equation}\label{overla}
\la \Psi_{0} | \widetilde{\Psi}_{0} \ra = \la \Psi_{0} | V\Psi_{0} \ra  
=  \la \Psi_{0} | (\prod_{\alpha=1}^{N} e^{-\frac{\varepsilon}{2} \tau_{\alpha}} ) \Psi_{0} \ra
= \prod_{\alpha=1}^{N} \la \Psi_{\alpha,0} | e^{-\frac{\varepsilon}{2} \tau_{\alpha}} \Psi_{\alpha,0} \ra,
\end{equation}
where we have used that the vacuum is a tensor product $\Psi_{0} = \Psi_{1,0} \te ... \te \Psi_{N,0}$ and 
$[\tau_{\alpha},\tau_{\beta}]=0$. 
Note that each factor in the product (\ref{overla}) yields the same value.
This value is below $1$ and it therefore vanishes in the limit $N \rightarrow \infty$, ie we find that 
the van Hove phenomenon occurs. 
Similiarly, one can show that, as Haag argues in \cite{Ha55}  
\begin{equation}\label{weakvan}
 \la \Psi | V \Phi \ra = 0   \hs{2} \mbox{for all} \hs{0.3} \Psi,\Phi \in \Hi \hs{1} (\mbox{Fock space})
\end{equation}
in the limit. Although the new CCR algebra (\ref{stratraCA}) is perfectly well-defined on $\Hi$,
its vaccum - if it exists - lies outside $\Hi$! Hence \emph{inside} $\Hi$, this algebra is a 
'strange representation', ie unitarily inequivalent to the Fock representation and has no vaccum. 

The lesson Haag took from this was that seemingly minor and prima facie innocuous changes can easily
lead to a theory which is unitarily equivalent when $N < \infty$, but ceases to be so in the limit
$N \rightarrow \infty$. 

Especially interesting is the vanishing overlap of the two vacua. If Dyson's matrix is well-defined for 
a finite number of degrees of freedom, then - as the above example shows - its existence is highly 
questionable in the case of an infinite system. A vanishing overlap of the two vacua 
directly contradicts what became known as the \emph{theorem of Gell-Mann and Low} (\cite{GeMLo51},
cf. §\ref{sec:GeMaLo}) which explicitly relies on a nonvanishing overlap: 
in the context of our example, the analogous statement is that up to a normalisation constant
\begin{equation}
\widetilde{\Psi}_{0} = \frac{V \Psi_{0}}{\la \Psi_{0} | V \Psi_{0} \ra }
\end{equation}
exists in the limit $N \rightarrow \infty$. Of course, the Bogoliubov transformation $V=e^{i \varepsilon T}$
as constructed in (\ref{Bogol}) bears no resemblance to Dyson's matrix. 
While Haag's example is therefore of no direct consequence for field theory, van Hove's indeed is: in his
model, 
\begin{equation}
\la \Phi_{n}(g') | \Phi_{m}(g) \ra = 0  \hs{2} (g \neq g')
\end{equation}
implies that $V$ vanishes weakly in $\Hi$ since $\Phi_{n}(g')=V\Phi_{n}(g)$.

\subsection{Dyson's matrix}\label{DysMa} 
However, Haag then went on to make the case that Dyson's matrix does not exist as follows (\cite{Ha55}, §4): 
let two Hermitian scalar fields $\phi_{1}(\mathbf{x})$ and $\phi_{2}(\mathbf{x})$ be  
related by a unitary map $V$ according to 
\begin{equation}\label{Unitrans}
 \phi_{1}(\mathbf{x}) = V^{-1} \phi_{2}(\mathbf{x}) V
\end{equation}
and suppose their time evolution is governed by different Hamiltonians $H_{1}\neq H_{2}$. 
Let $D(\mathbf{a})$ be a representation of the translation group in $\R^{3}$ under which both fields 
behave covariantly, ie
\begin{equation}\label{transRep}
 D(\mathbf{a}) \phi_{j}(\mathbf{x}) D(\mathbf{a})^{\dagger} = \phi_{j}(\mathbf{x}+\mathbf{a}) \hs{2} 
 (j=1,2).
\end{equation}
Then these conditions imply $[D^{\dagger}V^{-1}DV,\varphi_{1}]=0$ from which $D^{\dagger}V^{-1}DV=1$ and then also $[V,D(\mathbf{a})] =0$ follow 
(by irreducibility of the field algebra\footnote{We will explain this thoroughly in §\ref{sec:ProofHTHM}, cf.(\ref{Poin}).}). 
This means $[V,\mathbf{P}]=0$, where $\mathbf{P}$ is the three-component generator of translations in 
$\R^{3}$. Let $\Phi_{0j}$ be the vacuum of the representation of
$\phi_{j}(\mathbf{x})$ ($j=1,2$), ie $\Phi_{02} = V\Phi_{01}$. Then follows that $V$ is trivial because
\begin{equation}\label{HaagArg}
 \mathbf{P} \Phi_{02} = \mathbf{P} V \Phi_{01} = V \mathbf{P} \Phi_{01} = 0
\end{equation}
implies $\Phi_{01} = w  \Phi_{02}$ with $w \in \C$ since the vacuum is the only translation-invariant 
state. Haag now argues that this result is a contradiction to the assumption of both Hamiltonians 
having a different form, ie he means that $H_{1}\Phi_{01} = 0 = H_{2} \Phi_{01}$ is not acceptable if
both Hamiltonians are to be different. In this view, the conclusion is that $H_{1}=H_{2}$ and hence both
theories are the same. 

Mathematically, of course, this conclusion is not permissible if the two operators 
agree just on the vacuum. However, behind his statement ``In all theories considered so far 
[the statement $\Phi_{01} = w  \Phi_{02}$, author's note] is contradicted immediately by the form of the Hamiltonian.''(ibidem) he refers to the
more or less tacit assumption that one of the two operators should \emph{polarise the vacuum}.
This idea in turn originates in the fact that no one has ever seen an interacting Hamiltonian not 
constructed out of free fields, ie of annihilators and creators. All of those conceivable always have a term of creators only, a term incapable 
of annihilating the vacuum.

Yet the above argument given by Haag against Dyson's matrix is flawed. If the two theories have
different Hamiltonians, then they should in principle also have different total momenta, ie in the language of Lagrangian field theory 
\begin{equation}
 \mathbf{P}_{j} = - \int d^{3}x \ \pi_{j}(\mathbf{x}) \nabla \phi_{j}(\mathbf{x}),
\end{equation}
and we expect $\mathbf{P}_{1}=V^{-1}\mathbf{P}_{2}V$ (considering the complexity of the Poincaré algebra, the statement
$\mathbf{P}_{1}=\mathbf{P}_{2}$ is a rather strong assumption that needs discussion!). This suggest that the two fields should be covariant
with respect to different representations of the translation group which means their generators $\mathbf{P}_{1}, \mathbf{P}_{2}$ are not the same 
and consequently (\ref{transRep}) is bogus. 

\subsection{Results by Hall and Wightman}\label{sec:WightHall}
This is probably (we do not know) what Hall and Wightman had in mind when they wrote ``In the opinion of the present 
authors, Haag's proof is, at least in part, inconclusive.''(see \cite{WiHa57}, ref.10). 
But Haag did have a point there. 

Hall and Wightman 'polished and generalised' Haag's argument (as they add) to obtain a result which at first glance seems less harmful to Dyson's matrix
\cite{WiHa57}: 
because both theories are different, one should allow for two Hilbert spaces $\Hi_{1}$ and $\Hi_{2}$, 
each equipped with a representation of the Euclidean group (rotations and translations) 
$D_{j}(\mathbf{a},R)$ ($j=1,2$) in $\R^{3}$ such that 
\begin{equation}\label{EuclRep}
 D_{j}(\mathbf{a},R) \phi_{j}(\mathbf{x}) D_{j}(\mathbf{a},R)^{\dagger} = \phi_{j}(R\mathbf{x}+\mathbf{a})
 \hs{2} (j=1,2)
\end{equation}
and assume there exist invariant states $\Phi_{0j} \in \Hi_{j}$, that is, $D_{j}\Phi_{0j}=\Phi_{0j}$.
Then it follows by irreducibility of both field theories from (\ref{Unitrans}) that\footnote{We 
discuss the proof more thoroughly in §\ref{sec:ProofHTHM}. }
\begin{equation}\label{irrUV}
 D_{1}(\mathbf{a},R) = V^{-1} D_{2}(\mathbf{a},R) V \hs{1} 
 \mbox{and} \hs{1} V\Phi_{01}=\Phi_{02}.
\end{equation}
This seems to be a less devastating result because the map
$V \colon \Hi_{1} \rightarrow \Hi_{2}$ need not be trivial. Yet, as they subsequently showed, this 
cannot come to the rescue of Dyson's matrix either. If we just consider what it means for the $n$-point
functions,
\begin{equation}\label{etVEV}
 \begin{split}
 \la \Phi_{01} | \phi_{1}(\mathbf{x}_{1})...\phi_{1}(\mathbf{x}_{n}) \Phi_{01} \ra &= 
 \la \Phi_{01} | V^{-1} \phi_{2}(\mathbf{x}_{1})V... V^{-1} \phi_{2}(\mathbf{x}_{n})V \Phi_{01} \ra \\
 &= \la V \Phi_{01} | \phi_{2}(\mathbf{x}_{1})V... V^{-1} \phi_{2}(\mathbf{x}_{n})V \Phi_{01} \ra \\
 &= \la \Phi_{02} |\phi_{2}(\mathbf{x}_{1})... \phi_{2}(\mathbf{x}_{n})\Phi_{02} \ra,
 \end{split}
\end{equation}
we see that they agree. This entails for the Heisenberg fields $\phi_{j}(t,\mathbf{x})$
that their $n$-point functions coincide on the time slice $t=0$. To extend this to a larger subset, let now the condition of 
Euclidean covariance of the Schr\"odinger fields in (\ref{EuclRep})
be strengthened to \emph{relativistic covariance}, ie Poincaré covariance,
\begin{equation}\label{relCov}
 U_{j}(a,\Lambda) \phi_{j}(x) U_{j}(a,\Lambda)^{\dagger} = \phi_{j}(\Lambda x+a)
 \hs{2} (j=1,2),
\end{equation}
where $x=(t,\mathbf{x}), a \in \Mi$ are Minkowski spacetime points and 
$\Lambda$ a proper orthochronous Lorentz transformation, then for pairwise spacelike-distant 
$x_{1}, ... , x_{n} \in \Mi$ one has
\begin{equation}\label{nleq4pf}
 \la \Phi_{01} | \phi_{1}(x_{1})...\phi_{1}(x_{n}) \Phi_{01} \ra 
 = \la \Phi_{02} | \phi_{2}(x_{1})...\phi_{2}(x_{n}) \Phi_{02} \ra \hs{1} (\mbox{spacelike distances}).
\end{equation}
Hall and Wightman proved that for $n \leq 4$, this equality can be extended (in the 
sense of distribution theory) to all
spacetime points, where $x_{j} \neq x_{l}$ if $j \neq l$. 
This result is referred to as \emph{generalised Haag's theorem}. The term 'generalised' has been
used because none of the fields need be free for (\ref{nleq4pf}) to hold (we have not required any
of the fields to be free so far). 
The reason the authors could not prove this for higher $n$-point functions is that there is no
element within the Poincaré group capable of jointly transforming $m\geq 4$ Minkowski spacetime 
vectors\footnote{ Recall that by translation invariance an $n$-point function is a function 
of $m=n-1$ Minkowski spacetime points.} to $m$ arbitrary times but only to a subset which is not
large enough for a complete characterisation \cite{StreatWi00}. 

Notice that (\ref{etVEV}) holds for any quantum field, of whatever spin, it is a trivial 
consequence of (\ref{Unitrans}) and the assumed irreducibility of the field algebra. 
The only difference for fields of higher spin is that the transformation 
law (\ref{relCov}) needs a finite dimensional representation of the Lorentz group for spinor and 
vector fields and makes (\ref{nleq4pf}) less obvious. We shall discuss this point in §\ref{sec:AxG}
to see that it works out fine also in these cases.

The result (\ref{nleq4pf}) mattered and still matters because a field theory
was by then already known to be sufficiently characterised by its vacuum expectation values, as shown 
by Wightman's \emph{reconstruction theorem} put forward in \cite{Wi56}, which, in brief, says that a field theory can be (re)constructed from 
its vacuum expectation values. We shall survey this result in §\ref{sec:WightAx} and discuss the case of quantum electrodynamics (QED)
in §\ref{sec:AxG}. 

As the above arguments show, an interacting field theory cannot be unitarily equivalent to a free field theory unless we take the stance
that it makes sense for an interacting field to possess $n$-point functions that for $n\leq 4$ agree with those of a free field.  

\subsection{Contributions by Greenberg and Jost}\label{ConJoG} 
In fact, this stance is untenable since a couple of years later, Greenberg proved inductively in \cite{Gre59} that if one of the two 
fields is a free field, then the equality (\ref{nleq4pf}) holds for all $n$-point functions and all 
spacetime points. It is to this day still an open question whether (\ref{nleq4pf}) is 
true for two general (Hermitian) fields at arbitrary spacetime points \cite{Streat75}. 

Around the same time, using different arguments, another proof of Greenberg's result was obtained  
by Jost and Schroer \cite{Jo61} and is therefore known under the label
\emph{Jost-Schroer theorem}:
if a field theory has the same two-point function as a free field of mass $m>0$, then that is already
sufficient for it to be a free field of the same mass. It was also independently shown by Federbush and 
Johnson \cite{FeJo60} and the massless case was proved by Pohlmeyer \cite{Po69}. 

\subsection{Haag's theorem I} With these latter results, we arrive at what became known as Haag's theorem. Let us first state it in words, a more
thorough exposition including the proof will be given in §\ref{sec:ProofHTHM}:

\begin{Haag's Theorem}
If a scalar quantum field is unitarily equivalent to a free scalar quantum field, then, by virtue of the reconstruction theorem, it is also a free field  
because all vacuum expectation values coincide. 
\end{Haag's Theorem}

As a consequence, Dyson's matrices, which purportedly transform in a unitary fashion the incoming and the outgoing free asymptotic fields into fully 
interacting fields, cannot exist. Note that the equality of the vacuum expectation values for spacelike separations (\ref{nleq4pf}) 
needs no other provisions than 
\begin{itemize}
 \item unitary equivalence of the two fields through the intertwiner $V$,
 \item Poincaré covariance and
 \item irreducibility of their operator algebras to warrant (\ref{irrUV}).
\end{itemize} 
The remainder of the results, which say that the equality extends beyond spacelike separations into the entire Minkowski space $\Mi$,
and their provisions that complete Haag's theorem, eg the Jost-Schroer theorem, therefore have a different status!  

Because the proof of Haag's theorem is fairly technical and requires a number of assumptions, we defer it
to §\ref{sec:ProofHTHM}. As the above deliberations suggest, the spacetime dimension does not 
enter the discussion anywhere and is therefore irrelevant. 

Haag's theorem is a very deep and fundamental fact, true both for superrenormalisable and renormalisable theories. In its essence, it
is rather trivial: it is a theorem about a free field of fixed mass and its unitary equivalence class. 
In fact, due to Theorem \ref{freeH}, which we call 'Haag's theorem for free fields', we know that even two free fields 
lie in distinct equivalence classes whenever their masses differ, however \emph{infinitesimally} small that difference might be.    

\subsection{Galilean exemptions} 
Note that the step from equal-time vacuum expectation values (\ref{etVEV}) to (\ref{nleq4pf}) is not
permitted for Galilean quantum field theories as employed in solid state physics. In fact, 
\emph{Haag's theorem breaks down for Galilean quantum field theories} as the Jost-Schroer theorem does 
not hold for them:
Dresden and Kahn showed that there are nontrivial (=interacting) Galilean quantum field theories 
whose two-point Wightman functions are identical to those of free fields \cite{DresKa62}.
We therefore have reason to believe that Haag's theorem in the strict sense of the above stated theorem 
is specific to relativistic quantum field theories. 

However, \emph{Euclidean quantum field theories} are a different case for which an analogue of Haag's 
theorem does indeed hold, as we shall see in §\ref{sec:EuHTHM} in the superrenormalisable case. How
a nontrivial interacting theory is still attained despite this theorem's dictum, will also be shown there.

\section{Other versions of Haag's theorem}\label{sec:OthVersHT}                                      
The ensuing decades witnessed a number of pertinent results which we describe next. 
Especially the Streit-Emch theorem is worth being considered as it is closer to Haag's original 
formulation which we discussed in §\ref{DysMa}. 

\subsection{Work by Emch \& Streit}
Emch presents a very different variant of Haag's theorem in his monograph \cite{Em09} based on  
results proved by Streit in \cite{Strei68}. 

Emch writes about the previously published proofs that 
``...these proofs, however, rely rather heavily on the analytic properties of the Wightman functions,
which themselves reflect the locality and spectrum conditions, and tend to obscure the simple algebraic
and group-theoretical facts actually responsible for the results obtained.''(\cite{Em09}, p.247). 

Unlike the older versions, the Emch-Streit result focusses on the Weyl 
representations of the CCR generated by the field $\varphi(f)$ and its canonical momentum $\pi(g)$, smeared by test functions $f$ and $g$, ie
\begin{equation}
 U(f) = e^{i\varphi(f)} \ , \hs{1}  V(g) = e^{i\pi(g)}  
\end{equation}
which then satisfy the Weyl form of the CCR: $U(f)V(g) = e^{i(f,g)} V(g)U(f)$. 
This version of Haag's theorem is purportedly more general by assuming neither relativistic covariance 
nor causality (also known as locality, see §\ref{sec:WightAx}). 

What the authors assume instead is covariance with respect to 
a more general symmetry group that exhibits a property named '$\eta$-clustering'. 
This feature of the symmetry, defined via some averaging process, is essentially the clustering property 
known for spacelike translations in relativistic theories (cf.(\ref{clust}) in §\ref{sec:WightAx}). 
We have reason to believe, however, that for viable quantum field theories, the set of 
assumptions used in Emch's proof implies the very two conditions purportedly not needed, ie 
relativistic covariance and causality. 

Before we elaborate on this point, let us have a look at the assertion of the Emch-Streit theorem. 
The upshot there is, interestingly, very close to that originally stated by Haag which we discussed in §\ref{DysMa}, where we mentioned 
the polarisation of the vacuum: vacuum polarisation cannot occur if both Weyl representations of the CCR algebra are to be unitarily 
equivalent.

Let $V$ be the unitary transformation connecting the fields $\varphi_{1}(f)$ and $\varphi_{2}(f)$ and their canonical conjugates, then the outcome is 
\begin{equation}\label{eqHam}
 H_{2} = V H_{1} V^{-1}
\end{equation}
for the corresponding generators of time translations, ie the Hamiltonians. Let $H_{1}$ be the Hamiltonian
of the free field which exhibits no vacuum polarisation: $H_{1}\Psi_{01}=0$. Then the other vacuum is also
not polarised: 
\begin{equation}
H_{2}\Psi_{02} = H_{2}V\Psi_{01} = VH_{1}\Psi_{01} = 0. 
\end{equation}
Because the decomposition $H_{2}=H_{1}+H_{int}$ with some interaction part $H_{int}$ is not compatible with 
unitary equivalence (\ref{eqHam}) of both Hamiltonians, the authors conclude that the other theory
is also free.

We shall not present the Streit-Emch theorem in its general form here as even its provisions are rather 
technical. The interested reader is referred to \cite{Em09}. Instead, we quickly discuss a simpler 
version with a very elegant proof taken from \cite{Fred10} which nevertheless shows that vacuum 
polarisation cannot occur in Fock space.

\begin{Theorem}[No vacuum polarisation]
 Let $\Hi$ be the Fock space of a free field $\varphi_{0}$ with time translation generator $H_{0}$ 
 and $U(\mathbf{a})$ a representation of the translation subgroup with invariant state $\Omega \in \Hi$
 (the vacuum). Assume there exists a field $\varphi$ with well-defined sharp-time limits $\varphi(t,f)$ 
 and $\partial_{t}\varphi(t,f)=\dot{\varphi}(t,f)$ for all test functions $f \in \Sw(\R^{n})$ such that
 \begin{enumerate}
  \item[(i)] $\varphi(0,f)=\varphi_{0}(0,f)$ and $\dot{\varphi}(0,f)=\dot{\varphi}_{0}(0,f)$;
  \item[(ii)] $U(\mathbf{a})\varphi(t,f)U(\mathbf{a})^{\dagger}=\varphi(t,\tau_{a}f)$, where 
              $(\tau_{a}f)(\mathbf{x})=f(\mathbf{x}-\mathbf{a})$;
  \item[(iii)] there exists a self-adjoint operator $H$ on $\Hi$ with $[H,U(\mathbf{a})]=0$ for all 
  $\mathbf{a} \in \R^{n}$ and 
  \begin{equation}\label{Heis}
   \varphi(t,f) = e^{iHt} \varphi(0,f)e^{-iHt}.
  \end{equation}
Then there is a constant $\lambda \in \C$ so that $H=H_{0}+\lambda$ and thus $\varphi_{0}=\varphi$.
 \end{enumerate}
\end{Theorem}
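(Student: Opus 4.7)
The plan is to show that $H - H_0$ acts as a real constant $\lambda$ on a dense subspace of $\Hi$, from which the identity of the two fields then follows immediately by comparing their Heisenberg evolutions. Three ingredients are needed: (a) $\Omega$ is an eigenvector of $H$; (b) $[H - H_0, \varphi_0(0, f)] = 0$ for every $f \in \Sw(\R^n)$; and (c) the linear span of vectors of the form $\varphi_0(0, f_1) \cdots \varphi_0(0, f_n) \Omega$ is dense in $\Hi$.

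First I would establish (a). Condition (iii) gives $[H, U(\mathbf{a})] = 0$, and invariance of $\Omega$ under translations yields $U(\mathbf{a}) H \Omega = H U(\mathbf{a}) \Omega = H \Omega$ for every $\mathbf{a}$. Since the Fock vacuum is, up to scalar multiples, the unique translation-invariant vector in $\Hi$, it follows that $H \Omega = \lambda \Omega$, with $\lambda \in \R$ by self-adjointness of $H$. Combined with $H_0 \Omega = 0$ this gives $(H - H_0) \Omega = \lambda \Omega$. For (b), I would differentiate (\ref{Heis}) at $t = 0$ to obtain $\dot\varphi(0, f) = i [H, \varphi(0, f)]$, and do the analogous computation for the free field to get $\dot\varphi_0(0, f) = i [H_0, \varphi_0(0, f)]$. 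Using the two equalities of (i) and subtracting, one finds $[H - H_0, \varphi_0(0, f)] = 0$ directly.

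For (c) I would appeal to the standard fact about the bosonic Fock space over the one-particle space of a free scalar field: since $\varphi_0(0, f) \Omega$ is the one-particle state whose wave function is essentially proportional to $\tilde f(-\mathbf{k}) / \omega_\mathbf{k}$, varying $f \in \Sw(\R^n)$ sweeps out a dense subspace of the one-particle sector, and iterated applications then reach a dense subspace of each $n$-particle sector. With all three ingredients in hand, the conclusion follows by pushing $H - H_0$ through the mutually commuting factors in a typical vector $\Psi = \varphi_0(0, f_1) \cdots \varphi_0(0, f_n) \Omega$ by repeated use of (b), collecting it on $\Omega$, and applying (a) to obtain $(H - H_0) \Psi = \lambda \Psi$. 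Density and a closability argument then upgrade this to $H = H_0 + \lambda$ on all of $\Hi$, whereupon $(\ref{Heis})$ together with (i) gives $\varphi(t, f) = e^{i\lambda t} e^{iH_0 t} \varphi_0(0, f) e^{-iH_0 t} e^{-i\lambda t} = \varphi_0(t, f)$ and the two fields agree.

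The main obstacle will be the technical packaging rather than any of the three steps in isolation. Commutators involving the unbounded operators $H$, $H_0$ and $\varphi_0(0, f)$ must be interpreted on a common invariant dense domain, and one has to verify that the polynomial algebra generated by the $\varphi_0(0, f)$ preserves a core for $H - H_0$ in order for the pushing-through argument to be rigorous. All three ingredients are essentially formal once the sharp-time operators $\varphi_0(0, f)$ are taken as given; the real subtlety lies in the operator-theoretic bookkeeping that keeps every manipulation well-defined.
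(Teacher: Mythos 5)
Your proposal is correct and follows essentially the same route as the paper's proof: the vacuum is an eigenvector of $H$ by translation invariance, the differentiated Heisenberg evolutions at $t=0$ together with condition (i) let you commute the energy operator through products of sharp-time free fields, and cyclicity of the vacuum for $\varphi_{0}$ gives density. Your packaging of the middle step as $[H-H_{0},\varphi_{0}(0,f)]=0$ is only a cosmetic reorganisation of the paper's repeated use of $H\varphi(0,f)=\varphi(0,f)H-i\dot{\varphi}(0,f)$, so the two arguments coincide in substance.
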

\begin{proof}
Because the free field generates a dense subspace in $\Hi$, it suffices to show  $H=H_{0}+\lambda$ on 
a state of the form $\Psi = \varphi_{0}(0,f_{1})...\varphi_{0}(0,f_{n})\Omega$ for test 
functions $f_{1}, ..., f_{n}$.
The Heisenberg picture evolution (\ref{Heis}) imlies $\dot{\varphi}(t,f)=i[H,\varphi(t,f)]$
and the condition $[H,U]=0$ entails that the vacuum $\Omega$ is an eigenstate of $H$. Let $\lambda$ be the
corresponding eigenvalue, ie $(H-\lambda)\Omega =0$. Then, by repeated application of $H \varphi(0,f)= \varphi(0,f)H -i \dot{\varphi}(0,f)$, we get 
\begin{equation}
 \begin{split}
  (H-\lambda)& \varphi_{0}(0,f_{1})...\varphi_{0}(0,f_{n})\Omega = 
  (H-\lambda)\varphi(0,f_{1})...\varphi(0,f_{n})\Omega  \\
  &= -i\sum_{j=1}^{n} \varphi(0,f_{1})...\dot{\varphi}(0,f_{j})...\varphi(0,f_{n})\Omega 
  = - i\sum_{j=1}^{n} \varphi_{0}(0,f_{1})...\dot{\varphi}_{0}(0,f_{j})...\varphi_{0}(0,f_{n})\Omega \\
  &= H_{0}\varphi_{0}(0,f_{1})...\varphi_{0}(0,f_{n})\Omega .
  \end{split}
\end{equation}
The assertion then follows straightforwardly from (\ref{Heis}) and $H=H_0 + \lambda$, 
\begin{equation}
 \varphi(t,f) = e^{iHt} \varphi(0,f)e^{-iHt} = e^{iH_{0}t} \varphi(0,f)e^{-iH_{0}t} = e^{iH_{0}t} \varphi_{0}(0,f)e^{-iH_{0}t} =
 \varphi_{0}(t,f).
\end{equation}
\end{proof}
We come back to the point made above concerning covariance and causality: 
contrary to what Emch and Streit claim, we believe  
that both relativistic covariance and causality are implied in their assumptions.

Lurking behind clustering, ie the well-known cluster decomposition property of 
vacuum expectation values, however, is causality, namely that the fields commute at spacelike 
distances\footnote{This is discussed in §\ref{sec:WightAx}, equation (\ref{clust}).}. 
The Emch-Streit theorem can only be more general for a model
and not rely on relativistic covariance and causality, if there are symmetries other than translations 
with the same properties and only if, on top of that, lacking causality cannot obstruct the clustering feature.
In other words, because the clustering property is a consequence of both causality and translational 
covariance, we deem it highly questionable that there is any symmetry in QFT other than translations that gives rise to clusters without the 
aid of causality.

\subsection*{Vacuum polarisation in Galilean theories} 
As presented by Lévy-Leblond in \cite{LeBlo67}, there are Galilean quantum field theories 
which provide nonrelativistic examples of interacting field theories that do \emph{not} polarise the vacuum
and thereby defy Haag's argument we presented in §\ref{DysMa} regarding the polarisation of the vacuum. 
We briefly expound the author's argument. 

Let $H,\{ P_{j},J_{j},K_{j} \}$ be the generators of the Galilean group, ie the Galilean algebra consisting of the generators of time translations $H$, 
spatial translations $P_{j}$, rotations $J_{j}$ and Galilean boosts $K_{j}$. 
In contrast to the Poincaré algebra, the subset $\{ P_{j},J_{j},K_{j} \}$ is a Lie subalgebra which remains unaltered if the Hamiltonian is augmented 
by an interaction term. The crucial difference is displayed by the commutators  
\begin{equation}\label{Com}
 \mbox{Galilean case:} \hs{0.2} [K_{j},P_{l}]=i \delta_{jl} m  \ , \hs{1}
 \mbox{Poincaré case:} \hs{0.2} [K_{j},P_{l}]=i \delta_{jl} H,
\end{equation}
where $m>0$ is the Galilean particle's mass and $\{ K_{j} \}$ on the Poincaré side are the Lorentz boost generators in the Poincaré algebra. 
These commutators show clearly that altering the Poincaré algebra's Hamiltonian $H$ requires the other generators be modified accordingly. 
More precisely, the point now is this: both Galilei and Poincaré algebra share the commutator 
\begin{equation}\label{GalAl}
[K_{j},H]=i P_{j} .
\end{equation}
It is now possible in the Galilean case to find a perfectly physical interaction 
term $H_{I}$ without messing up this commutator, ie (\ref{GalAl}) is left in peace because $[K_{j},H_{I}]=0$
\cite{LeBlo67}. Even if one finds a Poincaré 
counterpart, it will interfere with the commutators in (\ref{Com}) which, as we said above, entails that
the other generators must inevitably be changed as well. Let now $H=H_{0}+H_{I}$ be a 
Galilean Hamiltonian with free and interacting part $H_{0}$ and $H_{I}$, respectively, such that 
$[K_{j},H_{0}]= [K_{j},H] = i P_{j}$. Let $\Psi_{0}$ be the vacuum with $H_{0}\Psi_{0}=0$. Then
\begin{equation}
[K_{j},H]\Psi_{0}=i P_{j}\Psi_{0}=0 \hs{1} \Rightarrow \hs{1} K_{j}H\Psi_{0}=H K_{j} \Psi_{0} =0,
\end{equation}
because the vacuum is Galilean invariant. One infers from $K_{j}H\Psi_{0}=0$ that either the vacuum is polarised,
$H\Psi_{0}=\Psi_{0}$ (up to a prefactor) or not, ie $H\Psi_{0}=0$. If the former holds, 
then $H':=(H-\id)$ will do the job and there is no vacuum polarisation. 

\subsection*{Streit-Emch theorem more general?} 
This result and the following thought calls the purported generality with respect to the symmetry group 
of the Emch-Streit theorem into question. One of the assumptions made by Emch and 
Streit is \emph{cyclicity of the vacuum}\footnote{This will be defined in 
§\ref{sec:WightAx} as part of the Wightman axioms}. 
As Fraser cogently explains in her thesis \cite{Fra06}, it is at least this property
which Galilean field theories can easily violate. The standard example is a quantised Schr\"odinger field
$\psi(t,\mathbf{x})$, defined as a solution of the second quantised Schr\"odinger equation
\begin{equation}
 i \frac{\partial}{\partial t} \psi(t,\mathbf{x}) = -\frac{1}{2m}\Delta \psi(t,\mathbf{x})
\end{equation}
which has only positive-energy solutions of the form
\begin{equation}
 \psi(t,\mathbf{x}) = \int d^{3}k \ e^{-i(\frac{\mathbf{k}^{2}}{2m} t -\mathbf{k}\cdot \mathbf{x})} c(\mathbf{k}) 
\end{equation}
and is not capable of creating a dense subspace, even when its 
adjoint is taken to join the game. Rather, they create subspaces
known as \emph{superselection sectors} for both particles with mass $m>0$ and $-m<0$ 
(see \cite{Fra06} and references there\footnote{Fraser's text is an excellent piece of work 
devoted to Haag's theorem from a philosophical point of view, see also §\ref{Fraser}.}).

So, in summary, the cyclicity of the vacuum can probably only be achieved by relativistic fields. Granted that on the face of it, the Emch-Streit 
theorem is more general, the question as to what theories other than relativistic ones satisfy its assumptions if Galilean field theories
fail to do so becomes even more pressing. Our conviction therefore is: none of physical relevance.    
   
\subsection{Algebraic version} Another version of Haag's theorem can be found in \cite{Wei11} which
we prefer to mention only briefly. The author proved this theorm in the context of \emph{algebraic 
quantum field theory} as introduced by Haag and Kastler in \cite{HaKa64} 
(see also \cite{Ha96, Bu00} and \cite{Em09}). 
For those readers acquainted with this somewhat idiosyncratic take
on quantum field theory, here is very roughly speaking the upshot, unavoidably couched in algebraic
language: the unitary equivalence class of a net
of local von Neumann algebras is completely determined on a spacelike hyperplane. This means that if the
von Neumann algebras of two field theories are related on a spacelike hypersurface through a unitary map, 
then they are unitarily equivalent and all vacuum expectation values agree.

\subsection{Noncommutative QFT} The last version of Haag's theorem we mention here has been found in the
context of \emph{noncommutative quantum field theory} and is presented in \cite{AMY12}. 
This peculiar theory purports to generalise QFT by letting the time coordinate fail to commute with the 
spatial coordinates. The result there says that if two theories are related by a unitary transformation and
the S-matrix is trivial in one theory, so it must be in the other. The interested reader wanting to 
embark on a 'noncommutative journey' is referred to \cite{AMY12} as a starting point.

\section{Superrenormalisable theories evade Haag's theorem}\label{sec:EuHTHM}                        
Of particular importance is an analogue of Haag's theorem in the Euclidean realm proven for a class of superrenormalisable theories, which we 
survey in this section. 

We will now see how the triviality dictum of Haag's theorem coexists peacefully with its very circumvention by (super)renormalisation  
and discuss a result by Schrader which facilitates the understanding of the meaning of Haag's
theorem for renormalisable QFTs profoundly. We know that Haag's theorem is 
independent of the dimension of spacetime and hence is also true for superrenormalisable QFTs. 
Because there is a well-known connection between Euclidean and relativistic quantum field 
theories \cite{OSchra73}, one has to expect there to be a Euclidean manifestation of Haag's theorem. 

And indeed, this is exactly what Schrader's result says: an analogue of Haag's theorem holds also in
the Euclidean realm. The good news is that superrenormalisable quantum theories are relatively 
well-understood courtesy of the work of constructive field theorists.

Although the protagonists did not necessarily think of it that way (they did not say), we shall now have a glimpse at their results and
see that Haag's triviality dictum has in fact been proven to be evaded in a big class of 
superrenormalisable field theories by, one may say, \emph{(super)renormalisation}!

\subsection{Euclidean realm} 
Schrader proved in \cite{Schra74} that Haag's theorem is also lurking in Euclidean 
field theories of the type $P(\varphi)_{2}$, where this symbol stands for the interaction term of the
Hamiltonian in the form of a polynomial $P$ bounded from below ('semi-bounded') with $P(0)=0$ ('normalised'). 

In this context, a Euclidean field $\varphi$ is a random variable with values in the set $\De'(\R^{2})$ of 
distributions $\omega \colon \De(\R^{2}) \rightarrow \C$ on the space of test functions of compact support. Let us
write them as
\begin{equation}
 \omega(f) = \int d^{2}x \ \omega(x)f(x) \hs{2} f \in \De(\R^{2}) .
\end{equation}
The expectation values are given by \emph{functional integrals}. For example, the expectation values 
of a free field are given by the functional integral
\begin{equation}
 \la \varphi(f) \varphi(h) \ra_{0} = \int_{\De'(\R^{2})} \omega(f) \omega(h) \ d\mu_{0}(\omega) ,
\end{equation}
where $f,h \in \De(\R^{2})$ are test functions and $\mu_{0}$ is the Gaussian measure with covariance 
\[
 (-\Delta + m^{2})^{-1} .
\]
Then, given a semi-bounded nontrivial real and normalised polynomial $P$,
the interacting part of the Euclidean action with coupling $\lambda >0$ is defined by
\begin{equation}\label{Vint}
 V_{l,\lambda}(\varphi):= \lambda \int_{-l/2}^{+l/2} \int_{-l/2}^{+l/2} d^{2}x : P(\varphi(x)): ,
\end{equation}
where $:...:$ stands for Wick ordering, the Euclidean analogue of normal ordering (see §\ref{Diver} or \cite{GliJaf81}) and $l$ is the box 
length which serves as a volume cutoff. Schrader builds on Newman's findings, namely that the family of measures given by
\begin{equation}\label{Renmeas}
 d\mu_{l,\lambda}(\omega) = \frac{e^{-V_{l,\lambda}(\omega)}}{\la e^{-V_{l,\lambda}(\varphi)} \ra_{0}} 
 \ d\mu_{0}(\omega)
\end{equation}
converges to a measure $\mu_{\infty,\lambda}$ for sufficiently small $\lambda > 0$ through the limit
\begin{equation}
\begin{split}
 \lim_{l'\rightarrow \infty} 
 \int_{\De'(\R^{2})} \omega(\chi_{l}f_{1})... \omega(\chi_{l}f_{n}) \  d\mu_{l',\lambda}(\omega) 
 = \int_{\De'(\R^{2})}  \omega(\chi_{l}f_{1})... \omega(\chi_{l}f_{n}) \ d\mu_{\infty,\lambda}(\omega)
\end{split}
\end{equation}
for all $l>0$, where $\chi_{l}$ is a test function with compact support in the box. 
The $f_{j}$'s are any test functions, ie the expectation values are well-defined for all test functions
with compact support. 
Interestingly, Schrader's result is now that $\mu_{\infty,\lambda}$ and $\mu_{0}$ have mutually disjoint 
support (for sufficiently small $\lambda >0$). Thus, although the two-point function of the cutoff theory
\begin{equation}\label{cuto}
 \la \varphi(f) \varphi(h) \ra_{l,\lambda} 
  = \int_{\De'(\R^{2})}  \omega(f) \omega(h) \ d\mu_{l,\lambda}(\omega) 
  =  \int_{\De'(\R^{2})}  \frac{e^{-V_{l,\lambda}(\omega)}}{\la e^{-V_{l,\lambda}(\varphi)} \ra_{0} } 
  \ \omega(f) \omega(h) \ d\mu_{0}(\omega) 
\end{equation}
has a measure of the same support as the free measure, this situation changes dramatically in the limit
to the full (ostensibly physical) theory when $l \rightarrow \infty$. While one may expect to be able to 
approximate (\ref{cuto}) by employing perturbation theory with respect to $\lambda$ to the last integral, 
the result of mutually disjoint measure support suggests that this method will lead to anything but an 
approximation. However, since the limit measure $\mu_{\infty,\lambda}$ exists for small enough $\lambda > 0$, 
the model was further explored with the following highly interesting results. 

\begin{enumerate}
 \item[1.] The Schwinger functions, ie the Euclidean Green's functions 
 \begin{equation}
  S_{\lambda}(x_{1}, ..., x_{n}):=\la \varphi(x_{1}) ... \varphi(x_{n}) \ra_{\lambda} 
  = \int_{\De'(\R^{2})} \omega(x_{1}) ... \omega(x_{n}) \ d\mu_{\infty,\lambda}(\omega)
 \end{equation}
 exist in the sense of distributions. Let $f \in \Sw(\R^{2n})$ be a test function. 
 Then the function $\lambda \mapsto S_{\lambda}(f)$ is smooth in $\lambda$ in an interval 
 $[0,\lambda_{0})$ for $\lambda_{0} >0$ sufficiently small and has a Borel-summable asymptotic 
 Taylor series \cite{Di73}. 
 \item[2.] Their Minkowski limit $\R^{2} \ni (x^{0},x) \rightarrow (it,x)$ yield distributions that 
 satisfy the axioms due to Wightman \cite{GliJaSp74} and give rise to a Wightman theory with nontrivial 
 S-matrix \cite{OS76}, obtained through a nonperturbative LSZ reduction 
 expansion\footnote{No concrete result for comparison with any perturbative result was computed.
 This is because one has to know the $n$-point functions to make practical use of the LSZ formula!}. 
\end{enumerate}

What we see here is that although a Euclidean variant of Haag's theorem is valid, it does not preclude
the existence of nontrivial interactions. Heuristically, it is easy to see that the measure 
$\mu_{\infty,\lambda}$ may actually be seen as a 'superrenormalised' measure: if we write
\begin{equation}
\la e^{-V_{l,\lambda}(\varphi)} \ra_{0} = e^{- E_{l,\lambda} } 
\end{equation}
with ground state energy $E_{l,\lambda} = \la V_{l,\lambda}(\varphi) \ra_{0}$, then (\ref{Renmeas}) 
becomes 
\begin{equation}\label{renm}
 d\mu_{l,\lambda}(\omega) = e^{-[V_{l,\lambda}(\omega)-E_{l,\lambda}]} d\mu_{0}(\omega).
\end{equation}
This is exactly the kind of 'renormalisation' that Glimm and Jaffe used in their $(\varphi^{4})_{2}$ model
which they treated in the operator approach \cite{Jaff69, GliJaf68, GliJaf70}. 

\subsection{Evasion of Haag's theorem}\label{EvH} 
If we combine Schrader's Euclidean result with the above points, we can draw a clear conclusion.
Even though this result is about the class $P(\varphi)_{2}$, ie a big class of superrenormalisable quantum
field models, it is of particular importance as it helps us understand what Haag's theorem may mean for 
renormalisable field theories. 

The fact that the two measures $\mu_{0}$ and 
$\mu_{\infty,\lambda}= \lim_{l \rightarrow \infty} \mu_{l,\lambda}$ 
have mutually disjoint support tells us that there exists no Radon-Nikodym density relating these
two. This means that the Radon-Nikodym density in (\ref{renm}) ceases to make sense in the limit.
In the operator approach, this probably corresponds to unitary inequivalence between 
the free and the interacting theory. 
Yet still: (super)renormalisation leads to a sensible result, ie another measure which describes a 
nontrivial theory.

This is what we opine about the meaning of Haag's theorem for renormalisable theories and try to make 
plausible in this work: 
once renormalised, these theories are nontrivial and unitary inequivalent to the very free theories 
employed to construct them. In other words, it is precisely renormalisation what allows us to stay clear
of Haag's theorem. 
                                  
\section{What to do about Haag's theorem: reactions}\label{sec:whatodo}                              
We survey the reactions that Haag's theorem stirred among a minority of the physics community despite the fact that it stands in direct contradiction 
to the canonical formalism of perturbation theory which presupposes and relies on the existence of Dyson's matrix and the interaction picture.  

Some say the price for ignoring this and other triviality theorems (to be discussed in later sections) are the UV divergences that have beset 
the canonical formalism from the very start \cite{Wi67}.
While this is certainly true, one has to appreciate the role of \emph{renormalisation} which has become
an integral part of the formalism. This subtraction scheme does not only successfully fix these problems 
by rendering individual integrals finite and well-defined but changes the theory systematically.  

One has to say that both Haag's and van Hove's papers \cite{Ha55,vHo52} came out 
relatively late, in the early 1950s, that is, \emph{after} the renormalisation problem of 
quantum electrodynamics (QED) had already been settled by Feynman, Tomonaga, 
Schwinger and Dyson (see \cite{Feyn49,Dys49a,Dys49b} and references there). 
In hindsight it was perhaps fortunate that these pioneers did not get sidetracked by 
representation issues. 

However, in Dyson's review of Haag's 1955 paper, he acknowledges that 
\begin{quotation}
''The meaning of these results is to make even clearer than
before the fact that the Hilbert space of ordinary quantum mechanics is too narrow a framework in which
to give consistent definition to the operations of quantum field theory ... attempts to build a rigorous
mathematical basis for field theory ... always stop short of any nontrivial examples. The question, what 
kind of enlarged framework would make consistent definitions possible, is the basic unsolved problem of
the subject.''
\end{quotation}
Aware of van Hove's work, he furthermore writes ''The so-called 'Haag's theorem' ... is 
essentially an old theorem of L.van Hove``\footnote{ Dyson's review of Haag's paper is 
available at www.ams.org/mathscinet, keywords: author ``Haag'', year 1955. }.

Clearly, trying to explain and tackle the puzzle posed by Haag's theorem certainly falls within the 
remit of \emph{mathematical} physics which is probably why it was largely ignored by practising physicists 
or even dismissed. K\"allen is quoted as saying that the result ``... is really of a very trivial
nature and it does not mean that the eigenvalues of a Hamiltonian never exist or anything that 
fundamental.''\cite{Lu05} 
Such attitudes inspired Wightman to write in a proceedings paper that ``... there is a 
widespread opinion that the phenomena associated with Haag's theorem are somewhat pathological and 
irrelevant for real physics. I make one more attempt to explain why this is not the case.''\cite{Wi67}

\subsection{Textbooks} 
Despite all this, nowadays' standard physics textbooks (eg \cite{PeSch95, Wein95}) 
do not mention Haag's theorem in any way. To find a discussion of it, one has to turn to 
textbooks from before 1970 which treat it rather differently: while the books by Roman \cite{Ro69} 
and Barton \cite{Bar63} devote several pages to it, Bjorken and Drell \cite{BjoDre65} say in a footnote on p.175 that 
although Haag's theorem excludes the existence of Dyson's matrix, they will (understandably) assume that 
it does exist notwithstanding. 

Sterman takes a similiar stance in \cite{Ster93} which he relegated to the appendix where he reviews the interaction picture. Her writes on p.508
\begin{quotation}
''Haag's theorem states that the unitary transformation ... is not strictly consistent with Poincarè invariance ... this fascinating point, ... however, has not been shown to 
affect practical results, ...''
\end{quotation}
This is probably the most down-to-earth standpoint possible. Ticciati's view in \cite{Ti99}, p.84 is somewhat more subtle,
\begin{quotation}
 ''[the] first few terms yield wonderfully accurate predictions. It appears then that the interaction picture provides a sound approach 
 to perturbation theory but may have no non-perturbative validity.''
\end{quotation}
This gives the unfortunate impression that the author does not take the results of renormalised perturbation theory seriously. 
However, Ticciati suggests that one may drop the assumption of unitary equivalence and, also on the plus side, his formulation of Haag's theorem corresponds
to our version in that he does not make use of the conjugate momentum fields (see §\ref{sec:ProofHTHM}).

\subsection*{Barton}
Barton discusses the technicalities of the Wightman distributions' analyticity properties and surveys
the provisions of Haag's theorem at considerable length to fathom out which may be relinquished. The one he would like to retain is the condition 
that both fields obey the CCR (see \cite{Bar63}, p.158). He includes this property in his exposition of Haag's theorem as part of the
conditions for the generalised theorem (\ref{nleq4pf}) (ibid., p.153), exactly 
where it has no business to hang around: not surprising therefore that he neither uses the CCR nor gives 
reasons why they are needed. In fact, the reason he does not employ 
them is that they are dispensible (see \cite{StreatWi00}, p.100) and, as we will see in §\ref{sec:PowTHM}, probably not fulfilled by a theory of true 
interactions. To put this in perspective, however, Barton's book \cite{Bar63} clearly shows that at the time of writing, 
its author favoured the Yang-Feldman approach to the S-matrix in the Heisenberg picture to be introduced in the context of Lopuszanski's work 
in §\ref{LopuCon} where the CCR are an integral part (Yang and Feldman's seminal paper is \cite{YaFe50}). 

\subsection*{Roman} 
While Roman's account of Haag's theorem and his observation that the CCR may be discarded are 
entirely correct (\cite{Ro69}, pp.330 and pp.392), we do not at all agree with his assertions 
as to what stand to take on this issue \cite{Ro69}: the author contends that 
\emph{renormalisation does not help}. 

We have to mention that the author seemed to have a weird understanding of the Stone-von Neumann theorem 
(Theorem \ref{SvN} of this thesis) and deemed it applicable to QFT (\cite{Ro69}, p.330).
According to his understanding, it is merely the assertion that unitary equivalence of two field algebras
implies that both algebras are irreducible, if one is. While this assertion is hardly questionable, we know that the
real Stone-von Neumann theorem does not hold for an infinite number of degrees of freedom 
and hence not for QFT (see §§\ref{sec:IneqRep},\ref{sec:strangeRep}). 
However, to make his case about renormalisation, he (wrongly) invokes his understanding of the theorem to 
argue as follows. 

Assuming that by the Stone-von Neumann theorem the CCR only permit unitarily equivalent representations, the wavefunction (=field-strength) 
renormalisation $Z$ in the CCR  
\begin{equation}\label{RenCCR}
 [\varphi(\mathbf{x}),\pi(\mathbf{y})] = iZ^{-1} \delta^{(3)}(\mathbf{x}-\mathbf{y}) 
\end{equation}
of the renormalised field $\varphi(\mathbf{x})$ does not change the fact that this 'interacting' 
renormalised field is free. Apart from the fact that the wavefunction renormalisation constant $Z$ is a dubious fellow which
forces in our opinion the rhs of (\ref{RenCCR}) to vanish, he ignores that there is much more intricate
formalism between the free and the interacting field than the canonical formalism pretends. 

This is one central assertion of this work which we shall discuss in §§\ref{sec:Ren} and \ref{sec:RCircH}: 
although the canonical apparatus speaks of a unitary map between the free interaction picture and 
the interacting Heisenberg picture field, it disproves what it purports by its own actions. 
Encountering divergences, there is a subsequent backpedalling and fiddling in of infinite factors, 
only to claim again unitary equivalence where there cannot be any (we will next-to prove it within the mindset of the canonical formalism 
in §\ref{sec:RCircH}).

However, on the plus side, Roman discusses the provisions of Haag's theorem and says that he is inclined 
to give up unitary equivalence and maybe also the CCR (ibidem, p.393). Apart from his peculiar idea of what
the Stone-von Neumann theorem is about, he makes valuable remarks on the CCR and its relation to the 
property of locality (ibidem, p.328): if locality holds, ie 
\begin{equation}
 [\varphi(x),\varphi(y)]=0 \hs{2} (x-y)^{2}<0,
\end{equation}
then the CCR must be singular because 
\begin{equation}
 \frac{1}{\varepsilon}[\varphi(t,\mathbf{x}),\varphi(t+\varepsilon,\mathbf{y})-\varphi(t,\mathbf{y})]=0 ,
\end{equation}
for all small enough $\varepsilon \neq 0$. He also mentions that we cannot be sure that this singular 
commutator is a c-number. Despite his interest in the CCR, Roman was not aware of Powers' results, which 
were published in 1967, two years prior to the publication of his book. 
He was probably spared this news\footnote{His discussion of the CCR would
have been much more extensive; historical remark: a case of disconnected research communities although Roman, as far as
one can tell from his textbook was clearly mathematically inclined.}.

\subsection{Monographs}\label{monog} 
We have found two monographs covering Haag's theorem. One of them, \cite{Stro13} by Strocchi, has already been and will be again cited numerous times
in this work, especially in §\ref{sec:AxG}. Let us first consider the other one.

\subsection*{Duncan} The notable recent extensive and mathematically-oriented monograph \cite{Dunc21} by Duncan contains a section on Haag's theorem 
entitled ''How to stop worrying about Haag's theorem''.
The author shows that a simple mass shift leads to the van Hove phenomenon, ie the vanishing overlap 
of the two vacua. 

This can actually be tracked down to the fact that free field theories with different masses 
are unitarily inequivalent, which he seemed not to be aware of when he wrote his book 
(see Theorem \ref{freeH} or Theorem X.46 in \cite{ReSi75}, p.233).
Most interestingly, he subsequently treats the mass shift as a 'mass perturbation', ie the difference
between the two masses as an interaction term and applies standard perturbation theory in the 
interaction picture to it. The upshot there is that the propagator of the mass-shifted field comes out correctly.   

This brings us right to the very contention underlying this work on which we shall elaborate in 
§\ref{sec:RCircH}: \emph{unitary inequivalence does not mean that canonical (renormalised) perturbation theory 
yields nonsensical and unphysical results}. In fact, we use Duncan's example to conclude that 
\emph{renormalisation allows us to evade Haag's triviality dictum}!
 
Duncan points out that Haag's theorem does not exclude the existence of 
the $S$-matrix maintaining that both Haag-Ruelle and LSZ scattering theories ''lead to a perfectly 
well-defined, and unitary, $S$-matrix'' on the basis of the axiomatic framework (\cite{Dunc21}, p.364). 
The author then explains his attitude towards this issue (\cite{Dunc21}, pp.369-370): 
\begin{itemize}
 \item one should in a \textsc{first step} introduce a spatial volume and a UV cutoff to work with a finite number of degrees of freedom and hence a 
       well-defined interaction picture,
\end{itemize} 
while accepting the price of sacrificing Poincarè invariance at this stage. 
Then, having kept Haag's theorem at bay so far by retaining the provisions of the Stone-von Neumann theorem, 
\begin{itemize}
 \item one redefines in a \textsc{second step} both mass and couplings by renormalisation and removes all cutoffs which finally restores Poincarè invariance.
\end{itemize} 
The upshot is, at least according to Duncan, that we have enough reason to stop worrying about Haag's theorem because we have by virtue of this procedure 
circumvented it. This is essentially the view that nowadays we believe practising physicists generally subscribe to or would subscribe to if we had told 
them this story. 

Yet in the ambition to retain the interaction picture as long as possible along the way, however, the author leaves the 
impression that he wishes to arrive at a unitarily equivalent representation of the CCR. As we have seen 
in the previous section and will extensively discuss in §\ref{sec:Ren}, 
this is certainly not where the 'circumventing' procedure leads. If the aim of a 'circumvention scheme' is
unitary equivalence to free fields, then it has to go wrong as Haag's theorem cannot be circumvented in this sense: it is a mathematical theorem in the 
truest sense of the word; it brings with it the 
'hardness of the logical must'\footnote{Wittgenstein}. 

\subsection*{Strocchi}
Strocchi puts the triviality dictum expressed through Haag's theorem down to the facts 
we will discuss and explain in §\ref{sec:Fock}: the fact that any theory within the unitary 
equivalence class of the free particle Fock space $\Hi_{0}$ must have a unitarily equivalent generator 
of time translations $H_{0}$ is irreconcilable with a splitting into a free and another nontrivial piece.     

Strocchi sees the interaction picture as instrumental in computing nontrivial results in perturbation
theory but says unitary equivalence to the Heisenberg field ceases to make sense in the no-cutoff limit,
even after renormalisation (\cite{Stro13}, pp.52). We agree with this fully and have to stress again
that Haag's theorem cannot be circumvented in this way: 
Haag's theorem is a general statement and does not take into account any feature of Dyson's matrix, 
ie the field intertwiner, other than its unitarity. 
All other provisions have a fundamentally different status, as we shall discuss in §\ref{sec:ProofHTHM} where we scrutinise the proof of Haag's theorem. 

\subsection{Papers}\label{PaRea} 
Guenin and Segal suggest that Haag's theorem can be bypassed 
if the time evolution of the interaction picture is implemented in the form of \emph{locally} unitary
automorphisms and not, as usual, by globally unitary maps \cite{Gue66, Seg67}. 

The former author introduces a modified interaction picture in which the trivial part of the Hamiltonian acts on the states and the 
nontrivial one on the observables. The Dyson series he then obtains for a specific class of Euclidian 
invariant Hamiltonians does, however, to say the least, not only look peculiar but also leads to a 
convergent perturbation series in one space dimension\footnote{Recall that $(\varphi^{4})_{2}$ has an
asymptotic perturbation series and, as is well-known, so does $(\varphi^{4})_{0}$! \cite{GliJaf81}.}.

\subsection{Fraser's thesis}\label{Fraser}
The most extensive review of Haag's theorem is Doreen Fraser's doctoral thesis \cite{Fra06}. She belongs to a small community 
of philosophers of physics whose work revolves around the interpretation of quantum theory in general and, in her case, of quantum field theory in 
particular. Her thesis is roughly composed of 3 parts. 

In the first, she expounds Haag's theorem and sketches its proof. 
In the second part, she discusses 
possible responses, eg 
\begin{itemize}
 \item introduction of a volume cutoff at the price of sacrificing translational invariance,
 \item renormalisation, discussed using the example of $(\varphi^{4})_{2}$, characterised by the cutoff limit
 \begin{equation}
  H_{\mbox{\scriptsize ren}} = \lim_{\Lambda \rightarrow \infty} \{ H_{\Lambda} - E^{0}_{\Lambda} \},
 \end{equation}
       which comes at the price of an ill-defined counterterm for the ground state energy $E^{0}_{\infty}$, 
 \item dropping the assumption of unitary equivalence and using other approaches like Haag-Ruelle scattering theory and constructive approaches.
\end{itemize}
Our view on these points is that none of them is satisfactory: 
while the constructive method is fine unfortunately only for lower-dimensional Minkowski spacetime with more or less trivial rotations, 
Haag-Ruelle theory is impractical: no predictions can be made and the connection to renormalised QFT is still unclear (see §\ref{AsyS}).    

The third part is devoted to ontological questions. Her conclusion about the ontological status of particlelike entities is based on the tenuous mode 
of existence of particles as described by interacting non-Fock QFTs: 
\begin{quotation}
 '' ... since in the real world there are always interactions, QFT does not furnish grounds for regarding particlelike entities as fundamental constituents 
 of reality''(\cite{Fra06}, p.137).
\end{quotation}
This is, from a physics point of view, too radical since \emph{elastic interactions} are also captured by QFT, albeit in a nonrelativistic limit. 
However, here is an interesting aspect: Fraser states that Haag's theorem is not associated with UV but infinite-volume divergences! 

In a way, this strikes us as plausible: Haag's theorem is \textsc{firstly} independent of the dimension of spacetime\footnote{It should be a QFT, ie spacetime should be at least two-dimensional! 
We want Lorentz boosts!}, 
which is why superrenormalisable theories are also affected by Haag's theorem and \textsc{secondly} Wightman's arguments described in §\ref{Wiag} against the 
interaction picture are clearly based on an infinite-volume divergence. 

Her statement that ''...Haag's theorem undercuts global unitary equivalence, it is compatible with local unitarity equivalence.''(\cite{EaFra06}, p.323)
supports this view, which is inspired by a modified version of Theorem \ref{freeH} ('Haag's theorem for free fields', §\ref{massdest})
with a contrary outcome (\cite{ReSi75}, p.329): if 
$\varphi_{0}$ and $\varphi$ are two free fields with masses $m_{0}$ and $m$, respectively, and $B \subset \Mi$ a bounded region, then there exists a unitary 
map $V_{B} \colon \Hi_{0} \rightarrow \Hi$ between the two corresponding Hilbert spaces such that $\varphi(f)=V_{B}\varphi_{0}(f)V_{B}^{-1}$ for all 
$f \in \De(B)$. This fits in nicely with Guenin and Segal's studies. 

But when proper interactions enter the game in physical Minkowski space $\Mi=\R^{4}$, one cannot separate the UV divergences away and declare them as 
having nothing to do with Haag's theorem. We prefer a more modest position regarding this issue: the ill-definedness of the interacting picture in any 
spacetime leads to several types of divergences. 
Haag's theorem speaks of no divergences. It has to be merely read as saying: either both theories are free or they must be unitarily inequivalent.

\part{Axiomatic quantum field theory \& Haag's theorem}\label{part2}

\section{The interaction picture in Fock space}\label{sec:Fock}                                      

This section reviews the Fock space for an infinite number of degrees of freedom and has a critique
of the interaction picture. Altogether, the arguments presented there provide compelling reasons why the 
interaction picture cannot exist in the setting of a Fock space, at least for a scalar theory with a mass gap. 
They shed some light on the connection between the Fock representation with its characteristic number operator and go back to the 
publications \cite{DeDoRu66,DeDo67,Chai68}. 
Strocchi has cast these results into a canonical form, ie purged of operator-algebraic argot, in his 
monograph \cite{Stro13}, which we shall follow in this section. 

\subsection{Fock space representations} 
We start with the usual creation and annihilation operators of a canonical free field, satisfying the CCR
\begin{equation}\label{CCRk}
 [a(\mathbf{k}),a(\mathbf{k}')] = 0 = [a^{\dagger}(\mathbf{k}),a^{\dagger}(\mathbf{k}')] \ , \hs{1}
 [a(\mathbf{k}),a^{\dagger}(\mathbf{k}')] = (2\pi)^{3} \delta^{(3)}(\mathbf{k}-\mathbf{k}').
\end{equation}
and smooth them out with test functions $f \in \Sw(\R^{3})$,
\begin{equation}\label{annsmoo}
 a(f):= \int \frac{d^{3}k}{(2\pi)^{3}} f^{*}(\mathbf{k}) a(\mathbf{k}) , \hs{1}
 a^{\dagger}(f):= \int \frac{d^{3}k}{(2\pi)^{3}} f(\mathbf{k}) a^{\dagger}(\mathbf{k}).
\end{equation}
Choosing an orthonormal Schwartz basis $f_{j} \in \Sw(\R^{3})$ with respect to the inner product
\begin{equation}
(f_{i},f_{j}):= \int \frac{d^{3}k}{(2\pi)^{3}} f_{i}^{*}(\mathbf{k})f_{j}(\mathbf{k}) = \delta_{ij},
\end{equation}
the CCR (\ref{CCRk}) take the form 
\begin{equation}\label{CCRf}
 [a(f_{i}),a(f_{j})] = 0 = [a^{\dagger}(f_{i}),a^{\dagger}(f_{j})] \ , \hs{1}
 [a(f_{i}),a^{\dagger}(f_{j})] = \delta_{ij}.
\end{equation}
Then we obtain what we shall in the following call a \emph{Heisenberg algebra}: if we set $a_{j}:=a(f_{j})$
and $a_{j}^{*}:=a^{\dagger}(f_{j})$, ie 
 \begin{equation}\label{FockCCR}
[a_{j},a_{j}] = 0 = [a^{*}_{j},a^{*}_{j}] \ , \hs{1} [a_{j},a^{*}_{l}]=\delta_{jl},
 \end{equation}
then the Heisenberg algebra is given by the polynomial algebra 
$\Al_{H}:= \la a_{j}, a_{j}^{*} : j \in \N \ra_{\C}$, generated by the creation and annihilation operators. 

Given a vector $\Psi_{0}$ in a Hilbert space $\Hi$ and a representation $\varrho$ of the Heisenberg algebra,
such that $\varrho(a_{j})\Psi_{0}=0$ for all $j \in \N$, called \emph{Fock representation}, it is
clear that the \emph{number operator} $N_{\varrho}:=\sum_{j \geq 1}\varrho(a^{*}_{j})\varrho(a_{j})$ exists 
on the domain
\begin{equation}
 \D_{0} := \varrho(\Al_{H}) \Psi_{0}.
\end{equation}
If the closure of $\D_{0}$ yields $\Hi$, ie if $\D_{0}$ is dense in $\Hi$, we say that the representation of the 
Heisenberg algebra is \emph{cyclic} with respect to the vacuum $\Psi_{0}$. If $[C,\varrho(\Al_{H})]=0$ 
implies $C=c 1$ with $c \in \C$ for an operator $C$ on $\Hi$, the representation $\varrho$ is 
called \emph{irreducible}. Fock representations are always cyclic (by definition) and 
irreducible \cite{Stro13}. We shall for convenience drop the symbol $\varrho$ for the representation 
whenever there is no potential for confusion. 

\subsection*{Vacuum \& no-particle state} 
We shall now survey a small collection of assertions which urge us to draw the above mentioned negative conclusion
about the interaction picture in Fock space. The first assertion is (tacitly) well-known among physicists.

\begin{Proposition}[Number operator and vacuum \cite{Stro13}]\label{NuVa}
Let $\{a_{j}, a_{j}^{*} : j \in \N \}$ be an irreducible representation of the Heisenberg algebra
with a dense domain $\D_{0}$ in a Hilbert space $\Hi$. Then the following two conditions are equivalent.
\begin{enumerate}
 \item[1.] The total number operator $N:=\sum_{j \geq 1}a_{j}^{*}a_{j}$ has a nonnegative spectrum
           $\sigma(N)$ and exists in the sense that the strong limit
 \begin{equation}
  s-\lim_{n \rightarrow \infty} e^{i \alpha \sum_{j=1}^{n}a_{j}^{*}a_{j}} = T(\alpha)
 \end{equation}
exists and defines a strongly continuous one-parameter group of unitary operators\footnote{Stone's theorem then guarantees that $N$ exists 
as the generator of $T(\alpha)$.}.  

 \item[2.] There exists a cyclic vector $\Psi_{0} \in \Hi$ such $a_{j}\Psi_{0}=0$ for all $j \in \N$.
\end{enumerate}
\end{Proposition}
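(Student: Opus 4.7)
The plan is to read this as a dictionary between two pieces of information: the existence of a $U(1)$-grading implemented by $T(\alpha) = e^{i\alpha N}$, and the existence of a lowest-weight vector for that grading. The easy direction supplies the grading from a given vacuum; the substantive direction fabricates a vacuum out of the bottom of the spectrum of $N$, exploiting the fact that the CCR force each $a_j$ to shift the $N$-spectrum down by one unit.

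For the implication $\mathit{(2)} \Rightarrow \mathit{(1)}$, I would observe that $\D_{0} := \Al_{H}\Psi_{0}$ is spanned by monomials $a_{j_{1}}^{*}\cdots a_{j_{k}}^{*}\Psi_{0}$, each of which involves only finitely many modes. On any such vector the partial sum $N_{n}=\sum_{j=1}^{n}a_{j}^{*}a_{j}$ stabilises as soon as $n$ exceeds the largest mode index appearing and acts simply as multiplication by the particle number $k$. Hence $e^{i\alpha N_{n}}\Phi$ is eventually $n$-independent for $\Phi\in\D_{0}$, producing a pointwise, and therefore strong, limit $T(\alpha)$ on this dense domain. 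Uniform boundedness $\|e^{i\alpha N_{n}}\|\leq 1$ lets the limit extend to a strongly continuous unitary group on all of $\Hi$; Stone's theorem then supplies a self-adjoint generator $N$ whose spectrum is contained in $\{0,1,2,\dots\}$, giving the nonnegativity claim.

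For the substantive direction $\mathit{(1)} \Rightarrow \mathit{(2)}$, the plan is to promote the commutation relation $[N,a_{j}]=-a_{j}$ to the group identity $T(\alpha)\,a_{j}\,T(-\alpha)=e^{-i\alpha}a_{j}$, which via the functional calculus for $N$ reads
\[
a_{j}\,E_{S}(N) \;=\; E_{S-1}(N)\,a_{j}
\]
for every Borel set $S$, with $S-1 := \{s-1 : s \in S\}$. Now set $\lambda_{0} := \inf\sigma(N)\geq 0$, which belongs to $\sigma(N)$ since the spectrum is closed, and pick any unit vector $\Psi\in\mathrm{Ran}\,E_{[\lambda_{0},\lambda_{0}+\varepsilon)}(N)$ with $\varepsilon<1$. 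The shift identity forces the spectral support of $a_{j}\Psi$ into $[\lambda_{0}-1,\lambda_{0}-1+\varepsilon)\subset(-\infty,\lambda_{0})$, a set disjoint from $\sigma(N)$; hence $a_{j}\Psi=0$ for every $j$. This gives $N_{n}\Psi=0$ for every $n$, so $e^{i\alpha N_{n}}\Psi=\Psi$, and passing to the strong limit yields $T(\alpha)\Psi=\Psi$, i.e.\ $\Psi\in\ker N$. Comparing with the spectral support of $\Psi$ forces $\lambda_{0}=0$, and hands us a nonzero $\Psi_{0}\in\ker N$ annihilated by every $a_{j}$. Irreducibility of the representation then promotes $\Psi_{0}$ to a cyclic vector, since $\overline{\Al_{H}\Psi_{0}}$ is a nonzero closed invariant subspace and the commutant of $\Al_{H}$ is trivial.

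The main obstacle I expect is the domain care needed to make the spectral shift identity rigorous: both $N$ and $a_{j}$ are unbounded, so $[N,a_{j}]=-a_{j}$ holds a priori only on a dense core, and one must argue carefully that conjugation by $T(\alpha)$ transports $a_{j}$ as claimed globally on $\Hi$. The standard remedy is to first verify $e^{i\alpha N_{n}}a_{j}e^{-i\alpha N_{n}}=e^{-i\alpha}a_{j}$ on $\D_{0}$ for $j\leq n$, where everything is manifestly well defined, then pass to the strong limit, and only at that point invoke the functional calculus to extract the spectral statement. Once this algebraic-to-spectral bridge is secure, the remainder of the proof is essentially bookkeeping.
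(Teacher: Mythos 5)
Your proof is correct (at the level of formal rigour the paper itself adopts), but the substantive direction $1\Rightarrow 2$ follows a genuinely different route from the paper's. The paper also starts from $T(\alpha)a_{j}=e^{-i\alpha}a_{j}T(\alpha)$, but then observes $[T(2\pi),\Al_{H}]=0$, invokes irreducibility to conclude $T(2\pi)=e^{i\theta}1$, and thereby proves that $\sigma(N)$ is \emph{discrete}; it then picks a genuine eigenvector $\Psi_{\lambda}$ for some eigenvalue $\lambda>0$, shows via $0<\lambda\|\Psi_{\lambda}\|^{2}=\sum_{j}\|a_{j}\Psi_{\lambda}\|^{2}$ that some $a_{j}\Psi_{\lambda}$ is a nonzero eigenvector at $\lambda-1$, and terminates the descent using nonnegativity of the spectrum. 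You bypass discreteness and eigenvectors entirely: the spectral-translation identity $a_{j}E_{S}(N)=E_{S-1}(N)a_{j}$ applied to an approximate eigenvector supported in a window of width $\varepsilon<1$ at $\lambda_{0}=\inf\sigma(N)$ produces the vacuum directly, and irreducibility is used only at the very end to get cyclicity --- a point the paper's proof silently omits, so your argument is in fact more complete there. What the paper's route buys is the extra structural information that $\sigma(N)$ sits on a shifted integer lattice (hence, once the vacuum is found, in $\N\cup\{0\}$); what your route buys is economy and robustness, since it needs only that $\inf\sigma(N)$ lies in the closed spectrum. Your treatment of $2\Rightarrow 1$ (stabilisation of $N_{n}$ on finite-particle vectors, uniform boundedness, Stone/analytic vectors) spells out what the paper dispatches in a single sentence. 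The one place where genuine extra work remains is the domain issue you flag yourself --- making sense of $a_{j}$ (via its closure and the bound $\|a_{j}\Phi\|^{2}\leq\la\Phi|N\Phi\ra$) on vectors taken from a spectral subspace of $N$ rather than from $\D_{0}$ --- but the paper's own manipulations gloss over exactly the same point.
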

\begin{proof}
 Let the first condition be given. First note that the CCR imply  
 $T(\alpha)a_{j}=e^{-i \alpha} a_{j}T(\alpha)$ and 
 $T(\alpha)a^{\dagger}_{j}=e^{i \alpha} a^{\dagger}_{j}T(\alpha)$. This entails $[T(2\pi),\Al_{H}]=0$ 
and thus $T(2\pi)=e^{i\theta}1$ on account of unitarity and irreducibility. Using the spectral representation of $T(\alpha)$, 
ie $T(\alpha)=\int_{\sigma(N)} e^{i\alpha \lambda} d\mathsf{E}(\lambda)$ we consider
\begin{equation}
 0 = (T(2\pi)-e^{i\theta})^{\dagger}(T(2\pi)-e^{i\theta})
 = \int_{\sigma(N)} |e^{i (2\pi \lambda-\theta)}-1|^{2} d\mathsf{E}(\lambda)
\end{equation}
which means that the spectrum $\sigma(N)$ is discrete: $2 \pi \lambda - \theta \in 2 \pi \mathbb{Z}$. Pick 
$\lambda \in \sigma(N)$ with $\lambda >0$ and let $\Psi_{\lambda}$ be its eigenstate. Then 
\begin{equation}
 0 < \lambda ||\Psi_{\lambda}||^{2} = \la \Psi_{\lambda} | N \Psi_{\lambda} \ra 
 = \sum_{j \geq 1} \la \Psi_{\lambda} | a_{j}^{*} a_{j} \Psi_{\lambda} \ra 
 = \sum_{j \geq 1} ||a_{j}\Psi_{\lambda}||^{2}
\end{equation}
entails that there is at least one $j$ such that $a_{j}\Psi_{\lambda} \neq 0$. The CCR imply
\begin{equation}
 T(\alpha)a_{j}\Psi_{\lambda}= e^{-i\alpha} a_{j} T(\alpha)\Psi_{\lambda} 
 = e^{i\alpha(\lambda-1)} a_{j}\Psi_{\lambda}
\end{equation}
and therefore $a_{j}\Psi_{\lambda} = c \Psi_{\lambda-1}$ with some $c \in \C$. Because the spectrum is
bounded from below and nonnegative, there must be a state $\Psi_{0}$ such that $a_{j}\Psi_{0}=0$. 
If the second condition is fulfilled, then it is clear that $N$ exists on the sense subspace 
$\D_{0}=\Al_{H}\Psi_{0}$ and so does the limit of the exponentiation. 
\end{proof}

\subsection*{Hamiltonian chooses Fock representation} 
The next result is very interesting and pertinent to the interaction picture question.

\begin{Proposition}[Fock Hamiltonian \cite{Stro13}]\label{HfreeFock}
Assume that for an irreducible representation $\{a_{j}, a_{j}^{*} : j \in \N \}$ of the Heisenberg 
algebra with dense domain $\D_{0}$, there exists an operator 
\begin{equation}\label{freeH}
 H_{0} = \sum_{j \geq 1} \omega_{j} a^{*}_{j}a_{j} \ , \hs{1} \forall j: \ 0 < m \leq \omega_{j} 
\end{equation}
as generator of a strongly continuous one-parameter unitary group
given by the strong limit
\begin{equation}
  s-\lim_{n \rightarrow \infty} e^{i \alpha \sum_{j=1}^{n} \omega_{j} a_{j}^{*}a_{j}} = e^{i\alpha H_{0}}
\end{equation}
such that the domain $\D_{0}$ is stable under the action of this group. Then the Fock representation is 
selected by $H_{0}$. 
\end{Proposition}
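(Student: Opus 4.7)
The plan is to extract a cyclic vacuum vector directly from the spectral data of $H_{0}$ and then invoke Proposition~\ref{NuVa} to conclude that the representation is of Fock type. The driving intuition is the familiar oscillator picture: since every mode has frequency bounded below by $m>0$, repeated application of annihilators cannot lower the energy indefinitely, so a ground state annihilated by every $a_{j}$ must exist; irreducibility then promotes it to a cyclic vacuum.

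First I would establish two structural facts about $H_{0}$. \emph{Nonnegativity of the spectrum:} on the stable dense domain $\D_{0}$, one has $\la \Psi,H_{0}\Psi\ra = \sum_{j\geq 1}\omega_{j}\|a_{j}\Psi\|^{2}\geq 0$, so $\sigma(H_{0})\subset [0,\infty)$ and $E_{0}:=\inf\sigma(H_{0})\geq 0$ exists. \emph{Intertwining property:} setting $T(\alpha):=e^{i\alpha H_{0}}$ and using the finite partial sums $H_{0}^{(n)}=\sum_{j=1}^{n}\omega_{j}a_{j}^{*}a_{j}$, the CCR (\ref{FockCCR}) yield
\begin{equation}
 e^{i\alpha H_{0}^{(n)}} a_{j}\, e^{-i\alpha H_{0}^{(n)}} = e^{-i\alpha\omega_{j}} a_{j} \quad (j\leq n),
\end{equation}
and the strong-limit hypothesis on $e^{i\alpha H_{0}^{(n)}}$ together with stability of $\D_{0}$ passes this to $T(\alpha) a_{j} T(-\alpha)=e^{-i\alpha\omega_{j}}a_{j}$ for every $j\in\N$. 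Spectrally, this means that $a_{j}$ shifts the spectral resolution of $H_{0}$ by $-\omega_{j}$.

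Next I would construct the vacuum via the spectral theorem. Write $H_{0}=\int_{\sigma(H_{0})}\lambda\,d\mathsf{E}(\lambda)$ and consider the spectral projection $P:=\mathsf{E}([E_{0},E_{0}+m/2))$, which is nonzero because $E_{0}\in\sigma(H_{0})$. Pick any unit vector $\Psi_{0}\in P\Hi$. The intertwining relation forces $a_{j}\Psi_{0}$ to lie in the spectral subspace $\mathsf{E}([E_{0}-\omega_{j},E_{0}+m/2-\omega_{j}))\Hi$; since $\omega_{j}\geq m$, this interval sits strictly below $E_{0}$, contradicting $E_{0}=\inf\sigma(H_{0})$ unless $a_{j}\Psi_{0}=0$. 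Thus $a_{j}\Psi_{0}=0$ for every $j$. Cyclicity is then free: the closed subspace $\overline{\Al_{H}\Psi_{0}}$ is stable under all $a_{j}$ and $a_{j}^{*}$, hence its orthogonal projector commutes with $\varrho(\Al_{H})$, so by irreducibility it is either $\{0\}$ or $\Hi$, and the former is ruled out by $\Psi_{0}\neq 0$. Feeding $\Psi_{0}$ into condition~2 of Proposition~\ref{NuVa} identifies the representation as the Fock representation determined by $\Psi_{0}$.

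The main obstacle I expect is the rigorous passage from the naive commutator $[H_{0},a_{j}]=-\omega_{j}a_{j}$ (meaningful only formally, since $H_{0}$ is an infinite sum) to the global intertwining $T(\alpha)a_{j}T(-\alpha)=e^{-i\alpha\omega_{j}}a_{j}$ on all of $\Hi$. The argument above sketches it through the approximants $H_{0}^{(n)}$, but one needs to check that strong convergence $e^{i\alpha H_{0}^{(n)}}\to T(\alpha)$ is uniform enough on vectors of the form $a_{j}\Psi$ (or, equivalently, that $a_{j}$ maps the domain on which the limit is controlled into itself). Once this is justified, the remainder of the proof is essentially bookkeeping with the spectral calculus.
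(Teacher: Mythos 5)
Your proposal is correct in outline, but it takes a genuinely different route from the paper. The paper's proof is a two-line reduction: from $\sum_{j=1}^{n}\omega_{j}a_{j}^{*}a_{j}\geq m\sum_{j=1}^{n}a_{j}^{*}a_{j}$ it infers that the existence of $H_{0}$ entails the existence of the total number operator $N$, and then lets Proposition \ref{NuVa} (condition 1 implies condition 2) do all the spectral work, where the discreteness of $\sigma(N)$ comes from the periodicity $T(2\pi)=e^{i\theta}1$. You instead rerun the vacuum-extraction argument directly on $H_{0}$: the intertwining $T(\alpha)a_{j}T(-\alpha)=e^{-i\alpha\omega_{j}}a_{j}$, a spectral window $[E_{0},E_{0}+m/2)$ at the bottom of the spectrum, and the gap $\omega_{j}\geq m$ to force $a_{j}\Psi_{0}=0$; irreducibility then gives cyclicity and you land on condition 2 of Proposition \ref{NuVa}, i.e.\ on the definition of a Fock representation. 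What your route buys: it avoids the step hidden in the paper's one-liner (why existence of $s$-$\lim_{n}e^{i\alpha H_{0}^{(n)}}$ should imply existence of $s$-$\lim_{n}e^{i\alpha N^{(n)}}$, an interchange of limits the paper does not justify), and your window of width $m/2$ correctly copes with the fact that $\sigma(H_{0})$ need not be discrete for general $\omega_{j}$ (no analogue of the $T(2\pi)$ argument is available), so $E_{0}$ need not be an eigenvalue — a point you handled well. What it costs: besides the limit-interchange issue you flag yourself, there is a domain gap you do not mention — a unit vector $\Psi_{0}$ chosen from the spectral subspace $\mathsf{E}([E_{0},E_{0}+m/2))\Hi$ has no a priori reason to lie in the domain of the unbounded operators $a_{j}$ (nor in $\D_{0}$), so ``$a_{j}\Psi_{0}$'' must first be given meaning, e.g.\ by approximation from $\D_{0}$ or by expressing the spectral shift as $a_{j}\mathsf{E}(\Delta)\subset\mathsf{E}(\Delta-\omega_{j})a_{j}$ on $\D_{0}$; likewise your form identity $\la \Psi, H_{0}\Psi\ra=\sum_{j}\omega_{j}\|a_{j}\Psi\|^{2}$ tacitly assumes $\D_{0}$ is a core for $H_{0}$. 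These points are repairable and are at the same level of rigour as the paper's own arguments (Proposition \ref{NuVa} applies $a_{j}$ to eigenvectors without a domain check either), so your proof stands as a valid, somewhat more self-contained alternative to the paper's reduction-to-$N$ argument.
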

\begin{proof}
 $\sum_{j=1}^{n}\omega_{j} a^{*}_{j}a_{j} \geq m \sum_{j=1}^{n}a^{*}_{j}a_{j}$ tells us that the existence
 of $H_{0}$ implies the existence of the number operator $N$ which by Proposition \ref{NuVa} selects the 
 representation. 
\end{proof}
This result relies on the existence of the mass gap, as the proof suggests. In fact, Proposition \ref{HfreeFock} 
can be formulated for a free scalar field with (formal) Hamiltonian 
\begin{equation}\label{Hamfree}
 H_{0} = \int \frac{d^{3}p}{(2\pi)^{3}} \ \omega(\mathbf{p}) a^{\dagger}(\mathbf{p})a(\mathbf{p})
\end{equation}
with relativistic energy $\omega(\mathbf{p})=\sqrt{\mathbf{p}^{2}+m^{2}}$ of a particle with 
momentum $\mathbf{p} \in \R^{3}$ and rest mass $m>0$. 
However, the case $m=0$ is different: Proposition \ref{HfreeFock} does not hold
and there is an infinite variety of inequivalent representations of the CCR, even with nonnegative 
energy \cite{BoHaSch63}. 

We shall therefore stick to the mass gap case. The Hamiltonian (\ref{Hamfree}) can be defined as the 
operator acting on the one-particle state $a^{\dagger}(f)\Psi_{0}$ according to
\begin{equation}
 H_{0}a^{\dagger}(f)\Psi_{0} = [H_{0},a^{\dagger}(f)]\Psi_{0} = a^{\dagger}(\omega f)\Psi_{0},
\end{equation}
in which $(\omega f)(\mathbf{k})=\omega(\mathbf{k})f(\mathbf{k})$ is a perfect Schwartz function, cf.(\ref{annsmoo}). This 
definition coheres with the formal Hamiltonian (\ref{Hamfree}) and the CCR (\ref{CCRk}). Using
the condition $H_{0}\Psi_{0}=0$, one can then easily compute 
\begin{equation}
 H_{0}a^{\dagger}(f_{1})...a^{\dagger}(f_{n})\Psi_{0} = 
 [H_{0},a^{\dagger}(f_{1})...a^{\dagger}(f_{n})]\Psi_{0}
\end{equation}
by using the commutator property $[A,BC]=B[A,C]+[A,B]C$. 
The matrix element of $H_{0}$ with respect to the one-particle state is 
\begin{equation}\label{HfreeMe}
 \la a^{\dagger}(f)\Psi_{0}| H_{0}a^{\dagger}(f)\Psi_{0} \ra = 
 \la a^{\dagger}(f)\Psi_{0}| a^{\dagger}(\omega f)\Psi_{0} \ra = (f,\omega f).
\end{equation}
One can now choose $f$ such that the rhs of (\ref{HfreeMe}) cannot be distinguished by measurement 
from the correct relativistic energy of a free particle with some fixed momentum. 

In the same way, one introduces the momentum operator, 
\begin{equation}\label{spatra}
 P^{j}= \int \frac{d^{3}p}{(2\pi)^{3}} \ p^{j}  a^{\dagger}(\mathbf{p})a(\mathbf{p})
\end{equation}
which is given through $[P^{j},a^{\dagger}(f)]=a^{\dagger}(p^{j}f)$ and completes the translation 
subalgebra of the Poincaré group. The creators and annihilators then transform according to
\begin{equation}
 e^{ib\cdot P}a(f)e^{-ib\cdot P} = a(e^{ib\cdot p}f) \ , \hs{1} 
 e^{ib\cdot P}a^{\dagger}(f)e^{-ib\cdot P} = a^{\dagger}(e^{ib\cdot p}f) .
\end{equation}
For for pure time translations, this takes the form
\begin{equation}
 e^{iH_{0}t} a(f) e^{-i H_{0}t} = a(e^{i \omega t}f) \ , \hs{1} 
 e^{iH_{0}t}a^{\dagger}(f)e^{-i H_{0}t} = a^{\dagger}(e^{i\omega t}f) .
\end{equation}
The Lorentz group is implemented similiarly. Finally, the canonical free field $\varphi_{0}(t,f)$ and its conjugate momentum field are then given by 
\begin{equation}\label{Fockf}
\varphi_{0}(t,f) = \frac{1}{\sqrt{2}} [a(\omega^{-\frac{1}{2}} e^{i\omega t} f) 
+ a^{\dagger}(\omega^{-\frac{1}{2}} e^{i\omega t} f) ], \hs{0.2} 
\pi_{0}(t,f) = \frac{i}{\sqrt{2}} [a^{\dagger}(\omega^{\frac{1}{2}} e^{i\omega t} f) 
- a(\omega^{\frac{1}{2}} e^{i\omega t} f) ]
\end{equation}
and satisfy the CCR 
\begin{equation}\label{FockfCCR}
 [\varphi_{0}(t,f),\varphi_{0}(t,g)]=0=[\pi_{0}(t,f),\pi_{0}(t,g)] \ , \hs{1} 
 [\varphi_{0}(t,f),\pi_{0}(t,g)]=i(f,g).
\end{equation}
This is what we call the Fock representation of a free field. One distinguishes the translation-invariant
state called vacuum from the no-particle state. Because of $P^{\mu}\Psi_{0}=0=N\Psi_{0}$, the intuition 
is that both coincide. Now consider 

\begin{Proposition}[Fock representation \cite{Stro13}]\label{FockP}
All Fock representations are unitarily equivalent. The vacuum is unique, ie the only 
translation-invariant state, and coincides with the no-particle state. 
\end{Proposition}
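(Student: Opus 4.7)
The plan is to prove the two claims separately, using cyclicity of the vacuum and the structural formula $H_0 = \sum_{j\geq 1} \omega_j a_j^* a_j$ of Proposition~\ref{HfreeFock} as the main levers.

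For unitary equivalence, let $\varrho_1, \varrho_2$ be two Fock representations on Hilbert spaces $\Hi_1, \Hi_2$ with vacua $\Psi_0^{(1)}, \Psi_0^{(2)}$. I would define the intertwiner $W\colon \Hi_1 \to \Hi_2$ first on the dense subspace $\D_0^{(1)} = \varrho_1(\Al_H)\Psi_0^{(1)}$ by
\begin{equation*}
W \,\varrho_1(a_{j_1}^*)\cdots\varrho_1(a_{j_n}^*)\Psi_0^{(1)} := \varrho_2(a_{j_1}^*)\cdots \varrho_2(a_{j_n}^*)\Psi_0^{(2)},
\end{equation*}
and $W\Psi_0^{(1)} := \Psi_0^{(2)}$. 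Well-definedness and isometry both follow from the observation that the inner product of any two such vectors is computable purely by moving annihilators to the right using the CCR (\ref{FockCCR}) until they hit the vacuum; the answer therefore depends only on the combinatorics of the indices and not on the representation. Hence $W$ extends by continuity to an isometry $\Hi_1 \to \Hi_2$, and since its range contains the dense subspace $\D_0^{(2)}$ it is unitary. By construction $W \varrho_1(a_j) W^{-1} = \varrho_2(a_j)$ and similarly for $a_j^*$, which is the required equivalence.

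For uniqueness of the vacuum, let $\Phi \in \Hi$ be translation-invariant, so in particular $H_0 \Phi = 0$. Using $\omega_j \geq m > 0$ from Proposition~\ref{HfreeFock} and the formal expansion $H_0 = \sum_{j\geq 1} \omega_j a_j^* a_j$, I would compute
\begin{equation*}
0 = \la \Phi | H_0 \Phi \ra = \sum_{j\geq 1} \omega_j \, \| a_j \Phi \|^2 \geq m \sum_{j\geq 1} \| a_j \Phi \|^2,
\end{equation*}
so $a_j \Phi = 0$ for every $j \in \N$. The same conclusion holds for any no-particle state directly from $N\Phi = 0$. It then remains to deduce $\Phi \propto \Psi_0$: by cyclicity of $\Psi_0$, every vector in $\Hi$ can be approximated by polynomials in the $a_j^*$ applied to $\Psi_0$, and one checks that $\la \Phi | \varrho(a_{j_1}^*) \cdots \varrho(a_{j_n}^*) \Psi_0 \ra$ can be reduced, by commuting all $a_j$'s of $\Phi^*$ to the right, to a multiple of $\la \Phi | \Psi_0 \ra$. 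This forces $\Phi$ to lie in the one-dimensional span of $\Psi_0$, and the no-particle and translation-invariant characterisations therefore coincide with the Fock vacuum.

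The delicate step is the very first one: because $a_j^*$ is unbounded, one must verify that $W$ is defined consistently on all polynomial expressions (not just ordered monomials) and that the isometry bound survives passage to the closure. This is where the CCR-driven "Wick reduction" of inner products has to be invoked carefully; once that bookkeeping is in place, the rest is essentially automatic and the mass-gap hypothesis ($\omega_j \geq m > 0$) does the work of ruling out additional translation-invariant states that could spoil uniqueness.
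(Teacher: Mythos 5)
Your proof is correct, and on the first two claims it is essentially the paper's own argument spelled out in more detail: the intertwiner is defined on $\varrho(\Al_{H})\Psi_{0}$, shown to preserve inner products because every matrix element Wick-reduces via the CCR to representation-independent combinatorics, and extended by continuity; uniqueness of the no-particle state follows from density of $\Al_{H}^{*}\Psi_{0}$ (you are in fact a little more careful than the paper here, since you separate off the component along $\Psi_{0}$ rather than concluding $\Phi=0$ outright). Where you genuinely diverge is the identification of the translation-invariant state with the no-particle state. You read invariance as including time translations, put $H_{0}\Phi=0$, and use positivity plus the mass gap $\omega_{j}\geq m>0$ of Proposition \ref{HfreeFock} to force $a_{j}\Phi=0$. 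The paper argues purely kinematically with spatial translations: since $N$ commutes with $U(\mathbf{b})=e^{-i\mathbf{b}\cdot\mathbf{P}}$, a translation-invariant $\Phi$ has translation-invariant $n$-particle components $\Phi_{n}$, and a nonzero $n$-particle component with $n\geq 1$ cannot be translation invariant (its wave function would have to live on the measure-zero set of vanishing total momentum), which is a contradiction. The trade-off is this: your route needs the generator $H_{0}$ in the diagonal form with a gap --- available in the context of this section, but not part of the definition of a Fock representation --- and it silently uses the time-translation half of invariance, which the paper never invokes; the paper's argument needs neither a Hamiltonian nor a mass gap and so also covers the massless situation, at the price of the small measure-theoretic fact about $n$-particle states. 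Consequently your closing remark that the mass gap ``does the work'' of excluding further invariant states is accurate for your argument but is not actually needed for the result.
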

\begin{proof}
The isomorphism $V:\varrho(\Al_{H})\Psi_{0} \rightarrow \varrho'(\Al_{H})\Psi_{0}'$ is densely defined
and preserves the scalar product. Let $\Psi_{0}$ be the no-particle state, ie $N\Psi_{0}=0$ and let 
$\Psi \neq \Psi_{0}$ also have this property. Then $a_{j}\Psi=0$ and hence $\la \Psi | A \Psi_{0} \ra =0$
if $A  \in \Al_{H}^{*} := \la a_{j}^{*} : j \in \N \ra_{\C}$. But because 
$\Al_{H}\Psi_{0} = \Al_{H}^{*}\Psi_{0}$ is dense, we find $\Psi=0$, ie the state $\Psi_{0}$ is the only state annihilated by the $a_{j}$'s.  
Now note that $N$ commutes with the space translation operator $U(\mathbf{b})=\exp(-i\mathbf{b} \cdot \mathbf{P})$ by definition of $P^{j}$ in 
(\ref{spatra}). Let $\Phi \neq \Psi_{0}$ be translation invariant. Then, 
\begin{equation}
 U(\mathbf{b})N\Phi = NU(\mathbf{b})\Phi = N \Phi = \sum_{n \geq 0} N \Phi_{n}  
\end{equation}
where $\Phi = \sum_{n \geq 0}\Phi_{n}$ is the decomposition into the $n$-particle subspace components. This
implies $U(\mathbf{b})\Phi_{n} = \Phi_{n}$, ie a contradiction because the $n$-particle state is not translation invariant.  
\end{proof}

\subsection{Interaction picture}\label{Wiag} 
Now, in the light of the above results, the implementation of the interaction picture poses the following 
problem. The intertwining operator for the interaction picture\footnote{See §\ref{sec:GeMaLo} for 
a review of this part of the canonical formalism.},
\begin{equation}
 V(t)=e^{iH_{0}t}e^{-iHt}
\end{equation}
demands the separate existence of a free Hamiltonian $H_{0}$ which by Propositions \ref{HfreeFock} and \ref{FockP} selects a Fock
representation with a unique Poincaré-invariant vacuum. The interacting Hamiltonian $H$ is yet another, an
additional time translation generator that needs to be implemented. The time evolution operator for 
interaction picture states, given by
\begin{equation}
 \mathsf{U}(t,s)= V(t)V(s)^{\dagger} = e^{iH_{0}t}e^{-iH(t-s)}e^{-iH_{0}s}
\end{equation}
clearly shows that $H$ and $H_{0}$ are supposed to act in the same Hilbert space, ie the Fock space. 
The problem is now that even in relatively simple models studied so far, this split into two well-defined 
self-adjoint operators $H=H_{0}+H_{int}$ has not been possible: either the sum is well-defined or only one
part of it, not both. The requirement of some form of renormalisation then always leads to non-Fock 
representations, which by Proposition \ref{FockP} cannot be unitarily equivalent to the Fock representation
of a free field (see \cite{Stro13}, pp. 40). 

Another compelling argument put forward by Wightman in \cite{Wi67} against the existence of the interaction 
picture uses translation invariance to show that the Schr\"odinger picture integral 
\begin{equation}
 H_{int} = \int d^{3}x \ \Ha_{int}(\mathbf{x}),  
\end{equation}
ie the interacting part of the Hamiltonian and its interaction picture representation 
\begin{equation}\label{HI}
 H_{I}(t) := e^{iH_{0}t}H_{int}e^{-iH_{0}t} = \int d^{3}x \ e^{iH_{0}t}\Ha_{int}(\mathbf{x})e^{-iH_{0}t} 
 =: \int d^{3}x \ \Ha_{I}(t,\mathbf{x}) 
\end{equation}
do not make sense. And here is Wightman's argument, a head-on attack on the interaction picture with straightforward reasoning: we first compute 
\begin{equation}\label{normHint}
\begin{split}
 ||H_{int} \Psi_{0}||^{2} &= \int d^{3}x \int d^{3}y \
       \la \Psi_{0} | \Ha_{int}(\mathbf{x}) \Ha_{int}(\mathbf{y}) \Psi_{0} \ra  \\
       &= \int d^{3}x \int d^{3}y \
       \la \Psi_{0} | \Ha_{int}(0) \Ha_{int}(\mathbf{y}) \Psi_{0} \ra .
\end{split}
\end{equation}
This integral over $\mathbf{x} \in \R^{3}$ diverges unless $\Ha_{int}(\mathbf{y}) \Psi_{0}=0$,
at least according to Wightman in \cite{Wi67}; although one does not have to follow him here, the integrand
must vanish in any case. And because the evolution operators in (\ref{HI}) can be 
inserted without changing the norm, we find that $||H_{I}(t)\Psi_{0}||=0$ is the only sensible and 
acceptable outcome. 

But because the Hamiltonian $\Ha_{I}(t,\mathbf{x})$ is made up of free interaction picture fields which 
always have terms with the right combination of creators and annihilators for the vacuum expectation
value not to vanish, we can see that the interaction picture Hamiltonian does 
not exist in the way the canonical formalism desires it to. However, contrary to Wightman's conclusion, 
one can interprete (\ref{normHint}) in such a way as to say 
\begin{equation}
\la \Psi_{0} | \Ha_{I}(t,0) \Ha_{I}(t,\mathbf{y}) \Psi_{0} \ra 
=\la \Psi_{0} | \Ha_{int}(0) \Ha_{int}(\mathbf{y}) \Psi_{0} \ra = 0
\end{equation}
just means that the states $\Ha_{I}(t,0) \Psi_{0}$ and $\Ha_{I}(t,\mathbf{y}) \Psi_{0}$ have no overlap.
But still, $||H_{I}(t)\Psi_{0}||=0$ and hence $H_{I}(t)\Psi_{0}=0$ is inevitable.

\section{Canonical (anti)commutation relations and no-interaction theorems}\label{sec:PowTHM}        
Next, we present the no-interaction theorems of Powers and Baumann. The central outcome here
is that field theories conforming with the CCR/CAR must be necessarily free if 
the dimension of spacetime exceeds a certain threshold ($d\geq 3$ for fermions and $d\geq 5$ or $d\geq 4$ 
for bosons). Although free fields clearly satisfy these relations, this calls into question their meaning 
in a general QFT. 

To make the case against the CCR/CAR, some authors bring in the field-strength (or wave-function) renormalsation constant.
Since we deem this issue worthy of discussion, we have included some observations about this truely dubitable object in §\ref{sec:WaveRC}. 
We argue that it is not at all understood nonperturbatively and only obstructs insight into the connection between asymptotic scattering theory and 
renormalised perturbation theory. 

It is important to note that the proof of Haag's theorem does not require any of the fields 
to obey the canonical (anti)commutation relations (CCR/CAR) \emph{explicitly} which, for the spatially 
smeared scalar field operators 
\begin{equation}\label{ScSmear}
 \phi(t,f) = \int d^{n}x \ f(\mathbf{x}) \phi(t,\mathbf{x}) \ , \hs{1} 
 \pi(t,g) = \int d^{n}x \ g(\mathbf{x}) \pi(t,\mathbf{x}) \hs{1} f,g \in \Sw(\R^{n})
\end{equation}
take the form
\begin{equation}\label{CCRSmear}
[\phi(t,f), \phi(t,g)] = 0 = [\pi(t,f),\pi(t,g)] \ , \hs{1} 
[\phi(t,f), \pi(t,g)] = i (f,g),
\end{equation}
where $n \geq 1$ is the dimension of space\footnote{The smeared fields in (\ref{Fockf}) and (\ref{ScSmear})
differ: $f$ in (\ref{ScSmear}) is the Fourier transform of $f$ in (\ref{Fockf}), but never mind.}.
This condition is, however, \emph{implied} in a trivial way for the following reason: if (\ref{ScSmear}) 
are free fields and the unitary map $V$ also transforms the conjugate momentum field, then they 
of course satify the CCR (\ref{CCRSmear}). Applying the unitary transformation that 
connects both field theories to (\ref{CCRSmear}) shows that the other fields inevitably obey the CCR, too. 
Hence both theories are unitarily equivalent representations of the CCR. 

This uncorks the question whether the CCR (\ref{CCRSmear}) are somehow related to the triviality result 
entailed by Haag's theorem. Can truely interacting fields be \emph{any} representation of the these 
commutation relations? We shall see now that here, in contrast to Haag's theorem, the dimension of 
spacetime is decisive.

\subsection{Anticommutation relations and triviality}
Under some regularity conditions, Powers answered this question for fermion fields in \cite{Pow67}. 
For a Dirac field $\psi(t,f)$ and its canonical conjugate field 
$\psi(t,f)^{\dagger}=\psi^{\dagger}(t,f^{*})$ the canonical anticommutation relations (CAR) take the form  
\begin{equation}\label{CARSmear}
\{\psi(t,f), \psi(t,g)^{\dagger} \} = 0 = \{\psi(t,f)^{\dagger},\psi(t,g)^{\dagger} \} \ , \hs{1} 
\{\psi(t,f), \psi(t,g)^{\dagger} \} = i (f,g).
\end{equation}
Powers' result is now that these relations imply triviality in space dimension $n\geq 2$. 

\begin{Theorem}[Powers' theorem]
 Let $\psi(t,f)$ be a local relativistic Fermi field in the sense of Wightman's framework\footnote{We introduce this framework in §\ref{sec:WightAx}.} 
 in $d=n+1 \geq 3$ spacetime dimensions fulfilling the CAR (\ref{CARSmear}) and acting together with its 
 adjoint $\psi(t,f)^{\dagger}$ in a Hilbert space $\Hi$ with vacuum state $\Omega_{0}$. 
 
 Assume that they form an irreducible set of operators at one fixed instant and that there is a unitary 
 transformation $U(t)$ such that 
 \begin{equation}
  \psi(t,f) = U(t)\psi(0,f)U(t)^{\dagger}
 \end{equation}
for all times $t\in \R$ and that the limits  
\begin{equation}
\begin{split}
 \lim_{t \rightarrow 0} \frac{1}{t} [ \psi(t,f) - \psi(0,f)]\Omega_{0} 
 &= \partial_{t}\psi(0,f)\Omega_{0} \ , \\
 \lim_{t \rightarrow 0} \frac{1}{t} [ \psi(t,f) - \psi(0,f)]\psi(0,f)\Omega_{0} 
 &=\partial_{t}\psi(0,f) \psi(0,f)\Omega_{0}
\end{split} 
\end{equation}
and the corresponding ones for the adjoint exist in the norm for all test functions $f \in \Sw(\R^{n})$. 
Then $\psi(t,f)$ is a free field in the sense that it satisfies a linear differential equation which is 
first order in time.
\end{Theorem}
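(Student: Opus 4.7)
The plan is to show that $\partial_t\psi(0,f)$ can be expressed as a linear combination of $\psi(0,Bf)$ and $\psi(0,Cf)^{\dagger}$ for suitable linear maps $B,C$ acting on test functions; this is precisely the statement that $\psi$ satisfies a first-order-in-time linear PDE and hence is a free field. The central regularity tool is that the CAR force $\psi(0,f)$ and $\psi(0,f)^{\dagger}$ to be \emph{bounded} operators with $\|\psi(0,f)\|\leq \|f\|_{L^{2}}$, an automatic improvement not available for Bose fields.

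First I would study the anticommutator $\{\psi(t,f),\psi(0,g)^{\dagger}\}$. At $t=0$ it equals $(f,g)\cdot I$ by the CAR. If $\supp f$ and $\supp g$ are separated by a spatial distance $r>0$, then Wightman locality (Fermi fields anticommute at spacelike separations) implies that this anticommutator vanishes identically for $|t|<r$. Differentiating in $t$ at $t=0$, using the assumed norm-existence of $\partial_t\psi(0,f)\Omega_{0}$ and $\partial_t\psi(0,f)\psi(0,f)\Omega_{0}$ to justify the limit on vectors of a dense domain built from $\Omega_{0}$ and one- or two-field states, I obtain that the matrix elements of $\{\partial_t\psi(0,f),\psi(0,g)^{\dagger}\}$ vanish whenever $f,g$ have disjoint supports. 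Hence, viewed as a distribution in $(x,y)$, this anticommutator is \emph{supported on the diagonal} $x=y$. An entirely analogous argument applies to $\{\partial_t\psi(0,f),\psi(0,g)\}$.

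The main obstacle is the next step: promoting the diagonal support to the statement that the anticommutator is in fact a c-number multiple of the identity. This is where the hypothesis $n\geq 2$ enters. By the classical structure theorem, a distribution supported on the diagonal is a locally finite sum of derivatives of $\delta(x-y)$ with operator-valued coefficients. The uniform bound $\|\psi(0,g)^{\dagger}\|\leq\|g\|_{L^{2}}$ together with the assumed norm-existence of $\partial_t\psi(0,f)\Omega_{0}$ and $\partial_t\psi(0,f)\psi(0,f)\Omega_{0}$ restricts the admissible Sobolev order of test functions on which the anticommutator can act; in $n\geq 2$ spatial dimensions these estimates leave room only for the zeroth-order term $\delta(x-y)$, all higher derivatives being incompatible with the $L^{2}$-type continuity inherited from the CAR. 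Translation covariance then fixes the operator coefficient to be a scalar. Thus one obtains c-number bilinear forms $\alpha(f,g)$ and $\beta(f,g)$ with
\begin{equation*}
\{\partial_t\psi(0,f),\psi(0,g)^{\dagger}\}=\alpha(f,g)\cdot I,\qquad \{\partial_t\psi(0,f),\psi(0,g)\}=\beta(f,g)\cdot I.
\end{equation*}

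Finally, choosing linear operators $B,C$ on test functions via $\alpha(f,g)=(g,Bf)$ and $\beta(f,g)=(g^{*},Cf)$, I set
\begin{equation*}
X(f):=\partial_t\psi(0,f)-\psi(0,Bf)-\psi(0,Cf)^{\dagger}.
\end{equation*}
By construction $X(f)$ anticommutes with every $\psi(0,g)$ and every $\psi(0,g)^{\dagger}$, so that $X(f)^{*}X(f)$ commutes with the entire CAR algebra at $t=0$ and, by the assumed irreducibility, is a scalar. A direct computation using the definitions of $B,C$ and the existence assumptions shows $X(f)\Omega_{0}=0$, forcing that scalar to vanish, and hence $X(f)=0$. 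Therefore $\partial_t\psi(0,f)=\psi(0,Bf)+\psi(0,Cf)^{\dagger}$, which together with $\psi(t,f)=U(t)\psi(0,f)U(t)^{\dagger}$ is a first-order linear evolution equation of free-field type.
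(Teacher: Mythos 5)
The paper itself gives no proof of this statement---it defers entirely to Powers \cite{Pow67}---so your proposal has to be measured against what that proof actually requires. Your skeleton is the right one: the CAR bound $\|\psi(0,f)\|\leq\|f\|_{L^{2}}$ is indeed the fermionic ``good ultraviolet behaviour'' that drives everything, locality does force $\{\partial_{t}\psi(0,f),\psi(0,g)^{\dagger}\}$ to vanish for spatially separated supports, and the strategy ``c-number anticommutators $\Rightarrow$ $\partial_{t}\psi(0,f)=\psi(0,T_{1}f)+\psi(0,T_{2}f)^{\dagger}$'' is the known route. But the two analytically hard steps are asserted rather than proved. First, the passage from diagonal support to a pure $\delta(x-y)$ term with \emph{scalar} coefficient is the entire technical content of Powers' theorem, and your justification (``$L^{2}$-type continuity restricts the Sobolev order; translation covariance fixes the coefficient to be a scalar'') does not carry it. Translation invariance only makes the operator coefficient commute with translations; to conclude a c-number you must show the anticommutator commutes with the \emph{whole} irreducible time-zero algebra, including fields whose test functions overlap $\mathrm{supp}\,f$ and $\mathrm{supp}\,g$, and that is precisely where a quantitative partition-of-unity/scaling estimate is needed, with a power of the covering radius that depends on the space dimension $n$ (compare the bosonic analogue of Baumann sketched in Appendix \ref{sec:AppBau}, where exactly this power counting is the hard part and fixes the critical dimension). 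Your argument never exhibits where $n\geq 2$, as opposed to $n=1$, enters quantitatively---and it must, since the conclusion fails in two spacetime dimensions (Thirring-type models satisfy the CAR and are not free), so any correct proof has to use $n\geq2$ in a sharp way.

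Second, the concluding step is not sound as written. Granting that $X(f)$ anticommutes with all time-zero fields, irreducibility gives (modulo unjustified domain/boundedness manipulations, since $X(f)$ is not known to be bounded) $X(f)^{*}X(f)\in\C\,1$, but it does not give $X(f)=0$: in a Fock-type representation a grading operator such as $(-1)^{N}$ anticommutes with every $\psi(0,g)$ and $\psi(0,g)^{\dagger}$ and is certainly nonzero, so anticommutation with everything plus irreducibility leaves a nontrivial possible remainder. Everything therefore hinges on $X(f)\Omega_{0}=0$, which you claim follows by ``direct computation''; it does not, because $\|X(f)\Omega_{0}\|^{2}$ involves $\la\Omega_{0}|\partial_{t}\psi(0,f)^{\dagger}\partial_{t}\psi(0,f)\Omega_{0}\ra$, a quantity not determined by the c-number anticommutators $\alpha,\beta$ that define $B$ and $C$. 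Excluding the grading-type remainder needs a genuine additional argument (e.g.\ that $\partial_{t}\psi(0,f)$ is an odd element, being a limit of field operators, or an appeal to the spectral condition), which is missing. A smaller but real point: the hypotheses provide the norm derivatives only on $\Omega_{0}$ and on $\psi(0,f)\Omega_{0}$ with the \emph{same} $f$, so the mixed matrix elements you differentiate must first be obtained by polarisation together with the CAR bound; this should be spelled out before the diagonal-support argument can even be run.
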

\begin{proof}
See \cite{Pow67}. 
\end{proof}

A few comments are in order. Powers did \emph{not} prove that the fields satisfy the Dirac equation. Instead he found that the conditions imposed on the 
Fermi field are so restrictive that 
there exist operators $T_{1}$ and $T_{2}$, linear and antilinear on $\Sw(\R^{n})$, respectively, such that 
\begin{equation}\label{FerEq}
 \partial_{t}\psi(t,f) = \psi(t,T_{1}f) + \psi(t,T_{2}f)^{\dagger}.  
\end{equation}
These operators may be realised in such a way that $T_{1}$ is the spatial part of the Dirac operator 
(stripped of the zeroth $\gamma$-matrix) and $T_{2}=0$. They do not have to take this specific form, though.
This is the precise sense in which the fields are free. 

Surely, if an interacting Dirac field obeys a differential equation in whatever sense,
it should not be of the form (\ref{FerEq}) because this implies in particular that there are no other  
fields to interact with. The assumption that these fields form an irreducible set of 
operators is, up to some mathematical subleties, equivalent to their capability of generating a dense 
subspace\footnote{See §\ref{sec:WightAx} on this issue.}. In other words, Powers' theorem is an 
assertion about a quantum field theory with fermions only. 
The theories that spring to mind are of the Fermi theory type with 'four-fermion' interactions like 
\begin{equation}
 (\overline{\psi}\gamma_{\mu}\psi)(\overline{\psi}\gamma^{\mu}\psi)  \hs{1}
 \mbox{('Thirring model')}
\end{equation}
which are all together non-renormalisable (in dimension $d \geq 3$) and hence unphysical, albeit of some value as effective 
field theories. Consequently, if from a physical point of view we require renormalisability,
fermions cannot directly interact with each other and should thus be free. Therefore, Powers' theorem makes perfect sense. 

\subsection{Commutation relations and triviality} 
Baumann investigated the case of bosonic fields but could not find the exact analogue of Powers'
result \cite{Bau87}.
Yet he proved that in space dimension $n>3$ only free theories can satisfy the CCR (\ref{CCRSmear}) 
and suggests that the theories $(\phi^{4})_{4}$ and $(\phi^{6})_{3}$ may fulfill these relations, whereas 
for $(\phi^{4})_{2}$, he found 'no restrictions', in agreement with Glimm and Jaffe's 
results. 

In fact, the only interacting theories so far found to satisfy the CCR are 
superrenormalisable field theories like $P(\phi)_{2}$, the sine-Gordon model $\cos(\phi)_{2}$ and 
the exponential interaction $\exp(\phi)_{2}$ \cite{GliJaf81}. 

Baumann's provisions fill a rather long list, to be looked up in \cite{Bau87} by the interested reader. To present them here would coerce us to explain 
earlier results by Herbst and Fr\"ohlich, including some argot (see the references in \cite{Bau87}). 
We shall not embark on this here and content ourselves with the main assertion and a sketch of the proof which we have relegated to the appendix. 

\begin{Theorem}[Baumann]\label{Baumann}
Let $n \geq 4$ be the space dimension and $\varphi(t,\cdot)$ a scalar field with 
conjugate momentum field $\pi(t,\cdot)=\partial_{t}\varphi(t,\cdot)$ such that the CCR (\ref{CCRSmear}) are 
obeyed and assume furthermore that $\dot{\pi}(t,\cdot):=\partial_{t}\pi(t,\cdot)$ exists. 
Then, if $\varphi(t,\cdot)$ has a 
vanishing vacuum expectation value and the provisions listed in the introduction of \cite{Bau87} are 
satified, one has
\begin{equation}
 \dot{\pi}(t,f) - \varphi(t,\Delta f) + m^{2} \varphi(t,f) = 0
\end{equation}
for all $f \in \De(\R^{n})$ and a parameter $m^{2}>0$. 
\end{Theorem}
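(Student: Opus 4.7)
The plan is to mimic the strategy of Powers' theorem for fermions: isolate a would-be ``equation of motion defect'' operator and use the CCR together with irreducibility to force it to be a c-number, then use the vanishing vacuum expectation value to pin that constant to zero.

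Concretely, I would first define, for each test function $f \in \De(\R^{n})$ and each time $t$, the operator
\begin{equation}
W(t,f) := \dot{\pi}(t,f) - \varphi(t,\Delta f) + m^{2}\varphi(t,f),
\end{equation}
where $m^{2}>0$ is a mass parameter to be determined from spectral data of the time-translation generator $H$ (acting via $\dot\pi(t,f) = i[H,\pi(t,f)]$). The goal is then to show that $W(t,f)=0$ for all $f$.

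The first key step is an equal-time commutator computation. Differentiating the CCR $[\varphi(t,g),\pi(t,f)]=i(g,f)$ in $t$, and using $\dot\varphi = \pi$ together with $[\pi(t,g),\pi(t,f)]=0$, gives immediately
\begin{equation}
[\varphi(t,g),\dot{\pi}(t,f)] = 0,
\end{equation}
so in particular $[\varphi(t,g),W(t,f)]=0$ for every $g$. The more delicate half is $[\pi(t,g),W(t,f)]$: here differentiating $[\pi(t,g),\pi(t,f)]=0$ only yields symmetry in $(f,g)$, not vanishing. To close this, I would invoke the existence of $\dot\pi$ as a genuine operator on a stable domain (one of the technical provisions listed in Baumann's paper), expand $\dot\pi$ into its action on the dense subspace generated by smeared fields on the time-zero slice, and use translation/rotation covariance to reduce the unknown part to a spatially convolution-type operator acting on $\varphi(t,\cdot)$ alone. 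At that stage, Poincaré covariance identifies the spatial operator as a polynomial in $\Delta$, and spectral/Källén--Lehmann considerations (using the positivity of $H$ and the mass gap assumptions contained in the provisions) cut this polynomial down to $\Delta - m^{2}$. This gives $[\pi(t,g),W(t,f)]=0$ as well.

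Once both commutators vanish, irreducibility of $\{\varphi(t,\cdot),\pi(t,\cdot)\}$ at a fixed time forces $W(t,f)$ to be a multiple of the identity. Translation covariance plus the hypothesis $\la \Omega_{0}|\varphi(t,f)|\Omega_{0}\ra = 0$ then forces this multiple to vanish, yielding the desired linear equation of motion. The main obstacle will be the second commutator step, in particular the reduction of $\dot\pi(t,f)$ to a local linear expression in $\varphi(t,\cdot)$: this is precisely where the dimensional restriction $n\geq 4$ enters, via the short-distance estimates on two-point functions (built on Herbst's and Fröhlich's regularity results) that are needed to justify the pointwise manipulations and to exclude non-polynomial spatial kernels. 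In lower dimensions these bounds fail and genuine superrenormalisable interactions such as $P(\varphi)_{2}$ survive, in full agreement with the fact that only $n\geq 4$ gives triviality.
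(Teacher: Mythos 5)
Your overall skeleton (show $W(t,f)$ commutes with both $\varphi(t,\cdot)$ and $\pi(t,\cdot)$, invoke irreducibility to make it a c-number, kill the constant with the vanishing vacuum expectation value, and fix $m^{2}$ from the K\"all\'en--Lehmann spectral measure) matches the endgame of Baumann's argument, and your first commutator $[\varphi(t,g),\dot{\pi}(t,f)]=0$ is exactly the paper's first step. But the second commutator step, which you correctly identify as the delicate one, is where your proposal has a genuine gap: you propose to ``expand $\dot{\pi}$'' and use translation/rotation covariance to reduce it to a convolution-type operator acting on $\varphi(t,\cdot)$ \emph{alone}, i.e.\ to a spatial polynomial in $\Delta$. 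That reduction is precisely the conclusion of the theorem in disguise. Covariance cannot deliver it: a nonlinear contribution such as a (Wick-ordered) $\varphi(t,\mathbf{x})^{3}$ term in $\dot{\pi}$ -- which is exactly what a genuine $\varphi^{4}$ interaction would produce -- is perfectly Euclidean covariant, so nothing in your argument excludes it, and with such a term present $[\pi(t,g),W(t,f)]$ is an operator, not zero. Assuming a linear, convolution-type form for $\dot{\pi}$ begs the question.

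The mechanism that actually closes this hole in Baumann's proof is different and is the technical heart of the theorem: one first gets $[\varphi(t,f),[\pi(t,h),\dot{\pi}(t,g)]]\Psi=0$ cheaply from your step 1 plus the Jacobi identity, and then proves the hard estimate $[\pi(t,f),[\pi(t,h),\dot{\pi}(t,g)]]\Psi=0$ by a partition-of-unity argument with test functions supported in $\epsilon$-neighbourhoods, showing the double commutator decays with a power of $\epsilon$ that is positive only for space dimension $n\geq 4$ (this, not a two-point-function bound, is where the dimension enters; a nonlinear term in $\dot{\pi}$ would make this double commutator nonzero). Irreducibility of $\{\varphi(t,\cdot),\pi(t,\cdot)\}$ then forces $[\pi(t,h),\dot{\pi}(t,g)]$ itself to be a c-number, and only at that point does the K\"all\'en--Lehmann representation enter: differentiating it twice and taking equal times evaluates the c-number as $-i[(h,\Delta g)-m^{2}(h,g)]$, with $m^{2}=\int d\mu^{2}\,\rho(\mu^{2})\mu^{2}$ and $\int d\mu^{2}\,\rho(\mu^{2})=1$ fixed by the CCR, after which your final irreducibility-plus-normalisation step goes through as you describe. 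So your proposal needs the double-commutator estimate (or an equivalent substitute) inserted where you currently appeal to covariance; without it the linearity of the equation of motion is assumed rather than proved.
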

\begin{proof}
See Appendix §\ref{sec:AppBau}.
\end{proof}
Notice that $\pi(t,\cdot)=\partial_{t}\varphi(t,\cdot)$ is not correct if the Lagrangian has an interaction 
term with a first time derivative, as we have in the case of a renormalised field theory incurred by the counterterms. This line of argument is 
heuristic, but we have no cause to believe $\pi=\dot{\varphi}$ for renormalised scalar field theories either.  

When Baumann scrutinised the case in which both 
coexisting bosons and fermions satisfy the CCR and CAR, respectively, he again found that for $n>3$ space 
dimensions any theory with these relations must necessarily be free while for $n=3$ it was impossible to
say \cite{Bau88}. We remind the reader that no renormalisable and non-renormalisable models have been 
constructed so far (see §\ref{sec:Intro}): it was merely Baumann's working assumption that 
these interacting theories exist and that their interaction terms have no derivative coupling. 

Unfortunately, space dimension $n=3$ remained defiant. Baumann mentions that the unpublished proof for 
$n\geq 3$ by Sinha, then ostensibly a PhD student of Emch's \cite{SinEm69}, had weaker
assumptions that somehow did not appeal to him. Emch presents Sinha's version 
without proof in \cite{Em09}. From what we can tell by comparing both results, Sinha's 
provisions are weaker as he uses the Weyl form of the CCR instead of the CCR themselves. 
We shall merely quote\footnote{The only source containing the proof seems to be Sinha's PhD thesis, only existing in print at the library of the 
University of Florida, whose staff did not reply to the our email. We did not insist.} the Weyl CCR case from \cite{Em09}.

\begin{Theorem}[Sinha] 
Let $n\geq 3$ be the dimension of space and $\phi(t,f)$ a sharp-time local 
scalar Wightman field with canonical conjugate $\pi(t,f)$ that generate a representation of the Weyl CCR 
at $t=0$, ie
\begin{equation}
 U(f) = e^{i\varphi(0,f)}, \ V(g) = e^{i\pi(0,g)} , \hs{1}  U(f)V(g) = e^{i(f,g)} V(g)U(f)
\end{equation}
where $U(f)U(g)=U(f+g)$ and $V(f)V(g)=V(f+g)$ for all $f,g \in \Sw(\R^{n})$ such that the families
$\lambda \mapsto U(\lambda f)$ and $\lambda \mapsto V(\lambda f)$ are weakly continuous. 
Assume further that $\partial_{t}\pi(t,f)$ exists and that the Weyl unitaries $U(f), V(g)$
are irreducible with common dense domain $\mathfrak{D}$ which is stable under the algebra of their 
generators. Then there exists a linear operator $T \colon \Sw(\R^{n}) \rightarrow L^{2}(\R^{n})$ and a 
distribution $c \in \Sw'(\R^{n})$ such that
\begin{equation}
 \partial_{t}^{2}\phi(t,f)=\phi(t,Tf)+c(f)
\end{equation}
and $\phi(t,\cdot), \pi(t,\cdot)$ fulfill the CCR (\ref{CCRSmear}).  
\end{Theorem}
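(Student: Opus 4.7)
The plan is to follow the strategy of Baumann's theorem (cf.~Theorem \ref{Baumann}) while exploiting the extra regularity supplied by the Weyl form of the CCR, which is strictly stronger than the CCR themselves and thereby permits the relaxation of the spatial-dimension hypothesis from $n \geq 4$ down to $n \geq 3$. First I would use weak continuity of the one-parameter unitary groups $\lambda \mapsto U(\lambda f)$ and $\lambda \mapsto V(\lambda g)$, together with the stability of $\D$ under the generator algebra, to invoke Stone's theorem and produce self-adjoint generators $\phi(0,f)$ and $\pi(0,g)$ on $\D$. Differentiating the Weyl relation $U(f)V(g) = e^{i(f,g)} V(g) U(f)$ twice at the origin then yields the CCR (\ref{CCRSmear}) at $t=0$ on $\D$, while the irreducibility of $\{U(f),V(g)\}$ transfers to the generated polynomial algebra by the functional calculus.

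Second, I would identify $\ddot{\phi}(0,f)$ with $\dot{\pi}(0,f)$ — the canonical-conjugate role of $\pi$ together with the hypothesis that $\partial_{t}\pi$ exists forces $\pi = \dot{\phi}$ on $\D$ — and then compute the commutators of $\ddot{\phi}(0,f)$ with each of $\phi(0,h)$ and $\pi(0,h)$ on $\D$ via the Jacobi identity applied to the triple $(H,\phi,\pi)$. Both commutators reduce to c-number bilinear forms in $(h,f)$, separately continuous in the Schwartz topology, so by the nuclear kernel theorem each is represented as $(h, T_i f)$ with continuous linear $T_i \colon \Sw(\R^{n}) \to \Sw'(\R^{n})$. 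A regularity argument combining sharp-time positivity with spectral properties forced by relativistic covariance then upgrades the $T_i$ to maps into $L^{2}(\R^{n})$; this is precisely where the dimensional threshold $n \geq 3$ enters.

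Third, I would use time-reversal together with the sharp-time CCR structure (equivalently the absence of a first-order time derivative in the implied field equation) to eliminate the $\pi$-coefficient, leaving only a single operator $T := T_1$. The residual object
\[
R(f) := \ddot{\phi}(0,f) - \phi(0,Tf)
\]
commutes on $\D$ with every $U(h)$ and $V(g)$ by construction, so irreducibility of the Weyl representation forces $R(f) = c(f)\,\mathbf{1}$, and Schwartz-topology continuity of $f \mapsto c(f)$ yields $c \in \Sw'(\R^{n})$. Propagating the identity via the Heisenberg evolution generated by the time-translation unitaries extends it to every $t$, and the CCR at arbitrary times follow by unitary conjugation of the time-zero Weyl CCR along this evolution together with $\pi = \dot{\phi}$.

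The main obstacle I anticipate is the regularity upgrade of $T$ from a distributional kernel to an operator into $L^{2}(\R^{n})$: the algebraic and representation-theoretic steps are fairly direct once the CCR are in hand, but the bootstrap from a separately continuous bilinear form on test functions to a genuine $L^{2}$-valued operator — in the Weyl setting and without the stronger sharp-time domain conditions Baumann imposed — is the delicate step, and it is here that the lowered dimensional threshold $n \geq 3$, the weak continuity of the Weyl unitaries, and the common stable domain jointly earn their keep.
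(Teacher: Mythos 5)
A preliminary remark: the paper does not prove this statement at all. It is quoted verbatim from Emch's monograph \cite{Em09}, and the text explicitly notes that the only source containing a proof appears to be Sinha's unpublished PhD thesis \cite{SinEm69}; so there is no proof in the paper against which your proposal can be checked, and any assessment can only measure your sketch against the analogous argument the paper does present, namely Baumann's Theorem \ref{Baumann} and its proof in Appendix \ref{sec:AppBau}.

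Measured against that, your outline reproduces the Baumann strategy plausibly in its algebraic parts (Weyl relations plus a common stable dense domain giving the CCR (\ref{CCRSmear}) at $t=0$; irreducibility forcing the residual operator $R(f)$ to be a multiple of the identity; continuity in $f$ giving $c \in \Sw'(\R^{n})$), but it leaves the one genuinely hard point unaddressed: \emph{why} the Weyl form of the CCR lowers the dimensional threshold from $n\geq 4$ to $n\geq 3$. You place the dimension dependence in a ``regularity upgrade'' of $T$ from a distributional kernel to an $L^{2}$-valued map, but in the Baumann argument the dimension does not enter there; it enters in the partition-of-unity estimate showing that the double commutator $[\pi(t,f),[\pi(t,h),\dot{\pi}(t,g)]]\Psi$ vanishes, where the relevant quantity decays with a power of $\epsilon$ that depends on $n$ and is favourable only for $n\geq 4$. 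A proof of Sinha's statement must explain what the exponentiated (bounded) Weyl operators and their weak continuity buy in precisely that estimate -- or replace it by a different mechanism -- and your sketch offers no such mechanism, only the assertion that the hypotheses ``earn their keep'' at the delicate step. Two smaller gaps: the identification $\pi=\dot{\phi}$ is asserted to be ``forced'' by the canonical-conjugate role of $\pi$, but it is an additional structural assumption (Baumann imposes it explicitly) and nothing in the quoted hypotheses derives it; and the passage of the $T_{i}$ into $L^{2}(\R^{n})$ via ``sharp-time positivity with spectral properties forced by relativistic covariance'' is a placeholder rather than an argument -- in the Baumann proof the corresponding step is carried by the K\"all\'en-Lehmann representation, whose finiteness conditions (existence of $\dot{\pi}(g)\Omega$, normalisation of the spectral measure from the CCR) would have to be re-derived in the Weyl setting. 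As it stands the proposal is a reasonable roadmap but does not contain the idea that distinguishes Sinha's result from Baumann's.
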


These no-interaction theorems by Powers, Baumann and Sinha suggest that renormalisable and fully fledged 
interacting field theories will most likely neither satisfy the CCR (\ref{CCRSmear}) nor the CAR 
(\ref{CARSmear}). In other words, for a quantum field theory, there is no analogous representation issue 
as in quantum mechanics: we had better not seek a unitarily equivalent representation of the Fock 
representation since that will in all likelyhood be a free theory. 

For theories of only one type of field, ie fermions or scalar bosons, these results can be viewed 
as a variant of Haag's theorem in the sense that Dyson's matrix cannot exist: 
if a quantum field and its canonical momentum field are unitarily equivalent to those of a free theory 
at one fixed instant $t=t_{0}$, then it satisfies the canonical (anti)commutation rules at that time.
By the foregoing theorems, the field can only be free in spacetime dimensions $d \geq 5$ for 
bosons and $d \geq 3$ for fermions. If one assumes the Weyl form of the CCR, boson fields are free for 
spacetime dimensions $d\geq 4$. 

The interesting issue here is the spacetime dependence: we know that superrenormalisable theories
of the type $P(\varphi)_{2}$ conform with the CCR \cite{GliJaf81}, whereas (probably) renormalisable 
and non-renormalisable ones do not. Notice that these results do not mean that interacting theories cannot
exist in higher-dimensional spacetimes. Powers' and Baumann's results merely inform us that interactions 
are incompatible with the CCR/CAR there. 

\subsection{CCR/CAR and the Heisenberg uncertainty principle} In §\ref{sec:Focks} we have already 
discussed where the (anti)commutation relations came from: the Heisenberg 
uncertainty principle in quantum mechanics. Although free fields enjoy these relations by construction, we do not need 
them for interacting theories.

One reason is philosophical in nature: there is no position operator in QFT and the relation of the 
CCR/CAR to the Heisenberg uncertainty principle is obscure, at least in our mind and to the best of 
our knowledge. 

The other is practical:
any particles measured in scattering experiments are detected \emph{after} the scattering event when they are deemed free. 
The measuring apparatus cannot be placed within the interaction 
vertex\footnote{When the wire chamber detects a \emph{Townsend avalanche} kicked off by, say, ionised argon, everything is over already: 
whatever the interaction vertex's size (far below atomic scale), what comes out of it can only be thought of as free.}. 

Besides, concrete computations leading to numbers that can be measured in experiments 
are always carried out with free fields. It is through these fields that the constant $\hbar$ enters 
the theory. Unfortunately, none of the authors, ie Baumann, Powers, Strocchi and Wightman, who proposed that we abandon the CCR/CAR for a general 
interacting QFT, touched upon this important issue. This also goes for the next author whose pertinent results we shall discuss in brief.

\subsection{Lopuszanski's contribution}\label{LopuCon}
We finally mention Lopuszanski's results because of their relevance to the above no-interaction theorems. 
The aim of his work was to extensively characterise 
free scalar fields in order to figure out what properties interacting fields can by exclusion not have.

We briefly review some of his results published in \cite{Lo61}. We start with the Yang-Feldman representation of a 
massive scalar interacting Heisenberg field
\begin{equation}\label{YFE}
 \phi(t,\mathbf{x})= \phi_{\mbox{\tiny in}}(t,\mathbf{x}) - \int d^{4}y \ \Delta_{\mbox{\tiny ret}}(x-y)j(y),
\end{equation}
where $\Delta_{\mbox{\tiny ret}}(x-y)$ is the retarded Green's function of the Klein-Gordon operator 
and $j(y)$ the interaction term from the equation of motion, possibly containing other fields. 
The field $\phi_{\mbox{\tiny in}}(t,\mathbf{x})$ describes free incoming and asymptotic bosons. Because
the retarded Green's function vanishes in the limit $t=x^{0} \rightarrow \pm \infty$, the Heisenberg field 
$\phi(t,\mathbf{x})$ converges to the asymptotic field by the way it is defined\footnote{The Yang-Feldman equation
(\ref{YFE}) says that the outgoing field is identical to the incoming one.}. 

Lopuszanski's results interest us here because they make plausible that another path to the construction 
of the S-matrix which circumvents the interaction picture and hence makes no use of Dyson's matrix, is 
equally haunted by triviality if the provisions are too strong: 
the representation of the S-matrix in terms of Heisenberg fields due to Yang and Feldman \cite{YaFe50}. 

The first assumption is that the interaction current takes the form
\begin{equation}\label{phi4int}
 j(y)=\frac{g}{3!} \phi(y)^{3} .
\end{equation}
This is already problematic because powers of fields are ill-defined and must be Wick-ordered as we will discuss in §\ref{Diver} 
(see also \cite{StreatWi00}, p.168). 

The assumption that the Heisenberg field can be representated as a Fourier mode expansion
\begin{equation}\label{YangFeldExp}
 \phi(t,\mathbf{x}) = \int \frac{d^{3}q}{(2\pi)^{3}} \frac{1}{\sqrt{2E_{q}}} \ [e^{-i q \cdot x} a(t,q)
 + e^{i q \cdot x} a^{\dagger}(t,q) ], 
\end{equation}
with $E_{q} := \sqrt{\mathbf{q}^{2}+m^{2}}$ is also a bit strong for an interacting field. These mode operators are assumed 
to satisfy\footnote{We adopt Lopuszanski's notation $|0\ra$ for the vacuum.}
\begin{equation}\label{YangFeldCCR}
 \la 0 | [a(t,q),a^{\dagger}(t,q')]|0 \ra = Z^{-1} \delta^{(3)}(\mathbf{q}-\mathbf{q}')
\end{equation}
where all other commutators vanish and $Z^{-1}$ is (by Lopuszanski's assumption) the \emph{finite} inverse wavefunction renormalisation.
This amounts to demanding that the Heisenberg field obey the CCR, albeit including a peculiar factor. 
In view of (\ref{YangFeldExp}), the only way this theory has a chance to differ from a free one lies in the time dependence of
the mode operators.  

The idea that $\phi$ converges to the free field $\phi_{in}$ as we go back to the remote past means 
that these operators converge in some sense to those of the incoming field, ie
\begin{equation}
 a(t,q) \  \rightarrow \ a_{\mbox{\tiny in}}(q)  \ , \hs{1} 
 a^{\dagger}(t,q) \ \rightarrow \ a^{\dagger}_{\mbox{\tiny in}}(q)  
 \hs{1} \mbox{as} \hs{0.2} t \rightarrow - \infty .
\end{equation}
This entails $Z=1$ because in the limit, the lhs of (\ref{YangFeldCCR}) goes over to the commutator of the
incoming mode operators which require $Z=1$, as this object is necessarily time-independent. 
Because, so he argues, the K\"allen-Lehmann spectral representation must satisfy
\begin{equation}\label{ZKL}
Z^{-1} = \int d\mu^{2} \ \rho(\mu^{2}) = \int d\mu^{2} \ [\ \delta(\mu^{2}-m^{2}) + \sigma(\mu^{2}) \ ]
= 1 + \int d\mu^{2} \ \sigma(\mu^{2}) 
\end{equation}
the conclusion is $\sigma(\mu^{2})=0$, that is, $\phi(t,\mathbf{x})$ is trivial. This is Lopuszanski's 
first no-interaction result which tells us that the CCR (\ref{YangFeldCCR}) had better not be fulfilled.  

We feel strongly obliged to critique (\ref{ZKL}), but will defer a discussion of this issue and the wave renormalisation constant to §\ref{sec:WaveRC}). 
Let us now have a look at Lopuszanski's main theorem.

\begin{Claim}[Lopuszanski]
Assume $\phi(x)|0\ra = \phi_{\mbox{\tiny in}}(x)|0\ra$. Then follows that $\phi(x)=\phi_{\mbox{\tiny in}}(x)$ and the theory is trivial. 
\end{Claim}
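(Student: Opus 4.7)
My expectation is that the hypothesis $\phi(x)|0\ra = \phi_{\mbox{\tiny in}}(x)|0\ra$ is so restrictive that it forces the full two-point function of $\phi$ to coincide with that of the free field $\phi_{\mbox{\tiny in}}$, after which the Jost--Schroer theorem (cited in §\ref{ConJoG}) closes the argument. No delicate perturbative estimates should be needed; the proof will be structural.

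First I would use Hermiticity of $\phi$ to take the adjoint of the hypothesis, obtaining $\la 0|\phi(x) = \la 0|\phi_{\mbox{\tiny in}}(x)$ (as a distribution in $x$ after smearing with real test functions). A two-line manipulation then yields
\begin{equation}
\la 0|\phi(x)\phi(y)|0\ra = \la 0|\phi_{\mbox{\tiny in}}(x)\phi(y)|0\ra = \la 0|\phi_{\mbox{\tiny in}}(x)\phi_{\mbox{\tiny in}}(y)|0\ra,
\end{equation}
where the first equality moves the bra to the left and the second applies the hypothesis to the ket. Hence the two-point function of $\phi$ agrees with that of a free scalar of mass $m>0$.

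Second, I would invoke the Jost--Schroer theorem in the form stated in §\ref{ConJoG}: a local Wightman field whose two-point function matches that of a free field of strictly positive mass is itself a free field of that mass. This promotes $\phi$ to a free field satisfying $(\Box + m^{2})\phi = 0$. Third, I would feed this back into the Yang--Feldman representation (\ref{YFE}). Applying $(\Box + m^{2})$ to both sides and using $(\Box + m^{2})\phi_{\mbox{\tiny in}} = 0$ together with $(\Box + m^{2})\Delta_{\mbox{\tiny ret}}(x-y) = -\delta^{4}(x-y)$ gives $j(x) = 0$ as an operator-valued distribution, so (\ref{YFE}) collapses to $\phi(x) = \phi_{\mbox{\tiny in}}(x)$. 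Inserting the postulated form $j = \tfrac{g}{3!}\phi^{3}$ of (\ref{phi4int}) and observing that the Wick powers of the free field $\phi_{\mbox{\tiny in}}$ do not vanish as operators then forces $g=0$, so the theory is trivial.

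The hard part will \emph{not} be any of these algebraic steps but the distributional care required throughout: the objects $\phi(x)|0\ra$, $\phi(x)\phi(y)$ and $\phi^{3}$ are ill-defined at sharp spacetime points and must be smeared with Schwartz test functions before scalar products, products or differential operators are applied (compare the remarks on ill-defined field powers in §\ref{LopuCon} and §\ref{Diver}). Once this smearing is done properly — e.g.\ by reading the hypothesis as $\phi(f)\Omega_{0} = \phi_{\mbox{\tiny in}}(f)\Omega_{0}$ for all $f\in\Sw(\Mi)$ — the Jost--Schroer theorem does the real work and everything else is bookkeeping.
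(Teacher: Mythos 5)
Your argument is correct in substance, but it follows a genuinely different route from the paper's. The paper's proof is deliberately elementary: from the Yang--Feldman equation (\ref{YFE}) the hypothesis gives $j(y)|0\ra=0$ directly; then relative locality $[\phi(t,\mathbf{x}),j(t,\mathbf{y})]=0$ together with a second use of the hypothesis yields $j(t,\mathbf{y})\phi_{\mbox{\tiny in}}(t,\mathbf{x})|0\ra=0$, the mode expansion (\ref{YangFeldExp}) of the in-field turns this into $j(t,\mathbf{y})|q\ra=0$, induction extends it to all multi-particle in-states, and the assumed density of these states gives $j=0$ weakly, hence $\phi=\phi_{\mbox{\tiny in}}$. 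You instead derive equality of the two-point functions from the hypothesis and its adjoint and then invoke the Jost--Schroer theorem (Theorem \ref{JoSchro}), feeding the resulting Klein--Gordon equation back into (\ref{YFE}) to obtain $j=0$ and $\phi=\phi_{\mbox{\tiny in}}$. That chain is sound, and it has the virtue of not needing the explicit mode expansion of $\phi_{\mbox{\tiny in}}$; but it is considerably heavier than what the Claim requires: Jost--Schroer presupposes the full Wightman setting for the interacting field $\phi$ --- locality of $\phi$ with itself, the spectral condition and, through the Reeh--Schlieder Corollary \ref{Comu}, cyclicity of the vacuum for the field algebra generated by $\phi$ --- none of which is explicitly among Lopuszanski's assumptions, whose density hypothesis concerns the in-particle states rather than the algebra of $\phi$. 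The paper's route stays entirely within the stated provisions (causality of $\phi$ relative to $j$, the expansion (\ref{YangFeldExp}), asymptotic completeness), which is why it can be called elementary; your route buys structural brevity and some generality at the price of importing the whole axiomatic package into a heuristic canonical argument.
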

\begin{proof}
The proof is elementary: first of all, the assumption tells us $j(y)|0\ra =0$. This means the interaction term is 
incapable of polarising the vacuum. This seems questionable for (\ref{phi4int}), to say the least.
However, because of causality, we have
\begin{equation}
 [\phi(t,\mathbf{x}), j(t,\mathbf{y})]=0
\end{equation}
and thus $0=\phi(t,\mathbf{x})j(t,\mathbf{y})|0\ra = j(t,\mathbf{y})\phi(t,\mathbf{x})|0\ra
= j(t,\mathbf{y})\phi_{\mbox{\tiny in}}(t,\mathbf{x})|0\ra$. This is nothing but 
\begin{equation}
0= \int \frac{d^{3}q}{(2\pi)^{3}} \frac{1}{\sqrt{2E_{q}}} 
\ e^{-i x \cdot q } j(t,\mathbf{y})a^{\dagger}_{\mbox{\tiny in}}(q) |0 \ra 
\end{equation}
and therefore $j(t,\mathbf{y})|q \ra = 0$. By induction one gets $j(y)|q_{1},...,q_{n} \ra =0$. Finally,
on account of the assumption that these states span a dense subspace, the result is $j(y)=0$.  
\end{proof}

Lopuszanski's assumption $\phi(x)|0\ra = \phi_{\mbox{\tiny in}}(x)|0\ra$ was probably inspired by the idea that applying
the interacting field only once should create \emph{one} single particle. Because there is no other particle, the particle created by the free field 
cannot be distinguished from a single particle minted by the interacting field.  

But this assumption already sneaks in that there is no vacuum polarisation:  
once the particle has been created, it will be there on its own and there will be nothing for it to interact with. No cloud around it, hence no interaction.  

We are thus informed that this assumption is fallacious for an interacting field.
Lopuszanski himself concludes that a reasonable interacting field should not be of the form 
(\ref{YangFeldExp}) and should also not satisfy the CCR (\ref{YangFeldCCR}) \cite{Lo61}. 

\subsection{Conclusion about the CCR/CAR} In 1964 and hence prior to the publication of the above no-interaction theorems, Streater and
Wightman wrote in their book \cite{StreatWi00} that they do not exclude the CCR for interacting 
fields in general, but contend that the hints they have from examples leave them in no doubt that singular 
behaviour is to be expected for sharp-time fields, even after being smoothed out in space. 

Consequently, it may in such cases be difficult to give the CCR in (\ref{CCRSmear}) a meaning. ``Thus, one is reluctant to accept canonical
commutation relations as an indispensible requirement on a field theory.'' (\cite{StreatWi00}, p.101).

Although Wightman encouraged Baumann to work on the CCR/CAR question (see acknowledgements in \cite{Bau87, Bau88}), he did obviously not deem the 
results important enough to update these remarks in the latest edition \cite{StreatWi00} of the year 2000. 

So we conclude that although some superrenormalisable theories have been found to conform with the CCR,
renormalisable and non-renormalisable theories cannot be expected to have this feature. 
However, if an interacting field theory fulfills what is known as the \emph{asymptotic condition}, then
it may obey these relations at least \emph{asymptotically}. 

The idea that interacting fields obey some form of the CCR or CAR is generally not discussed and strictly
speaking not claimed to be true in physics. In fact, when asked, many practicing physicists would have
to first think of where the idea came from to produce the Heisenberg uncertainty principle as an answer. 

Yet the CCR/CAR represent a property constitutive for the quantisation of free fields. Therefore, 
it is somehow tacitly taken for granted for 'all' fields when a classical theory is 'quantised'. 
But as a property, it is, in actual fact, only used for free fields to compute the Feynman propagator which is in turn
needed for perturbation theory in the interaction picture. 

We shall see in §\ref{sec:CCRQuest} that the CCR question cannot be answered within the canonical formalism of perturbation theory in 
its current form.  

On the grounds that these relations are simply of no practical relevance for interacting fields and 
give rise to philosophical questions only, we are of the opinion that for the time being, one can 
easily get on without them and hence need not be disturbed by Powers' and Baumann's results. 

Our impression is that these triviality results are generally unknown to practising physicists\footnote{Otherwise there
should at least be a short remark about it in every lecture/textbook of QFT when the CCR/CAR are introduced.
Since free fields satisfy them, no restrictions follow and the lecture can blithely be carried on.} and given that 
\emph{those in the know about them are only mathematical physicists of a special creed}, 
namely axiomatic and algebraic quantum field theorists, this situation is likely to stay that way. 

Let us welcome Powers' and Baumann's theorems as another piece of information about interacting field 
theories that make us realise how little we know about them and how much our imagination is influenced by
being only familiar with free fields. But because a large class of superrenormalisable
theories conform with the CCR, the question as to why (or why not) still lingers on.

\section{Wave-function renormalisation constant}\label{sec:WaveRC}                                   
As alluded to above, other mathematical physicists have also expressed their doubts about the CCR 
for interacting fields.
In his monograph \cite{Stro13}, Strocchi comes to the same devastating conclusion about the CCR for 
interacting fields, he writes ``...canonical quantization cannot be used as a rigorous method for 
quantizing relativistic  interacting fields...''(ibidem, p.51). 

\subsection{CCR/CAR and wave-function renormalisation}
He mentions the no-interaction results of Powers and Baumann and expresses his opinion that the singular behaviour
of interacting sharp-time fields is the root of all evil. His main argument is to say that the CCR for the renormalised and hence interacting field
\begin{equation}\label{intCCR}
[\varphi_{r}(t,\mathbf{x}),\dot{\varphi}_{r}(t,\mathbf{x})] = i Z^{-1} \delta^{(3)}(\mathbf{x}-\mathbf{y})
\end{equation}
make no sense because - according to Strocchi - the wave renormalisation $Z$ vanishes and lets the rhs diverge. Before we comment on this, let us quickly 
review where this form of the CCR comes from and that, in fact, the canonical Lagrangian formalism protects itself from attacks like this one. 
Both (\ref{YangFeldCCR}) and (\ref{intCCR}) are the result of assuming
\begin{equation}\label{intCCR1}
[\varphi(t,\mathbf{x}),\dot{\varphi}(t,\mathbf{x})]=[\varphi(t,\mathbf{x}),\pi(t,\mathbf{x})] = i \delta^{(3)}(\mathbf{x}-\mathbf{y})
\end{equation}
for the bare fields\footnote{Notice that the bare field is not to be confused with a free field.} $\varphi(x)$ and $\pi(x)$, where the canonical 
momentum field is found by differentiating the bare Lagrangian with respect to $\dot{\varphi}=\partial_{t}\varphi$ and simply yields 
$\pi(x)=\dot{\varphi}(x)$. One then takes the renormalised field, $\varphi_{r}:=Z^{-1/2} \varphi$ and its first time derivative 
$\dot{\varphi}_{r}=Z^{-1/2} \dot{\varphi}$ to find for the renormalised field
\begin{equation}
[\varphi_{r}(t,\mathbf{x}),\dot{\varphi}_{r}(t,\mathbf{x})] = Z^{-1} [\varphi(t,\mathbf{x}),\dot{\varphi}(t,\mathbf{x})] 
= i Z^{-1} \delta^{(3)}(\mathbf{x}-\mathbf{y}).
\end{equation}
But this is \emph{not} the 'proper' CCR, ie what comes out if we strictly follow the rules of the canonical formalism: the 'correct' renormalised conjugate momentum is given by
\begin{equation}\label{renmom}
\pi_{r} = \frac{\partial \La}{\partial \dot{\varphi}_{r}} = Z \dot{\varphi}_{r} = Z Z^{-1/2} \dot{\varphi} = Z^{1/2} \pi.
\end{equation}
where one has to make use of the Lagrangian of the renormalised field, given by
\begin{equation}\label{Lren'}
\La = \frac{1}{2}Z(\partial \varphi_{r})^{2} - \frac{1}{2}m_{r}^{2}Z_{m}\varphi_{r}^{2} - \frac{g_{r}}{4!}Z_{g}\varphi_{r}^{4} ,
\end{equation}
which we will come back to in §\ref{sec:Ren}. Then, with this result, we find by (\ref{renmom})
\begin{equation}
[\varphi_{r}(t,\mathbf{x}),\pi_{r}(t,\mathbf{x})] =  [Z^{-1/2}\varphi(t,\mathbf{x}),Z^{1/2}\pi(t,\mathbf{x})] 
= i \delta^{(3)}(\mathbf{x}-\mathbf{y}),
\end{equation}
which is the 'canonically correct' CCR of the renormalised field, completely free of ailments, seemingly. 

\subsection{Lopuszanski's argument}
We resume the discussion on the CCR question in §\ref{LopuCon}, where we presented Lopuszanski's reasoning. 
To derive (\ref{ZKL}), we first consider the commutator function of the free field $\phi_{0}(t,\mathbf{x})$,
\begin{equation}\label{KaLCf'}
D(t-s,\mathbf{x}-\mathbf{y};m^{2}):=[\phi_{0}(t,\mathbf{x}),\phi_{0}(s,\mathbf{y})] 
= \int \frac{d^{4}q}{(2\pi)^{3}} \ \delta_{+}(q^{2}-m^{2}) [e^{-iq \cdot (x-y)} - e^{+iq \cdot (x-y)} ],
\end{equation}
then take the K\"allen-Lehmann representation  
\begin{equation}\label{KaLC'}
\la \Omega | [\phi(t,\mathbf{x}),\phi(s,\mathbf{y})]\Omega \ra 
= \int d\mu^{2} \ \rho(\mu^{2}) D(t-s,\mathbf{x}-\mathbf{y};\mu^{2}) 
\end{equation}
of the interacting (renormalised) field's commutator and differentiate it with respect to $s$ to get
\begin{equation}\label{KLCCR}
 \la \Omega | [\phi(t,\mathbf{x}),\dot{\phi}(s,\mathbf{y})]\Omega \ra 
= - \int d\mu^{2} \ \rho(\mu^{2}) \ \partial_{t}D(t-s,\mathbf{x}-\mathbf{y};\mu^{2})
\end{equation}
where the identity $\partial_{s}D(t-s,\cdot)=-\partial_{t}D(t-s,\cdot)$ for the integrand is obvious. On account of 
\begin{equation}
 \partial_{t}D(0,\mathbf{x}-\mathbf{y};\mu^{2}) = -i \delta^{(3)}(\mathbf{x}-\mathbf{y}) 
\end{equation}
we find that in the limit $s \rightarrow t$ the K\"allen-Lehmann representation in (\ref{KLCCR}) goes 
over to  
\begin{equation}\label{CCRLeh}
 \la \Omega | [\phi(t,\mathbf{x}),\dot{\phi}(t,\mathbf{y})]\Omega \ra 
= i \int d\mu^{2} \ \rho(\mu^{2}) \ \delta^{(3)}(\mathbf{x}-\mathbf{y}) 
= i [ \  1 +  \int d\mu^{2} \ \sigma(\mu^{2})  \ ] \delta^{(3)}(\mathbf{x}-\mathbf{y}), 
\end{equation}
and $Z^{-1} =  1 +  \int d\mu^{2} \ \sigma(\mu^{2})$ is the conclusion if the field $\varphi(x)$ obeys
the CCR (\ref{intCCR}). The canonical assertion $0\leq Z \leq 1$ and $Z=1$ for free fields is then an easy consequence.

\subsection{Wave-function renormalisation}
Strocchi contends that $Z \rightarrow 0$ upon removal of the cutoff because the spectral integral
\begin{equation}\label{spectra}
Z^{-1} =  \int d\mu^{2} \ \rho(\mu^{2}) =  1 +  \int d\mu^{2} \ \sigma(\mu^{2})
\end{equation}
diverges in four spacetime dimensions ''as a consequence of general non-perturbative arguments'' (\cite{Stro13}, p.51), at which point he cites Powers' 
and Baumann's papers \cite{Bau87, Pow67}. 

From our understanding of those nonperturbative arguments, we assume the idea behind Strocchi's statement is 
something like this: 
\begin{enumerate}
 \item free fields satisfy the CCR and besides $\int d\mu^{2} \ \rho(\mu^{2})=1$ is uncontroversial. Hence (\ref{CCRLeh}) makes total sense for free 
       fields, whereas
 \item interacting fields do not satisfy the CCR (in dimensions $d \geq 4$), ie something must go wrong. 
 \item Conclusio: (\ref{CCRLeh}) must diverge.
\end{enumerate}
Whatever the author had in mind, we do not find this convincing. Apart from the fact that Baumann's results do strictly speaking \emph{not pertain} to 
four spacetime dimensions, the problem may rather lie in the provision that a sharp-time Wightman field and its first derivative with respect to time exist.

In four spacetime dimensions, these dubious two objects possibly only exist in the trivial free case. One therefore cannot use them to conclude that the spectral integral (\ref{spectra}) diverges. 

However, on page 106 in \cite{Stro13}, Strocchi discusses the wave-function renormalisation constant $Z_{\Lambda}$ with a UV cutoff $\Lambda >0$ for 
the Dirac field $\psi$ of the \emph{derivative coupling model}
\begin{equation}
\La_{DC} = \frac{1}{2}[\partial_{\mu}\varphi\partial^{\mu}\varphi-m^{2}\varphi^{2}] + \ol{\psi}[i \gamma^{\mu}\partial_{\mu} - M]\psi 
- g (\ol{\psi} \gamma^{\mu} \psi) \partial_{\mu}\varphi
\end{equation}
and comes to the conclusion that $Z_{\Lambda}$ in $\psi_{r,\Lambda}(x) = Z_{\Lambda}^{-1/2} \psi_{\Lambda}(x)$ \emph{diverges} in the cutoff 
limit $\Lambda \rightarrow \infty$. However, in his treatment (\cite{Stro13}, Section 6.3), the wave-function renormalisation takes the form
\begin{equation}
 Z_{\Lambda}(g) = e^{-i\frac{1}{2}g^{2}\Delta^{+}_{\Lambda}(0)} ,
\end{equation}
where he denotes by $i\Delta^{+}_{\Lambda}(x)=(2\pi)^{-3} \int_{\Lambda} d^{4}p \ \delta_{+}(p^{2}-m^{2}) e^{-ip \cdot x}$ the two-point Wightman 
function of the free field with a UV cutoff. Since this obviously implies 
\begin{equation}
 \lim_{\Lambda \rightarrow \infty} Z_{\Lambda}(g) = \lim_{\Lambda \rightarrow \infty}  e^{-i\frac{1}{2}g^{2}\Delta^{+}_{\Lambda}(0)} = 0 ,
\end{equation}
we do not exactly find that $Z$ diverges, but he probably meant that is vanishes. 
However, this example is at least in accord with the divergence of the spectral integral (\ref{spectra}) and can thus be reconciled with his 
claim $Z \rightarrow 0$ made for the scalar field. 
 
To our mind, these discussions only point to the intricasies of the 'multiplicative renormalisation folklore': for the two-point function 
of the renormalised field, we find  
\begin{equation}
 \la \Omega | \varphi_{r}(x)\varphi_{r}(y) \Omega \ra = Z^{-1}\la \Omega | \varphi(x)\varphi(y) \Omega \ra
\end{equation}
which is, for example, also used to derive the Callan-Symanzik equation. 
Because the two-point function of the renormalised field $\varphi_{r}$ is finite and that of the unrenormalised field $\varphi$ diverges, $Z$ needs to
diverge and hence cannot vanish as proposed by Strocchi. 

We opine that this contradiction shows \emph{how poorly understood the connection between the K\"allen-Lehmann representation and canonical 
perturbation theory actually is}: in perturbation theory, $Z$ is given as a perturbative series with respect to the renormalised coupling $g_{r}$ 
(see §\ref{sec:Ren}) having two nasty properties: 
\begin{itemize}
 \item for finite regulator or cutoff, $Z(g_{r})$ has an asymptotic power series (clearly divergent),
 \item what is more, its coefficients diverge when the regulator (or cutoff) is removed, exacerbating things for any attempt to understand it nonperturbatively 
 through resummation schemes.
\end{itemize}
It can therefore never satisfy $0 \leq Z \leq 1$, which is a standard assertion in textbooks whenever the spectral representation (\ref{KaLC'}) is derived.
Take \cite{PeSch95} for example. On page 215, they construe $Z$ as ''... the probability for $\phi(0)$ (that is, the field at $x=0$, author's note) to
create a given state from the vacuum." This is because in their analysis, they find 
\begin{equation}
 Z = |\la \Omega | \phi(0) \lambda_{0} \ra|^{2} 
\end{equation}
where $| \lambda_{0} \ra$ is the zero-momentum state of the interacting theory and, therefore, $Z$ has to be within the unit interval. But their 
deliberations are typical and can in fact be traced back to \cite{BjoDre65}. However, before elaborating on this, we express at this point our contention  
that if we were to conduct an anonymous survey amongst theoretical physicists on this issue, the outcome would certainly be a collection of contrived 
narratives\footnote{Nowadays, video lectures on QFT abound on the internet. For example, the Perimeter Institute based in Canada sports a vast library of 
such recordings. In one lecture on the spectral representation, given by a physicist whose name we will not give away, 
a student says ''... but you get infinity!''. After some silence lasting a lengthy 8 seconds (!), the lecturer resumes by speaking of UV divergences,
and, gradually gaining back his normal speed of talking, he explains ''... if you cut the thing off, 
it really is true (the statement $0 \leq Z \leq 1$, author's note), but if you push the cutoff to infinity, then it won't be true anymore.'' }.

\subsection{Asymptotic scattering theory and wave-function renormalisation}\label{AsyS} 
In an attempt to connect two other disconnected stories of QFT, the renormalisation constant $Z$ found its way also into \emph{asymptotic scattering theory}.
From studying the literature, we glean the following brief history of developments regarding this issue.
\begin{enumerate}
 \item[1.] Lehmann, Symanzik and Zimmermann formulate a theory of scattering for quantised fields in \cite{LSZ55,LSZ57} stating explicitly in their abstract
       that \emph{''These equations contain no renormalization constants, but only experimental masses and coupling parameters. The main advantage over the 
       conventional formalism is thus the elimination of all divergent terms in the basic equations. This means no renormalization problem arises.''}
       And indeed, in these papers, no such constants appear: they introduce a spatially smeared scalar field (which they denote by $A(x)$, but never mind)
 \begin{equation}
  \varphi_{f}(t):=i\int d^{3}x \ [ f^{*}(t,\mathbf{x})\partial_{t}\varphi(t,\mathbf{x})- \partial_{t}f^{*}(t,\mathbf{x})\varphi(t,\mathbf{x})] ,
 \end{equation}
where $f(x)$ is a solution of the Klein-Gordon equation and integrable in some sense (see \cite{LSZ55}). This is essentially supposed to form a 
'quantised wave packet'. Then they state the \emph{asymptotic condition} for the existence of a free incoming field as
\begin{equation}\label{LSZ}
\la \alpha | \varphi_{f}(t) | \beta \ra \sim \la \alpha | \varphi_{f,in}(t) | \beta \ra  \hs{1.5} \mbox{as} \hs{0.3} t \rightarrow -\infty  .
\end{equation}
 \item[2.] In contrast to this, however, Bjorken and Drell cite \cite{LSZ55} in \cite{BjoDre65} (Section 16.3) and rephrase the asymptotic condition in 
 the form
\begin{equation}\label{Lsz}
 \la \alpha | \varphi_{f}(t)| \beta \ra \sim Z^{1/2} \la \alpha | \varphi_{f,in}(t) | \beta \ra  \hs{1} \mbox{as} \hs{0.3} t \rightarrow -\infty
\end{equation}
for the incoming field and for the outgoing accordingly.
They have slipped in a 'normalisation factor' $Z$ allegedly on the grounds that the matrix elements $\la \alpha |  \varphi_{in}(x) | \beta \ra$ need to be 'normalised'.
This is exactly the point where their exposition departs from \cite{LSZ55}, but to be fair, they take pedagogical care only to speak of a 
\emph{normalisation} constant within the bounds of Chapter 16. But Bjorken and Drell's departure is completed also in spirit when they subject incoming 
Dirac fields to the same procedure in Section 16.8 and identify the corresponding normalisation constant with the wave-function \emph{re}normalisation 
$Z_{2}$ in Section 19.7, where Chapter 19 is devoted to renormalisation.  
\end{enumerate}
They repeat at this point their interpretation of $Z_{2}$ as ''the probability of finding a 'bare-electron' state within the one-electron state of the interacting 
theory''. Here we see what great lengths physicists go to in order to make sense of everything. 

As far as the CCR/CAR are concerned, they pervade this 
textbook: in Section 16.3 Bjorken and Drell demand that the interacting scalar field obey the CCR with $\pi(x)=\partial_{t}\varphi(x)$ and come in 
Section 16.4 on the spectral representation to the same conclusion\footnote{The reader be warned: the statement 
$Z^{-1} =  1 +  \int d\mu^{2} \ \sigma(\mu^{2})$ takes in \cite{BjoDre65} a seemingly different form, namely $1 =  Z +  \int d\mu^{2} \ \sigma(\mu^{2})$. The resolution is that their 
spectral function is unrenormalised, ie dividing by $Z$ gives Lopuszanski's equation.} as Lopuszanski in (\ref{CCRLeh}). 

The reason why Bjorken and Drell placed the factor $Z$ \emph{before} the free incoming field $\varphi_{f,in}(t)$ is, in contrast to \cite{LSZ55}, that
they let $\varphi_{f}(t)$ in (\ref{Lsz}) be the unrenormalised field such that for the renormalised field, one gets the asymptotic identity
\begin{equation}\label{Lsz2}
 \la \alpha | \varphi_{r,f}(t) | \beta \ra \sim Z^{-1/2} \la \alpha | \varphi_{f}(t) | \beta \ra 
  \sim  \la \alpha | \varphi_{f,in}(t) | \beta \ra  \hs{1} \mbox{as} \hs{0.3} t \rightarrow -\infty 
\end{equation}
which draws the connection between the asymptotic condition of \cite{LSZ55} as given in (\ref{LSZ}) and of \cite{BjoDre65} given in (\ref{Lsz}). 
Not facilitating the comprehensibility of their arguments, however, they do not make this point clear 
anywhere in the text (which they could have in their chapter on renormalisation; great book though). 

\subsection*{Scattering theory of Haag \& Ruelle}
However, the asymptotic condition of Lehmann, Symanzik and Zimmermann in \cite{LSZ55} was of \emph{axiomatic nature}. No proof of any kind that the 
fields obey this condition was given. Small wonder given that Wightman's axioms were not yet formulated at the time. This situation changed when,
based on these axioms, the asymptotic condition was shown by Ruelle to be satisfied under some additional requirements \cite{Rue62}. The 
proof had already essentially been worked out by Haag in \cite{Ha58}, albeit at a slightly lower level of rigour (according to Ruelle). This rigorous 
form of LSZ scattering theory was hence dubbed \emph{Haag-Ruelle scattering theory} (see also §\ref{asymp}).

\section{Canonical quantum fields: too singular to be nontrivial}\label{sec:CanQuant}                
Let us start with an innocent-looking canonical free 
Hermitian scalar field $\varphi(x)$, given formally by its Fourier expansion
\begin{equation}\label{canqf}
 \varphi(x)=\int \frac{d^{4}p}{(2\pi)^{3}} \frac{1}{\sqrt{2E_{p}}} 
    \ [ e^{-ip \cdot x} a(\mathbf{p}) + e^{ip \cdot x} a^{\dagger}(\mathbf{p})],
\end{equation}
where $p_{0}=E_{p}=\sqrt{\mathbf{p}^{2}+m^{2}}$ is the energy of the scalar particle. The mode opersators
satisfy $a(\mathbf{p})\Psi_{0}=0$ and 
\begin{equation}\label{freeCCR}
 [a(\mathbf{p}),a(\mathbf{q})]=0=[a^{\dagger}(\mathbf{p}),a^{\dagger}(\mathbf{q})] \ , \hs{1} 
 [a(\mathbf{p}),a^{\dagger}(\mathbf{q})]=i \delta^{(3)}(\mathbf{p}-\mathbf{q}),
\end{equation}
as usual. The trouble starts as soon as we ask for the norm of the 'state' $\Psi =\varphi(x)\Psi_{0}$. If 
the canonical field $\varphi(x)$ is to be taken seriously as an operator at a sharp spacetime point 
$x \in \Mi$, this should be a valid question. However, applying (\ref{freeCCR}), we find 
$||\Psi || = ||\varphi(x) \Psi_{0}||=\infty$. The expedient is to smooth out the field with respect
to its spatial coordinates, as in 
\begin{equation}\label{smoothf}
 \varphi(t,f)=\int d^{3}x \ f(\mathbf{x})\varphi(t,\mathbf{x}) ,   
\end{equation}
where $f \in \Sw(\R^{3})$ is a Schwartz function. If we now compute the norm of the state vector 
$\Psi_{f}(t)=\varphi(t,f) \Psi_{0}$ we find 
\begin{equation}\label{Norm}
 ||\Psi_{f}(t)||^{2} = || \varphi(t,f) \Psi_{0} ||^{2} = 
                                     \la \Psi | \varphi(t,f^{*}) \varphi(t,f) \Psi_{0} \ra
 = \frac{1}{2} \int \frac{d^{3}p}{(2 \pi)^{3}} \frac{|\wt{f}(\mathbf{p})|^{2}}{\sqrt{\mathbf{p}^{2}+m^{2}}}.
\end{equation}
This integral is convergent on account of $\wt{f}$ being Schwartz, where $\wt{f}=\F f$ is the Fourier 
transform of the Schwartz function\footnote{An explicit and careful treatment starts with $f \in \De(\R^{3})$ of compact support to 
ensure the theorem of Fubini can be employed and then extends the result to $\Sw(\R^{3})$.} $f \in \Sw(\R^{3})$. 
Consequently, the state $\Psi_{f}(t)=\varphi(t,f) \Psi_{0}$ has finite norm and exists. 
But this does not hold for the state $\varphi(x) \Psi_{0}$.

We conclude that the canonical free field needs smearing at least in space to be well-defined on the vacuum.
Without smearing, it is merely a symbol. Nonetheless, computing the two-point function 
\begin{equation}\label{freeW}
 \la \Psi_{0} | \varphi(x)\varphi(y) \Psi_{0} \ra = 
 \frac{1}{2} \int \frac{d^{3}p}{(2 \pi)^{3}} \frac{e^{-i p \cdot (x-y)}}{\sqrt{\mathbf{p}^{2}+m^{2}}}
 =: \Delta_{+}(x-y;m^{2})
\end{equation}
yields a well-defined function for $x\neq y$, but has a pole where $x=y$, ie a short-distance singularity.

\subsection{Triviality of sharp-spacetime fields} The following pertinent theorem due to Wightman says that if one assumes a quantum field $\varphi(x)$
exists as an operator at a sharp spacetime point $x \in \Mi$ and is covariant with respect to a strongly
continuous representation of the Poincarè group, then it is trivial in the sense that it is merely
a multiple of the identity \cite{Wi64}:

\begin{Theorem}[Short distance singularities]\label{WightTHM}
 Let $\varphi(x)$ be a Poincaré-covariant Hermitian scalar field, that is,
 \begin{equation}\label{PoincCov}
  U(a,\Lambda)\varphi(x)U(a,\Lambda)^{\dagger} = \varphi(\Lambda x + a)
 \end{equation}
and suppose it is a well-defined operator with the vacuum $\Psi_{0}$ in its domain. Then the function
\begin{equation}
 F(x,y) = \la \Psi_{0} | \varphi(x)\varphi(y) \Psi_{0} \ra 
\end{equation}
is constant, call it $c$. Furthermore $\varphi(x)\Psi_{0}= \sqrt{c}\Psi_{0}$, ie $\varphi(x)$ is trivial and
thus 
\begin{equation}
\la \Psi_{0} | \varphi(x_{1})...\varphi(x_{n}) \Psi_{0} \ra = c^{n/2}.
\end{equation}
\end{Theorem}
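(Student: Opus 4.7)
The plan is to exploit Poincaré covariance and the positivity/support properties inherited from a translation generator with spectrum in $\overline{V}_+$ to show that the vector $\Phi := \varphi(0)\Psi_0$ is itself translation invariant, and hence, by uniqueness of the vacuum, proportional to $\Psi_0$.

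First I would use the strong continuity of $U(a,\Lambda)$ together with Stone's theorem to introduce the self-adjoint energy-momentum operators $P^\mu$, so that spacetime translations are written $U(a)=e^{iP\cdot a}$ and the Poincaré invariance of the vacuum yields $U(a)\Psi_0=\Psi_0$. Covariance (\ref{PoincCov}) then gives $\varphi(x)\Psi_0 = e^{iP\cdot x}\Phi$ with $\Phi=\varphi(0)\Psi_0$, which is a vector of the Hilbert space because $\Psi_0$ lies in the domain of $\varphi(0)$ by assumption. Hermiticity yields
\begin{equation}
F(x,y)=\la \varphi(x)\Psi_0 | \varphi(y)\Psi_0 \ra
= \la \Phi | e^{iP\cdot(y-x)}\Phi \ra =: W(y-x),
\end{equation}
so that $F$ depends only on the difference and is bounded by $\|\Phi\|^{2}$. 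By the spectral theorem applied to the commuting self-adjoint operators $P^\mu$, one has the representation $W(\xi)=\int e^{ip\cdot \xi}\,d\mu(p)$, where $\mu$ is the finite positive spectral measure of $\Phi$; the spectrum condition forces $\supp \mu \subset \overline{V}_+$, and the Lorentz covariance $U(\Lambda)\Phi = \varphi(0)U(\Lambda)\Psi_0 = \Phi$ (for $\Lambda$ in the stabilizer of $\Phi$—actually, more directly, from $W(\Lambda\xi)=W(\xi)$) implies that $\mu$ is $\Lo$-invariant.

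The main obstacle, and the heart of the proof, is the following measure-theoretic lemma: the only finite positive $\Lo$-invariant Borel measure supported on $\overline{V}_+$ is a point mass at the origin. The reason is that the nontrivial Lorentz orbits in $\overline{V}_+$—the positive mass-$m$ hyperboloids $p^2=m^2, p^0>0$ and the forward light-cone boundary $p^2=0, p^0>0$—each carry a unique (up to scale) $\Lo$-invariant measure, and each of those invariant measures is infinite: the non-compactness of the orbits (equivalently, the non-compactness of the quotient $\Lo/\mbox{stab}$, the stabilizer being $SO(3)$ in the timelike case, $E(2)$ in the lightlike case) lets one produce infinitely many disjoint Lorentz translates of any compact subset of positive measure, contradicting finiteness. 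Hence $\mu=c\,\delta_0$ with $c\geq 0$, and consequently $W(\xi)\equiv c$.

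It follows that $\Phi$ lies entirely in the spectral subspace of $P^\mu$ at the eigenvalue $0$, i.e.\ $U(a)\Phi=\Phi$ for all $a\in\Mi$. Uniqueness of the Poincaré-invariant state then gives $\Phi=\lambda\Psi_0$ for some $\lambda\in\C$, and Hermiticity together with $c=\|\Phi\|^{2}=|\lambda|^{2}$ lets us fix (up to an irrelevant phase absorbable into $\varphi$) $\lambda=\sqrt{c}$, so that $\varphi(0)\Psi_0=\sqrt{c}\,\Psi_0$ and, by translation covariance, $\varphi(x)\Psi_0=\sqrt{c}\,\Psi_0$ for every $x\in\Mi$. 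The $n$-point formula then drops out by iterating $\varphi(x_j)\Psi_0=\sqrt{c}\,\Psi_0$ from right to left inside $\la\Psi_0|\varphi(x_1)\cdots\varphi(x_n)\Psi_0\ra$, yielding $c^{n/2}$, provided one also assumes (as is tacit in ``well-defined operator'') that successive applications of $\varphi(x_j)$ are admissible on the vectors produced; since each intermediate vector is a scalar multiple of $\Psi_0$, this reduces to the hypothesis that $\Psi_0$ lies in the domain of each $\varphi(x_j)$.
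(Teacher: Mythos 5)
Your proof is correct, and its skeleton is the same as the paper's: reduce $F$ to a function of $x-y$, write it as the Fourier transform of a positive Lorentz-invariant measure supported in $\fl$, argue that such a measure must be a point mass at the origin, and then use uniqueness of the translation-invariant state to get $\varphi(x)\Psi_{0}=\sqrt{c}\,\Psi_{0}$. Where you genuinely differ is in the mechanism that kills the continuous part of the measure. The paper proceeds via Bochner--Schwartz (positive-type, continuous $F$ gives a tempered measure), then inserts the explicit Reed--Simon parametrisation $d\mu(p)=c\,\delta^{(4)}(p)\,d^{4}p+b\,dm^{2}\,d^{3}p\,\rho(m^{2})/\sqrt{\mathbf{p}^{2}+m^{2}}$ and uses continuity of $F$ at $x=0$, i.e.\ finiteness of $F(0)=\int d\mu$, against the UV divergence of $\int d^{3}p/\sqrt{\mathbf{p}^{2}+m^{2}}$ to force $b=0$. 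You instead observe that $\mu$ is the spectral measure of the honest Hilbert-space vector $\Phi=\varphi(0)\Psi_{0}$, hence automatically finite with total mass $\|\Phi\|^{2}$, and invoke the lemma that no nontrivial $\Lo$-orbit in $\fl$ carries a finite invariant measure. These are two dressings of the same fact (the divergent momentum integral \emph{is} the infinite invariant measure on the hyperboloid), but your route buys directness: no Bochner--Schwartz and no separate continuity argument are needed, since finiteness comes for free from the hypothesis that $\varphi(0)$ is a bona fide operator on $\Psi_{0}$; the paper's route buys the explicit K\"all\'en--Lehmann form, which it reuses in the surrounding discussion. Two small points to tighten: your disjoint-translates sketch needs a touch of care on the light-cone orbit (compact sets containing points with $p^{0}+p^{3}=0$ are not separated by a single boost direction; use rotations as well, or simply cite the classification of invariant measures the paper cites), and since $\varphi$ is Hermitian the constant $\lambda=\la\Psi_{0}|\varphi(0)\Psi_{0}\ra$ is real, so $\lambda=\pm\sqrt{c}$ with the sign absorbable into the field, matching the paper's $c=|\la\Psi_{0}|\varphi(0)\Psi_{0}\ra|^{2}$ and $\|(\varphi(x)-\sqrt{c})\Psi_{0}\|^{2}=0$.
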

\begin{proof} 
We follow \cite{Stro13}.
First note that Poincaré covariance (\ref{PoincCov}) implies 
\begin{equation}
F(x+a,y+a)=F(x,y) 
\end{equation}
which entails that this function depends only on $(x-y)$. We write $F(x,y)=F(x-y)$. 
$F(x)$ is continuous by the strong continuity of the Poincarè representation, ie through the covariance 
identity $\varphi(x)=U(x,1)\varphi(0)U(x,1)^{\dagger}$ inserted into the two-point function, 
\begin{equation}\label{stroCont}
F(x)=\la \Psi_{0} | \varphi(x) \varphi(0) \Psi_{0} \ra 
    = \la \varphi(0) \Psi_{0} | U(x,1)^{\dagger} \varphi(0) \Psi_{0} \ra 
    = \la \varphi(0) \Psi_{0} | U(x,1) \varphi(0) \Psi_{0} \ra^{*} \ .
\end{equation}
By virtue of this and the property 
$\int d^{4}x \int d^{4}y \ f(x)^{*} F(x-y)f(y) = || \varphi(f) \Psi_{0}||^{2} \geq 0$ with completely
smoothed-out field
$\varphi(f):= \int d^{4}x f(x)\varphi(x)$, we conclude that $F(x)$ is a continuous function of positive type.
The Bochner-Schwartz theorem tells us now that there exists a positive tempered measure $\mu$ on $\R^{4}$
such that 
\begin{equation}
 F(x) = \int e^{-i p \cdot x} \ d\mu(p),
\end{equation}
ie $F(x)$ is the Fourier transform of a positive tempered measure. If we use the spectral 
representation of the translation symmetry operator,
\begin{equation}
 U(x,1)=\int e^{i p \cdot x} d\mathsf{E}(p),
\end{equation}
plug it into (\ref{stroCont}), we see that the measure $\mu$ must be Poincaré invariant by $F(\Lambda x)=F(x)$ and
\begin{equation}\label{Spectr}
 F(x) = \la \Psi_{0} |\varphi(x)\varphi(0)\Psi_{0} \ra 
 = \int e^{-i p \cdot x} d\la \varphi(0) \Psi_{0} | \mathsf{E}(p)\varphi(0) \Psi_{0} \ra. 
\end{equation}
Then it follows that $\mu$ is of the form (\cite{ReSi75}, p.70)
\begin{equation}
d\mu(p) = c \ \delta^{(4)}(p) d^{4}p + b \ dm^{2} \ d^{3}p \ \frac{\rho(m^{2})}{\sqrt{\mathbf{p}^{2}+m^{2}}}
\ \hs{2} ( c,b \geq  0 ),  
\end{equation}
which is essentially the K\"allen-Lehmann spectral representation with spectral function 
\begin{equation}
 \rho(m^{2})\geq 0 \ , \hs{2} \supp (\rho) \subset [0,\infty).
\end{equation}
Because $F(x)$ is continuous at $x=0$, we have 
\begin{equation}\label{ContF}
 F(0) = \int d\mu(p) = c + b \int dm^{2} \int d^{3}p \ \frac{\rho(m^{2})}{\sqrt{\mathbf{p}^{2}+m^{2}}}
\end{equation}
which implies $b=0$ because of the UV divergence of the momentum integral. This means in particular 
$F(x)=F(0)=c$ and that the spectral measure $d\mathsf{E}(p)$ has 
support only at $p=0$, where $\mathsf{E}(0)=\la \Psi_{0} | \ \cdot \ \ra \Psi_{0}$, ie where it projects onto the vacuum
(the vacuum is the only state with vanishing energy). 
If we write this in terms of (\ref{Spectr}), we get
\begin{equation}
 c = F(0) = \int d\la \varphi(0) \Psi_{0} | \mathsf{E}(p)\varphi(0) \Psi_{0} \ra 
 = \la \varphi(0) \Psi_{0} | \Psi_{0} \ra \la \Psi_{0} | \varphi(0) \Psi_{0} \ra 
 = |\la \Psi_{0} | \varphi(0) \Psi_{0} \ra|^{2}.
\end{equation}
Using this, one easily computes $||(\varphi(x)-\sqrt{c})\Psi_{0}||^{2}=0$. 
\end{proof}

This result is insightful.
We know exactly which assumption cannot be true for the two-point function of the canonical free field 
in (\ref{canqf}) and are even able to put our finger on it: the function $F(x-y)=\Delta_{+}(x-y;m^{2})$ 
in (\ref{freeW}) is not continuous at $x-y=0$ as the integral diverges logarithmically in this case. 
The argument that led to this assumption can be easily traced back to the condition of strong continuity 
of the representation of the translation group, ie the requirement that the function
\begin{equation}
 x \mapsto \la \Psi | U(x,1) \Phi \ra = \la \Phi | U(x,1)^{\dagger} \Psi \ra^{*}  
\end{equation}
be a continuous function for all state vectors $\Psi, \Phi \in \Hi$. 
The erroneous assumption for our free field is therefore (as we know) that the state 
$\Psi = \Phi=\varphi(0)\Psi_{0}$ is one of these permissible state vectors. 

Interestingly enough, there is an analogy to Haag's theorem. 
\begin{itemize}
 \item \textsc{First of all}, the assumption that there exists a Poincarè-covariant sharp-spacetime field is too strong.
 \item \textsc{Secondly}, while the rigorous procedure takes well-reasoned steps and ends up with pleading triviality, the formal canonical 
       calculation leads to an infinite result. 
\end{itemize}
This is also exactly what happens in non-renormalised canonical perturbation theory which, by its very
nature, has to work with sharp-spacetime fields.

\subsection{Tempered distributions} However, Wightman's theorem is not applicable to the smoothed-out free 
field $\varphi(t,f)$ in (\ref{smoothf}). One reason is that Poincarè covariance cannot be formulated 
like in (\ref{PoincCov}) but has to be altered, in particular, time must also be smeared. 

The axiomatic approach to be introduced in the next section proposes to construe the 
two-point function (\ref{freeW}) as a symbol for a tempered distribution, ie
\begin{equation}\label{Wdistr}
 \Sw(\Mi) \times \Sw(\Mi) \ni (f,g) \mapsto W(f,g) = \int d^{4}x \int d^{4}y \ f^{*}(x) W(x-y) g(y).
\end{equation}
This amounts to defining a 'two-point' distribution $W \in (\Sw(\Mi) \times \Sw(\Mi))'$ by
\begin{equation}
W(f,g):= \la \Psi_{0} |  \varphi(f) \varphi(g) \Psi_{0} \ra 
\end{equation}
with completely smooth-out field opersators $\varphi(f)=\int d^{4}x \ f(x) \varphi(x)$ and
$\varphi(g)=\int d^{4}x \ g(x) \varphi(x)$. Because $\varphi(f)$ makes
sense as an operator and gives rise to distributions, Wightman called these objects \emph{operator-valued
distributions}. 

Note that from a conceptual and physical point of view, the smoothing operation imposes no restriction. 
Observable fields cannot be measured with arbitary precision, and smearing a field in both time and space with 
a test function of 
arbitarily small support is certainly permissible and not a big ask, as discussed by Bohr and Rosenfeld in \cite{BoRo33,BoRo50}. 

The smoothing has the nice effect that (\ref{Wdistr}) can be written in Fourier space as
\begin{equation}\label{WFour}
 W(f,g)= \int \frac{d^{4}p}{(2\pi)^{4}} \ \wt{f}^{*}(p) \wt{W}(p)  \wt{g}(p)
 =: \wt{W}(\wt{f},\wt{g}) 
\end{equation}
and that the too strong assumption of continuity of $F(x)=W(x)$ at $x=0$ in (\ref{ContF}) can now be 
replaced by the much weaker condition that $d\mu(p)=\wt{W}(p) d^{4}p$ be a well-defined Poincarè invariant 
distribution, ie
\begin{equation}\label{WdistFou}
\int h(p) \ d\mu(p) = c \ h(0) + b \int d\nu^{2} \int d^{3}p \ 
\frac{\rho(\nu^{2})}{\sqrt{\mathbf{p}^{2}+\nu^{2}}} \ h(p)
\end{equation}
for a Schwartz function $h \in \Sw(\Mi)$. 
Then, with this weaker requirement, the integral on the rhs of (\ref{WdistFou}) need not be muted, ie the 
choice $b\neq 0$ is perfectly acceptable unless the spectral function $\rho(\nu^{2})$ goes berserk and 
overpowers the factor $1/|\nu|$. 

The spectral representation of the free field's two-point distribution can be gleaned from comparing (\ref{freeW}) with (\ref{WdistFou}): we
read off $c=0$, $b=1/(2(2\pi)^{3})$ and $\rho(\nu^{2})=\delta(\nu^{2}-m^{2})$. 

Taking into account the singular nature of sharp-spacetime fields, Poincaré covariance (\ref{PoincCov}) is
reformulated for the smeared fields as 
\begin{equation}\label{PoinCovDist}
 U(a,\Lambda)\varphi(f)U(a,\Lambda)^{\dagger} = \varphi(\{ a,\Lambda \}f),
\end{equation}
where $(\{ a,\Lambda \}f)(x)=f(\Lambda^{-1}(x-a))$ is the transformed Schwartz function. 
As the reader may remember from §\ref{sec:WightHall}, Haag's theorem relies on the sharp-spacetime 
version (\ref{PoincCov}) of Poincarè covariance and cannot be applied to smeared fields. The reason is
that the time of unitary equivalence is fixed and sharp, not averaged. 

\subsection{Sharp-time fields}
We know from (\ref{Norm}) that a free scalar field need only be smeared with respect to space to become a
well-defined object. No one can tell whether this is actually the case for general (interacting) fields
and some doubt it, eg Glimm and Jaffe (\cite{GliJaf70}, p.380), Streater and Wightman (\cite{StreatWi00}, p.101). The
latter authors speak of 'examples' which suggest that smearing in space is not sufficient but do not give a reference for further reading. 

Powers and Baumann make use of 'relativistic' sharp-time fields in \cite{Pow67, Bau87,Bau88} in the following sense. 
Starting with an operator-valued distribution transforming under the Poincaré group as in (\ref{PoinCovDist}), Baummann demanded that 
for a Dirac sequence $\delta^{\epsilon}_{t} \in \Sw(\R)$ centred at time $t\in \R$ and any Schwartz function $f \in \Sw(\R^{n})$ in space dimension $n$,
the limit
\begin{equation}
\varphi(t,f) := \lim_{\epsilon \rightarrow 0} \varphi(\delta^{\epsilon}_{t} \te f)
\end{equation}
exist, where $\Mi = \R^{n+1}$.

\section{Wightman axioms and reconstruction theorem}\label{sec:WightAx}                              
Considering the issues incurred by working with sharp-spacetime and possibly also with sharp-time fields, 
it is no wonder that the following axioms due to Wightman and collaborators do not demand that general 
(interacting) quantum fields make sense as Hilbert space operators at sharp-spacetime points 
$x=(t,\mathbf{x})$ but only as operator-valued distributions. 

\subsection{Axioms for operator-valued distributions} 
Many authors quote \cite{Wi56} as a seminal paper for the Wightman axioms. This is strictly speaking not
true, as Wightman did not state them as such in this publication. He rather juggled with a few features 
that a reasonable QFT should bear without so easily falling prey to triviality results like Theorem \ref{WightTHM} 
in the previous section. 

In \cite{Wi56}, Wightman first investigates the 
consequences that relativistic covariance, local commutativity and positivity of the generator of time 
translations entail for the vacuum expectation values of scalar fields. He then discusses how these 
properties suffice to 'reconstruct' the theory (reconstruction theorem). However, he tentatively adds 
that a 'completeness requirement' should be fulfilled to recover the entire theory. 
This requirement is now part of the axioms as \emph{cyclicity of the vacuum} to be explained in due course. 

The axioms were first explicitly enunciated by Wightman and G\aa rding 
in an extensive article \cite{WiGa64} where they report on their reluctance to publish their results 
earlier. Although believing in their axioms' worth, they first wanted to make sure that nontrivial 
examples including free fields exist. 

Except for the numbering, we follow \cite{StreatWi00} in their exposition of the Wightman axioms. 
Although formulated for general quantum fields with any spin in their monograph, one has to say that 
\emph{the axioms can only be expected to hold for scalar and Dirac fields}. 
For photon fields, the axiom of Poincaré invariance turned out to 
be incompatible with the equations of motion for free photons, ie Maxwell's equations for the vacuum. This
is, of course, not the case for classical photon fields \cite{Stro13}. So in hindsight, it was
certainly a bit premature to include vector fields. 

We shall describe the issues arising for gauge theories in §\ref{sec:AxG}  
and in this section content ourselves with brief remarks. However, Wightman's axioms do not speak of any equation of motion for
the fields to satisfy. From this perspective, issues arising with Maxwell's equations can be ignored. 
Since conventional quantisation schemes for free fields always involve equations of motion,
one is reluctant to assent to this. 

But because free scalar and a vast class of superrenormalisable QFTs conform with these axioms 
\cite{GliJaf81}, we expect them to make sense at least for scalar and fermion fields.   
 
The axioms are organised in such a way that only the first one stands independently whereas the others
that follow rely increasingly on the ones stated before. Here they are.

\begin{itemize}
 \item \textbf{Axiom O (Relativistic Hilbert space)}. The states of the physical system are described by 
 (unit rays of) vectors in a separable Hilbert space $\Hi$ equipped with a strongly continuous unitary
 representation $(a,\Lambda) \mapsto U(a,\Lambda)$ of the connected Poincaré group $\Po$. 
 Moreover, there is a unique state 
 $\Psi_{0}\in \Hi$, called the \emph{vacuum}, which is invariant under this representation, ie
 \begin{equation}
  U(a,\Lambda) \Psi_{0} = \Psi_{0} \hs{2} \textrm{for all } (a,\Lambda) \in \Po .
 \end{equation}
\end{itemize}

This first axiom merely sets the stage for a relativistic quantum theory without specifying any 
operators other than those needed for the representation of the Poincaré group. 
It therefore has the number O. For photons, however, this 
is already problematic: the Hilbert space must be replaced by a complex vector space with 
a nondegenerate inner product which is a much weaker requirement (see §\ref{sec:AxG}, or \cite{Stei00}, for example). 
The next axiom ensures that the Lorentz group cannot create an 
unphysical state, eg by sending a particle on a journey back in time.

\begin{itemize}
 \item \textbf{Axiom I (Spectral condition)} The generator of the translation subgroup
 \begin{equation}
  i \frac{\partial}{\partial a^{\mu}}U(a,1)|_{a=0} = P_{\mu}
 \end{equation}
has its spectrum inside the closed forward light cone: $\sigma(P) \subset \overline{V}_{+}$ and 
$H=P_{0}\geq 0$, ie the time translation generator (=Hamiltonian) has nonnegative eigenvalues.
\end{itemize}

This axiom includes massless fields, ie fields without a mass gap.  While Axiom I seems fine at face value, 
it is in fact at odds with QED and raises serious questions for a general canonical QFT on account of the 
consequences it has in store for the vacuum expectation values. We shall come back to this point below.
The rest of the axioms introduce the concept of quantum fields and what properties they should have.

\begin{itemize}
 \item \textbf{Axiom II (Quantum fields)}. For every Schwartz function $f \in \Sw(\Mi)$ there are operators
 $\varphi_{1}(f),..., \varphi_{n}(f)$ and their adjoints 
 $\varphi_{1}(f)^{\dagger},..., \varphi_{n}(f)^{\dagger}$ on $\Hi$ such that the polynomial algebra
 \begin{equation}
 \Al(\Mi) = \la \ \varphi_{j}(f),\varphi_{j}(f)^{\dagger} : f \in \Sw(\Mi), j = 1, ..., n \ \ra_{\C}
 \end{equation}
has a stable common dense domain $\D \subset \Hi$, ie $\Al(\Mi) \D \subset \D$ which is also
Poincaré-stable, ie $U(\Po) \D \subset \D$.
The assignment $f \mapsto \varphi_{j}(f)$ is called \emph{quantum field}.
Additionally, the vacuum $\Psi_{0}$ is \emph{cyclic} for $\Al(\Mi)$ with respect to $\Hi$. This means 
$\Psi_{0} \in \D$ and the subspace
\begin{equation}\label{cycl}
 \D_{0} := \Al(\Mi) \Psi_{0} \subseteq \D
\end{equation}
is dense in $\Hi$. Furthermore, the maps 
\begin{equation}
 f \mapsto \la \Psi | \varphi_{j}(f) \Psi' \ra  \hs{2} (j = 1, ... ,n)
\end{equation}
are tempered distributions on $\Sw(\Mi)$ for all $\Psi,\Psi' \in \D$. 
\end{itemize}

As already alluded to in the previous section, it is due to this latter property, that a quantum field 
is referred to as an \emph{operator-valued distribution}. 
The canonical notion of a quantum field can be approximated by the assignment of a spacetime 
point $x \in \Mi$ to an operator $\varphi_{j}(f_{x})$ with a Schwartz function $f_{x}$ of 
compact support in a tiny (Euclidean) $\varepsilon$-ball around $x \in \Mi$. This avoids the aforementioned
ills of sharp-spacetime fields. Note that the operators in $\Al(\Mi)$ are not required to be bounded. 

Cyclicity (\ref{cycl}) expresses the condition that every (physical) state can be approximated to an arbitrarily 
high degree by applying the field variables to the vacuum. On account of the density of the so-obtained 
subspace $\D_{0}$, one can then, if necessary, reach any state in $\Hi$ after completion of $\D_{0}$ 
with respect to Cauchy sequences. Finally, nullifying all zero norm states eliminates unphysical remnants.

It is important to note that this property, together with the following remaining axioms, entail \emph{irreducibility} of the operator algebra
$\Al(\Mi)$. This means that if $C$ is an operator commuting with all field operators, then it is trivial, ie
\begin{equation}
 [\Al(\Mi),C] = 0 \hs{1} \Rightarrow \hs{1} C=c_{0}1,
\end{equation}
where $c_{0} \in \C$. The proof can be found in \cite{StreatWi00}, Theorem 4-5.

\begin{itemize}
 \item \textbf{Axiom III (Poincaré covariance)} The quantum fields transform under the (unitary 
 representation of the) Poincaré group according to 
 \begin{equation}\label{smPoin}
 U(a,\Lambda)\varphi_{j}(f)U(a,\Lambda)^{\dagger} = \sum_{l=1}^{n}S_{jl}(\Lambda^{-1})
 \varphi_{l}(\{a,\Lambda \}f),
 \end{equation}
 on the domain $\D$ where $S(\Lambda^{-1})$ is a finite-dimensional representation of the connected 
 Lorentz group $\Lo$ and 
 \begin{equation}
 ( \{a,\Lambda \}f)(x):=f(\Lambda^{-1}(x-a))
 \end{equation}
 is the Poincaré-transformed test function. 
\end{itemize}

For scalar fields, this takes the simple form $S(\Lambda^{-1})=1$, ie $S_{jl}(\Lambda^{-1})=\delta_{jl}$, as
we have seen in (\ref{PoinCovDist}). 
The next property is called \emph{locality} among proponents of the axiomatic approach 
and mostly \emph{(Einstein) causality} or \emph{microcausality} by practising physicists.

\begin{itemize}
 \item \textbf{Axiom IV (Locality, Causality)}. Let $f,g \in \Sw(\Mi)$ be of mutually spacelike-separated 
 support, ie $f(x)g(y) \neq 0$ implies $(x-y)^{2}<0$. Then,
 \begin{equation}\label{Caus}
  [\varphi_{j}(f),\varphi_{l}(g)]_{\pm} 
  = \varphi_{j}(f) \varphi_{l}(g) \pm \varphi_{l}(g)\varphi_{j}(f) = 0,
 \end{equation} 
for all indices (anticommutator '$+$' for fermions and commutator '$-$' for bosons).   
\end{itemize}
 
This last axiom accounts for the fact that signals cannot travel faster than light, ie measurements
at two different points in spacetime with spacelike separation do not interfere. With this interpretation,
however, it is questionable whether gauge fields or fermion fields, both unobservable, should be required to satisfy this axiom.
But to make sure that observables constructed from these unobservable fields conform with it, one may
retain it, although it may be one condition too many as it is possibly the case for QED 
(see §\ref{sec:AxG}).   

However, it should not be mistaken for the CCR or CAR (§\ref{sec:PowTHM}). Note that (\ref{Caus})
is an operator identity which does not say anything about the case when the supports of $f$ and $g$ are
not spacelike separated. We know that for a single free scalar field, this commutator is the distribution
\begin{equation}
 [\varphi(f),\varphi(g)]_{-}=\Delta_{+}(f,g)-\Delta_{+}(g,f) \hs{1} (\mbox{free field case}).
\end{equation}
It is interesting to see what happens if one assumes that the commutator of a generic scalar field 
yields a c-number, a case investigated by Greenberg \cite{Gre61}: one can show that 
\begin{itemize}
 \item locality is implied by Poincaré invariance of the commutator, in turn a consequence of Poincaré covariance of the field;
 \item the field can be decomposed into a positive and a negative energy piece.
\end{itemize}
A field with this property has therefore been named \emph{generalised free field} (see \cite{Stro93} for a concise treatment).

\subsection{Asymptotic fields}\label{asymp}
The Wightman axioms do not include the \emph{condition of asymptotic
completeness}. This essentially means that the field algebra $\Al(\Mi)$ contains elements which 
approach free fields in the limits $t \rightarrow \pm \infty$ and that the states these \emph{asymptotic
fields} create when applied to the vacuum fill up a dense subspace in the Hilbert space. 
Ruelle proved in \cite{Rue62} that the above axioms imply the existence of asymptotic states if the 
theory has a \emph{mass gap} and Buchholz succeeded in proving the massless case \cite{Bu75,Bu77}. 
But \emph{the existence of asymptotic states and fields does not imply asymptotic completeness} which is often written as
\begin{equation}
 \Hi_{in} = \Hi = \Hi_{out},
\end{equation}
where $\Hi_{in}$ and $\Hi_{out}$ are the Hilbert spaces of the incoming and outgoing particles. 
But \emph{if} asymptotic completeness is given, the existence of a unitary S-matrix is guaranteed.
We have already mentioned in §\ref{sec:EuHTHM} that the existence of the S-matrix has been 
proven for the superrenormalisable class $P(\varphi)_{2}$. 

Nevertheless, as there is no compelling evidence for asymptotic completeness, Streater and Wightman decided to
withdraw this condition from their list of axioms (Axiom IV in \cite{StreatWi00}, p.102). 

\subsection{Wightman distributions}
The axioms translate directly to a package of properties of the vacuum expectation values.  
Let us now for simplicity confine ourselves to a single scalar field. It is not difficult to prove 
that the Wightman distributions defined by
\begin{equation}\label{WiDist}
 W_{n}(f_{1}, ... , f_{n}):=\la \Psi_{0} |\varphi(f_{1}) ... \varphi(f_{n}) \Psi_{0} \ra
\end{equation}
have the following properties: 
\begin{enumerate}

 \item[\textbf{W1:}] \textsc{Poincaré invariance}. 
 $W_{n}(f_{1}, ... , f_{n})=W_{n}(\{a,\Lambda\}f_{1}, ... , \{a,\Lambda\}f_{n})$ for all Poincaré
 transformations $(a,\Lambda) \in \Po$. This is a simple consequence of the field's Poincaré covariance and the vacuum's Poincaré invariance.

 \item[\textbf{W2:}] \textsc{Spectral condition}. $W_{n}$ vanishes if one test 
 function's Fourier transform has its support outside the forward light cone, that is, if there is a $j$ such that
 $\wt{f}_{j}(p)=0$ for all $p \in \overline{V}_{+}$, then
 \begin{equation}\label{specprop}
  \wt{W}_{n}(\wt{f}_{1}, ... , \wt{f}_{n}) = W_{n}(f_{1}, ... , f_{n})= 0 .
 \end{equation}
 In this case one 
 says that $W_{n}$ (or better $\wt{W}_{n}$) has support inside the forward light cone $(\overline{V}_{+})^{n}$. 
 This property is a consequence of the spectral condition imposed by Axiom I. 
\end{enumerate}
 But notice what it entails. While this is all very well for free fields, it raises serious doubts in a general QFT. If we just take
 the renormalised propagator of a scalar field in momentum space, 
 \begin{equation}\label{propapicker}
  \wt{G}_{r}(p) = \frac{i}{p^{2}-m_{r}^{2}-\Sigma_{r}(p)+i0^{+}}
  = \lim_{\epsilon \downarrow 0} \frac{i}{p^{2}-m_{r}^{2}-\Sigma_{r}(p) + i \epsilon}
 \end{equation}
with physical mass $m_{r}>0$ and self-energy $\Sigma_{r}(p)$, we have to ask ourselves whether this thing can actually do us a favour and vanish for 
spacelike momenta. In the case of a free field, this is well-understood as the integration over the zeroth component picks up the on-shell particles. 
And, of course, one may assume that this mechanism also works for the distribution in (\ref{propapicker}).

But what about photons? This would mean that spacelike, ie t-channel photons effectively do not contribute to the two-point function. 
We already see here, the Wightman framework does not accommodate the Maxwell field in its edifice as straightforwardly and clearly as one might wish for! 

\begin{itemize}
 \item[\textbf{W3:}] \textsc{Hermiticity}. 
 $W_{n}(f_{1}, ... , f_{n})=W_{n}(f^{*}_{n}, ... , f^{*}_{1})^{*}$. This follows from 
 $\varphi(f)^{\dagger} = \varphi(f^{*})$. 
 
 \item[\textbf{W4:}] \textsc{Causality/Locality}. If $f_{j}$ and $f_{j+1}$ have mutually spacelike separated support, then
 \begin{equation}
  W_{n}(f_{1}, ... , f_{j}, f_{j+1}, ... , f_{n}) = W_{n}(f_{1}, ... , f_{j+1}, f_{j}, ... , f_{n}).
 \end{equation}
 
 \item[\textbf{W5:}] \textsc{Positivity}. 
  $\Psi_{f} = f_{0}\Psi_{0} + \sum_{n \geq 1}\varphi(f_{n,1}) ... \varphi(f_{n,n}) \Psi_{0}$
  is the form of a general state in $\D_{0}$. The property
 \begin{equation}
  \sum_{n\geq 0} \sum_{j+k=n} W_{n}(f^{*}_{j,j}, ... , f^{*}_{j,1}, f_{k,1}, ... , f_{k,k}) \geq 0
 \end{equation}
 is a consequence of the requirement $\la \Psi_{f} | \Psi_{f} \ra = ||\Psi_{f} ||^{2} \geq 0$, where $W_{0}=|f_{0}|^{2}\geq 0$.
 
 \item[\textbf{W6:}] \textsc{Cluster decomposition}. Let $a\in \Mi$ be spacelike. Then
 \begin{equation}\label{clust}
  \lim_{\lambda \rightarrow \infty} 
  W_{n}((f_{1}, ... , f_{j}, \{\lambda a,1\}f_{j+1}, ... , \{\lambda a,1\}f_{n})
  = W_{j}(f_{1}, ... ,f_{j})W_{n-j}(f_{j+1}, ... , f_{n}).
 \end{equation}
\end{itemize}
Distributions with these features are called \emph{Wightman distributions} because a given Wightman field
satisfying the above axioms gives rise to such distributions. 

What if one is handed a set of such distributions without any further information, is there a field theory with such distributions?  

\subsection{Reconstructing a quantum field theory}
The reconstruction theorem asserts just that, up to unitary equivalence: every given set of Wightman 
distributions is associated to an existent Wightman field theory.
A tentative first version of this result was published in \cite{Wi56} when, as we have already pointed
out, the axioms had not yet been formulated as such. It became a proper theorem once the axioms had
been formulated.   

Before we start let us have a look at an important result known as Schwartz's nuclear theorem
\cite{StreatWi00}. It states that for every multilinear tempered distribution 
$W \colon \Sw(\Mi)^{l} \rightarrow \C$ there exists a tempered distribution $S$ on $\Sw(\Mi^{l})$ such that 
\begin{equation}
 W(f_{1}, ... ,f_{l})= S(f_{1} \te ... \te f_{l}).
\end{equation}
We remind the reader that the $l$-fold tensor product of functions $f_{1}, ... , f_{l} \in \Sw(\Mi)$ is given by the function 
\begin{equation}
 (f_{1} \te ... \te f_{l})(x_{1},...,x_{l}) := f_{1}(x_{1})  ...  f_{l}(x_{l}),
\end{equation}
which is an element in $\Sw(\Mi^{l})$. This means practically that one can formally write 
\begin{equation}\label{WiDi}
 W(f_{1},...,f_{n})=\int d^{d}x_{1} \ ... \int d^{d}x_{n} \ \W(x_{1}, ..., x_{n})
 f_{1}(x_{1})  ...  f_{n}(x_{n})
\end{equation}
with $d$ being the dimension of spacetime $\Mi$.
We will identify both distributions and simply write $W(f_{1}, ... ,f_{l})= W(f_{1} \te ... \te f_{l})$ as is customary.
Here is the theorem.

\begin{Theorem}[Reconstruction theorem]\label{RecTHM}
 Let $\{ W_{n} \}$ be a family of tempered distributions adhering to the above list of properties W1-W6. 
 Then there is a scalar field theory fulfilling the Wightman axioms 0 to IV. Any other theory is unitarily 
 equivalent.
\end{Theorem}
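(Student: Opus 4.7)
The plan is to carry out a Gelfand--Naimark--Segal-style construction on the Borchers tensor algebra $\mathcal{T} := \bigoplus_{n \geq 0} \Sw(\Mi^n)$ (with $\Sw(\Mi^0) := \C$), whose elements are terminating sequences $\mathbf{f} = (f_0, f_1, f_2, \ldots)$ and whose product is the tensor product extended by distributivity. Writing $f_n^*(x_1,\ldots,x_n) := \overline{f_n(x_n,\ldots,x_1)}$, I would use the given Wightman distributions to define a sesquilinear form on $\mathcal{T}$ by
\begin{equation*}
\la \mathbf{f} , \mathbf{g} \ra := \sum_{n,m \geq 0} W_{n+m}(f_n^* \te g_m),
\end{equation*}
which is Hermitian by W3 and positive semidefinite by W5.

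Next, let $\mathcal{N} := \{ \mathbf{f} \in \mathcal{T} : \la \mathbf{f}, \mathbf{f} \ra = 0 \}$; a Cauchy--Schwarz argument shows that $\mathcal{N}$ is a left ideal. I would then take $\Hi$ to be the completion of the pre-Hilbert space $\mathcal{T}/\mathcal{N}$, set the dense domain $\D := \mathcal{T}/\mathcal{N}$, and let the vacuum be $\Psi_{0} := [(1,0,0,\ldots)]$. For each $f \in \Sw(\Mi)$, the field is defined by $\varphi(f)[\mathbf{g}] := [(0,f,0,\ldots) \te \mathbf{g}]$, which is well-defined because $\mathcal{N}$ is a left ideal; cyclicity of $\Psi_{0}$ with respect to the polynomial algebra generated by the $\varphi(f)$'s is then automatic from the construction.

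To verify the axioms, I would use W1 to define $U(a,\Lambda)[\mathbf{f}] := [\{a,\Lambda\}\mathbf{f}]$, Poincaré invariance of the $W_n$ giving both well-definedness on the quotient and unitarity, while strong continuity follows from temperedness of the Wightman distributions. Poincaré covariance of $\varphi$ (Axiom III) is then automatic, and locality (Axiom IV) follows directly from W4, which allows swapping adjacent spacelike-separated arguments inside any $W_n$. The spectrum condition (Axiom I) is extracted from W2 via the SNAG theorem, the joint spectral measure of the translation generators inheriting support in $\fl$ from W2. Uniqueness of the vacuum is forced by W6: if $\Phi$ were another translation-invariant unit vector, the cluster property applied to suitable matrix elements yields $|\la \Psi_{0} | \Phi \ra| = 1$. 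For the uniqueness clause of the theorem, any two realizations $\{\varphi_j,\Hi_j,\Psi_{0,j}\}_{j=1,2}$ sharing the same $\{W_n\}$ admit a densely defined intertwiner $\varphi_1(f_1)\cdots\varphi_1(f_n)\Psi_{0,1} \mapsto \varphi_2(f_1)\cdots\varphi_2(f_n)\Psi_{0,2}$ which is isometric by hypothesis and extends uniquely to a unitary map.

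The hard part, as I anticipate, lies in the interplay between the purely algebraic construction and the analytic/distributional requirements of Axiom II. One must establish that $f \mapsto \varphi(f)$ really yields an operator-valued \emph{tempered} distribution, which requires uniform temperedness estimates on the matrix elements $\la \Psi | \varphi(f) \Psi' \ra$ throughout the dense domain; and strong continuity of $U(a,\Lambda)$ demands analogous uniform bounds in the Poincaré parameters. Equally delicate is the passage from the support property W2 of $\wt{W}_n$ in its difference variables to the statement that the (a priori only symmetric, unbounded) generators $P_\mu$ possess a joint spectral resolution supported in $\fl$---this is the technical heart of the reconstruction and is where one leans on Schwartz's nuclear theorem together with a Bochner--SNAG-type argument.
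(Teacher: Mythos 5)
Your proposal is correct and follows essentially the same route as the paper's own proof: a GNS-type construction on the Borchers algebra $\bigoplus_{n\geq 0}\Sw(\Mi^{n})$ with the inner product $\la \mathbf{f},\mathbf{g}\ra=\sum_{n,m}W_{n+m}(f_{n}^{*}\te g_{m})$, the field acting by tensor multiplication with $(0,f,0,\ldots)$, Poincaré invariance (W1) giving the unitary representation, W4 giving locality, W2 the spectral condition, and W6 the uniqueness of the vacuum, with uniqueness up to unitary equivalence established by the densely defined isometric intertwiner. The technical points you flag (temperedness estimates, the passage from W2 to the joint spectral support of $P_{\mu}$, completion and quotienting by null vectors) are exactly the ones the paper also defers to the standard literature.
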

\begin{proof}
See Appendix \ref{sec:Recon} .
\end{proof}

The proof is constructive. We sketch it briefly. The underlying concept is the so-called 
\emph{Borchers algebra} \cite{Bor62}. It is given by the vector space
\begin{equation}
 \B = \bigoplus_{n \geq 0} \Sw(\Mi^{n})
\end{equation}
of terminating sequences $(f_{0}, f_{1}, ... , f_{n}, 0, 0, ... )$ with $f_{j} \in \Sw(\Mi^{j})$ and
$\Sw(\Mi^{0}):=\C$. To make this space into an algebra, one defines the product by
\begin{equation}
 f \times h := (f_{0}h_{0}, f_{0} \te h_{1} + f_{1} \te h_{0}, ..., 
 (f \times g)_{n}, ... )
\end{equation}
in which $f_{0} \te h_{0} = f_{0}h_{0}$ is just the product in $\C$ and 
\begin{equation}
(f \times h)_{n} := \sum_{j+k=n} [f_{j} \te g_{k}] 
\end{equation}
is the $n$-th component of the product. The Wightman distributions $W_{n}$ are now represented by a so-called \emph{Wightman functional} $W$ 
on $\B$, given by
\begin{equation}
 W(f \times h) := \sum_{n \geq 0} \sum_{j+k=n} W_{n}(f^{*}_{j} \te h_{k}),
\end{equation}
where $W_{0}(f_{0}^{*} \te h_{0})=f_{0}^{*}h_{0}$. 
Using this functional, one defines an inner product on $\B$ by 
setting $\la f,h \ra := W(f \times h)$ and makes it into a Hilbert space by the standard procedures of
completion with respect to Cauchy sequences and nullifying zero norm states. A quantum field is then 
defined by the assignment of a Schwartz function $h \in \Sw(\Mi)$ to the operator $\varphi(h)$ on
the Hilbert space $\B$, given through the multiplication of a vector $g=(g_{0},g_{1}, g_{2},...) \in \B$ by the vector
$(0,h,0,...) \in \B$, ie the operator $\varphi(h)$ is the assigment 
\begin{equation}
  g \mapsto \varphi(h)g := (0,h,0,...) \times g =
  (0, h \te g_{0}, ... , h \te g_{n-1}, ... ),
\end{equation}
ie $(\varphi(h)g)_{n} = h \te g_{n-1}$ is the $n$-th component. The reader is referred to 
Appendix \ref{sec:Recon} for a complete account of the proof. 

\section{Proof of Haag's theorem}\label{sec:ProofHTHM}                                               
As we shall see in this section, Haag's theorem relies strongly on the analyticity properties of the
Wightman distributions (\ref{WiDi}). Although their Schwartz kernels $\W_{n}(x_{1}, ... , x_{n})$ are 
generalised functions, the idea is to see them as the boundary 
values of meromorphic and hence locally holomorphic functions in the sense of distribution theory. In fact,
this is the assertion of a theorem: any distribution in $\Sw(\Mi^{n})$ is the boundary value of a 
meromorphic function (see \cite{Scho08}, Section 8.5). One speaks of these meromorphic functions as
analytic continuations of the corresponding distributions. 

Before we present Haag's theorem, we have to first go through a small battery of results used in 
its proof: the edge-of-the-wedge theorem, the Reeh-Schlieder theorem, a corollary of it and the 
theorem of Jost and Schroer. 

\subsection{Analytic continuation of tempered distributions} 
A pedagogical example of the analytic continuation of a tempered distribution is the meromorphic function
\begin{equation}
 F(x+iy)=\frac{1}{x+iy} = \frac{x}{x^{2}+y^{2}} - i \pi \frac{y}{\pi(x^{2}+y^{2})}.
\end{equation}
It is perfectly holomorphic off the origin and gives rise to the tempered distribution 
\begin{equation}
 F(f) = \lim_{y \rightarrow 0} \int_{-\infty}^{+\infty} dx \ F(x+iy)f(x)
 = \Pd \int_{-\infty}^{+\infty} \frac{dx}{x}f(x) - i \pi f(0) ,
\end{equation}
where $\Pd \int$ is the Cauchy principle value integral. 

In the case of Wightman distributions, one can specify distinctly what values the imaginary parts are 
permitted to take. We roughly follow \cite{Stro13}. If we consider the two-point Wightman function
\begin{equation}
 \W(x-iy) = \int \frac{d^{4}q}{(2\pi)^{4}} \ e^{-iq \cdot (x-iy)} \ \wt{W}(q),
\end{equation}
we see that $q \cdot y > 0$ is required. Given the spectral property $\wt{W }(q)=0$ if 
$q \notin \fl$, then $y \in V_{+}$ suffices. This motivates the definition of the \emph{forward tube}
\begin{equation}
 \Tu_{n} := \Mi^{n} - i (V_{+})^{n}. 
\end{equation}
For the two-point Wightman distribution we then pick $\eta \in V_{+}$ and get the identity
\begin{equation}
W(f,h) = \lim_{t \downarrow 0}
\int d^{4}x_{1} \int d^{4}x_{2} \ f(x_{1}) \W(x_{1}-x_{2} - i t \eta) h(x_{2}).   
\end{equation}
The \emph{extended forward tube} is defined by $\Tu_{n}':=L_{+}(\C)\Tu_{n}$, where $L_{+}(\C)$ is the 
connected complex Lorentz group which is applied to each argument separately, ie
\begin{equation}
 \Tu_{n} \ni z = (z_{1}, ... , z_{n}) \mapsto \Lambda z := (\Lambda z_{1}, ... , \Lambda z_{n}) 
 \in \Tu_{n}'.  
\end{equation}
These transformations are defined by the two properties $\det (\Lambda) =1$ and the invariance property
\begin{equation}
 (\Lambda z) \cdot (\Lambda z) = z^{2} = x^{2} - y^{2} + i \ 2 x \cdot y, 
\end{equation}
where all multiplications are understood as Minkowski products and $x,y \in \Mi$. 
The kernel function can now be analytically continued to this extended tube by setting
\begin{equation}
 \W(\Lambda z):= \W(z) \hs{2} \forall z \in \Tu_{n}.
\end{equation}
Note that this is an unambiguous definition because different preimages are Lorentz-equivalent: 
let $u=\Lambda w = \Lambda' v$. Then 
\begin{equation}
 \W(u)= \W(w)= \W(\Lambda^{-1}\Lambda' v) = \W(v)
\end{equation}
because $\Lambda^{-1}\Lambda' \in L_{+}(\C)$. This is known as the \emph{theorem of Bargmann, Hall
and Wightman} (see \cite{Jo65}, Chapter 4). Note that because there is a complex Lorentz transformation 
such that $\Lambda z = -z$, the Wightman distributions are now also analytically
continued to the \emph{backward tube} $\Mi^{n} + i (V_{+})^{n}$.  

\subsection*{Jost points} 
However, the forward tube does not contain any real points because the imaginary parts of the complex 
arguments lie in $(V_{+})^{n}$, ie $z \in \Tu_{n}$ implies $\Im(z_{j}) \neq 0$ for all $j=1, ... , n$. 
An important property of the extended tube $\Tu_{n}'$ is that in contrast to $\Tu_{n}$, it \emph{does} 
contain real points, the so-called \emph{Jost points}, for which $\Im(z)=\Im((z_{1}, ..., z_{n}))=0$. Let us grant them a definition.

\begin{Definition}[Jost points]
 A point $z=(z_{1}, ..., z_{n})\in \Tu_{n}' \subset \C^{4n}$ in the extended forward tube with vanishing imaginary part, that is, $\Im(z)=0$, is called \emph{Jost point}.  
 By 
 \begin{equation}
  \emph{\mbox{co}}(z) = \left\{ \left. \mbox{$\sum_{j=1}^{n}$}\alpha_{j}z_{j} \ \right| 
  \  \mbox{$\sum_{j=1}^{n}$}\alpha_{j}=1, \forall j: \alpha_{j} \geq 0   \right\} \subset \C^{4}
 \end{equation}
we denote the convex hull of a point $z \in \C^{4n}$. 
\end{Definition}
Note that the convex hull of a point $z \in \C^{4n}$ lies in $\C^{4}$ for any $n \geq 1$.
The following theorem due to Jost characterises Jost points and their convex hull.
\begin{Theorem}[Jost]\label{JostT}
Let $z=(z_{1}, ... , z_{n}) \in \Tu_{n}'$ be a Jost point. Then the convex hull of this point, $\emph{\mbox{co}}(z) \subset \C^{4}$, is spacelike, ie consists of spacelike 
points only: 
\begin{equation}
 w \in \emph{\mbox{co}}(z) \hs{0.3} \Rightarrow \hs{0.3} w^{2}<0 .
\end{equation}
Conversely, a set of vectors $z_{1}, ... , z_{n} \in \Mi$ with spacelike convex hull comprise a Jost point in the extended forward tube $\Tu_{n}'$.
\end{Theorem}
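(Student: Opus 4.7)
The plan is to prove both directions of Jost's theorem by exploiting the fact that $L_{+}(\C)$ preserves the Minkowski form and that the open forward cone $V_{+}$ is convex.

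For the forward direction, I would begin with a Jost point $z=(z_{1},\ldots,z_{n})\in\Tu_{n}'$, writing $z=\Lambda w$ for some $\Lambda\in L_{+}(\C)$ and some $w=(x_{1}-iy_{1},\ldots,x_{n}-iy_{n})\in\Tu_{n}$, so that each $y_{j}\in V_{+}$. For a convex combination $\zeta:=\sum_{j}\alpha_{j}z_{j}$ with $\alpha_{j}\geq 0$ and $\sum_{j}\alpha_{j}=1$, linearity of $\Lambda$ yields $\zeta=\Lambda\omega$ with $\omega=\xi-i\eta$, where $\xi=\sum_{j}\alpha_{j}x_{j}\in\Mi$ and $\eta=\sum_{j}\alpha_{j}y_{j}\in V_{+}$ (the latter by convexity of the open forward cone). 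Invariance of the Minkowski form under $L_{+}(\C)$ gives
\[
\zeta^{2}=\omega^{2}=\xi^{2}-\eta^{2}-2i\,\xi\cdot\eta.
\]
Since the $z_{j}$ are real, $\zeta\in\Mi$ is real, so $\zeta^{2}\in\R$, which forces $\xi\cdot\eta=0$. As $\eta$ is timelike, any vector Minkowski-orthogonal to it is spacelike or zero, hence $\xi^{2}\leq 0$ and therefore $\zeta^{2}=\xi^{2}-\eta^{2}<0$, as required.

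For the converse, given real $z_{1},\ldots,z_{n}\in\Mi$ whose convex hull is spacelike, I must construct some $\Lambda\in L_{+}(\C)$ such that $\Lambda^{-1}z_{j}=x_{j}-iy_{j}$ with $y_{j}\in V_{+}$ for every $j$. My approach would be first to use a real Lorentz transformation in $\Lo$ to bring the configuration into a canonical form in which a chosen interior point of the convex hull has no time component; the spacelikeness hypothesis then gives good control over the mutual affine structure of the $z_{j}$ in this frame. Next, I would consider the one-parameter family of complex Lorentz transformations $\Lambda(\theta)=\exp(i\theta K)\in L_{+}(\C)$ generated by a suitably chosen real boost generator $K$. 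Its action on each $z_{j}$ produces an imaginary part linear in $\theta$, and for small $\theta>0$ with $K$ appropriately aligned, this imaginary part lands in $-V_{+}$.

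The main obstacle is precisely this last step: exhibiting a \emph{single} $\Lambda\in L_{+}(\C)$ that simultaneously places every $\Lambda^{-1}z_{j}$ into the backward tube $\Mi-iV_{+}$. The spacelikeness of the entire convex hull, rather than of the individual $z_{j}$, is exactly the hypothesis needed to ensure that the system of constraints on $\theta$ and on the direction of $K$ intersects in a nonempty open set; establishing this cleanly requires a careful inequality argument, and I would rely on the detailed construction given in Jost's monograph \cite{Jo65} or in \cite{StreatWi00}, which executes the programme via successive real Lorentz reductions followed by one complex boost step.
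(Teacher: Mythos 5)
Your forward direction is correct and complete, and it in fact goes beyond what the paper's own proof does: the paper only treats $n=1$ (where the convex hull is the point itself) and refers to \cite{StreatWi00} for general $n$, whereas your argument --- push the convex combination through the single $\Lambda \in L_{+}(\C)$, use convexity of the open cone $V_{+}$ to get $\eta=\sum_{j}\alpha_{j}y_{j}\in V_{+}$, extract $\xi\cdot\eta=0$ from reality of $\zeta^{2}$, and conclude $\xi^{2}\leq 0$, hence $\zeta^{2}=\xi^{2}-\eta^{2}<0$ --- is precisely the standard general-$n$ proof, and it reduces to the paper's computation when $n=1$. The only point worth making explicit is that $\Lambda$ acts by the \emph{same} transformation on every component of $z=(z_{1},\dots,z_{n})$, which is what legitimises writing $\zeta=\Lambda\omega$ with $\omega=\sum_{j}\alpha_{j}w_{j}$; you use this silently.

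The converse is where your proposal stops short of a proof: you identify the right strategy (a real Lorentz reduction followed by a complex boost/rotation) but defer the construction entirely to \cite{Jo65} or \cite{StreatWi00}. Note that the paper itself proves the converse only for $n=1$, and in that case the step you flag as the main obstacle does not arise at all: after a real Lorentz transformation bringing the spacelike vector to the form $(z'_{0},z'_{1},0,0)$ with $z'_{1}>|z'_{0}|$, the single explicit complex rotation (\ref{LorT}) in the $0$--$1$ plane with $\sin\alpha>0$ does the job, since the imaginary part of the image is $\sin\alpha\,(z'_{1},z'_{0},0,0)$, which is timelike with definite time-component sign precisely because $z'_{1}>|z'_{0}|$ (and can be flipped between $V_{+}$ and $-V_{+}$ by $\alpha\to-\alpha$, which is immaterial for membership in $\Tu_{1}'$). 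So for the only case the paper actually proves --- and the only case it later uses, namely the two-point function --- the converse is a short explicit computation you should carry out rather than cite. The genuinely delicate issue you describe, finding one $\Lambda$ that works simultaneously for all components, only appears for $n\geq 2$, and there both you and the paper lean on \cite{StreatWi00}, pp.~70--71. As written, then, your proposal leaves the converse unexecuted even at the level of detail the paper itself provides.
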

\begin{proof}
We only prove the case $n=1$ (for general $n \in \N$ see \cite{StreatWi00}, pp.70,71).
If $z' \in \Mi$ is spacelike, then we can in a first step Lorentz-transform it to $z'=(z'_{0},z'_{1},0,0)$ with $z'_{1} > |z'_{0}|$. Next, we apply the complex Lorentz transformation
\begin{equation}\label{LorT}
 \begin{pmatrix} z_{0} \\ z_{1} \end{pmatrix}
 = \underbrace{\begin{pmatrix}  \cos \alpha & i \sin \alpha \\ i \sin \alpha & \cos \alpha  \end{pmatrix} }_{\Lambda \in L_{+}(\C)}
  \begin{pmatrix}  z_{0}' \\ z_{1}' \end{pmatrix}
\end{equation}
for $\sin \alpha > 0$. Then $\Im(z) \in V_{+}$, ie $z \in \Tu_{1}$, to be verified by the reader. This means in particular 
$z'=\Lambda^{-1}z\in \Tu_{1}'$. Hence any spacelike vector in $\Mi$ is a Jost point in the 
extended tube $\Tu_{1}'$, ie the set of Jost points is not empty and contains all spacelike points 
for $n=1$. Now let $z' \in \Tu_{1}'$ be a Jost point. Then there must be a complex Lorentz transformation 
$\Lambda \in L_{+}(\C)$ and a vector $z \in \Tu_{1}$ such that $z'=\Lambda z$ (otherwise $z'$ would not be
in $\Tu_{1}'$). We write $z=x+iy$ and
compute 
\begin{equation}
 (z')^{2} = (\Lambda z) \cdot (\Lambda z) = z^{2} =  x^{2} - y^{2} + i 2 x \cdot y.
\end{equation}
Because $\Im(z')=0$ by assumption, we have $x \cdot y = 0$ which implies $x^{2}<0$ because $y \in V_{+}$. 
Thus, $(z')^{2} = z^{2}= x^{2}-y^{2} <0$ and therefore all Jost points are spacelike. In the case $n>1$, only spacelike 
vectors whose convex hull is spacelike form a Jost point.  
\end{proof}

This has a very useful consequence for the two-point function because it depends only on one spacetime 
variable $\xi \in \Mi$: if one knows its values on the spacelike double cone, ie the set 
$\{ \xi \in \Mi: \xi^{2} < 0 \}$ which, as Jost's theorem informs us, habours all Jost points, then there 
is an open subset $\Op \subset (\Mi + i V_{+}) \cup (\Mi - i V_{+})$ in the forward tube and the backward 
tube, to which it can be analytically continued by applying the complex Lorentz group $L_{+}(\C)$. 
The so-called 'edge-of-the-wedge' theorem, to be presented next, 
then says that the two-point function is uniquely characterised there. 

For the higher-point functions, one needs more subtle arguments to show that the Jost points comprise a 
subset large enough to uniquely characterise them (see \cite{StreatWi00}, pp.70,71). 

\subsection{Edge of the wedge}
The next result, known as 'edge-of-the-wedge' theorem, is crucial for Haag's theorem because it guarantees in 
particular that the two-point function is sufficiently characterised on the spacelike points of $\Mi$, ie the Jost points of $\Tu_{1}' \subset \C^{4}$. 

\begin{Theorem}[Edge of the wedge]\label{edwed} 
Let $\Op \subset \C^{4n}$ be an open subset which contains a real open subset $E \subset \Op$. 
Suppose $F_{\pm}$ is holomorphic in $D_{\pm}:= [\Mi^{n} \pm i (V_{+})^{n}] \cap \Op$ and for any $y \in V_{+}$
one finds
\begin{equation}
\lim_{t \downarrow 0} F_{+}(x+it y) = \lim_{t\downarrow 0} F_{-}(x-it y)
\hs{1} \forall x \in E
\end{equation}
in the sense of distributions. Then there is a function $G$ holomorphic in an open complex neighbourhood 
$N$ of $E$ such that $G = F_{\pm}$ on $D_{\pm}$. 
\end{Theorem}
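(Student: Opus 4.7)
The plan is to construct $G$ locally around each real point of $E$ and then patch these local pieces together using uniqueness of analytic continuation. Since holomorphy is a local property, it suffices to show that for every $x_0 \in E$ there is an open complex neighborhood $N_{x_0}$ of $x_0$ and a holomorphic function $G_{x_0}$ on $N_{x_0}$ which agrees with $F_+$ on $N_{x_0} \cap D_+$ and with $F_-$ on $N_{x_0} \cap D_-$. Any two such local pieces coincide on the overlap of $D_\pm$ by the identity theorem, and the resulting $G$ on a complex neighborhood $N$ of $E$ restricts to $F_\pm$ as required.

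To build $G_{x_0}$, I would first reduce to a one-variable statement: fix $\eta \in V_+$ and, for $x$ ranging over a small real ball $B \subset E$ around $x_0$, consider the slice maps $t \mapsto F_\pm(x \pm it\eta)$, holomorphic in $t$ on a half-disk $\{\Im t > 0\}$ respectively $\{\Im t < 0\}$. The hypothesis says their distributional boundary values at $t = 0$ coincide (tested against any $f \in \De(B)$). The one-variable edge-of-the-wedge theorem then guarantees a function holomorphic on a full complex disk around $0$ whose restrictions are the given slice maps. This one-variable result itself is proved by mollifying the common boundary distribution, applying a Painlevé/Schwarz-reflection type argument to the regularized pieces (which are continuous up to the real axis), and then passing to the limit using uniform estimates from Cauchy's formula on shrunken disks.

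Combining these one-variable extensions over all directions $\eta$ in a small subcone of $V_+$, I obtain candidate analytic continuations of $F_\pm$ across $E$ into an open complex neighborhood. To assemble them into a jointly holomorphic function on an honest open set in $\mathbb{C}^{4n}$, I would invoke (at this stage) Hartogs' theorem on separate analyticity together with the fact that $F_\pm$ are already jointly holomorphic on $D_\pm$, so the glued function is holomorphic in each variable separately on its domain. Equivalently, one may appeal to Bochner's tube theorem to identify the envelope of holomorphy of $D_+ \cup D_- \cup (E + i0)$ and show it contains a complex neighborhood of $E$.

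The main obstacle I expect is the last step: controlling the size of the complex neighborhood $N$ uniformly in all $4n$ directions, rather than only along one slicing direction $\eta$ at a time. One has to ensure that the half-disks of holomorphy obtained from different $\eta$'s match on a common open set with nonempty interior in $\mathbb{C}^{4n}$, not just on lower-dimensional slices. This uniformity relies crucially on the hypothesis that the cone $V_+$ has nonempty interior (so that a whole open family of slicing directions is available) and on distributional continuity of the boundary values, which lets one apply Cauchy-type estimates uniformly in $x \in B$ with controlled radii as $\eta$ varies in a compact subset of $V_+$.
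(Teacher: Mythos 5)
The paper does not actually prove this theorem---it defers to Streater--Wightman, Theorem 2-16---so your proposal has to stand on its own, and as it stands it has a genuine gap precisely at the step you yourself flag as the main obstacle. The tools you name there do not close it. Hartogs' theorem on separate analyticity requires holomorphy in each coordinate separately on an open subset of $\C^{4n}$; what your slicing produces is holomorphy of the glued candidate only along the one-parameter complex lines $x+\tau\eta$ through the edge, for each fixed $x\in E$ and each fixed direction $\eta\in V_{+}$. That is a family of one-dimensional slices, not separate holomorphy in the coordinates on any open set containing points of $E$, and passing from slice-wise to joint analyticity at the edge is the actual heart of the theorem. Bochner's tube theorem is likewise not applicable as invoked: $D_{+}\cup E\cup D_{-}$ is not a tube, and determining its envelope of holomorphy (a \emph{local} tube theorem in the sense of Bogoliubov/Epstein) is exactly the content of the edge-of-the-wedge theorem, so appealing to it is circular. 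The known ways to fill this step are either an explicit analytic-completion construction---an integral representation or conformal-map argument whose domain of holomorphy visibly contains a complex neighbourhood of $E$ of size controlled only by the opening of the cone and the distance to the boundary of $\Op$ (this is what the cited proof in Streater--Wightman does)---or the Fourier--Laplace characterisation of boundary values of functions holomorphic in tubes via support properties of their transforms in the dual cone.

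A second gap sits in the limit procedure of your one-variable step. After mollifying in the real directions, the regularised pieces do glue by Painlev\'e/Morera, and the extensions $G^{\varepsilon}$ live on a neighbourhood of $\varepsilon$-independent size; but $G^{\varepsilon}\to F_{\pm}$ is known a priori only distributionally near the edge (locally uniformly only away from it), and your ``uniform estimates from Cauchy's formula on shrunken disks'' presuppose locally uniform bounds on $G^{\varepsilon}$ up to the real axis, which do not follow from distributional convergence. One must either extract such bounds from the explicit continuation formula (maximum-modulus control of the extension by the values of $F^{\varepsilon}_{\pm}$ on compact subsets of the open tubes, then Vitali/Montel), or sidestep the limit by using the standard polynomial estimate $|F_{\pm}(x\pm iy)|\leq C|y|^{-N}$ to write the boundary distribution locally as a derivative of a continuous function, integrate $F_{\pm}$ in the tube variables to reduce to the continuous-boundary-value case, and differentiate back afterwards. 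Without one of these devices the final step of your argument is not justified.
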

\begin{proof}
 See \cite{StreatWi00}, Theorem 2-16. 
\end{proof}

In the case of the two-point function $\W(\xi)$ one starts with $E=\{ \xi \in \Mi: \xi^{2}<0 \}$, 
ie the set of all spacelike vectors. Any complex open neighbourhood $N \subset \C^{4}$ of $E$ must contain 
both lightlike and timelike vectors in $\fl$ and $-\fl$. From there, it is clear that by following the 
Lorentz group's orbits, $\W(\xi)$ is given everywhere in $\fl \cup (-\fl)$. 

\subsection{Local operator algebras} The following result, called Reeh-Schlieder theorem, is  
interesting from a physical point of view despite its mathematical fancyness. It decribes the remarkable fact that for any open $E \subset \Mi$, the 
local operator algebra
\begin{equation}
 \Al(E) = \la \ \varphi(f),\varphi(f)^{\dagger} : f \in \De(E) \ \ra_{\C}
\end{equation}
is cyclic for the vacuum, ie $\Al(E)\Psi_{0} \subset \Hi$ is dense, no matter how small $E \subset \Mi$ is! 

Let us say we choose 
the open compact subset $E \subset \Mi$ to 
be very tiny. The theorem tells us that every state can be generated from the corresponding local field 
operators, ie the operators $\varphi(f)$ smeared with $f \in \De(E)$. 
This means physically that if we know what happens in the tiny region $E$ of spacetime $\Mi$, we know what may 
happen anywhere. 

In high energy physics, this makes sense: to find out what happens in a tiny subset $E$, we need to use a gargantuan amount of energy. 
Then, as we know from experiments, more particles (or particle species for more field types) will show up.  This suffices to 
be informed about what might occur anywhere in the universe under the same circumstances. Therefore, this result is not unphysical.

\begin{Theorem}[Reeh-Schlieder]\label{ReehSchl}
Let $E \subset \Mi$ be open. If $\Psi_{0}$ is cyclic for $\Al(\Mi)$, then it is also cyclic for $\Al(E)$.
This means in particular that vectors of the form
\begin{equation}
\Psi_{f} = f_{0}\Psi_{0} +\sum_{j=1}^{n} \varphi(f_{j,1}) ... \varphi(f_{j,j})\Psi_{0}   \hs{2} \mbox{('localised states')}
\end{equation}
with $f_{0} \in \C$ and $f_{j,1}, ... , f_{j,j} \in \De(E)$ for all $j=1, ... , n$ are dense in $\Hi$.
\end{Theorem}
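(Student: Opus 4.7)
The plan is a standard analyticity argument: assume that $\Al(E)\Psi_0$ is \emph{not} dense in $\Hi$, extract a nonzero $\Psi$ orthogonal to it, and use the spectral condition (Axiom I) to conclude that $\Psi$ is in fact orthogonal to all of $\Al(\Mi)\Psi_0$, which by cyclicity forces $\Psi=0$.

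First I would fix test functions $f_1, \ldots, f_n \in \De(E)$ and translations $a_1, \ldots, a_n \in \Mi$ such that $a_j + \supp(f_j) \subset E$ (which is possible for sufficiently small $a_j$ since $E$ is open and the supports are compact). Using Poincar\'e covariance, Axiom III, in the form $U(a)\varphi(f)U(a)^\dagger = \varphi(\{a,1\}f)$, and the translation-invariance $U(a)\Psi_0 = \Psi_0$, I would rewrite
\begin{equation*}
F(a_1, \ldots, a_n) := \la \Psi \,|\, \varphi(\{a_1,1\}f_1) \cdots \varphi(\{a_n,1\}f_n) \Psi_0 \ra
= \la \Psi \,|\, U(\xi_1)\varphi(f_1) U(\xi_2)\varphi(f_2) \cdots U(\xi_n)\varphi(f_n) \Psi_0 \ra
\end{equation*}
with $\xi_1 := a_1$ and $\xi_j := a_j - a_{j-1}$ for $j \geq 2$. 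For the chosen $a_j$, the operators $\varphi(\{a_j,1\}f_j)$ all lie in $\Al(E)$, so $F$ vanishes on a real open neighbourhood in the $\xi$-variables.

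The key step is to analytically continue $F$ in the variables $\xi_j$ into the forward tube $\Tu_n$. Using the spectral decomposition $U(\xi) = \int e^{-i p \cdot \xi} \, d\mathsf{E}(p)$ together with the spectral condition $\supp\mathsf{E} \subset \fl$, each factor $U(\xi_j)$ admits a bounded holomorphic extension to $\xi_j = x_j - iy_j$ with $y_j \in V_+$, because the damping factor $e^{-p\cdot y_j}$ is bounded for $p \in \fl$ and $y_j \in V_+$. Hence $\tilde F(\xi_1,\ldots,\xi_n)$ extends to a function holomorphic in $\Tu_n$ whose distributional boundary value agrees with $F$ on the real axis. Since this boundary distribution vanishes on a real open set, the edge-of-the-wedge theorem (Theorem \ref{edwed}), or equivalently the standard uniqueness theorem for holomorphic functions with vanishing boundary values on an open real subset, forces $\tilde F \equiv 0$ throughout its connected domain, and therefore $F(a_1,\ldots,a_n) = 0$ for \emph{all} real translations $a_j \in \Mi$, not only those keeping the supports inside $E$.

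Finally, I would close the argument by a partition-of-unity step: given any $g \in \De(\Mi)$, cover $\supp(g)$ by finitely many translates $a_k + E$ and decompose $g = \sum_k \{a_k,1\}f_k$ with $f_k \in \De(E)$. By linearity of $f \mapsto \varphi(f)$, the vanishing of $F$ for all $a_j$ implies $\la \Psi | \varphi(g_1) \cdots \varphi(g_n)\Psi_0\ra = 0$ for every $g_1,\ldots,g_n \in \De(\Mi)$; continuity of the Wightman distributions extends this to $\Sw(\Mi)$, so $\Psi \perp \Al(\Mi)\Psi_0$, whence $\Psi = 0$ by cyclicity of $\Psi_0$ for $\Al(\Mi)$ (Axiom II). The main obstacle I expect is bookkeeping the joint analytic continuation in all $n$ variables simultaneously, i.e.\ verifying that the difference variables $\xi_j$ really place us in the tube $\Tu_n$ where edge-of-the-wedge applies; everything else is standard once that is in place.
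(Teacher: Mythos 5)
Your proposal is correct and is essentially the paper's proof: both rest on the spectral condition to continue the correlation function $\la \Psi | \varphi \cdots \varphi \,\Psi_{0} \ra$ analytically into the forward tube, on its vanishing on a real open set as long as the smearing stays inside $E$, and on the edge-of-the-wedge theorem together with cyclicity of $\Psi_{0}$ for $\Al(\Mi)$ to force $\Psi=0$. The only cosmetic difference is that you continue in the translation parameters with fixed $f_{j}\in\De(E)$ and finish with a partition-of-unity step, whereas the paper continues the kernel in the difference variables $\xi_{j}=x_{j}-x_{j+1}$ and simply remarks that the edge-of-the-wedge argument can be iterated to reach all of $\Mi$.
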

\begin{proof}
We follow \cite{StreatWi00}. Let $\Psi \in \Hi$ be orthogonal to all vectors in $\Al(E)\Psi_{0}$, then 
\begin{equation}
 F(f_{1}, ... , f_{n}) = \la \Psi | \varphi(f_{1}) ... \varphi(f_{n}) \Psi_{0} \ra 
\end{equation}
is a translation-invariant distribution vanishing for all $f_{1}, ... , f_{n} \in \De(E)$. By the nuclear theorem, we may write
\begin{equation}
 F(f_{1}, ... , f_{n}) = F(f_{1} \te ... \te f_{n}).
\end{equation}
Then, for any $\eta \in (V_{+})^{n-1}$ there exists a meromorphic function $\mathcal{F}_{n}$ such that 
\begin{equation}\label{ReehSch}
F(f_{1} \te ... \te f_{n}) = \lim_{t \downarrow 0} 
        \int d^{d}x_{1} \ ... \int d^{d}x_{n} \ \mathcal{F}_{n}(\xi - i t \eta)f_{1}(x_{1}) ... f_{n}(x_{n}),
\end{equation}
where $\xi=(\xi_{1}, ... , \xi_{n-1})$ are the variables $\xi_{j}:=x_{j}-x_{j+1}$, $j=1, ... ,n-1$. The 
limit exists by virtue of the spectral property (Axiom I).
By assumption we know, however, that (\ref{ReehSch}) vanishes for $\xi \in \Mi^{n-1}$ with the property $(x_{1},...,x_{n}) \in E^{n}$, ie
\begin{equation}
\lim_{t \downarrow 0} \mathcal{F}_{n}(\xi - i t \eta) = 0 
\end{equation}
in the sense of distribution theory for all $f_{j} \in \De(E)$. By Theorem \ref{edwed} (edge-of-the-wedge) 
we conclude that $F$ vanishes on all elements $f_{1} \te ... \te f_{n} \in \Sw(\Mi)^{\te n}$: the argument in that theorem can be iterated to reach any 
point in $\Mi$. But this means $\la \Psi | \Al(\Mi) \Psi_{0} \ra=0$ and thus $\Psi=0$, on account of $\D_{0} = \Al(\Mi)\Psi_{0}$ being dense, guaranteed 
by Axiom II. This entails that $\Al(E)\Psi_{0}$ is dense in $\Hi$ because we have found that a vector $\Psi \in \Hi$ orthogonal to this set must vanish. 
\end{proof}

This theorem has an important consequence. Let $E':=\{ x \in \Mi | (x-y)^{2}<0 \ \forall y \in E \}$ be
the spacelike complement of $E$. Then there exists no annihilating operator in $\Al(E')$.

\begin{Corollary}\label{Comu}
Let $E \subset \Mi$ be open and $T \in \Al(E')$ such that $T \Psi_{0}=0$. Then $T=0$ weakly, ie there exists
no annihilator in $\Al(E')$. 
\end{Corollary}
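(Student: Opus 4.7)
The plan is to combine the Reeh--Schlieder theorem (Theorem \ref{ReehSchl}) with the locality axiom (Axiom IV) to show that $T$ annihilates a dense subspace, whence $T=0$ weakly.

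First I would observe that the spacelike complement $E'\subset\Mi$ is itself open. Hence the Reeh--Schlieder theorem applies to $E$ (rather than to $E'$): the localised subspace $\Al(E)\Psi_{0}$ is dense in $\Hi$. Next, for any $A\in\Al(E)$, the smearing functions of the field operators composing $A$ have support in $E$, while those of $T$ have support in $E'$; since these supports are mutually spacelike-separated, Axiom IV yields $[A,T]_{\pm}=0$, where the sign is $+$ for odd-odd fermionic products and $-$ otherwise. (Without loss of generality one may decompose $T$ into parts of definite Bose/Fermi grading, since the statement is linear in $T$.)

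With this in hand, the chain
\begin{equation}
 T A \Psi_{0} \;=\; \pm A T \Psi_{0} \;=\; 0
\end{equation}
shows that $T$ vanishes identically on the subspace $\Al(E)\Psi_{0}$, which by the previous step is dense in $\Hi$. Therefore, for any $\Phi \in \D$ and any $\Psi \in \Al(E)\Psi_{0}$,
\begin{equation}
 \la \Phi \,|\, T\Psi \ra \;=\; 0,
\end{equation}
and by density of $\Al(E)\Psi_{0}$ this extends to $\la \Phi \,|\, T \Psi \ra = 0$ for all $\Phi,\Psi \in \D$ (using that $T^{\dagger}\Phi \in \D$ by the domain stability assumed in Axiom II, so the matrix element is continuous in $\Psi$). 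This is precisely the statement that $T=0$ weakly.

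The only genuinely nontrivial input is the Reeh--Schlieder theorem; once this is invoked, the remainder of the argument is a direct exploitation of locality. The sole subtle point is to legitimise the passage from "$T$ vanishes on a dense subspace" to "$T$ is weakly zero on the entire Wightman domain $\D$", which requires the stability of $\D$ under the adjoint $T^{\dagger}$ so that matrix elements are continuous in the right-hand argument along sequences in $\Al(E)\Psi_{0}$. This is guaranteed by Axiom II, so no further obstruction arises.
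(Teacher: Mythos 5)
Your proposal is correct and follows essentially the same route as the paper: locality lets you pull $T$ past elements of $\Al(E)$ onto the vacuum, the Reeh--Schlieder theorem supplies the density of $\Al(E)\Psi_{0}$, and the passage to weak vanishing on all of $\D$ is made via $\la\Phi|T\Psi\ra=\la T^{\dagger}\Phi|\Psi\ra$, exactly as in the paper's proof (which phrases it as $T^{\dagger}\Phi=0$ for all $\Phi\in\D$). The only cosmetic difference is your explicit graded-commutator remark for fermionic operators, which the paper relegates to a later aside.
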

\begin{proof}
Pick any $\Psi \in \Al(E)\Psi_{0}$ and let $P \in \Al(E)$ be such that $\Psi = P \Psi_{0}$. For any
$\Phi \in \D$ in the domain of the field operators we consider 
\begin{equation}
\la \Psi | T^{\dagger} \Phi \ra = \la T \Psi | \Phi \ra = \la T P \Psi_{0} | \Phi \ra 
= \la P T \Psi_{0} | \Phi \ra = 0
\end{equation}
since, by virtue of the field's locality, one has $[P,T]=0$ due to $P \in \Al(E)$ and $T \in \Al(E')$.  
Because $\Al(E)\Psi_{0}$ is dense, $T^{\dagger}\Phi = 0$ follows for any $\Phi \in \D$. 
Let now $\Psi \in \D$, then  
\begin{equation}
 \la T \Psi | \Phi \ra = \la \Psi | T^{\dagger} \Phi \ra = 0  \hs{1} \forall \Phi \in \D
\end{equation}
entails $T \Psi =0$ because $\D$ is dense. As this holds for all $\Psi \in \D$, the assertion follows because $T$ vanishes
weakly, ie $\la T \Psi | \Phi \ra = 0$ is true for all $\Psi,\Phi \in \D$.    
\end{proof}

\subsection{Haag's theorem II} 
The final piece we need for the proof of Haag's theorem is the so-called 
\emph{Jost-Schroer theorem} which says that a theory whose two-point function coincides with that of
a free field of mass $m>0$ is itself such a theory. 

\begin{Theorem}[Jost-Schroer Theorem]\label{JoSchro}
Let $\varphi$ be a scalar field whose two-point distribution coincides with that of a free field of mass $m>0$, ie
\begin{equation}\label{JostSch}
 \la \Psi_{0} |\varphi(f) \varphi(h) \Psi_{0} \ra
 =  \int \frac{d^{4}p}{(2\pi)^{4}} \  \wt{f}^{*}(p) \wt{\Delta}_{+}(p;m^{2}) \wt{h}(p),
\end{equation}
where $\wt{\Delta}_{+}(p;m^{2})=2 \pi \theta(p_{0})  \delta(p^{2}-m^{2})$. Then $\varphi$ is itself a 
free field of mass $m>0$. 
\end{Theorem}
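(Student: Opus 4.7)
The plan is to extract from the hypothesis that the two-point function is the free one of mass $m$ a Klein--Gordon equation $(\Box+m^{2})\varphi=0$ holding as an operator identity, and then to bootstrap this via locality and the spectral condition to the statement that every Wightman distribution of $\varphi$ agrees with that of the free field, so that the reconstruction theorem \ref{RecTHM} identifies the two theories up to unitary equivalence.

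First, I would compute the norm of $(\Box+m^{2})\varphi(f)\Psi_{0}=\varphi((\Box+m^{2})f)\Psi_{0}$ using the hypothesis (\ref{JostSch}). In momentum space it reads
\begin{equation*}
\int \frac{d^{4}p}{(2\pi)^{4}}\,(p^{2}-m^{2})^{2}\,|\widetilde{f}(p)|^{2}\,\widetilde{\Delta}_{+}(p;m^{2}),
\end{equation*}
which vanishes because $\widetilde{\Delta}_{+}(p;m^{2})$ is concentrated on the mass shell $p^{2}=m^{2}$. Hence $(\Box+m^{2})\varphi(f)\Psi_{0}=0$ for every $f$. To upgrade this to an operator identity, I would temporarily restrict to $f\in \De(\Mi)$ and choose any nonempty open $E\subset \Mi$ spacelike separated from $\supp f$. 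The operator $A:=(\Box+m^{2})\varphi(f)$ then lies in $\Al(E')$ and annihilates the vacuum, so Corollary \ref{Comu} forces $A=0$ weakly on $\D$. Continuity of the field and density of $\De(\Mi)\subset \Sw(\Mi)$ then extend this to $(\Box+m^{2})\varphi(f)=0$ for all $f\in \Sw(\Mi)$.

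Next, since $\varphi$ satisfies Klein--Gordon, its Fourier transform is supported on the mass shell $p^{2}=m^{2}$. Splitting this hyperboloid into its forward and backward sheets, I would decompose $\varphi(f)=\varphi^{(+)}(f)+\varphi^{(-)}(f)$, where $\varphi^{(\pm)}(f)$ is smeared with the restriction of $\widetilde{f}$ to the $\pm p_{0}>0$ sheet. The spectral condition of Axiom I then forces $\varphi^{(-)}(f)\Psi_{0}=0$, because $\varphi^{(-)}(f)\Psi_{0}$ would otherwise be a state of negative energy.

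The decisive and hardest step is to show that every Wightman distribution $W_{n}$ of $\varphi$ coincides with the corresponding one of the free field, i.e.\ is the Wick sum over pairings of two-point functions (and vanishes for odd $n$). The natural route is to promote the already-known vacuum identity $\la \Psi_{0}|[\varphi(x),\varphi(y)]\Psi_{0}\ra=i\Delta(x-y;m^{2})$ to the operator statement that $[\varphi(x),\varphi(y)]$ is the c-number Pauli--Jordan distribution $i\Delta(x-y;m^{2})$; this is precisely what makes $\varphi$ a generalised free field. Once this c-number commutator is in hand, one may push every $\varphi^{(-)}$ to the right and every $\varphi^{(+)}$ to the left inside $W_{n}$, using $\varphi^{(-)}\Psi_{0}=0$ and its adjoint $\la\Psi_{0}|\varphi^{(+)}=0$, to obtain the Wick factorisation. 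The obstacle sits entirely in showing that $[\varphi(x),\varphi(y)]-i\Delta(x-y;m^{2})$ vanishes as an operator and not merely in its vacuum expectation value; this is the technical heart of the argument and is typically handled by combining locality, cyclicity of the vacuum (via Reeh--Schlieder, Theorem \ref{ReehSchl}), analytic continuation of an auxiliary three-point function, and Jost's theorem \ref{JostT} to reduce the question to a point where the edge-of-the-wedge theorem \ref{edwed} can kill the difference on a real open set and thence everywhere.
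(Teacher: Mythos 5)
You follow the same route as the paper's proof (the short version in §\ref{sec:ProofHTHM} together with Appendix \ref{sec:AppJoSch}): the vanishing of $\|\varphi([\Box+m^{2}]f)\Psi_{0}\|$ read off from (\ref{JostSch}), the upgrade to the operator identity $\varphi([\Box+m^{2}]f)=0$ via locality and Corollary \ref{Comu}, the splitting of $\varphi$ into frequency parts supported near the two sheets of the mass hyperboloid, the spectral condition forcing the annihilation part to kill the vacuum, and finally the reduction of the whole claim to the statement that $[\varphi(f),\varphi(h)]$ equals the c-number $W(f,h)-W(h,f)$, after which the Wick factorisation of all $W_{n}$ is mechanical. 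Everything you actually carry out is correct and coincides with the paper's argument.

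The gap is that the decisive step is not carried out: you say yourself that the obstacle ``sits entirely'' in promoting the c-number commutator from a vacuum expectation value to an operator identity, and then you only name a generic toolbox. The paper closes this in three concrete moves which your sketch does not reproduce. First, the state $\varphi_{+}(f)\varphi_{-}(h)\Psi_{0}$ (annihilation part applied after the creation part) has energy--momentum support in a neighbourhood of $H_{m}^{+}+H_{m}^{-}$, a set consisting of spacelike vectors and the origin; by the spectral condition, the mass gap and uniqueness of the vacuum it must therefore be a multiple of $\Psi_{0}$, and the hypothesis fixes the multiple, giving $[\varphi_{+}(f),\varphi_{-}(h)]\Psi_{0}=W(f,h)\Psi_{0}$. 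This spectral-support argument is the missing idea; without it, ``push annihilators to the right'' presupposes exactly the operator statement you acknowledge is absent. Second, the leftover piece $[\varphi_{-}(f),\varphi_{-}(h)]\Psi_{0}$ is disposed of by noting that its matrix elements vanish for spacelike-separated supports (locality plus the vanishing of the free commutator function there), that they are boundary values of functions analytic in the forward tube because of the spectral support of $\varphi_{-}$, and then applying the edge-of-the-wedge Theorem \ref{edwed}; Jost's theorem enters only as the supplier of the real spacelike open set, not as a separate engine. Third, the operator $[\varphi(f),\varphi(h)]-\{W(f,h)-W(h,f)\}$, now known to annihilate the vacuum, is set to zero by Corollary \ref{Comu} --- the same device you already used for the Klein--Gordon step --- not by a further analytic continuation of an auxiliary three-point function. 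So the architecture of your proposal is right, but as it stands the heart of the proof is an acknowledged gap rather than an argument.
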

\begin{proof}
If we define a free scalar field through the condition $\varphi([\Box + m^{2}]f)=0$ for all Schwartz 
functions $f \in \Sw(\Mi)$,
then the proof is short and simple. First, note that (\ref{JostSch}) implies 
\begin{equation}
 \la \Psi_{0} | \varphi([\Box + m^{2}]f) \varphi([\Box + m^{2}]h) \Psi_{0} \ra = 0
\end{equation}
and thus $||\varphi([\Box + m^{2}]f) \Psi_{0}||=0$, that is, $\varphi([\Box + m^{2}]f) \Psi_{0}=0$ for
any Schwartz function $f$. Then, by locality of the field (Axiom IV) and Corollary \ref{Comu}, 
$\varphi([\Box + m^{2}]f)=0$ and thus $\varphi$
is a free field in this sense. The full proof shows that this property entails that all Wightman 
distributions coincide with those of a free field, see Appendix \ref{sec:AppJoSch}.
\end{proof}

Since, as we have already mentioned in §\ref{ConJoG}, Pohlmeyer has proved this assertion for the massless case,
the Jost-Schroer theorem holds for fields with mass $m \geq 0$ \cite{Po69}. 
Now, finally, we shall see that the proof of Haag's theorem is implicitly contained in the package of 
the statements above together with the equality of the two-point functions (\ref{etVEV}) at equal times. 

\begin{Theorem}[Haag's Theorem]\label{Hath}
Let $\varphi$ and $\varphi_{0}$ be two Hermitian scalar fields of mass $m\geq 0$ in the sense of 
the Wightman framework. Suppose the sharp-time limits $\varphi(t,f)$ and $\varphi_{0}(t,f)$ exist and 
that at time $t=0$ these two sharp-time fields form an irreducible set in their respective Hilbert 
spaces $\Hi$ and $\Hi_{0}$. Furthermore let there be an isomorphism $V \colon \Hi_{0} \rightarrow \Hi$ such 
that at time $t$
\begin{equation}\label{intertw}
 \varphi(t,f) = V \varphi_{0}(t,f)V^{-1}.
\end{equation}
Then $\varphi$ is also a free field of mass $m \geq 0$. 
\end{Theorem}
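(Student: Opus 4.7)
The target is to show that $\varphi$ must have the same $n$-point Wightman distributions as a free scalar field of mass $m$, and then invoke the Jost-Schroer theorem (Theorem \ref{JoSchro}) to conclude. By that theorem it in fact suffices to establish agreement of the two-point distributions, so the whole proof reduces to showing
\begin{equation*}
\la \Psi_{0} | \varphi(f)\varphi(h) \Psi_{0} \ra
= \la \Psi_{0,0} | \varphi_{0}(f)\varphi_{0}(h) \Psi_{0,0} \ra
\end{equation*}
for all test functions $f,h\in\Sw(\Mi)$, where $\Psi_{0,0}$ and $\Psi_0$ are the respective vacua.

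My first step will be to extract the equal-time consequences of the intertwining (\ref{intertw}), essentially mimicking the Hall-Wightman argument recalled in §\ref{sec:WightHall}. The restricted sharp-time field algebra at $t$ is irreducible on each Hilbert space, so if $U_{j}(\mathbf{a},R)$ denotes the Euclidean representation on $\Hi_{j}$ the relation $\varphi(t,f)=V\varphi_{0}(t,f)V^{-1}$ together with its covariance under Euclidean transformations in the $t=\text{const}$ slice forces $V U_{0}(\mathbf{a},R) V^{-1}$ to differ from $U(\mathbf{a},R)$ only by an element commuting with all $\varphi(t,f)$, which by irreducibility is a phase. Uniqueness of the Euclidean-invariant vacuum then yields $V\Psi_{0,0}=\Psi_{0}$ up to a phase that I absorb. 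This immediately gives equality of all equal-time vacuum expectation values
\begin{equation*}
\la \Psi_{0} | \varphi(t,f_{1})\cdots\varphi(t,f_{n})\Psi_{0} \ra
= \la \Psi_{0,0} | \varphi_{0}(t,f_{1})\cdots\varphi_{0}(t,f_{n})\Psi_{0,0}\ra,
\end{equation*}
which is the foothold for everything that follows.

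The main obstacle, and the heart of the proof, is to promote this equal-time coincidence for $n=2$ to equality of the two-point Wightman distributions everywhere in $\Mi$. For this I first use relativistic covariance (Axiom III, assumed for both fields) to upgrade the equal-time identity to equality of the two-point functions at arbitrary spacelike separations: given $\xi\in\Mi$ with $\xi^{2}<0$, choose a Lorentz transformation that rotates $\xi$ into the $t=0$ slice; Poincaré invariance of both Wightman distributions then reduces the comparison to the equal-time one already established. Because the two-point distribution depends on a single difference vector, Jost's Theorem \ref{JostT} shows that the spacelike set is precisely the set of Jost points of the extended forward tube $\Tu_{1}'$, and the Bargmann-Hall-Wightman analytic continuation together with the edge-of-the-wedge theorem (Theorem \ref{edwed}) identifies both distributions with a single holomorphic function on an open neighbourhood of this real set. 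Since their boundary values agree on the spacelike region, they agree as distributions everywhere in $\Mi$.

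With this equality in hand the two-point distribution of $\varphi$ is the free two-point function with mass $m$, so the Jost-Schroer theorem (Theorem \ref{JoSchro}, together with Pohlmeyer's extension to $m=0$) gives $\varphi([\Box+m^{2}]f)\Psi_{0}=0$ for all $f\in\Sw(\Mi)$; locality plus Corollary \ref{Comu} upgrades this to $\varphi([\Box+m^{2}]f)=0$ as an operator identity, and the induction on $n$ contained in the proof of that theorem propagates the free form to all Wightman distributions. The delicate point I expect to fight over is the passage from sharp-time coincidence to distributional coincidence on $\Mi$: it relies on the sharp-time limits being well-defined Wightman fields so that the two-point distributions admit the analytic continuation used above, and on the existence of a Lorentz transformation aligning each spacelike $\xi$ with the fixed $t=0$ slice — both are ensured by the hypotheses of the theorem but worth flagging as the step where any weakening of the assumptions (e.g.\ dropping sharp-time limits or using smeared times) would break the argument.
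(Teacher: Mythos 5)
Your proposal is correct and follows essentially the same route as the paper's own proof: equal-time agreement of vacuum expectation values via irreducibility, Euclidean covariance and uniqueness of the vacuum, extension to spacelike separations by a Lorentz boost into the $t=0$ slice, analytic continuation through Jost points and the edge-of-the-wedge theorem, and finally the Jost-Schroer theorem (with Pohlmeyer's massless version) to conclude that $\varphi$ is free.
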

\begin{proof}
We start by showing that $U(\mathbf{a},R)=VU_{0}(\mathbf{a},R)V^{-1}$ for the two representations of the 
Euclidean subgroups in $\Hi$ and $\Hi_{0}$, respectively. We use covariance with respect to the Euclidean 
subgroup (Axiom III) and unitary equivalence (\ref{intertw}):
\begin{equation}\label{Poin}  \begin{split}
& \varphi(t,f)U(\mathbf{a},R)^{\dagger}VU_{0}(\mathbf{a},R)V^{-1} =
U(\mathbf{a},R)^{\dagger}\varphi(t,\{\mathbf{a},R\}f)VU_{0}(\mathbf{a},R)V^{-1} \\
&= U(\mathbf{a},R)^{\dagger}V\varphi_{0}(t,\{\mathbf{a},R\}f)U_{0}(\mathbf{a},R)V^{-1} 
=  U(\mathbf{a},R)^{\dagger}VU_{0}(\mathbf{a},R)\varphi_{0}(t,f)V^{-1} \\
&= U(\mathbf{a},R)^{\dagger}V U_{0}(\mathbf{a},R) V^{-1}\varphi(t,f).  \end{split}
\end{equation}
On account of the irreducibility of $\varphi$'s field algebra, warranted by Axiom II, we have
\begin{equation}\label{Irr}
U(\mathbf{a},R)^{\dagger}VU_{0}(\mathbf{a},R)V^{-1} = c(\mathbf{a},R)1
\end{equation}
for some $c(\mathbf{a},R) \in \C$ which must be constant and, in fact, equal to $1$ due to the group 
property and unitarity of the representation (Axiom O). Thus, $U(\mathbf{a},R)V=VU_{0}(\mathbf{a},R)$. 
Let $\Omega$ and $\Omega_{0}$ denote the two vacua. Then 
\begin{equation}
U(\mathbf{a},R)V\Omega_{0} = VU_{0}(\mathbf{a},R)\Omega_{0} = V\Omega_{0},
\end{equation}
which means $V\Omega_{0}=a \Omega$. By unitarity of $V$ we have $|a|=1$. We are allowed to set $a=1$ 
(or absorb $a$ into the definition of $V$). The consequence of this is 
\begin{equation}
 \la \Omega_{0} | \varphi_{0}(t,f) \varphi_{0}(t,h) \Omega_{0} \ra 
 = \la \Omega_{0} |V^{-1} \varphi(t,f)VV^{-1} \varphi(t,h)V \Omega_{0} \ra
 = \la \Omega |\varphi(t,f)\varphi(t,h)\Omega \ra
\end{equation}
and hence 
$ \la \Omega_{0} | \varphi_{0}(t,\mathbf{x}) \varphi_{0}(t,\mathbf{y}) \Omega_{0} \ra  
= \la \Omega |\varphi(t,\mathbf{x})\varphi(t,\mathbf{y})\Omega \ra $ in the sense of distributions. 
The only singular point is where $\mathbf{x}=\mathbf{y}$, as we know because both expressions are the 
well-known two-point function of a free scalar field of mass $m \geq 0$ at equal times, ie  
\begin{equation}
\Delta_{+}(0, \mathbf{x}-\mathbf{y};m^{2}) 
=\la \Omega_{0} | \varphi_{0}(t,\mathbf{x}) \varphi_{0}(t,\mathbf{y}) \Omega_{0} \ra  
= \la \Omega |\varphi(t,\mathbf{x})\varphi(t,\mathbf{y})\Omega \ra. 
\end{equation}
For any spacelike point $z=x-y=(t,\mathbf{\xi})$, one can find a Lorentz transformation $\Lambda \in \Lo$ 
that takes it into the zero-time slice: $\Lambda z=z'=(t, \mathbf{\xi}')$. Therefore, we have by 
Lorentz invariance (implied by Axiom III) 
\begin{equation}
\Delta_{+}(x-y; m^{2}) = \la \Omega_{0} | \varphi_{0}(x) \varphi_{0}(y) \Omega_{0} \ra
= \la \Omega |\varphi(x)\varphi(y)\Omega \ra. 
\end{equation}
for $(x-y)^{2}<0$. By the edge-of-the-wedge Theorem \ref{edwed}, this means the two-point Wightman 
distribution of the field $\varphi$ agrees with that of the free field. 
Thus, on account of the Jost-Schroer Theorem \ref{JoSchro} in case $m>0$ and Pohlmeyer's version for $m=0$, 
$\varphi$ is a free field of mass $m \geq 0$. 
\end{proof}

\subsection{Summary: provisos of the proof} 
As the exposition shows, the proof of Haag's theorem makes use of all Wightman axioms and makes two 
additional strong assumptions: 
\begin{enumerate}
 \item the existence of sharp-time fields $\varphi_{0}(t,f), \varphi(t,f)$ forming an irreducible operator algebra at a fixed time $t$ and 
 \item unitary equivalence between both fields.
\end{enumerate}
Here is a summary of the salient stages of the proof, the provisos they rely on and their consequences.
\begin{itemize}
 \item Poincaré-invariance of both vacua (Axiom O), Poincaré covariance of both 
       fields (Axiom III) and their unitary equivalence (\ref{intertw}) jointly entail that the 
       two-point functions of both theories agree for spacelike separated points,
\begin{equation}\label{tw}
\la \Omega_{0} | \varphi_{0}(x) \varphi_{0}(y) \Omega_{0} \ra = \la \Omega |\varphi(x)\varphi(y)\Omega \ra
\hs{1} \mbox{if} \hs{0.3} (x-y) \in E,
\end{equation}
where $E=\{\xi \in \Mi : \xi^{2} <0\}$ is the spacelike double cone. 
\item  Jost's theorem (Theorem \ref{JostT}) tells us that $E$ is comprised of Jost points which are 
       the real points of the extended forward tube $\Tu_{1}'$. 
       This set is in turn the image of the complex Lorentz group when applied to the forward tube $\Tu_{1}=\Mi - iV_{+}$. 
       This means that starting from the spacelike double cone $E$ lying in $\Tu_{1}'$, the
       two-point functions can be analytically continued into the forward tube $\Tu_{1}$ where they 
       constitute an open set $\Op \subset \Tu_{1} \subset \C^{4}$. Axioms O - III are needed
       since these objects must be Poincaré-invariant distributions that satisfy the 
       spectral property (guaranteed by Axiom I). The edge-of-the-wedge theorem makes sure that the 
       analytic continuations of both two-point functions really agree, ie 
\begin{equation}\label{tw2}
\la \Omega_{0} | \varphi_{0}(x) \varphi_{0}(y) \Omega_{0} \ra = \la \Omega |\varphi(x)\varphi(y)\Omega \ra
\hs{1} \mbox{for all} \hs{0.3} (x-y) \in \Mi.
\end{equation}
 \item Given this result, the Jost-Schroer theorem can now be put in place. It relies on the 
       Reeh-Schlieder Theorem \ref{ReehSchl} and its Corollary \ref{Comu}. For these theorems to be 
       applicable, the spectral property of the Wightman distributions (Axioms O-III) and locality  
       (Axiom IV) must be fulfilled.   
\end{itemize}
So we see that \emph{all Wightman axioms must be fulfilled} for the proof of Haag's theorem. No use is 
made of the conjugate momentum field $\pi(t,f)$. Many proofs including the one in \cite{StreatWi00} use 
this field in addition and impose the intertwining relation
\begin{equation}\label{intertwPi}
 \pi(t,f) = V \pi_{0}(t,f)V^{-1}.
\end{equation}
However, it is not employed in any of the proof's steps except implicitly when irreducibility is used to 
obtain the relation between the two Poincaré representations in (\ref{Irr}). 

The fact is, the free field $\varphi_{0}$ itself is a fully fledged irreducible Wightman field without its conjugate field:
it generates a dense subspace in $\Hi_{0}$ all by itself even at one fixed time $t$! Irreduciblity of $\varphi_{0}(t,f)$ then follows 
from cyclicity, as already mentioned (see Axiom II). 
So the difference between our proof, which agrees with Roman's version in Section 8.4 of his 
textbook \cite{Ro69}, and Wightman's lies in the assumption of what constitutes an 
irreducible field algebra. 

Because the Wightman axioms do not include the CCR and the vacuum is cyclic for the free field which then,
as a consequence, is irreducible, the conjugate momentum field need not join the game. It only brings in 
an additional strong assumption about the existence of the time derivative of the field $\varphi$.

\section{Haag's theorem for fermion and gauge fields}\label{sec:AxG}                                 
At first glance, Haag's theorem in the above form holds only for scalar fields. It can,
with some notational inconvenience, also be formulated for a collection $\{ \varphi_{j} \}$ of scalar fields. 
But for fermion and gauge fields whose Poincaré covariance is nontrivial, things seem to look slightly different. Especially 
the Jost-Schroer theorem does not appear to carry over directly. We shall briefly see in this section that
while all arguments used in the proof of Haag's theorem go also through for fermion fields, this turns out
to be naive for gauge theories.

So assume that the Wightman framework with Axioms O - IV is true for a field with spin $s > 0$, 
then the equality of both theories' $n$-point functions on spacelike separated points 
is still given for $n \leq 4$. This is the assertion labelled 'generalised Haag's theorem' by Streater 
and Wightman in \cite{StreatWi00}. 

We shall briefly go through its proof and show that because the Jost-Schroer theorem also holds for 
anticommuting fields, Haag's theorem affects fields of nontrivial spin ($s>0$) as well. 
Note that it is irrelevant how many different collections of fields we put into both Hilbert spaces
as long as they are connected via the same intertwiner $V$. 

\subsection{Dirac fields} 
Let $\psi_{j}(t,\mathbf{x})$ and $\psi^{0}_{j}(t,\mathbf{x})$ be two Dirac fields in Wightman's sense, 
the latter field being free. We denote their respective vacua as $\Omega$ and $\Omega_{0}$. We start with
\begin{equation}
 \la \Omega | \psi_{j}(0,\mathbf{x}) \overline{\psi}_{l}(0,\mathbf{y}) \Omega \ra 
 =  \la \Omega_{0} | \psi^{0}_{j}(0,\mathbf{x}) \overline{\psi}^{0}_{l}(0,\mathbf{y}) \Omega_{0} \ra ,
\end{equation}
in which $\overline{\psi}=\psi^{\dagger}\gamma^{0}$. If we perform a Lorentz transformation $\Lambda$ such
that $\Lambda (0,\mathbf{x})=x$, $\Lambda (0,\mathbf{y})=y$ and $(x-y)^{2}<0$ on both sides, we get
\begin{equation}\label{fermHaa}
 \sum_{a,b} S_{ja}(\Lambda^{-1})S_{bl}(\Lambda^{-1})^{*} 
 \la \Omega | \psi_{a}(x) \overline{\psi}_{b}(y) \Omega \ra  
  =  \sum_{a,b} S^{0}_{ja}(\Lambda^{-1})S^{0}_{bl}(\Lambda^{-1})^{*}
 \la \Omega_{0} | \psi^{0}_{a}(x) \overline{\psi}^{0}_{b}(y) \Omega_{0} \ra ,
\end{equation}
with the corresponding spinor representations of the Lorentz group. If we strip the lhs of (\ref{fermHaa})
of its spinor representation matrices, we obtain
\begin{equation} 
 \la \Omega | \psi_{j}(x) \overline{\psi}_{l}(y) \Omega \ra  
  =  \sum_{a,b} \sum_{a',b'} S_{ja'}(\Lambda) S^{0}_{a'a}(\Lambda^{-1})
  S_{b'l}(\Lambda)^{*}S^{0}_{bb'}(\Lambda^{-1})^{*}
 \la \Omega_{0} | \psi^{0}_{a}(x) \overline{\psi}^{0}_{b}(y) \Omega_{0} \ra .
\end{equation}
For the relation between the vacua $V\Omega_{0}=\Omega$, derived along the same lines as in (\ref{Poin})
to obtain (\ref{Irr}), one chooses the same spinor representation for both fields. This is sensible and 
not a loss of generality because these representations are unitarily equivalent. 
Then follows for spacelike $\xi = x-y$
\begin{equation}\label{fermtwpt}
 \la \Omega | \psi_{j}(x) \overline{\psi}_{l}(y) \Omega \ra 
 =  \la \Omega_{0} | \psi^{0}_{j}(x) \overline{\psi}^{0}_{l}(y) \Omega_{0} \ra .
\end{equation}
It is equally valid for the $n$-point functions if $n \leq 4$. Analytic continuation and edge-of-the-wedge
theorem tell us that (\ref{fermtwpt}) holds everywhere in spacetime $\Mi$. This concludes the proof of Streater and Wightman's
generalised Haag's theorem. 

Because the rhs is the two-point function of the free propagator, one finds that for 
\begin{equation}
 j_{a}(x):=\sum_{b}(i \gamma^{\mu}\partial_{\mu} - m)_{ab}\psi_{b}(x)
\end{equation}
one has 
\begin{equation}
 \la \Omega | j_{a}(x)\overline{j}_{a}(y) \Omega \ra = 0 
\end{equation}
and thus $||j^{\dagger}_{a}(f)\Omega ||=0$ for every spinor index $a$ and Schwartz function $f$. 
Since Corollary \ref{Comu} is also true for anticommuting fields (\cite{StreatWi00}, remark on p.139) it 
follows $j^{\dagger}_{a}(f)=0$. Therefore, we see that the Jost-Schroer theorem and consequently 
Haag's theorem hold also true for fermion fields.  

\subsection{Axioms and Haag's theorem for gauge theories}
In contrast to what the previous paragraph suggests, gauge fields are different and prove a 
recusant species. 
As every physics student learns in a basic QFT course, the quantisation of
the photon field is not as straightforward as for fermions and scalar bosons. If we take the Wightman
axioms as an alternative quantisation programme for the photon field $A_{\mu}(x)$ and hence its field
strength tensor $F_{\mu \nu}(x)$, then, as Strocchi has found, we end up with a very bad form of triviality
\cite{Stro67,Stro70}. 

\begin{Theorem}[Strocchi]\label{Stroch}
Let $A_{\mu}(x)$ be an operator-valued distribution adhering to Axioms 0,I and II. Assume furthermore that
that the field is Poincaré covariant (Axiom III) according to
\begin{equation}
 U(a,\Lambda)A_{\mu}(x)U(a,\Lambda)^{\dagger} = 
 \Lambda^{\sigma}_{\ \mu} A_{\sigma}(\Lambda x+a)
\end{equation}
or assume that Axiom IV is satisfied, ie locality $[A_{\mu}(x),A_{\mu}(x)]=0$ for $(x-y)^{2}<0$. If the 
operator-valued distribution 
\begin{equation}
 F_{\mu \nu}(x) := \partial_{\mu}A_{\nu}(x)-\partial_{\nu}A_{\mu}(x) 
\end{equation}
satisfies the free Maxwell equation $\partial_{\mu}F^{\mu \nu}=0$, then  
\begin{equation}\label{twptF}
\la \Omega | F_{\mu \nu}(x) F_{\rho \sigma}(y) \Omega \ra = 0 \hs{1} \forall x,y \in \Mi,
\end{equation}
where $\Omega$ is the vacuum. If both axioms III and IV are fulfilled, that is, all Wightman axioms,
one finds $F_{\mu \nu}(x)=0$ which means $A_{\mu}(x)=\partial_{\mu}\varphi(x)$, ie the photon field is 
a gradient field. 
\end{Theorem}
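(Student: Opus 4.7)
The plan is to analyse the two-point Wightman distribution of $A_\mu$ and exploit the free Maxwell equation at the level of the \emph{mixed} correlator $\la \Omega | F_{\mu\nu}(x) A_\sigma(y) \Omega \ra$. First I would set
\begin{equation*}
W_{\mu\nu}(\xi) := \la \Omega | A_\mu(x) A_\nu(y) \Omega \ra,
\end{equation*}
which by Axiom O depends only on $\xi = x-y$ and whose Fourier transform $\wt{W}_{\mu\nu}(p)$ is supported in $\fl$ by Axiom I. Under Poincaré covariance (Axiom III), Lorentz invariance forces the standard tensor decomposition
\begin{equation*}
\wt{W}_{\mu\nu}(p) = -g_{\mu\nu}\rho(p) + p_\mu p_\nu \tau(p)
\end{equation*}
with Lorentz-invariant distributions $\rho,\tau$ supported in $\fl$.

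Next I would apply the Maxwell equation to the mixed correlator: since $\partial^\mu F_{\mu\nu}(x) = 0$ as an operator, $\partial_x^\mu \la \Omega | F_{\mu\nu}(x) A_\sigma(y) \Omega \ra = 0$, which in momentum space (after substituting the above decomposition) reduces to the distributional identity
\begin{equation*}
\rho(p)\,[p^2 g_{\nu\sigma} - p_\nu p_\sigma] = 0.
\end{equation*}
Specialising $\nu = \sigma = 0$ yields $|\mathbf{p}|^2 \rho(p) = 0$, confining the support of $\rho$ to the $p_0$-axis; specialising $\nu = \sigma = 1$ then gives $p_0^2 \rho(p) = 0$ on that remaining support, so $\rho$ can at most carry a $\delta^{(4)}(p)$ mass. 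Poincaré invariance of $\Omega$ forces $\la \Omega | A_\mu(x) \Omega \ra = 0$, and cluster decomposition (property W6) rules out the corresponding constant in $W_{\mu\nu}$; hence $\rho \equiv 0$. A direct computation of the Fourier transform of $\la \Omega | F_{\mu\nu}(x) F_{\rho\sigma}(y) \Omega \ra$ from $\wt{W}_{\alpha\beta}$ then yields
\begin{equation*}
\wt{\mathcal{F}}_{\mu\nu\rho\sigma}(p) = -\rho(p)\,[p_\mu p_\rho g_{\nu\sigma} - p_\mu p_\sigma g_{\nu\rho} - p_\nu p_\rho g_{\mu\sigma} + p_\nu p_\sigma g_{\mu\rho}],
\end{equation*}
the $\tau$-contributions cancelling by the antisymmetries of $F_{\mu\nu}$; so $\wt{\mathcal{F}} \equiv 0$ and (\ref{twptF}) follows.

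For the second, stronger conclusion (Axioms III and IV both in force), (\ref{twptF}) gives $\|F_{\mu\nu}(f)\Omega\|^2 = 0$ and hence $F_{\mu\nu}(f)\Omega = 0$ for every $f \in \Sw(\Mi)$. Locality then engages Corollary \ref{Comu}: an operator in any local subalgebra annihilating the cyclic vacuum must vanish weakly. Therefore $F_{\mu\nu}(f) = 0$ weakly on $\D$, so $\partial_\mu A_\nu = \partial_\nu A_\mu$ distributionally, and the distributional Poincaré lemma delivers a scalar $\varphi$ with $A_\mu = \partial_\mu \varphi$.

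The main obstacle is the alternative case in which only Axiom IV (locality) is assumed without Axiom III. The tensor decomposition of $\wt{W}_{\mu\nu}$ then no longer follows from Lorentz invariance and must be reconstructed from the spacelike vanishing of $[A_\mu(x), A_\nu(y)]$ combined with the spectral condition—essentially a Jost–Schroer-type argument adapted to vector fields. A subsidiary subtlety, already present in the Axiom III case, is that applying Maxwell directly to the pure $FF$ correlator only yields $p^2 \rho = 0$, which is compatible with a nontrivial light-cone contribution (the positive-metric analogue of a Gupta–Bleuler photon); the sharper identity $\rho \equiv 0$ requires passing through the mixed $FA$ correlator, where Hilbert-space positivity and the Poincaré invariance of $\Omega$ jointly foreclose any such contribution.
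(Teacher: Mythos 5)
Your proof is correct and follows essentially the paper's (Strocchi's) route: your momentum-space decomposition $\wt{W}_{\mu\nu}(p)=-g_{\mu\nu}\rho+p_\mu p_\nu\tau$ is the Fourier transform of the paper's $D_{\mu\nu}=g_{\mu\nu}D_1+\partial_\mu\partial_\nu D_2$, the constraint $\rho\,[p^2 g_{\nu\sigma}-p_\nu p_\sigma]=0$ is the same equation the paper extracts by applying the Maxwell operator to the two-point function, your support analysis (plus the vanishing one-point function and clustering) plays the role of the paper's quoted lemma forcing $D_1=\mathrm{const}$, and the strong conclusion again rests on $\|F_{\mu\nu}(f)\Omega\|=0$ together with the Reeh--Schlieder corollary and locality. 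One small tightening: to exclude derivative-of-$\delta$ contributions to $\rho$ it is cleaner to first trace the constraint to get $p^2\rho=0$ and hence $p_\nu p_\sigma\rho=0$ for all components, which together with Lorentz invariance leaves only $c\,\delta^{(4)}(p)$, rather than relying on the two component specializations "on the remaining support", which is not a valid operation for distributions concentrated on that set.
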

\begin{proof}
 We shall only sketch the proof for the case of Poincaré covariance (Axiom III). For a thorough exposition,
 the reader is referred to \cite{Stro67} and \cite{Stro70}.  
 
 The first thing to show is that $D_{\mu \nu}(x,y):=\la \Omega | A_{\mu}(x)A_{\nu}(y) \Omega \ra$ is of the form
 \begin{equation}
  D_{\mu \nu}(\xi) = g_{\mu \nu} D_{1}(\xi) + \partial_{\mu} \partial_{\nu} D_{2}(\xi),
 \end{equation}
 where $\xi=x-y$ and $D_{j}(\xi)=D_{j}(\Lambda \xi)$ for $j=1,2$. This is familiar to physicists in momentum space. 
 Strocchi invokes a theorem by Araki and Hepp \cite{He63} to prove it.
 Because the gauge field obeys $[\delta_{\nu}^{\mu} \square - \partial^{\mu}\partial_{\nu} ]A_{\mu}(x)=0$, one has 
 \begin{equation}
  [\delta_{\nu}^{\mu} \square - \partial^{\mu}\partial_{\nu}] D_{\mu \rho}(\xi) = 0.
 \end{equation}
and hence $[g_{\mu \nu} \square - \partial_{\mu}\partial_{\nu} ]D_{1}(\xi)=0$. A lemma asserting
that a Lorentz-invariant function $F(x)$ with 
\begin{equation}
 [g_{\mu \nu} \square + \alpha \partial_{\mu}\partial_{\nu} ]F(x)=0 \ , \hs{1} \alpha \notin \{ 0,4 \}
\end{equation}
must be constant incurs $D_{1}(\xi)=\mbox{const.}$ from which (\ref{twptF}) follows. If locality comes
on top, $F_{\mu \nu}(x)=0$ is proven from $|| F_{\mu \nu}(f)\Omega|| = 0$ by invoking the Jost-Schroer 
theorem.  
\end{proof}

This result tells us that the photon field cannot reasonably be quantised by postulating the 
Wightman axioms in the form given in §\ref{sec:WightAx}. However, the situation can be remedied by
abandoning the Maxwell equation $\partial_{\mu}F^{\mu \nu}=0$ in favour of 
\begin{equation}
\partial_{\mu}F^{\mu \nu} = - \partial^{\nu} (\partial_{\mu}A^{\mu})
\end{equation}
with the extra condition for the rhs' annihalators that $(\partial_{\mu}A^{\mu})^{+}\Hi'=0$ on a subspace
$\Hi' \subset \Hi$, as explained in \cite{StroWi74}) which we shall draw on in the following. 

When part of a quantisation scheme, this is known as the \emph{Gupta-Bleuler condition}, as introduced 
in \cite{Gu50} and \cite{Bleu50}. 
The problem is that the inner product needed to define the Wightman distributions of QED is no longer 
definite, even on $\Hi'$. Let us denote this inner product by $(\cdot,\cdot)$. On $\Hi'$, one finds at least
that it is positive semidefinite, ie
\begin{equation}\label{Hpri}
 (\Psi,\Psi) \geq 0 \hs{2} \forall \Psi \in \Hi',
\end{equation}
but $\Phi \in \Hi'$ with $(\Phi,\Phi)=0$ does not mean $\Phi$ vanishes, it may well be a nonvanishing zero-norm state ($\Phi \neq 0$).
We denote the subspace of such vectors by $\Hi''$. Then, as a consequence of the Gupta-Bleuler condition, Maxwell's equations 
\begin{equation}\label{weakG}
(\Psi, \partial_{\mu}F^{\mu \nu}\Phi ) = 0 \hs{2} \forall \Psi, \Phi \in \Hi'
\end{equation}
are fulfilled on $\Hi'$. As a last step one obtains a physical Hilbert space $\Hi_{\mbox{\tiny phys}}$
by taking the quotient $\Hi_{\mbox{\tiny phys}}:=\Hi'/\Hi''$ through 
nullifying all zero-norm states. In fact, the original
vector space of states $\Hi$ can be recovered as a Hilbert space provided one condition is met: there
exists a sequence of seminorms $(\rho_{n})$ such that $\rho_{n}$ is a seminorm on $\Sw(\Mi^{n})$ and the
Wightman distributions satisfy 
\begin{equation}\label{YngCond}
 |W_{n+m}(f_{n}^{*} \te g_{m})| \leq \rho_{n}(f_{n}) \rho_{m}(g_{m})
\end{equation}
for any $f_{n} \in \Sw(\Mi^{n})$ and $g_{m} \in \Sw(\Mi^{m})$. Then there exist a bounded Hermitian 
operator $\eta$ on $\Hi$, called \emph{metric operator}, and a (genuine) scalar 
product $\la \cdot , \cdot \ra$ such that 
\begin{equation}
 (\Phi,\Psi)=\la \Phi, \eta \Psi \ra 
\end{equation}
for all $\Phi, \Psi \in \D$, where $\D$ is the dense subspace generated by the field algebra.
In words, the indefinite inner product $(\cdot,\cdot)$ of the Gupta-Bleuler formalism can be embedded 
into a Hilbert space through the employment of a sesquilinear form known as metric operator. 
This is in fact part of a reconstruction theorem of some sort for QED discussed by Yngvason in \cite{Yng77}
to which we refer the interested reader for details. 

Note that the Wightman distributions of the gauge field $A_{\mu}$ and its field strength tensor 
$F_{\mu \nu}$ are given by the vaccum expectation values with respect to the indefinite inner product.
On account of these complications, the Wightman axioms must be altered for gauge theories. As we have
already explained in §\ref{sec:WightAx}, it is Axiom O which needs to be modified: the Hilbert space
is replaced by an inner product space $\Hi$ whose inner product $(\cdot, \cdot)$ is nondegenerate.
Because one can construct a Hilbert space structure from it, one has a so-called \emph{Krein space}
\cite{Stro93}. So here is Axiom O for gauge theories \cite{Stei00}:

\begin{itemize}
 \item \textbf{Axiom O' (Relativistic complex vector space with nondegenerate form)}. 
 The states of the physical system are described by (unit rays of) vectors in a separable complex vector space $\Hi$  
 equipped with a nondegenerate form and a strongly continuous unitary representation $(a,\Lambda) \mapsto U(a,\Lambda)$ of the connected 
 Poincaré group $\Po$. Moreover, there is a unique state 
 $\Psi_{0}\in \Hi$, called the \emph{vacuum}, which is invariant under this representation, ie
 \begin{equation}
  U(a,\Lambda) \Psi_{0} = \Psi_{0} \hs{2} \textrm{for all } (a,\Lambda) \in \Po .
 \end{equation}
\end{itemize}
To guarantee that a physical Hilbert space can be constructed, Axiom II must be augmented by Yngvason's
condition (\ref{YngCond}), as proposed by Strocchi \cite{Stro93}. One may then be content with these modified axioms for photon fields and prove 
Haag's theorem invoking the same arguments as we have in the case of Dirac fields above. 

However, the following deliberations show that this contention cannot be maintained. The presented results
reveal that any nonfree Maxwell theory is fundamentally at loggerheads with Wightman's framework,
probably the reason why Strocchi believes QED and nonabelian gauge theories are not afflicted by 
Haag's theorem \cite{Stro13}. 
 
\subsection*{Maxwell's equations} 
To depart from free Maxwell theory and implement interactions with matter, one needs to introduce \emph{charged fields}.
We shall see now that in the presence of such fields, the last step to construct $\Hi_{\mbox{\tiny phys}}$ 
leads to a triviality result. Therefore, as yet, \emph{the last word on the Wightman axioms for QED has not 
been spoken.} 

Before we jump to a conclusion about QED, let us survey the pertinent results obtained
by Ferrari, Picasso and Strocchi \cite{FePStro74}. A curious aspect of their exposition is that they make
no reference to the gauge field $A_{\mu}$ and the corresponding Lagrangian formalism. Instead, they only
use the field strength tensor $F_{\mu \nu}$, the charge current $j^{\nu}$ and a charged field $\phi$
which we will introduce now.

First of all, $F_{\mu \nu}$ and $j_{\mu}$ are operator-valued distributions, both local and relativistic in
the sense of tensor fields, ie transforming under $\Po$ according to 
\begin{equation}\label{PoiVeTe} 
 U(a,\Lambda)j_{\mu}(x)U(a,\Lambda)^{\dagger} = \Lambda^{\sigma}_{\ \mu} j_{\sigma}(\Lambda x+a) \ , 
 \hs{0.3}
 U(a,\Lambda)F_{\mu \nu}(x)U(a,\Lambda)^{\dagger} 
 = \Lambda^{\sigma}_{\ \mu} \Lambda^{\rho}_{\ \nu} F_{\sigma \rho}(\Lambda x+a).
\end{equation}
The field strength is supposed to be an antisymmetric tensor: $F_{\mu \nu}=-F_{\nu \mu}$. 
The charge current $j^{\mu}$ gives rise to a charge through its zeroth component
\begin{equation}\label{char}
 Q_{R} := j^{0}(f_{u} \te f_{R})= \int d^{4}x \ f_{u}(x^{0})f_{R}(\mathbf{x})j^{0}(x)
\end{equation}
where $f_{R} \in \De(\R^{3})$ is such that $f_{R}(\mathbf{x})=1$ inside a ball of radius $R$, while vanishing rapidly outside the ball and
$f_{u}$ has compact support in $(-u,u) \subset \R$ with $\int_{\R} dx^{0}f_{u}(x)=1$. 
A scalar field $\phi$ is called \emph{local relative to} $j^{\mu}$ in case it satisfies
\begin{equation}
 [\phi(x),j^{\mu}(y)]=0 
\end{equation}
for spacelike distance, ie $(x-y)^{2}<0$. A field $\phi$ of this type is said to have charge $q$ if 
\begin{equation}\label{gaugetrans}
\lim_{R \rightarrow \infty}[Q_{R},\phi(f)]=-q \phi(f)
\end{equation}
for any $f \in \De(\Mi)$. This introduces a global gauge transformation ('gauge transformation of the first kind'). Then, one has

\begin{Theorem}[No Maxwell equations]\label{stroMax}
 Let $j_{\mu}=\partial^{\alpha}F_{\alpha \mu}$ and $\phi$ be local relative to $j_{\mu}$. Then 
 \begin{equation}
 \lim_{R \rightarrow \infty}[Q_{R},\phi(f)] = 0,
 \end{equation}
 ie the Maxwell equations cannot hold if the charge current is supposed to generate a nontrivial gauge transformation of the first kind (\ref{gaugetrans}).
\end{Theorem}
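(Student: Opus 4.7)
The strategy is a Gauss's law argument lifted to the level of operator-valued distributions: Maxwell's equation $j_\mu = \partial^\alpha F_{\alpha\mu}$ should allow us to recast $Q_R$ as an integral concentrated on the boundary of the ball of radius $R$, so that its commutator with a compactly supported $\phi(f)$ is controlled only by the behaviour of $F_{\mu\nu}$ at spatial infinity and is thus forced to vanish in the limit. Concretely, I would insert Maxwell's equation into the definition (\ref{char}) and use antisymmetry $F_0{}^0 = 0$ (from $F_{\mu\nu}=-F_{\nu\mu}$) to collapse $\partial^\alpha F_\alpha{}^0$ to a purely spatial divergence $\partial_i F^{i0}$. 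An integration by parts, legitimate because $f_R \in \De(\R^3)$ and the expressions are paired with the test function $f$, then yields
\begin{equation*}
Q_R = -\int d^4x\, f_u(x^0)\, \bigl(\partial_i f_R(\mathbf{x})\bigr)\, F^{i0}(x).
\end{equation*}
The key feature is that $\partial_i f_R$ is supported in a thin spherical shell around $|\mathbf{x}|=R$, while $f_u$ keeps the time component in the bounded interval $(-u,u)$.

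Next I would compute the commutator with $\phi(f)$ and take $R\to\infty$. Because $f \in \De(\Mi)$ has compact spacetime support, there is an $R_0$ such that for every $R \geq R_0$ the kernel $f_u(x^0)\,\partial_i f_R(\mathbf{x})$ is supported in a region spacelike separated from $\mathrm{supp}(f)$. If one can show $[F^{i0}(x),\phi(f)] = 0$ on such a region, then $[Q_R,\phi(f)]=0$ identically for all $R\geq R_0$, and a fortiori the limit vanishes, which is the desired conclusion and which clashes with (\ref{gaugetrans}) whenever $q \neq 0$.

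The main obstacle is precisely the last step: the theorem assumes only that $\phi$ is local relative to $j_\mu$, which through Maxwell's equation translates to the weaker distributional statement $\partial_i^{\,x}[F^{i0}(x),\phi(y)] = 0$ for $(x-y)^2<0$, rather than the pointwise vanishing of $[F^{i0}(x),\phi(y)]$ itself. To bridge this gap I would exploit the Poincar\'e covariance (\ref{PoiVeTe}) of $F_{\mu\nu}$ to analytically continue the Wightman distribution $\langle \Omega| F^{i0}(x)\phi(y)\Omega\rangle$ into the extended forward tube and apply Theorem \ref{JostT} together with the edge-of-the-wedge Theorem \ref{edwed} to the tensorial structure of the commutator, following the pattern used in the proof of the Reeh--Schlieder Theorem \ref{ReehSchl} and its Corollary \ref{Comu}: a covariant antisymmetric rank-two distribution whose spatial divergence vanishes on a Jost-point neighbourhood is pinned down sufficiently for the commutator to vanish there as well. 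In the Krein-space formulation of Axiom O$'$, an even shorter route is available, since the modified Axiom IV then applies to the whole local field algebra containing $F_{\mu\nu}$ and $\phi$, and supplies the needed spacelike commutativity directly; either way one arrives at $\lim_{R\to\infty}[Q_R,\phi(f)]=0$.
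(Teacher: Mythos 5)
Your computational core coincides with the paper's own (sketched) proof: insert $j^{0}=\partial_{\mu}F^{\mu 0}$ into (\ref{char}), use antisymmetry to kill the $F^{00}$ term, move the derivative onto the smearing function, and observe that $f_{u}\te\partial_{j}f_{R}$ is supported in a time-bounded shell around $|\mathbf{x}|=R$, which for $R\geq R_{0}$ is spacelike to the support of $f$. The paper then simply states that the commutator receives contributions only from the causal neighbourhood of the support of $f$, where $\partial_{j}f_{R}=0$, and refers to \cite{FePStro74} for details; that is, it takes the spacelike commutativity of $F^{j0}$ with $\phi$ as part of the locality assumptions on the field algebra rather than attempting to derive it from relative locality with respect to $j_{\mu}$ alone.

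This is exactly where your proposal has a genuine gap. You correctly notice that the hypothesis as literally stated only yields $\partial_{j}^{x}[F^{j0}(x),\phi(y)]=0$ at spacelike separation, but your proposed bridge --- Poincar\'e covariance plus Jost points plus the edge-of-the-wedge theorem ``pinning down'' a divergence-free antisymmetric tensor distribution --- is not an argument: vanishing of the spatial divergence on the spacelike region does not force the commutator itself to vanish there (curl-type or constant solutions are not excluded), Theorem \ref{edwed} can only be applied once one already knows the boundary values agree on some real open set, which is precisely what is in question, and the Reeh--Schlieder/Corollary \ref{Comu} pattern requires an operator annihilating the vacuum, which you do not have at this stage. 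Your fallback route --- treating $\phi$ as relatively local with respect to $F_{\mu\nu}$ because both belong to one local field algebra (Axiom IV suitably adapted to the Krein-space setting) --- is the correct and intended fix, and it is what the original Ferrari--Picasso--Strocchi argument assumes; once that assumption is made explicit, your argument reduces to the paper's proof.
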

\begin{proof} Sketch, for details see \cite{FePStro74}. The main argument is that 
\begin{equation}\label{noMaxE}
[Q_{R},\phi(f)] = [j^{0}(f_{u} \te f_{R}),\phi(f)]=-[F^{\mu 0}(\partial_{\mu}(f_{u} \te f_{R})),\phi(f)] = -[F^{j 0}(f_{u} \te \partial_{j}f_{R}),\phi(f)] 
\end{equation}
vanishes in the limit because $f_{R}$ is constant in the region where the commutator receives 
contributions from the charge current integral (\ref{char}), ie $\partial_{j}f_{R}=0$ inside the ball. 
\end{proof}
To cure this pathology, we abandon the strong form of Maxwell's equations in Theorem \ref{stroMax} and
replace it by their weaker Gupta-Bleuler form 
\begin{equation}\label{GuB}
 (\Psi , [\partial_{\mu}F^{\mu \nu}(f)-j^{\mu}(f)]\Phi)  = 0  \hs{2}  \forall \Psi,\Phi \in \Hi'
\end{equation}
and for all test functions $f \in \De(\Mi)$, where, as in the free case, $\Hi'\subset \Hi$ is 
the subspace on which the indefinite form $(\cdot, \cdot )$ is positive 
semidefinite\footnote{$(\Phi,\Phi)\geq 0$ for all $\Phi \in \Hi'$ but $(\Phi,\Phi)= 0$ does not imply $\Phi =0$.}.   

The next result tells us that the Gupta-Bleuler strategy comes at an unacceptable price \cite{FePStro74}.

\begin{Theorem}[No charged fields]\label{NoCf}
Assume the common domain $\D \subset \Hi'$ of $j_{\mu}, F_{\mu \nu}$ and $\phi$ is dense in $\Hi'$ and stable 
under both fields' action. Let the Gupta-Bleuler condition (\ref{GuB}) be satisfied. Then either there are
no charged fields $\phi$ or 
\begin{equation}
 (\Phi,\phi(f)\Psi)=0 \hs{2} \forall \Psi,\Phi \in \Hi',
\end{equation}
ie $\phi$ vanishes weakly on $\Hi'$.
\end{Theorem}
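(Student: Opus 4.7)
The strategy is to transplant the argument of Theorem \ref{stroMax} into the Gupta--Bleuler setting: on matrix elements between vectors of $\Hi'$ the strong Maxwell equation may be replaced by its weak form (\ref{GuB}), so the surface-term computation that showed $\lim_{R\to\infty}[Q_R,\phi(f)]=0$ in the previous theorem survives sandwiched between such vectors. The defining relation (\ref{gaugetrans}) of a charge-$q$ field then forces $-q(\Psi,\phi(f)\Phi)=0$ on $\Hi'$, whence either $q=0$ or $\phi$ vanishes weakly on $\Hi'$.

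Concretely, I would pick $\Psi,\Phi\in\D$ and expand
\begin{equation*}
(\Psi,[Q_R,\phi(f)]\Phi)
   =(\Psi,j^0(f_u\te f_R)\phi(f)\Phi)-(\phi(f)^{*}\Psi, j^0(f_u\te f_R)\Phi),
\end{equation*}
using the $(\cdot,\cdot)$-adjoint $\phi(f)^{*}$. The stability hypothesis on $\D$ places both $\phi(f)\Phi$ and $\phi(f)^{*}\Psi$ back inside $\D\subset\Hi'$, so the Gupta--Bleuler condition (\ref{GuB}) may be applied in each term to substitute $j^{0}\mapsto\partial^{\nu}F_{\nu 0}$, yielding
\begin{equation*}
(\Psi,[Q_R,\phi(f)]\Phi)=(\Psi,[\partial^{\nu}F_{\nu 0}(f_u\te f_R),\phi(f)]\Phi).
\end{equation*}
Integration by parts then reproduces (\ref{noMaxE}): the temporal boundary terms are killed by the compact support of $f_u$, leaving $-(\Psi,[F^{j0}(f_u\te\partial_j f_R),\phi(f)]\Phi)$. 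Since $\partial_j f_R$ is supported in a thin shell of radius $\sim R$, for $R$ large enough its support lies spacelike to $\supp f$; local commutativity of $\phi$ with $F^{\mu\nu}$ then makes the commutator vanish, and therefore $\lim_{R\to\infty}(\Psi,[Q_R,\phi(f)]\Phi)=0$. Comparing with (\ref{gaugetrans}) gives $q(\Psi,\phi(f)\Phi)=0$ for all $\Psi,\Phi\in\D$, and density of $\D$ in $\Hi'$ together with continuity of $(\cdot,\cdot)$ promotes this to all $\Psi,\Phi\in\Hi'$.

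The hard part is the innocent-looking substitution step. Axiom O$'$ by itself does not ensure that $\phi(f)^{*}\Psi$ lies in $\Hi'$, nor that locality in the sense of Axiom IV extends to the joint algebra generated by $\phi$, $j^{\mu}$ and $F^{\mu\nu}$; both features must be either encoded in the stability and domain hypotheses or supplemented by a relative-locality assumption between $\phi$ and $F^{\mu\nu}$. A further subtlety, absent from Theorem \ref{stroMax}, is that the ambient form is indefinite, so the passage from $\D$ to all of $\Hi'$ requires care: one should argue that $\Phi\mapsto(\Psi,\phi(f)\Phi)$ defines a continuous sesquilinear form on $\Hi'$ (using Yngvason's bound (\ref{YngCond}) for the relevant Wightman functions) before concluding weak vanishing on the whole subspace.
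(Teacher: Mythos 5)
Your overall strategy coincides with the paper's: sandwich the charge commutator between vectors of $\D$, use the Gupta--Bleuler condition to make the sandwiched limit vanish, compare with (\ref{gaugetrans}) to obtain $q\,(\Phi,\phi(f)\Psi)=0$, and extend by density. The genuine difference is \emph{where} Gupta--Bleuler is inserted. You substitute $j^{0}\mapsto\partial^{\nu}F_{\nu 0}$ inside each of the two matrix elements of the expanded commutator, which forces you to move $\phi(f)$ across the indefinite form and hence to introduce $\phi(f)^{*}$ --- precisely the step you then (rightly) flag as unsupported by the hypotheses. The paper avoids this entirely: it sets $B^{\nu}:=\partial_{\mu}F^{\mu\nu}-j^{\nu}$, notes that the computation (\ref{noMaxE}) of Theorem \ref{stroMax} together with (\ref{gaugetrans}) gives $\lim_{R\to\infty}[B^{0}(f_{u}\te f_{R}),\phi(f)]=q\,\phi(f)$, and then reads the Gupta--Bleuler condition operatorially as $B^{\nu}(f)\D=0$; since $\D$ is stable under $\phi(f)$, both orderings $B^{0}(g)\phi(f)\Psi$ and $\phi(f)B^{0}(g)\Psi$ vanish for $\Psi\in\D$, so $(\Phi,[B^{0}(g),\phi(f)]\Psi)=0$ without ever forming an adjoint. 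The fix for your ``hard part'' is therefore not to prove $\phi(f)^{*}\Psi\in\Hi'$, but to commute $\phi(f)$ with $B^{0}$ rather than with $Q_{R}$ alone. Your remaining caveats are fair, but they apply to the paper's proof as well: the vanishing of $\lim_{R\to\infty}[\partial_{\mu}F^{\mu 0}(f_{u}\te f_{R}),\phi(f)]$ is simply quoted from (\ref{noMaxE}) and tacitly uses locality of $\phi$ relative to $F^{\mu\nu}$ (the strong Maxwell equation being unavailable here), and the paper's passage from $\D$ to all of $\Hi'$ is likewise asserted ``by density'', with the continuity question you raise left implicit.
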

\begin{proof}
We define $B^{\nu}:=\partial_{\mu}F^{\mu \nu}-j^{\nu}$ and note that on account of Theorem \ref{stroMax},
we obtain the limit $\lim_{R \rightarrow \infty}[B^{0}(f_{u} \te f_{R}),\phi(f)]=q\phi(f)$ because 
$\lim_{R \rightarrow \infty}[\partial_{\mu}F^{\mu 0}(f_{u} \te f_{R}),\phi(f)]=0$ from (\ref{noMaxE}) 
and the definition of a charged field in (\ref{gaugetrans}). This entails that 
\begin{equation}
\lim_{R \rightarrow \infty} ( \Phi, [B^{0}(f_{u} \te f_{R}),\phi(f)] \Psi )
= q ( \Phi, \phi(f) \Psi ). 
\end{equation}
But because the lhs vanishes for $\Phi, \Psi \in \D$ due to the Gupta-Bleuler condition $B^{\nu}(f)\D=0$
and $\D \subset \Hi'$ is dense in $\Hi'$, we have either $q=0$ or $( \Phi, \phi(f) \Psi ) =0$ for 
$\Phi, \Psi \in \Hi'$.
\end{proof}
This tells us that neither the strong nor the weak Gupta-Bleuler form of Maxwell's equations with nontrivial currents seem to be compatible
with the axiomatic framework. 

\subsection{Haag's theorem in QED} 
In the light of the results discussed in this section, which undoubtedly pertain to QED, even though charged scalar field were used, 
one may suggest two extreme stances regarding Haag's theorem for QED.
\begin{enumerate}
\item[(St1)] The source-free Maxwell's equations in the sense of the covariant formalism due to Gupta and Bleuler,
\begin{equation}\label{Maxw}
 (\Hi_{\mbox{\tiny phys}},F^{\mu \nu}(\partial_{\mu}f)\Hi_{\mbox{\tiny phys}}) = 0  \hs{2} \forall f \in \Sw(\Mi)
\end{equation}
      cohere with Wightman's axiomatic framework in the adapted form. Consequently, Haag's theorem applies fully to QED.
      No one knows whether and in what sense differential equations for the interacting operator fields are fulfilled. 
      We should therefore not impose such equations and require only (\ref{Maxw}) for the quantisation of the free Maxwell field.
\item[(St2)] Maxwell's equations with nontrivial electromagnetic currents 
\begin{equation}
 (\Hi_{\mbox{\tiny phys}},[F^{\mu \nu}(\partial_{\mu}f)+j^{\nu}(f)]\Hi_{\mbox{\tiny phys}}) = 0  \hs{2} \forall f \in \Sw(\Mi)
\end{equation}      
      are essential for QED and should not be abandoned. These equations are not compatible with Wightman's axioms even in the gauge-adapted version
      because they forbid charged fields. Haag's theorem does therefore not hold in the strict sense of coinciding vacuum expectation values in 
      QED\footnote{Haag's theorem relies heavily on the validity of the Wightman framework.}. 
\end{enumerate}
We are inclined to a position that partially embraces the first stance while leaving room for the second. This is
to be understood as follows. 

Given the arguments presented in §\ref{sec:Fock} against the existence of the interaction picture in its 
canonical form, we are in no doubt that the interaction picture cannot exist in QED. The UV divergences
one encounters in perturbation theory of QED show what happens if one assumes otherwise. Yet the 
canonical theory does not stop there but changes the rules of the game drastically by renormalisation.

While unrenormalised QED almost surely falls prey to Haag's theorem, we contend that \emph{renormalised}, ie
physical QED, which has to be clearly distinguished from its unrenormalised cousin, is safe from it:

\begin{enumerate}
\item[1.] Renormalised QED yields nontrivial results in perturbation theory that agree nicely with
         experiment.
\item[2.] As will become clear in §\ref{sec:RCircH},
         we contend that Haag's theorem cannot be applied to renormalised QED for the very
         reason that it is \emph{almost surely not unitary equivalent to a free theory}, where we remind the reader that this form of
         equivalence is one of Haag's theorem's core provisions. 
\item[3.]The validity of Wightman's axiomatic framework is dubitable given the results by Strocchi and
         collaborators discussed in this section, especially Theorem \ref{NoCf}.  
\end{enumerate}
Besides, and this opens flanking fire against one of Wightman's axioms in QED which (to the ken of the author) seems to have been so far been overlooked: 
the spectral condition for photons (Axiom I). Because the t-channel of fermion interactions contributes to the four-point scattering amplitude in QED, one simply 
needs and routinely utilises the concept of \emph{virtual, that is, spacelike photons}!
If we consider what the spectral condition (Axiom I) entails for the vacuum expectation values of a general QFT and take into account the results 
that perturbative approaches have brought to light so far, \emph{the contrast could hardly be starker: the spectral condition for photons is probably 
never satisfied}, at least to the best of our knowledge!

Regarding Maxwell's equations, a possibly existent reconstructed renormalised QED may have observables 
which in some subspace of $\Hi_{\mbox{\tiny phys}}$ satisfy Maxwell's equations, albeit within a form of the Wightman framework that, to this day, is 
still inconceivable.

\part{Renormalisation wrecks unitary equivalence}\label{part3}

\section{The theorem of Gell-Mann and Low}\label{sec:GeMaLo}                                         
The operations performed in the following derivation of the Gell-Mann-Low formula are purely 
formal and not well-defined in quantum field theory (QFT). Gell-Mann and Low simple assume that both 
free and interacting Hamiltonians are given as bona fide operators on a common Hilbert space $\Hi$ with their individual 
ground states, ie the vacua. 

\subsection{Review of the interaction picture}  
We first remind ourselves of the 3 pictures in quantum theory, namely Schr\"odinger, Heisenberg and 
interaction picture. The latter is also known as Dirac picture. Let 
\begin{equation}
 \varphi(t,\mathbf{x}) = e^{iHt} \varphi(\mathbf{x}) e^{-iHt}  \hs{2} (\mbox{Heisenberg picture})
\end{equation}
be the Heisenberg picture field and
\begin{equation}
 \varphi_{0}(t,\mathbf{x}) = e^{iH_{0}t} \varphi(\mathbf{x}) e^{-iH_{0}t}   \hs{1.6} (\mbox{interaction picture})
\end{equation}
the interaction picture field both at time $t$, where $\varphi(\mathbf{x})$ is the time-independent Schr\"odinger picture field,
$H$ the Hamiltonian of the full interacting theory and $H_{0}$ that of the free theory. Both pictures are consequently intertwined 
according to 
\begin{equation}\label{HeiInt}
 \varphi(t,\mathbf{x}) = e^{iHt} \underbrace{e^{-iH_{0}t}  \varphi_{0}(t,\mathbf{x}) e^{iH_{0}t}}_{\varphi(\mathbf{x})} e^{-iHt} 
 = V(t)^{\dagger}\varphi_{0}(t,\mathbf{x}) V(t)
\end{equation}
where the operator fields coincide at $t=0$. The idea is borrowed from classical mechanics: from 
looking at a particle system on a time slice one cannot infer whether its constituents interact. This is
only possible by watching how things change in the course of time, ie how the system \emph{evolves} in time. 
Some authors, especially in older textbooks like \cite{ItZu80}, replace $V(t)$ in (\ref{HeiInt}) by the
time-ordered exponential
\begin{equation}
 \U(t,-\infty) = \Ti e^{-i\int_{-\infty}^{t} d\tau \ H_{I}(\tau)}
\end{equation}
where $H_{I}(t)$ is the interacting part of the Hamiltonian in terms of the incoming free 
field $\varphi_{in}$ which takes the role of $\varphi_{0}$ in their treatment. 
This incoming free field then agrees with the Heisenberg field in the remote past $t \rightarrow -\infty$ and 
not, as in our case, on a time slice. 

Notwithstanding this detail, both formulations purport to employ a unitary map that relates the interacting 
Heisenberg field $\varphi$ to the free interaction picture field $\varphi_{0}$ such that (\ref{HeiInt}) 
holds for any time. 
Notice that Haag's theorem asks for less, namely that the unitary relation is given at one fixed instant.

To recall how the interaction picture states are defined and evolved in time, we consider the expectation 
value 
\begin{equation}
 \la \Psi | \varphi(t,\mathbf{x}) \Psi \ra = \la \Psi | V(t)^{\dagger}\varphi_{0}(t,\mathbf{x}) V(t) \Psi \ra 
 = \la V(t) \Psi | \varphi_{0}(t,\mathbf{x}) V(t) \Psi \ra,
\end{equation}
where $\Psi$ is a stationary reference Heisenberg state. This expression suggests that 
$\Psi(t) = V(t) \Psi$ is an interaction picture state at time $t$, evolved in this picture from $\Psi$.
A transition from one interaction picture state $\Psi(t)$ into another $\Psi(s)$ at time $s$, is thus 
governed by the evolution operator $\mathsf{U}(t,s)$ given by
\begin{equation}
\Psi(t) = V(t) \Psi = V(t) V(s)^{\dagger} \Psi(s) =: \mathsf{U}(t,s) \Psi(s) .
\end{equation}
The interaction picture state $\Psi(s)$ at time $s$ is thus time-evolved to time $t$ by the operator 
\begin{equation}
 \mathsf{U}(t,s) =  V(t) V(s)^{\dagger} = e^{iH_{0}t}  e^{-iH(t-s)} e^{-iH_{0}s}. 
\end{equation}
Let $\Omega_{0}$ be the vacuum of $H_{0}$, ie $H_{0}\Omega_{0} = 0$, $\{ \Psi_{n} \}$ an eigenbasis of the Hamiltonian $H$ and $E_{n}$ the 
corresponding eigenvalues, ie $H\Psi_{n} = E_{n} \Psi_{n}$. 
The identity operator $\id_{\Hi}$ is assumed to have a spectral decomposition which we write as 
$\id_{\Hi} = \sum_{n \geq 0}\mathsf{E}_{n}$, in which $\mathsf{E}_{n} = \la \Psi_{n} | \ \cdot \ \ra \Psi_{n}$
are the projectors of the presumed energy eigenbasis. $E_{0}=0$ is the ground state energy and $\Psi_{0}$
the vacuum of $H$.  

Now here is a crucial identity for the Gell-Mann-Low formula: the two vacua $\Psi_{0}$ and $\Omega_{0}$  
are mapped into each other by
\begin{equation}\label{vac}
 \mathsf{U}(t,\pm \infty)\Omega_{0} = c_{0} V(t)\Psi_{0},
\end{equation}
with $c_{0}:= \la \Psi_{0} | \Omega_{0} \ra$ being the overlap between the two vacua. This is made plausible
by considering the following computation
\begin{equation}\label{Rie}
 \mathsf{U}(t,s)\Omega_{0} =V(t)e^{iHs}\Omega_{0} = \sum_{n \geq 0} V(t) e^{iHs} \mathsf{E}_{n}\Omega_{0} 
 = c_{0} V(t) \Psi_{0} + \sum_{n \geq 1} V(t) e^{iE_{n}s} \la \Psi_{n} | \Omega_{0} \ra \Psi_{n}  ,
\end{equation}
where we have used $H_{0}\Omega_{0}=0$, slipped in the spectral decomposition of the identity operator and 
utilised $H \Psi_{0}=0$. The limit $s \rightarrow \pm \infty$ then forces the remainder of the 
sum to vanish on account of the Riemann-Lebesgue lemma from complex analysis\footnote{We prefer this 
argument to the one often used when letting $s\rightarrow \infty$ in $s (1+i\varepsilon)$ in (\ref{Rie}), cf.\cite{PeSch95} } 
(by 'analogy' because $E_{n} \neq 0$ for $n>0$). Then (\ref{vac}) follows.

\subsection{Gell-Mann-Low formula}
For the two-point function, this entails the following. First consider\footnote{The conventions of the axiomatic approach 
in the exposition of Haag's theorem in \cite{StreatWi00}, which we have also used, and the Gell-Mann-Low formalism differ slightly:
$V$ corresponds to $V(t)^{\dagger}$ and hence $V^{-1}$ to $V(t)$. We apologise for this notational inconvenience.}
\begin{equation}\label{GeMann}
\begin{split}
 \la \Psi_{0} | \varphi(x_{1})\varphi(x_{2}) \Psi_{0} \ra
 &= \la \Psi_{0} | V(t_{1})^{\dagger} \varphi_{0}(x_{1}) V(t_{1}) V(t_{2})^{\dagger} \varphi_{0}(x_{2})
 V(t_{2}) \Psi_{0} \ra \\
 &= \la V(t_{1}) \Psi_{0} | \varphi_{0}(x_{1}) V(t_{1}) V(t_{2})^{\dagger} \varphi_{0}(x_{2}) V(t_{2}) 
 \Psi_{0} \ra \\
 &= |c_{0}|^{-2} 
 \la \mathsf{U}(t_{1},+\infty) \Omega_{0} | \varphi_{0}(x_{1}) \mathsf{U}(t_{1},t_{2})\varphi_{0}(x_{2})
 \mathsf{U}(t_{2},-\infty) \Omega_{0} \ra \\
&=  |c_{0}|^{-2}  
\la  \Omega_{0} |\mathsf{U}(+\infty,t_{1}) \varphi_{0}(x_{1}) \mathsf{U}(t_{1},t_{2})\varphi_{0}(x_{2})
 \mathsf{U}(t_{2},-\infty) \Omega_{0} \ra .
\end{split}
\end{equation}
Next, note that no time-ordering is necessary so far and that the constant $c_{0}= \la \Psi_{0} | \Omega_{0} \ra$ can be expressed by using (\ref{vac}) 
and, applying the group law $\mathsf{U}(t,s)\mathsf{U}(s,t') = \mathsf{U}(t,t')$, we obtain:
\begin{equation}
|c_{0}|^{2} = \la \mathsf{U}(t,+\infty) \Omega_{0} | \mathsf{U}(t,-\infty) \Omega_{0} \ra
= \la  \Omega_{0} | \mathsf{U}(+\infty,-\infty) \Omega_{0} \ra 
= \la \Omega_{0} | \mathsf{S} \Omega_{0} \ra,
\end{equation}
where $\mathsf{S}:=\mathsf{U}(+\infty,-\infty)$ is the S-matrix in the interaction picture. For the next
step we are coerced to time-order the two field operators! Only once this is done can we piece together 
the S-matrix from the evolution operators in the last line of (\ref{GeMann}) to replace it by
the \emph{time-ordered product}, denoted by $\mathsf{T}\{ ... \}$, ie
\begin{equation}
\begin{split}
 \mathsf{U}(+\infty,t_{1}) \varphi_{0}(x_{1})  \mathsf{U}(t_{1},t_{2}) & \varphi_{0}(x_{2})  \mathsf{U}(t_{2},-\infty)   \\
 &= \mathsf{T}\{ \mathsf{U}(+\infty,t_{1})  \mathsf{U}(t_{1},t_{2}) \mathsf{U}(t_{2},-\infty)\varphi_{0}(x_{1}) \varphi_{0}(x_{2}) \}  \\
 &= \mathsf{T}\{ \mathsf{U}(+\infty,-\infty)\varphi_{0}(x_{1}) \varphi_{0}(x_{2}) \} 
 = \mathsf{T}\{ \mathsf{S} \ \varphi_{0}(x_{1}) \varphi_{0}(x_{2}) \} 
\end{split}
\end{equation}
and arrive at
\begin{equation}
 \la \Psi_{0} | \Ti \{ \varphi(x_{1})\varphi(x_{2}) \} \Psi_{0} \ra 
 = \frac{\la \Omega_{0} | \mathsf{T}\{ \mathsf{S} \ \varphi_{0}(x_{1}) \varphi_{0}(x_{2}) \}
 \Omega_{0} \ra}{\la \Omega_{0} | \mathsf{S}\Omega_{0} \ra}.
\end{equation}
For the $n$-point functions this is easily generalised to 
\begin{equation}\label{GeMaFo}
 \la \Psi_{0} | \mathsf{T}\{ \varphi(x_{1}) ...  \varphi(x_{n}) \} \Psi_{0} \ra 
 = \frac{\la \Omega_{0} |\mathsf{T}\{ \mathsf{S} \ \varphi_{0}(x_{1})  ... 
  \varphi_{0}(x_{n}) \} \Omega_{0} \ra}{\la \Omega_{0} | \mathsf{S}\Omega_{0} \ra} 
  \hs{0.4} (\mbox{Gell-Mann-Low formula})
\end{equation}
which finally is the Gell-Mann-Low formula.  

Haag's theorem directly controverts this formula or at least says $\mathsf{S}\Omega_{0} = \la \Omega_{0} | \mathsf{S} \Omega_{0} \ra \Omega_{0}$, ie 
that provided the above S-matrix really exists, then it must act trivially on the vacuum. 
As this is not acceptable, something must be wrong. In particular, the constant $c_{0}$ should vanish 
if the van Hove phenomenon occurs. Yet canonical perturbation theory depicts the probabilty of this 
'vacuum transition' as a divergent series of divergent integrals. The Feynman rules associate these integrals
with vacuum graphs such that  
\begin{equation}
|\la \Omega_{0} | \Psi_{0} \ra |^{2} = |c_{0}|^{2} = \la \Omega_{0} | \mathsf{S} \Omega_{0} \ra 
= \exp ( \mbox{ $\sum$ \texttt{vacuum graphs}}) .
\end{equation}
However, standard combinatorial arguments now claim that this problematic exponential is cancelled 
in (\ref{GeMaFo}) since it also appears and fortunately factors out in the numerator of the rhs. So no matter whether the van Hove
phenomenon occurs or not, it is irrelevant for the Gell-Mann-Low formula because 'van Hove cancels out'.

Notice that Haag's theorem does not know anything about which interacting Hamiltonian $H$ we choose and how its interaction part  
\begin{equation}
 H_{\mbox{\tiny int}} := H - H_{0},
\end{equation}
let alone its interaction picture representation $H_{\mbox{\tiny I}}(t) = e^{iH_{0}t} H_{\mbox{\tiny int}} e^{-iH_{0}t}$ 
is concretely constructed. As we have mentioned in two preceding chapters, Haag's theorem does not 
point out the ill-definedness of an interaction Hamiltonian like 
\begin{equation}\label{Hint}
 H_{I}(t) = \frac{g}{4!} \int d^{3}x \ \varphi_{0}(t,\mathbf{x})^{4}
 =: \int d^{3}x \ \Ha_{I}(t,\mathbf{x}) ,
\end{equation}
which is a monomial of the interaction picture field $\varphi_{0}(t,\mathbf{x})$.

Haag's theorem instead makes a very general statement, abstracting from the special form the Hamiltonian in a specific scalar theory.
All it says is this: any unitary transformation between a free and another 
sharp-time Wightman field must be such that all their vacuum expectation values agree. 

Now, the Gell-Mann-Low formula (\ref{GeMaFo}) asserts the contrary. The reason it does so is that it 
builds upon the wrong assumption that the interaction picture exists and that the interaction picture's
time evolution operator
\begin{equation}
\mathsf{U}(t,s) =  e^{iH_{0}t}  e^{-iH(t-s)} e^{-iH_{0}s} = \Ti e^{-i\int_{s}^{t}d\tau H_{I}(\tau)} 
\end{equation}
is well-defined.

\section{The CCR question}\label{sec:CCRQuest}                                                       
We will now briefly discuss the question whether the canonical formalism 
provides the tools to tackle the CCR question for an interacting field $\varphi$. Since the Gell-Mann-Low
formula (\ref{GeMaFo}) is designed to attain vacuum expectation values for the interacting Heisenberg picture
field $\varphi$ from those of the free interaction picture field $\varphi_{0}$, we may try and employ it.
So the question is: does the field $\varphi$ satisfy the CCR 
\begin{equation}
[\varphi(t,f),\varphi(t,g)]=0=[\dot{\varphi}(t,f),\dot{\varphi}(t,g)] , \hs{1} 
[\varphi(t,f),\dot{\varphi}(t,g)]=i(f,g)
\end{equation}
(in the spatially smoothed-out form) for all Schwartz functions $f,g$ in space at some time $t$? 
We have chosen $\pi(t,f)=\dot{\varphi}(t,f)$ for the conjugate momentum field which corresponds to  
Baumann's choice (see §\ref{sec:PowTHM}). 

Whatever the momentum field's form, we may assume that it involves the time derivative. This suffices to conlcude that
the Gell-Mann-Low formula is not apt to answer the CCR question. Nor can it be used to show that $\varphi$
is local. The reason is simply that \emph{time ordering} is indispensible for the Gell-Mann-Low
identity (\ref{GeMaFo}). 

First consider the case which can formally be treated, namely the first commutator of the CCR, 
\begin{equation}\label{GeMaFo1}
\begin{split}
\la \Psi_{0} | [\varphi(t,f),\varphi(t,g)] \Psi_{0} \ra 
&= \la \Psi_{0} | \Ti \{ [\varphi(t,f),\varphi(t,g)] \} \Psi_{0} \ra  \\
&= \frac{\la \Omega_{0} |\Ti \{ \mathsf{S} \ [\varphi_{0}(t,f),\varphi_{0}(t,g)] \}
 \Omega_{0} \ra}{\la \Omega_{0} | \mathsf{S}\Omega_{0} \ra} = 0,
 \end{split}
\end{equation}
because $[\varphi_{0}(t,f),\varphi_{0}(t,g)]=0$ for the free field $\varphi_{0}$ and time ordering
does not change anything in the first step. 
If $\Psi_{0}$ is cyclic for the field algebra of $\varphi$, one may argue that additionally inserting any number of 
already appropriately time-ordered field operators between the commutator and the two vacua on the lhs of 
(\ref{GeMaFo1}) does not change the fact that the corresponding rhs vanishes. This then entails $[\varphi(t,f),\varphi(t,g)]=0$, 
ie $\varphi$ exhibits a weak form of locality one might call \emph{time-slice locality}. 

To tackle the other commutators of the CCR, let us next consider
\begin{equation}\label{CCRGeMa}
 \frac{1}{\varepsilon}[\varphi(t,f),\varphi(t+\varepsilon,g)-\varphi(t,g)]
 = \frac{1}{\varepsilon}[\varphi(t,f),\varphi(t+\varepsilon,g)].
\end{equation}
Because this expression vanishes when time-ordered, whatever sign $\varepsilon \neq 0$ takes, we cannot apply
the Gell-Mann-Low formula as it relies on time ordering. In other words, even if we try weaker 
concepts of differentiation like left and right derivatives, ie taking the limits '$\varepsilon \uparrow 0$' 
or '$\varepsilon \downarrow 0$' instead of '$\varepsilon \rightarrow 0$', the time-ordering operator renders
all these attempts futile. Thus, the CCR question cannot be answered by the Gell-Mann-Low formula and consequently also not by perturbation theory
as we know it today.

\section{Divergencies of the interaction picture}\label{sec:DivInt}                                  
Because we encounter prolific divergences when the Gell-Mann-Low formula is expanded in perturbation theory, 
the contradiction between Haag's theorem and the Gell-Mann-Low formua is resolved. Either the interaction
picture is well-defined and trivial or must be ill-defined. In §\ref{sec:Fock} on Fock space it had already dawned on
us that the latter is the case, the divergences only confirm it. 

Before we review canonical renormalisation and see how it remedies the divergences in the next section, 
we remind ourselves in this section how they are incurred in the first place. 

\subsection{Divergences}\label{Diver} 
The problem of defining the interaction part of the Hamiltonian in (\ref{Hint}) appears on the agenda as 
soon as one attempts to put the Gell-Mann-Low identity (\ref{GeMaFo}) to use in perturbation theory. 
That is, when the perturbative expansion of the S-matrix, namely \emph{Dyson's series}
\begin{equation}\label{DysSeries}
 \mathsf{S} = 1 + \sum_{n\geq 1} \frac{(-i)^{n}}{n!} \int d^{4}x_{1} \ \dotsc \int d^{4}x_{n} \ 
 \mathsf{T}\{ \Ha_{I}(x_{1}) \ \dotsc \ \Ha_{I}(x_{n}) \}
\end{equation}
is employed (in four-dimensional spacetime).

We have already seen in the discussion of Theorem \ref{WightTHM}, ie Wightman's no-go theorem, that quantum fields are too singular to be defined at 
sharp spacetime points. Yet we have also seen in §\ref{sec:WightAx} that at least their $n$-point functions can 
be given a meaning in the sense of distribution theory (Wightman distributions). 
However, powers of a free field at one spacetime point are still ill-defined because their vacuum expectation values are divergent, eg
\begin{equation}\label{Power2}
\la \Omega_{0} | \varphi_{0}(x)^{2} \Omega_{0} \ra = \infty .
\end{equation}
The cure for this lies in defining so-called \emph{Wick powers}, given for a free field (!) recursively by
$:\varphi_{0}(x): = \varphi_{0}(x)$,
\begin{equation}\label{Wick1}
 :\varphi_{0}(x)^{2}: = \lim_{y \rightarrow x} \{ \varphi_{0}(x)\varphi_{0}(y) -
\la \Omega_{0} | \varphi_{0}(x)\varphi_{0}(y) \Omega_{0} \ra  \} 
\end{equation}
and 
\begin{equation}\label{Wick2}
 :\varphi_{0}(x)^{n}: = \lim_{y \rightarrow x} \{ :\varphi_{0}(x)^{n-1}: \varphi_{0}(y) -
 (n-1) \la \Omega_{0} | \varphi_{0}(x)\varphi_{0}(y) \Omega_{0} \ra  :\varphi_{0}(x)^{n-2}: \},   
\end{equation}
where the limit is to be understood in the weak sense, ie as a sesquilinear form on Hilbert space
\cite{Stro13}. In Euclidean field theories, Wick powers are defined mutatis mutandis in the obvious way, ie by the replacement 
$\la \Omega_{0} | \dotsc \Omega_{0} \ra \rightarrow \la \dotsc \ra_{0}$. 

However, for operator fields, Wick powers are equivalent to the well-known normal-ordered product in 
terms of annihilators and creators, namely, in terms of negative and positive frequency pieces,  
\begin{equation}
 : \varphi_{0}(x_{1}) \ \dotsc \ \varphi_{0}(x_{n}) : 
  = \sum_{J \subseteq \{1, \dotsc ,n\}} \prod_{j \in J} \varphi^{-}_{0}(x_{j}) 
 \prod_{i \notin J} \varphi^{+}_{0}(x_{i})  
\end{equation}
where the limit $x_{j} \rightarrow x$ is subsequently taken inside an expectation value.

Products of Wick-ordered monomials evaluate to a product of free two-point functions which are well-defined in the sense of distributions \cite{BruFK96}:
\begin{equation}\label{norord}
\la \Omega_{0} | :\varphi_{0}(x_{1})^{n_{1}}: \dotsc :\varphi_{0}(x_{k})^{n_{k}}: \Omega_{0} \ra 
= \sum_{G \in \G(n_{1}, \dotsc, n_{k})}  c(G) \prod_{l \in E(G)} \Delta_{+}(x_{s(l)}-x_{t(l)};m^{2}),
\end{equation}
in which the notation has the following meaning:
\begin{itemize}
 \item $\G(n_{1}, \dotsc, n_{k})$ is the set of all directed graphs without self-loops consisting of $k$ vertices with valencies $n_{1}, \dotsc, n_{k}$,
       respectively,
\end{itemize}
ie the $i$-th vertex, associated with the spacetime point $x_{i} \in \Mi$, has $n_{i}$ lines attached to it. 
\begin{itemize}
 \item $E(G)$ is the edge set of the graph $G$,
 \item $s(l)$ and $t(l)$ are source and target vertex of the line $l \in E(G)$.
\end{itemize}
The factor $c(G)$ is of purely combinatorial nature and is not of import to our discussion here (see \cite{BruFK96}).

Thus, one may alter the definition of the interaction picture Hamiltonian (\ref{Hint}) into the Wick-ordered form
\begin{equation}\label{Hint'}
  \Ha_{I}(x)=\frac{g}{4!} : \varphi_{0}(x)^{4}:
\end{equation}
to better the understanding of Dyson's series (\ref{DysSeries}). In terms of Feynman graphs, this means that self-loops evaluate to zero. 

Yet the Gell-Mann-Low formula (\ref{GeMaFo}) and Dyson's series (\ref{DysSeries}) require \emph{time-ordered 
vacuum expectation values}, ie we need the \emph{time-ordered} versions of the (\ref{norord}) which by 
virtue of Wick's theorem \cite{Wic50} evaluates formally to \cite{Fred10}
\begin{equation}\label{TimeNor}
\la \Omega_{0} | \mathsf{T}\{ :\varphi_{0}(x_{1})^{n_{1}}: \dotsc :\varphi_{0}(x_{k})^{n_{k}}: \} \Omega_{0} \ra 
= \sum_{G \in \G(n_{1}, \dotsc, n_{k})}  c(G) \prod_{l \in E(G)} i\Delta_{F}(x_{s(l)}-x_{t(l)};m^{2})
\end{equation}
and requires us to use time-ordered two-point functions, known as Feynman propagators (in position space):
\begin{equation}
i\Delta_{F}(x-y;m^{2}) := \la \Omega_{0} | \mathsf{T}\{ \varphi_{0}(x) \varphi_{0}(y) \} \Omega_{0} \ra ,
\end{equation}
given by the distribution  
\begin{equation}\label{FeynProp}
\Delta_{F}(x-y;m^{2}) = \lim_{\epsilon \downarrow 0}  
             \int \frac{d^{4}p}{(2\pi)^{4}} \frac{e^{-ip(x-y)}}{p^{2}-m^{2} + i \epsilon}.
\end{equation}
Note that the Feynman propagator has the property $\Delta_{F}(x-y;m^{2})=\Delta_{F}(y-x;m^{2})$, on 
account of the time-ordering. 
As is well known, products of these objects are in general ill-defined and are the origin of UV divergences
in perturbation theory \cite{He66}, in contrast to products of \emph{Wightman distributions} 
$\Delta_{+}(x-y;m^{2})= \la \Omega_{0} | \mathsf{T}\{ \varphi_{0}(x) \varphi_{0}(y) \} \Omega_{0} \ra$. 
Thus, the healing effect of Wick ordering has been reversed by the time-ordering. 

If we nevertheless insert Dyson's series (\ref{DysSeries}) into the Gell-Mann-Low formula (\ref{GeMaFo}) and 
use the interaction Hamiltonian (\ref{Hint'}), we get 
\begin{equation}\label{DysGeMa}
\begin{split}
& \la  \Psi_{0} |  \mathsf{T}\{ \varphi(x_{1}) \dotsc  \varphi(x_{n}) \} \Psi_{0} \ra  \\
&= \frac{1}{|c_{0}|^{2}} \sum_{l\geq 0} \frac{(-ig)^{l}}{(4!)^{l}l!} \int d^{4}y_{1} \ \dotsc \int d^{4}y_{l} \ 
 \la \Omega_{0} | \mathsf{T}\{ :\varphi_{0}(y_{1})^{4}: \ \dotsc :\varphi_{0}(y_{l})^{4}: 
\  \varphi_{0}(x_{1}) \dotsc  \varphi_{0}(x_{n})\} \Omega_{0} \ra 
 \end{split}
\end{equation}
which is a (formal) power series in the parameter $g$. It is ill-defined even if viewed as an asymptotic 
series: only its first few coefficients exist while the remainder consists of badly divergent integrals. 
In view of Haag's theorem, this is no surprise, though. We would, in fact, be confronted with a serious 
puzzle had we found a well-defined expression! Luckily, (\ref{DysGeMa}) is ill-defined.

\subsection{Regularisation}
Contrary to the commonly adopted view, the \emph{combinatorial approach} takes the following pragmatic stance. What (\ref{DysGeMa}) confronts 
us with, is an expression containing \emph{combinatorial data about a certain class of distributions} in the form of a 
formal power series. In this sense it is not meaningless. Let us simply write the series (\ref{DysGeMa}) as
\begin{equation}\label{formDysGe}
\la \Psi_{0} | \mathsf{T}\{ \varphi(x_{1}) \dotsc  \varphi(x_{n}) \} \Psi_{0} \ra 
= \sum_{G \in \G_{n}} g^{|V(G)|} \ \prod_{\ell \in L(G) } i\Delta_{F}(\ell) \prod_{\gamma \in C(G)} (\Mi_{\gamma},\nu_{\gamma}),
\end{equation}
where
\begin{itemize}
 \item $\G_{n}$ is the set of all scalar Feynman graphs, disconnected as well as connected, with $n$ external ends and vertices of the four-valent type, 
       ie '$\graph{0.1}{0}{0.1}{ver}{0.1}$'. An example is the graph
\begin{equation}\label{G}
G = \graph{0.2}{-0.4}{0.3}{G}{0.5} 
\end{equation}
       which has $n=6$ external ends and $|V(G)|=2$ vertices.
 \item $L(G)$ is the set of connected pieces with no vertex, ie freely floating lines which connect two external points (the example graph $G$ has one 
       such line). 
 \item $C(G)$ is the set of all connected pieces contained in the graph $G$ with at least one vertex ($G$ has one such piece), 
 \item $V(G)$ is the vertex set of $G$ and $|V(G)|$ its cardinality.
\end{itemize} 
The symbol $i\Delta_{F}(\ell)$ is a shorthand for the Feynman propagator (\ref{FeynProp}) associated to the 
line $\ell \in L(G)$.  The pair $(\Mi_{\gamma},\nu_{\gamma})$, referred to as \emph{formal pair},
stands for the corresponding divergent integral as follows: the first component $\Mi_{\gamma} = \Mi^{|\gamma|}$ is the domain of 
integration\footnote{$|\gamma|$ denotes the loop number of the connected graph $\gamma \in C(G)$.}, 
while the second is the integrand written as a differential form. If the integral is convergent, we identify the formal pair
with the integral it represents and write
\begin{equation}\label{fop}
 (\Mi_{\gamma},\nu_{\gamma}) = \int_{\Mi_{\gamma}} \nu_{\gamma} =: \int \nu_{\gamma}. 
\end{equation}
For the sake of a neater and parsimonious notation, we suppress the dependence on the spacetime points $x_{1}, \dotsc, x_{n}$.  
In the case $n=2$ and $v=2$ we have, for example, the connected Feynman graph
\begin{equation}\label{sunset}
 \gamma = \graph{-0.1}{-0.55}{0.5}{ph4p2}{0}
\end{equation}
with differential form
\begin{equation}\label{diform}
 \nu_{\gamma}(x_{1},x_{2},y_{1},y_{2}) = -\frac{1}{3!} 
 i\Delta_{F}(x_{1}-y_{1}) \left(i\Delta_{F}(y_{1}-y_{2})\right)^{3} i\Delta_{F}(y_{2}-x_{2}) 
 d^{4}y_{1} d^{4}y_{2} 
\end{equation}
and $\Mi_{\gamma}=\Mi^{2}$ for the two Minkowski integration variables $y_{1},y_{2} \in \Mi$. 
Clearly, the corresponding formal pair is not a convergent integral which cannot be given a meaning as
a distribution. 

In cases like this where a formal pair represents a divergent integral, one must \emph{regularise} it. 
This is done in various ways. All regularisation methods have in common that they alter the differential 
form\footnote{Dimensional regularisation is no exception. It fiddles with the dimension parameter in polar coordinates. }. 
Not all of them have a clear physical interpretation. If we take $h_{\varepsilon} \in \De(\Mi)$, ie a 
Schwartz function of compact support such that $h_{\varepsilon}(x)=0$ for all $x \in \Mi$ with Euclidean
length $||x|| < \varepsilon$, then 
\begin{equation}\label{regFP}
 \Delta^{\varepsilon}_{F}(x):= h_{\varepsilon}(x)\Delta_{F}(x)
\end{equation}
is a nicely behaving regularised Feynman propagator. Products of (\ref{regFP}) can be freely integrated.
This non-standard regularisation, which concerns us here only for the sake of the investigation, has 
eliminated two problems. First, by letting $h_{\varepsilon}$ have compact 
support, we stave off infrared divergences inflicted by the infinite volume of spacetime $\Mi$. Second, 
because $h_{\varepsilon}$ vanishes on a neighbourhood of the origin, we are also save from ultraviolet (UV)
singularities, ie short-distance singularities. 

The regularised version of the differential form in (\ref{diform}), with all Feynman propagators replaced by regularised ones, 
can now be construed as the distribution
\begin{equation}
 f \mapsto \int \nu^{\varepsilon}_{\gamma}(f):= \int d^{4}x_{1} \int d^{4}x_{2} \int 
 \nu^{\varepsilon}_{\gamma}(x_{1},x_{2},y_{1},y_{2})f(x_{1},x_{2})
\end{equation}
for $f \in \Sw(\Mi^{2})$. 

When all formal pairs in (\ref{formDysGe}) are regularised, now denoted by $(\Mi_{\gamma}, \nu^{\varepsilon}_{\gamma}) = \int \nu^{\varepsilon}_{\gamma}$, 
one obtains an asymptotic series in the coupling $g$ with coefficients representing distributions,  
\begin{equation}\label{formDysGeReg}
\la \Psi_{0} | \mathsf{T}\{ \varphi(x_{1}) \dotsc  \varphi(x_{n}) \} \Psi_{0} \ra_{\varepsilon} 
:= \sum_{G \in \G_{n}} g^{|V(G)|} \prod_{\ell \in L(G) } i\Delta^{\varepsilon}_{F}(\ell) \prod_{\gamma \in C(G)} \int \nu^{\varepsilon}_{\gamma},
\end{equation}
to be applied to a test function $f \in \Sw(\Mi^{n})$. Once that is done, one arrives at a formal power series with complex numbers as coefficients.  

Suppose we had a suitable resummation scheme for this series, then the result may enable us to define the rhs of (\ref{formDysGeReg}) as a distribution.
But its dependence on the regularisation function $h_{\varepsilon}$ is unacceptable, not least because Poincaré invariance is violated. 
To get rid of this dependence, the \emph{adiabatic limit} $\lim_{\varepsilon \rightarrow 0}h_{\varepsilon}=1$ is necessary, a condition that we additionally impose on $h_{\varepsilon}$.
Of course, since this restores the unfavourable original situation of divergent integrals, 
one has to modify the formal pairs in such a way that their limits lead to convergent and, moreover, 
Poincaré-invariant integrals. 

\subsection{Evade Haag's theorem by regularisation?} To summarise, we note that 
\begin{itemize}
 \item time-ordering, necessitated by the Gell-Mann-Low formula (\ref{GeMaFo}), leads
       inevitably to ill-defined products of Feynman propagators which then in turn bring about UV divergences;
 \item although regularisation helps, it is physically unacceptable.
\end{itemize}
Let us imagine for a moment we had chosen a Poincaré-invariant regularisation method and had found an explanation for 
why it is physically acceptable and satisfactory. Suppose further that the so-obtained two-point function
differs from the two-point function of the free field at spacelike distances. 
Then, contrary to what some might believe and wish for, by Haag's theorem (Theorem \ref{Hath}), we can be sure that the so-reconstructed theory is unitarily 
inequivalent to the free theory. Invoking the Stone-von Neumann theorem would be futile: something has to give, the provisos of both theorems
cannot form a coherent package!

\section{The renormalisation narrative}\label{sec:Ren}                                               
We shall review in this section the way renormalisation is nowadays canonically introduced and how it 
changes the Gell-Mann-Low perturbation expansion so drastically that the formal power series one obtains 
has finite coefficients. This outcome, however, brings back the conundrum posed by Haag's theorem because 
the same bold assertions about the interaction picture and the unitarity of its evolution operator are 
made yet again, albeit this time for the renormalised field. 

\subsection{Counterterms}
Now because one cannot accept the regularised theory as the final answer and removing the regulator brings back the
divergences, the canonical formalism backpedals at this point. To explain the necessary modifications, 
\emph{the story is changed} in a decisive way: the coupling $g$ is just the 'bare coupling', employed so 
far out of ignorance (in a sort of bare state of mind, one might say). 
The same holds for the \emph{bare mass} $m$ and the \emph{bare field} $\varphi$. These 'bare' quantities are deemed unphysical
because they have evidently led to divergences. 

Dyson explains this situation in \cite{Dys49b} by telling the amusing tale of an ideal observer
whose measuring apparatus, 'non-atomic' and therefore not comprised of atoms, is only limited by the 
fundamental constants $c$ and $\hbar$. Performing measurements at spacetime points which the fictitious 
observer is capable to determine with infinite precision, he finds infinite results. 

However, the physical ('renormalised') counterparts of the bare quantities
are constructed as follows. First, the bare field gets 'renormalised' by a factor $Z$:
\begin{equation}\label{wr}
\varphi_{r} := \frac{\varphi}{\sqrt{Z}},
\end{equation}
where the resulting field $\varphi_{r}$ is called \emph{renormalised field}, the new player that takes the place of the
'old', the bare field $\varphi$.
The so-called \emph{wavefunction renormalisation} or \emph{field-strength renormalisation constant} $Z$ is a function of several variables, 
in particular of the renormalised coupling $g_{r}$. Both renormalised coupling $g_{r}$ and mass $m_{r}$ are defined through
\begin{equation}\label{gmr}
g= g_{r} \frac{Z_{g}}{Z^{2}} \ , \hs{1}  m = m_{r} \sqrt{\frac{Z_{m}}{Z}} 
\end{equation}
in which two additional renormalisation constants are introduced: $Z_{g}$ is the 
\emph{coupling renormalisation constant} and $Z_{m}$ is the \emph{mass renormalisation constant}. 
Both are also functions of the renormalised coupling $g_{r}$. 

We shall now see that when the bare quantities are replaced by their physical, renormalised counterparts, 
\emph{the net effect is a modified interaction part} of the Hamiltonian and thus of the corresponding 
Lagrangian. Let us consider the original (bare) Lagrangian, given by
\begin{equation}
\La(\varphi) = \frac{1}{2}(\partial \varphi)^{2} - \frac{1}{2}m^{2}\varphi^{2} - \frac{g}{4!}\varphi^{4},
\end{equation}
formulated in terms of the bare quantities. In terms of the renormalised quantities, this same Lagrangian 
takes the form
\begin{equation}\label{Lren}
\La(\varphi) = \frac{1}{2}Z(\partial \varphi_{r})^{2} - \frac{1}{2}m_{r}^{2}Z_{m}\varphi_{r}^{2} - \frac{g_{r}}{4!}Z_{g}\varphi_{r}^{4} ,
\end{equation}
where only $\varphi$, $m$ and $g$ have been replaced in accordance with (\ref{wr}) and (\ref{gmr}). 
The next step is now to split this expression into two pieces: $\La_{r}$ and what is known as 
the \emph{counterterm} $\La_{ct}$, that is,
\begin{equation}\label{renLa}
\La = \La_{r} + \La_{ct}
\end{equation}
whose components are given by
$\La_{r} = \frac{1}{2}(\partial \varphi_{r})^{2} - \frac{1}{2}m_{r}^{2}\varphi_{r}^{2} 
- \frac{g_{r}}{4!}\varphi_{r}^{4}$ and the counterterm Lagrangian
\begin{equation}\label{ctLa}
\La_{ct} = \frac{1}{2}(Z-1)(\partial \varphi_{r})^{2} - \frac{1}{2}m_{r}^{2}(Z_{m}-1)\varphi_{r}^{2} 
- \frac{g_{r}}{4!}(Z_{g}-1)\varphi_{r}^{4}.
\end{equation}
The index '$r$' in $\La_{r}$ signifies that this part is composed of the renormalised quantities only. 
Itzykson and Zuber call $\La$ the 'renormalised Lagrangian' \cite{ItZu80}. As this bears some potential for confusion 
because $\La$ is equal to the original Lagrangian, Itzykson and Zuber admit that this is a very unfortunate 
denomination (ibidem, p.389). 

Unfortunately, this new splitting (\ref{renLa}) of the old Lagrangian marks a handwaving twist in the 
renormalisation narrative: $\varphi_{r}$ is now seen as the proper fully interacting Heisenberg picture 
field and is subsequently put through the same interaction picture transformation procedure as described in 
§\ref{sec:GeMaLo} for the bare field $\varphi$. The resulting interaction picture field $\varphi_{r,0}$ 
is again a free field, this time with mass $m_{r} \neq m$, but the obvious relation (\ref{wr}) to the old
free field brushed under the carpet. As already alluded to in §\ref{sec:whatodo}, differing 
masses imply unitary inequivalence. We shall come back to this point.

The new overhauled narrative starts out with the Lagrangian $\La=\La_{0,r}+\La_{int}$ 
consisting of new free and interacting parts 
\begin{equation}
 \La = \underbrace{\frac{1}{2}(\partial \varphi_{r})^{2} - \frac{1}{2}m_{r}^{2}\varphi_{r}^{2}}_{\La_{0,r}}
             \underbrace{- \frac{g_{r}}{4!}\varphi_{r}^{4} + \La_{ct}}_{\La_{int}} .
\end{equation}
The interaction term $\La_{int}$ is now subjected to the same interaction picture procedure as described 
in §\ref{sec:GeMaLo} and $m_{r}$ is the mass of the free interaction picture field. 
But the crucial difference is that the terms in $\La_{ct}$ have a nontrivial
coupling dependence: $Z,Z_{m}$ and $Z_{g}$ are themselves seen as functions of the new renormalised 
coupling $g_{r}$ and need to be expanded (in perturbation theory).

\subsection{Renormalised Gell-Mann-Low expansion}
As a result, the Gell-Mann-Low expansion is no more an exponential one in the renormalised coupling $g_{r}$. 
At every order of perturbation theory, the counterterm Lagrangian $\La_{ct}$ generates additional divergent integrals (or formal pairs) 
which, provided the coefficients of the renormalisation constants are chosen correctly, cancel the 
divergent integrals that the term
\begin{equation}
\La_{int,r} =  - \frac{g_{r}}{4!}\varphi_{r}^{4}
\end{equation}
produces (when transformed into the interaction picture representation).  The choice of $Z,Z_{m}$ and
$Z_{g}$ is not unique and needs physical conditions to be fixed\footnote{The underlying reason can be 
illustrated by the fact that a divergent integral $\int I$ is cancelled by $J_{C}:=C-\int I$ for any constant $C$. 
It is now a physical choice to specify what $C$ should sensibly be to make sense of $\int I + J_{C}$.}. 

Let the expansions of the renormalisation constants be given by
\begin{equation}\label{pertZ}
 Z(g_{r}) = 1 + \Sum_{j \geq 1} a_{j} g_{r}^{j} \ , \hs{1} 
 Z_{m}(g_{r}) = 1 + \Sum_{j \geq 1} b_{j} g_{r}^{j} \ , \hs{1}
 Z_{g}(g_{r}) = 1 + \Sum_{j \geq 1} c_{j} g_{r}^{j}.
\end{equation}
The condition $Z(0) = Z_{m}(0) = Z_{g}(0)=1$ must be imposed to make sure one obtains the free Lagrangian
when setting $g_{r}=0$. Let us briefly review the canonical 'song and dance' to see how the coefficients 
of these series are determined and how they may be interpreted. 

One starts by constructing the new renormalised interaction picture Hamiltonian for Dyson's matrix from 
$\La_{int}=\La_{int,r}+\La_{ct}$ and gets
\begin{equation}\label{renH}
\Ha^{r}_{I}(x) = \frac{g_{r}}{4!}Z_{g}(g_{r})\varphi_{r,0}(x)^{4} 
+ \frac{1}{2}[Z(g_{r})-1](\partial \varphi_{r,0}(x))^{2} 
+ \frac{1}{2}m_{r}^{2}[Z_{m}(g_{r})-1]\varphi_{r,0}(x)^{2}
\end{equation}
which is formulated in terms of free interaction picture fields $\varphi_{r,0}$. 
We may actually interprete this Hamiltonian physically. 

The first term describes interactions between particles and cures some of the divergences 
incurred by vertex corrections, whereas the additional two terms take into account that real-world physical
and relativistic interactions change the mass, ie the 'energy-momentum complex' of the system. 
Technically, their task is to cancel the remainder of the divergences that the first term generates. 
Yet they do not just cancel divergences, but \emph{bring about a coupling-dependent mass shift}. 
This almost surely destroys unitary equivalence between the fields $\varphi_{r}$ and $\varphi_{r,0}$,
as we shall discuss in §\ref{sec:RCircH}. 

The second step of the canonical procedure is to take the Gell-Mann-Low formula for the renormalised field, ie
\begin{equation}\label{GeMaFoR}
 \la \Psi_{0} | \mathsf{T}\{ \varphi_{r}(x_{1}) \dotsc  \varphi_{r}(x_{n}) \} \Psi_{0} \ra 
 = \frac{\la \Omega_{0} |\mathsf{T}\{ \mathsf{S}_{r} \ \varphi_{r,0}(x_{1})  \dotsc 
  \varphi_{r,0}(x_{n}) \} \Omega_{0} \ra}{\la \Omega_{0} | \mathsf{S}_{r} \Omega_{0} \ra},
\end{equation}
and then trade $\mathsf{S}_{r}$ for Dyson's series   
\begin{equation}\label{DysSeriesR}
 \mathsf{S}_{r} = 1 + \sum_{n\geq 1} \frac{(-i)^{n}}{n!} \int d^{4}x_{1} \ \dotsc \int d^{4}x_{n} 
 \mathsf{T}\{ \Ha^{r}_{I}(x_{1}) \ \dotsc \ \Ha^{r}_{I}(x_{n}) \}
\end{equation}
in the renormalised form.

Notice that this series is obtained iteratively in a way that is \emph{independent} of the coupling
parameter. Because the coupling dependence of the Hamiltonian has changed now so dramatically, we no
longer arrive directly at a perturbative expansion by inserting Dyson's series, as mentioned above:
to get there, we have to use the perturbation series of the $Z$ factors in (\ref{pertZ}). 

By taking all these perturbative series into account and regularising the resulting Feynman propagators, 
one arrives at the renormalised analogue of (\ref{formDysGeReg}). Its combinatorial content differs 
substantially, because the renormalised theory has two extra classes of vertex types, associated to the
counterterms of the Lagrangian, namely the counterterm vertices
\begin{equation}\label{ctvert}
 \graph{-0.1}{-0.5}{0.5}{2ver}{0}  \hs{1} \mbox{and} \hs{1} \graph{-0.1}{-0.55}{0.5}{R4ver}{2} j=1,2,3,\dotsc,
\end{equation}
ie an infinite number of different vertices! The number $j$ signifies the power of $g_{r}$ this graph is 
associated with. This means in particular that one counterterm vertex of order $j$ counts as if it was a 
graph with $j$ vertices. These again codify distributions built from integrals, 
\begin{equation}
 \graph{-0.1}{-0.55}{0.5}{2ver}{0} = \ i g_{r}^{j} \int \frac{d^{4}p}{(2\pi)^{4}} [ a_{j} p^{2} - m_{r}^{2} b_{j} ] e^{-ip \cdot x}
 \hs{1} \mbox{(configuration space)} 
\end{equation}
and factors 
\begin{equation}
 \graph{-0.1}{-0.55}{0.5}{R4ver}{0} = \ -i g_{r}^{j} c_{j}  \hs{1} \mbox{(configuration space)} .
\end{equation}
On the grounds that in momentum space, the connected pieces of graphs decompose into 1PI pieces, it is
now more convenient, however, to pass over to momentum space by taking the Fourier transform, ie
\begin{equation}
 \wt{G}^{(n)}_{r}(p_{1}, \dotsc, p_{n};\varepsilon) := \int dx_{1} \  e^{ip_{1} \cdot x_{1}} \dotsc \int dx_{n} \ e^{ip_{n} \cdot x_{n}} 
 \la \Psi_{0} | \mathsf{T}\{  \varphi_{r}(x_{1}) \dotsc  \varphi_{r}(x_{n}) \} \Psi_{0} \ra_{\varepsilon} 
\end{equation}
and one obtains
\begin{equation}\label{renG}
\wt{G}^{(n)}_{r}(p_{1}, \dotsc, p_{n};\varepsilon)
= (2\pi)^{4} \delta^{(4)}(p_{1}+ \dotsc + p_{n}) \sum_{G \in \G^{r}_{n}} g_{r}^{|V(G)|}  \prod_{\ell \in L(G) } i\wt{\Delta}^{\varepsilon}_{F}(\ell)
\prod_{\gamma \in P(G)} \int \omega^{\varepsilon}_{\gamma},
\end{equation}
in which 
\begin{itemize}
 \item $\G^{r}_{n}$ is the set of all graphs with four-valent vertices and $n$ external legs but this time
       of the \emph{renormalised} theory, ie including vertices of the class (\ref{ctvert}),
 \item $P(G)$ is the set of all 1PI pieces of the graph $G$ ('$P$' for proper graphs) 
\end{itemize}
and the set of freely floating lines $L(G)$ is now enriched by external leg (free) propagators.

The new product over all formal pairs $(\Mi_{\gamma},\omega^{\varepsilon}_{\gamma})= \int \omega^{\varepsilon}_{\gamma}$ 
contains the coefficients of the renormalisation $Z$ factors which can be adjusted in such a way 
that the adiabatic limit $\varepsilon \rightarrow 0$ can now be taken without harm in the sense that 
the result is a formal power series with finite (momentum dependent) coefficients. 

The rhs of (\ref{renG}) is a formal power series with coefficients in the set $\Sw(\Mi^{n})'$, the set of 
tempered distributions for which the adiabatic limit does no harm, diagrammatically, we write this as 
\begin{equation}\label{regLi}
\lim_{\varepsilon \rightarrow 0} \wt{G}^{(n)}_{r}(p_{1}, \dotsc, p_{n};\varepsilon) \hs{0.2} = \graph{0.1}{-0.65}{0.25}{Blob}{0.3}  .
\end{equation}
Let us assume for simplicity that renormalised $\varphi^{4}$ theory is in some sense Borel-summable, ie that the distributions
\begin{equation}
\wt{G}^{(n)}_{r,v}(f;\varepsilon) := \sum_{G:|V(G)|=v} 
(\prod_{\ell \in L(G) } i\Delta^{\varepsilon}_{F}(\ell) \prod_{\gamma \in C(G)} \int \nu^{\varepsilon}_{\gamma})(f)
\end{equation}
of each vertex level $v$ collectively give rise to formal power series of the form  
\begin{equation}
\sum_{v \geq 0}  \lim_{\varepsilon \rightarrow 0} \wt{G}^{(n)}_{r,v}(f;\varepsilon) g_{r}^{v}
\end{equation}
whose Borel sum 
\begin{equation}\label{Bordist}
\wt{G}^{(n)}_{r}(f,g_{r}) = \mathcal{S}[ 
\mbox{$\sum_{v \geq 0}$}  \lim_{\varepsilon \rightarrow 0} \wt{G}^{(n)}_{r,v}(f;\varepsilon) g_{r}^{v} ] 
\end{equation}
with Borel summation operator $\mathcal{S}$ really represents a distribution defined for the test function $f$. Then, the time-ordered $n$-point 
function in configuration space  
\begin{equation}
 G^{(n)}_{r}(x_{1}, \dotsc, x_{n},g_{r}) := \la \Psi_{0} | \mathsf{T}\{ \varphi_{r}(x_{1}) \dotsc  \varphi_{r}(x_{n}) \} \Psi_{0} \ra 
\end{equation}
exists and is given by $G^{(n)}_{r}(f,g_{r}):=\wt{G}^{(n)}_{r}(\wt{f},g_{r})$ for a test function $f$ and its Fourier transform $\wt{f}$, both elements in 
$\Sw(\Mi^{n})$. 

Although this would certainly be a neat result, it brings back the inconvenient question raised by 
Haag's theorem. According to the canonical narrative, the renormalised free interaction picture field 
$\varphi_{r,0}$ is unitarily related to the fully interacting renormalised field $\varphi_{r}$ by
the intertwining relation 
\begin{equation}
 \varphi_{r}(t,\mathbf{x}) = \Ti \{ e^{i\int_{0}^{t}  \Ha^{r}_{I}} \} \varphi_{r,0}(t,\mathbf{x}) 
        \Ti \{ e^{-i\int_{0}^{t}  \Ha^{r}_{I}} \}
\end{equation}
in which 
$\int_{0}^{t} \Ha^{r}_{I} := \int_{0}^{t} dy_{0} \int d^{3}y \ \Ha^{r}_{I}(y)$. 
Because the above renormalisation procedure yields finite results, one may argue that this time, 
the interaction picture has done its job properly and the assertion about unitary equivalence is not 
contradicted by divergences because there are none.

Of course, there \emph{are} divergences. Notwithstanding that one may argue that the divergent terms 
cancel each other, Dyson's series (\ref{DysSeriesR}) is still not well-defined as the coefficients 
of the renormalisation factors are divergent integrals themselves. In other words, the formalism is 
sufficiently dubious such that a finite outcome neither proves nor disproves anything.  

\subsection{Counterterms describe interactions} 
The renormalised theory with interaction term 
\begin{equation}\label{LInt}
\La_{int} = - \frac{g_{r}}{4!}Z_{g}(g_{r})\varphi_{r}^{4} - \frac{1}{2}[Z(g_{r})-1](\partial \varphi_{r})^{2} 
- \frac{1}{2}m_{r}^{2}[Z_{m}(g_{r})-1]\varphi_{r}^{2}.
\end{equation}
can therefore not be the final answer. Even though, as explained, the additional counterterms do not seem 
entirely unphysical, what precludes this Lagrangian description from being a fully satisfactory theory 
is the fact that the coupling-dependent $Z$ factors in (\ref{LInt}) cannot be chosen finite. 

Unfortunately, the hackneyed phrase 'absorption of infinities into couplings and masses' is not just empty 
but explains nothing physically.
Notice that the two mass counterterms by themselves would not produce divergences if their coefficients were 
finite (see §\ref{sec:RCircH}). It is only when they are combined with the vertex interaction term 
that divergent graphs arise. 

As mentioned, these additional terms do more than merely 'counter' and thereby cure the divergences. They had better be seen as some kind of 
\emph{auxiliary interaction terms which partially capture relativistic quantum interactions and compensate for the ills incurred by the 'wrong choice'
of Lagrangian}. Our motivation for this interpretation is as follows.  

When relativistic quantum particles interact, they change their mode of existence such that during 
interactions, the particle concept breaks down completely. Because energy, mass and momentum are
intimately related and can only be disentangled for free particles, the initial unrenormalised guess 
\begin{equation}\label{unreguess}
\Ha_{I}(x) = \frac{g}{4!}\varphi_{0}(x)^{4} 
\end{equation}
did not capture the complexity of the relativistic situation. When new particles are created, the mass
of the system changes depending on the coupling strength. Consequently, (\ref{unreguess}) cannot be 
sufficient. This, we speculate, may actually be the physical reason behind why Fr\"ohlich has found 
Euclidean $(\varphi^{4})_{d}$-theories to be trivial for $d \geq 4 + \epsilon$ \cite{Fro82}. 

In other words, despite their auxiliary status, we contend that the two additional terms
\begin{equation}
 \frac{1}{2}[Z(g_{r})-1](\partial \varphi_{r,0}(x))^{2} 
 + \frac{1}{2}m_{r}^{2}[Z_{m}(g_{r})-1]\varphi_{r,0}(x)^{2}
\end{equation}
take into account that relativistic interactions create additional momentum and mass, one at the expensive
of the other in a way that depends on the coupling strength. 

In a sense, this is an interpretation of the \emph{self-energy} which describes  
a mass shift depending on the coupling and momenta.
Our interpretation of the counterterms are motivated by the way we think about this very mass shift:
to us, it captures relativistic interactions, what else could it possibly say?

Imagine for a moment we had found a Lagrangian not leading to divergences. We would still have to make sure that certain
conditions required by physical considerations are satisfied. In the presently known renormalised theories,
these take the form of renormalisation conditions.
Therefore, we would expect such terms even in a relativistic theory without divergences. 

If a mathematically sound Lagrangian quantum field theory is ever possible and one day we succeed in finding the proper 
interaction term that opens up a viable path to a well-defined perturbative expansion while circumventing
and staying clear of divergences all along, we can be sure that the resulting theory is \emph{not} unitary 
equivalent to a free theory, Haag's theorem is very clear about this. 

The underlying reason why the interaction term (\ref{LInt}) has been found is combinatorial in nature. 
In fact, \emph{there are sound mathematical structures behind renormalisation, namely those of a Hopf 
algebra}. This was discovered by Kreimer in the late 1990s 
and further developed in collaboration with Connes and Broadhurst \cite{Krei02,CoKrei98,BroK01}. 

The feasibilty of the above described renormalisation procedure, proved by Bogoliubov, Parasuik, Hepp and Zimmermann (see \cite{He66} and 
references therein) culminated eventually in what is known as \emph{Zimmermann's forest formula} which solves
\emph{Bogoliubov's recursion formula}. Connes and Kreimer later showed that the underlying combinatorics of this latter formula is of 
Hopf-algebraic nature \cite{CoKrei98,CoKrei00}.

\section{Renormalisation circumvents Haag's theorem}\label{sec:RCircH}                               
On the assumption that a scalar QFT's $n$-point distributions can be defined via the limit (\ref{Bordist}), one can construe the symbolic expression
\begin{equation}\label{fint}
 \varphi_{r}(t,\mathbf{x}) = \Ti \{ e^{i\int_{0}^{t} \Ha^{r}_{I}} \} 
                               \varphi_{r,0}(t,\mathbf{x}) 
        \Ti \{ e^{-i\int_{0}^{t} \Ha^{r}_{I}} \}
\end{equation}
as a way to denote the action of the field intertwiner 
$V_{r}(t)=\Ti \{ e^{-i\int_{0}^{t} \Ha^{r}_{I}} \}$, characterised by the schematic diagram
\begin{equation}\label{Proce}
 \varphi_{r,0}(x) \hs{0.5} \stackrel{1}{\longrightarrow} \hs{0.5} 
 \mathsf{S}_{r} = \mathsf{T} e^{-i \int \Ha^{r}_{I}} \hs{0.5} \stackrel{2}{\longrightarrow} \hs{0.5} 
 \wt{G}^{(n)}_{r} \stackrel{3}{\longrightarrow} \hs{0.5} \varphi_{r}(x),
\end{equation}
which is to be read as the following procedure:
\begin{enumerate}
 \item[Step 1:] formal construction of Dyson's series $\mathsf{S}_{r}$ from the renormalised 
           interaction picture Hamiltonian $\Ha_{I}^{r}(x)$ of the free field $\varphi_{r,0}(x)$; 
 \item[Step 2:] Gell-Mann-Low expansion, regulator limit at each order of perturbation theory and (some form of) Borel 
           summation which leads to the definition of $n$-point distributions;
 \item[Step 3:] reconstruction of the renormalised scalar field theory from the attained $n$-point distributions 
          by using Wightman's reconstruction theorem.
\end{enumerate}
Because the only provision of Haag's theorem we are not reluctant to give up is \emph{unitary equivalence}, 
we are inclined strongly to believe that the intertwiner $V_{r}(t)$ cannot be unitary and that this 
is precisely the reason why
\begin{equation*}
 \mbox{\textsc{renormalised theories are not affected by Haag's triviality theorem}}.
\end{equation*} 
The whole canonical narrative just created a misunderstanding by more or less naively nurturing the 
(certainly not entirely unfounded) belief that one could construct an interacting theory from a free field theory through a unitary 
intertwining operator $V_{r}(t)$. 

We shall in the following present an argument which makes this unprovable contention 
\emph{plausible beyond doubt}. The reason it cannot be proved lies in the mathematical ill-definedness 
of (\ref{fint}) and that we simply do not know whether the procedure (\ref{Proce}) is feasible. 

\subsection{Mass shift destroys unitary equivalence}\label{massdest} 
We consider a simple toy model given by the Lagrangian
\begin{equation}\label{TM}
 \La_{m}  = \frac{1}{2}(\partial \varphi)^{2} - \frac{1}{2}m_{0}^{2}\varphi^{2} 
        - \frac{1}{2} \delta m^{2} \varphi^{2} 
\end{equation}
with mass shift parameter $\delta m^{2} >0$. This is the example from Duncan's monograph \cite{Dunc21} we have mentioned in
§\ref{monog}. It is clear that the Lagrangian (\ref{TM}) describes a free field of mass $m>0$, with $m^{2}=m_{0}^{2} + \delta m^{2}$.

We demonstrate now that if we treat the last term of this Lagrangian as an interaction term and subject it 
to the canonical procedure of perturbation theory in the interaction picture, the 'mass 
interaction' term 
\begin{equation}\label{TMHam}
 \Ha^{m}_{I}(x) := \frac{1}{2} \delta m^{2} : \varphi_{0}(x)^{2} :,
\end{equation}
formally obtained by means of the intertwiner
\begin{equation}\label{ms}
V_{m}(t) = \Ti \{ e^{-i\int_{0}^{t}  \Ha^{m}_{I}} \} 
\end{equation}
will then enable us to compute the two-point function through the Gell-Mann-Low expansion
\begin{equation}\label{TMpert}
 \la \Omega |\mathsf{T} \{ \varphi(x) \varphi(y) \} \Omega \ra  = \frac{1}{|c_{0}|^{2}} 
 \sum_{n \geq 0} \frac{(-i)^{n}}{n!} \int_{\Mi^{n}} dz \ 
 \la \Omega_{0} | \mathsf{T}\{ \varphi_{0}(x) \varphi_{0}(y) \Ha^{m}_{I}(z_{1}) ... \Ha^{m}_{I}(z_{n}) 
 \Omega_{0} \ra  .
\end{equation}
We shall now see that the unitary-looking intertwining operator (\ref{ms}) is not unitary even though 
it does not at all lead to divergencies!  

This then makes it almost certain that renormalised $\varphi^{4}$ theory described by (\ref{Proce})
is unitarily inequivalent to its corresponding free interaction picture field $\varphi_{r,0}$. 
Since unitary equivalence is a key provision of Haag's theorem, the conclusion is that as a 
consequence, this theory is hence not afflicted by Haag's theorem. 

First, we consider a theorem which we dub 'Haag's theorem for free fields'. This theorem can be found
in \cite{ReSi75}, p.233 (Theorem X.46). We have simplified it a bit by omitting the conjugate momentum field from the 
description which makes for a somewhat more straightforward exposition.
Note that the dimension of spacetime is of no relevance. 

\begin{Theorem}[Haag's theorem for free fields]\label{freeH}
Let $\varphi$ and $\varphi_{0}$ be two free fields of masses $m$ and $m_{0}$, respectively. If at 
time $t$ there is a unitary intertwiner $V$ such that 
\begin{equation}
\varphi(t,\mathbf{x}) = V \varphi_{0}(t,\mathbf{x})V^{-1} ,
\end{equation}
then $m=m_{0}$, ie if $m \neq m_{0}$ then there exists no such unitary intertwiner.
\end{Theorem}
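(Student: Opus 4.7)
My strategy is to reduce the claim to the statement that the equal-time two-point Wightman distributions of two free scalar fields of (a priori distinct) masses $m$ and $m_{0}$ coincide, and then exhibit by direct computation that this forces $m = m_{0}$. Both fields are, by assumption, free Wightman fields, so the whole machinery developed for Theorem \ref{Hath} (Poincaré covariance, irreducibility of the sharp-time field algebra, uniqueness of the translation-invariant vacuum) is available.

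First I would repeat, \emph{mutatis mutandis}, the opening paragraph of the proof of Theorem \ref{Hath}: starting from $\varphi(t,\mathbf{x}) = V\varphi_{0}(t,\mathbf{x})V^{-1}$ and the covariance of both fields under the Euclidean subgroup $E(3)$, an irreducibility argument identical to (\ref{Poin})--(\ref{Irr}) yields $U(\mathbf{a},R)V = V U_{0}(\mathbf{a},R)$, and uniqueness of the translation-invariant vacuum then gives $V\Omega_{0} = \Omega$ up to an absorbable phase. Applying this to the equal-time two-point function one obtains, for the two free theories,
\begin{equation}\label{eqttwo}
\langle \Omega_{0} | \varphi_{0}(t,\mathbf{x})\,\varphi_{0}(t,\mathbf{y}) \Omega_{0} \rangle
\;=\;
\langle \Omega  | \varphi(t,\mathbf{x})\,\varphi(t,\mathbf{y}) \Omega \rangle
\end{equation}
as tempered distributions in $(\mathbf{x}-\mathbf{y}) \in \mathbb{R}^{3}$.

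Next I would plug in the explicit form of the two sides. For a free scalar field of mass $\mu > 0$, the Fock-space computation analogous to (\ref{Norm})--(\ref{freeW}) gives the well-known expression
\begin{equation}\label{freeET}
\langle \Omega_{\mu} | \varphi_{\mu}(t,\mathbf{x})\,\varphi_{\mu}(t,\mathbf{y}) \Omega_{\mu} \rangle
\;=\;
\int \frac{d^{3}p}{(2\pi)^{3}} \,
\frac{e^{\,i\mathbf{p}\cdot(\mathbf{x}-\mathbf{y})}}{2\sqrt{\mathbf{p}^{2}+\mu^{2}}} ,
\end{equation}
so that (\ref{eqttwo}) becomes the identity of two locally integrable, spherically symmetric functions of $\mathbf{p}$:
\begin{equation}
\frac{1}{2\sqrt{\mathbf{p}^{2}+m_{0}^{2}}}
\;=\;
\frac{1}{2\sqrt{\mathbf{p}^{2}+m^{2}}} \hspace{1cm} \mbox{for a.e.\ } \mathbf{p}\in\mathbb{R}^{3}.
\end{equation}
Taking $\mathbf{p} = 0$ (or, more safely, comparing the two sides as tempered distributions after Fourier transform on $\mathbb{R}^{3}$, which is legitimate since both are locally $L^{1}$ and of polynomial growth at infinity) immediately forces $m^{2} = m_{0}^{2}$, hence $m = m_{0}$ since both masses are non-negative.

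The only mildly delicate point, which I would expect to be the main obstacle, is the first step: one must make sure that the vacuum-matching argument $V\Omega_{0} = \Omega$ really goes through in this simpler setting. The argument in (\ref{Poin})--(\ref{Irr}) used covariance \emph{and} irreducibility of the field algebra at the single time $t$. For a free scalar field the cyclicity of $\Omega_{0}$ for $\{\varphi_{0}(t,f), f\in \Sw(\mathbb{R}^{3})\}$ (and hence the requisite irreducibility) is the standard statement that one-particle states $a^{\dagger}(f)\Omega_{0}$ together with their products generate all of Fock space, which is Proposition \ref{FockP}. Once this is in hand, no use of analytic continuation, the edge-of-the-wedge theorem, or the Jost--Schroer theorem is required: because both fields are free, the equal-time equality (\ref{eqttwo}) already contains all the spectral information, and the conclusion $m=m_{0}$ is a matter of reading off the mass from the denominator in (\ref{freeET}).
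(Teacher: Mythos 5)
Your proposal is correct and follows essentially the same route as the paper: the paper's own proof runs the first part of the argument for Theorem \ref{Hath} (Euclidean covariance plus irreducibility giving $V\Omega_{0}=\Omega$ and hence $\Delta_{+}(0,\mathbf{x}-\mathbf{y};m^{2})=\Delta_{+}(0,\mathbf{x}-\mathbf{y};m_{0}^{2})$) and then notes that, both fields being free, this equality already forces $m=m_{0}$. Your momentum-space comparison simply makes that last reading-off of the mass explicit, which is a fine elaboration rather than a different argument.
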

\begin{proof}
The proof is a simple version of the proof of Haag's theorem, Theorem \ref{Hath} (see the arguments there). 
The conclusion
\begin{equation}
 \Delta_{+}(0,\mathbf{x}-\mathbf{y};m^{2}) = \Delta_{+}(0,\mathbf{x}-\mathbf{y};m_{0}^{2})
\end{equation}
shows that the assertion is correct and needs no further justification because both fields are free. 
\end{proof}

We will now see that canonical perturbation theory enables us to compute the Feynman propagator of the field
$\varphi$ with mass $m$ from the Feynman propagator of $\varphi_{0}$ with mass $m_{0}$. 

\begin{Claim}
Let the symbol $V_{m}(t) = \Ti \{ e^{-i\int_{0}^{t}  \Ha^{m}_{I}} \}$ represent the intertwiner between
the two free fields $\varphi$ and $\varphi_{0}$ of masses $m$ and $m_{0}$, respectively, ie formally, 
\begin{equation}\label{formint}
 \varphi(t,\mathbf{x}) = \Ti \{ e^{i\int_{0}^{t}  \Ha^{m}_{I}} \} 
                               \varphi_{0}(t,\mathbf{x}) 
        \Ti \{ e^{-i\int_{0}^{t}  \Ha^{m}_{I}} \} .
\end{equation}
Then the Gell-Mann-Low expansion (\ref{TMpert}) yields 
\begin{equation}
  \la \Omega |\mathsf{T} \{ \varphi(x) \varphi(y) \} \Omega \ra  = 
  \frac{\la \Omega_{0} |\mathsf{T}\{ \mathsf{S}_{m}  \varphi_{0}(x) \varphi_{0}(y) \}
  \Omega_{0} \ra}{\la \Omega_{0} | \mathsf{S}_{m}\Omega_{0} \ra}
  = i\Delta_{F}(x-y;m^{2}),
\end{equation}
and the intertwiner represented by the symbol $V_{m}(t)$ is not unitary. 
\end{Claim}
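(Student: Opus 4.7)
The plan is to compute both sides explicitly and then invoke Theorem \ref{freeH} to conclude non-unitarity. First I would expand the Gell-Mann-Low numerator in (\ref{TMpert}) using Wick's theorem. Since the interaction vertex $\Ha^m_I(z) = \tfrac{1}{2}\delta m^2 :\varphi_0(z)^2:$ is Wick-ordered and only two-valent, each Wick contraction diagram contributing to $\la \Omega_0|\mathsf{T}\{\varphi_0(x)\varphi_0(y)\Ha^m_I(z_1)\cdots \Ha^m_I(z_n)\}\Omega_0\ra$ falls into exactly one of two classes: (a) an open chain from $x$ through some permutation of a subset of the $\{z_i\}$ to $y$, times (b) a disjoint union of closed rings made from the remaining vertices. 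The closed rings are precisely the vacuum bubbles; standard linked-cluster combinatorics gives $\la \Omega_0|\mathsf{S}_m\Omega_0\ra = \exp(\text{sum of connected vacuum ring diagrams})$, and dividing by $|c_0|^2 = \la \Omega_0|\mathsf{S}_m\Omega_0\ra$ cancels them exactly, leaving only the connected chain diagrams linking $x$ to $y$.

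Next I would count the chain diagrams of order $n$: the $n$ vertices can be placed on the chain in $n!$ orders, killing the Dyson $1/n!$, and each internal propagator contributes a factor $i\Delta_F(z_i - z_{i+1}; m_0^2)$ while each vertex contributes $-i\delta m^2$ (from $(-i)\cdot \tfrac{1}{2}\delta m^2$ together with the factor of $2$ from the two equivalent Wick contractions at a two-valent vertex). Passing to momentum space and using $\int d^d z\, e^{ip\cdot z}e^{-iq\cdot z} = (2\pi)^d\delta^{(d)}(p-q)$, the $n$-th order contribution to $\tilde G(p)$ collapses to
\begin{equation}
\tilde G_n(p) = \frac{i}{p^2 - m_0^2 + i\epsilon}\left[(-i\delta m^2)\frac{i}{p^2 - m_0^2 + i\epsilon}\right]^{\!n}.
\end{equation}
Summing the geometric series yields
\begin{equation}
\tilde G(p) \;=\; \sum_{n\geq 0}\tilde G_n(p) \;=\; \frac{i}{p^2 - m_0^2 - \delta m^2 + i\epsilon} \;=\; \frac{i}{p^2 - m^2 + i\epsilon},
\end{equation}
which is exactly $i\tilde\Delta_F(p;m^2)$. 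Fourier transforming back gives $\la\Omega|\mathsf{T}\{\varphi(x)\varphi(y)\}\Omega\ra = i\Delta_F(x-y;m^2)$, the first claim.

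For the second claim, observe that the reconstructed two-point function is that of a free scalar field of mass $m \neq m_0$. By the Jost-Schroer theorem \ref{JoSchro} (or simply because a free two-point function already determines all Wightman functions of the free field via Wick's theorem), the field $\varphi$ obtained via the procedure (\ref{Proce}) is a bona fide free field of mass $m$. Were $V_m(t)$ genuinely unitary, the intertwining relation (\ref{formint}) would place $\varphi$ and $\varphi_0$ into the same unitary equivalence class at time $t$; but Theorem \ref{freeH} forbids this whenever $m \neq m_0$, i.e.\ whenever $\delta m^2 \neq 0$. Hence the symbol $V_m(t)$ cannot represent a unitary operator on the Fock space of $\varphi_0$.

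The main obstacle is the resummation step: the formal series on the right of (\ref{TMpert}) exists only term by term as a tempered distribution, and the geometric sum is really an algebraic identity in momentum space valid only for $p^2$ away from the mass shell, extended to $p^2 = m^2$ by the $+i\epsilon$ prescription. The legitimacy of this resummation is precisely the substantive content one must grant; once granted, the mismatch with Theorem \ref{freeH} is automatic and requires no further input. Notably, no ultraviolet divergences intrude here — the two-valent Wick-ordered interaction is harmless in every finite order — so the obstruction to unitarity is visible in a model where the standard "blame the divergences" excuse is simply unavailable.
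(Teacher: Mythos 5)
Your proposal is correct and follows essentially the same route as the paper: a Wick-theorem expansion of the Gell-Mann-Low series into chain diagrams, resummation of the resulting geometric series in momentum space to obtain $i/(p^{2}-m^{2}+i0^{+})$, and an appeal to Theorem \ref{freeH} (Haag's theorem for free fields) to conclude that the intertwiner cannot be unitary. The only differences are presentational — you make the vacuum-bubble cancellation and the vertex symmetry factors explicit, which the paper leaves implicit in its direct statement of the convolution series (\ref{TMpert1})--(\ref{TMpert2}).
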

\begin{proof}
We compute the rhs of (\ref{TMpert}) by using Wick's theorem and obtain
\begin{equation}\label{TMpert1}
 \la \Omega |\mathsf{T} \{ \varphi(x) \varphi(y) \} \Omega \ra = i \Delta_{F}(x-y; m_{0}^{2}) 
 +  i \sum_{n \geq 1} (\delta m^{2})^{n} \Delta_{F}^{*n+1}(x-y;m_{0}^{2}) 
\end{equation}
in which $\Delta_{F}^{*n+1}(x-y;m_{0}^{2})$ is the $(n+1)$-fold convolution of the Feynman propagator 
$\Delta_{F}$ given by
\begin{equation}
\begin{split}
\Delta_{F}^{*n+1}(x-y;m_{0}^{2}) = \int_{\Mi} dz_{1} \ ... \int_{\Mi} dz_{n} \ 
\Delta_{F}(x-z_{1}; & m_{0}^{2})  \Delta_{F}(z_{1}-z_{2};m_{0}^{2}) \\ 
& \ ... \ \Delta_{F}(z_{n-1}-z_{n};m_{0}^{2}) \Delta_{F}(z_{n}-y;m_{0}^{2}),
\end{split}
\end{equation}
for $n \geq 1$. The corresponding Feynman diagrams are all of the form
\begin{equation}\label{M}
G = \graph{0}{-0.15}{0.7}{mass}{0.5} .
\end{equation}
In momentum space, (\ref{TMpert1}) takes the simple form 
\begin{equation}\label{TMpert2}
\begin{split}
\int \frac{d^{4}p}{(2 \pi)^{4}} e^{-i p \cdot x} \la \Omega |\mathsf{T} \{ \varphi(x) \varphi(y) \} \Omega \ra
&= i D_{0}(p)^{-1}  +  i \sum_{n \geq 1} (\delta m^{2})^{n} D_{0}(p)^{-(n+1)} \\
&= \frac{i D_{0}(p)^{-1}}{ 1 - \delta m^{2} D_{0}(p)^{-1} } = \frac{i}{ D_{0}(p) - \delta m^{2}}
= \frac{i}{D(p)}
\end{split}
\end{equation}
in which $D_{0}(p) = p^{2} - m^{2}_{0} + i 0^{+}$ and $D(p) = p^{2} - m^{2}+ i 0^{+}$
are the inverse free propagators with masses $m_{0}$ and $m$, respectively. By Theorem \ref{freeH}, the
field interwtiner is not unitary. 
\end{proof}

We conclude that the intertwiner between the two fields, symbolically represented by (\ref{ms}), does 
indeed transform the field $\varphi_{0}$ into $\varphi$ but cannot be unitary on 
account of Theorem \ref{freeH} although the Hamiltonian (\ref{TMHam}) is actually a well-defined 
operator-valued distribution  when smeared in space and time.
In fact, it is self-adjoint because Wick powers of free fields are \cite{BruFK96}. 

However, the intertwiner (\ref{ms}) is far from being a simple exponentiation of the Hamiltonian $\Ha^{m}_{I}(x)$ 
because of the integration and time-ordering. It is therefore not surprising that $V_{m}(t)$ is not unitary. 

\subsection{Counterterms induce mass shift}
Because the intertwiner of renormalised $\varphi^{4}$ theory, symbolically given by
\begin{equation}
V_{r}(t)=\Ti \{ e^{-i\int_{0}^{t} \Ha^{r}_{I}} \} 
\end{equation}
with 
\begin{equation}\label{renH'}
\Ha^{r}_{I}(x) = \frac{g_{r}}{4!}Z_{g}(g_{r})\varphi_{r,0}(x)^{4} 
+ \frac{1}{2}[Z(g_{r})-1](\partial \varphi_{r,0}(x))^{2} 
+ \frac{1}{2}m_{r}^{2}[Z_{m}(g_{r})-1]\varphi_{r,0}(x)^{2}
\end{equation}
also contains a mass shift interaction piece given by the last (two) term(s), we expect it to be also 
non-unitary. 
This means that \emph{renormalised $\varphi^{4}$ theory is not affected by Haag's theorem} because the 
central proviso of unitary equivalence is not fulfilled. 
Therefore, \emph{renormalisation circumvents this theorem}. If renormalised $\varphi^{4}$-theory
exists (which we believe), then it is not difficult 
to see that the renormalised fully interacting field is certainly not unitary equivalent to a free field 
by simply taking a look at the mass dependence of its inverse propagator
\begin{equation}
 D_{r}(g_{r},p,m_{r}) = p^{2} - m_{r}^{2} - \Sigma_{r}(g_{r},p,m_{r}) + i0^{+}
\end{equation}
with self-energy $\Sigma_{r}(g_{r},p,m_{r})$. It is this very function which reflects the 
complexity of relativistic interactions, in some sense correctly captured by the counterterms 
in (\ref{renH'}). 

Given that even the slightest mass shift performed on a free field amounts to a non-unitary 
transformation to another free field, it is next to impossible for a theory with an interaction 
dependent (!) mass 
\begin{equation}
M^{2} = m_{r}^{2} + \Sigma_{r}(g_{r},p,m_{r})
\end{equation}
to be unitary equivalent to a free field in Fock space of mass $m_{r}$.

To assume that an interacting field theory is unitary equivalent is an erroneous idea suggested
by the form of the formal identity
\begin{equation}\label{HeiInt1}
 \varphi(t,\mathbf{x}) = e^{iHt} e^{-iH_{0}t}  \varphi_{0}(t,\mathbf{x}) e^{iH_{0}t} e^{-iHt} 
 = V(t)^{\dagger}\varphi_{0}(t,\mathbf{x}) V(t)
\end{equation}
for the intertwiner $V(t)$. The canonical formalism works with this assumption in both the renormalised
and the unrenormalised case. Only in the unrenormalised case did the fallacy become apparent by emerging divergences. 

The renormalised case is different because renormalisation effectively introduces new auxiliary interaction
terms so that the theory is changed drastically: 
by additional (auxiliary) interaction terms called counterterms, a renormalised theory is no longer the 
unrenormalised renormalisable theory is was prior to renormalisation.

\section{Conclusion: renormalisation staves off triviality}\label{sec:Conclus}                       

We have reviewed Haag's theorem and some pertinent triviality results, scrutinising in particular the details of the proof of Haag's theorem and its provisions.  
The most salient provision turned out to be unitary equivalence between free and interacting quantum fields. 

Because the theorem is independent of the 
dimension of Minkowski spacetime\footnote{Spacetime must have dimension $d \geq 2$, otherwise we would have either space or time, in particular would we 
have no boosts.}, it holds also for superrenormalisable quantum fields. 
A circumvention scheme in this context has been found by constructive field theorists: (super)renormalisation. And here, to the best of our knowledge,
these authors do not claim unitary equivalence.  

Although the case against quantum electrodynamics (and hence also quantum chromodynamics) is less clear due to a fundamental incompatibility with 
Wightman's framework, there is no doubt that the interaction picture can also not exist there.
We have argued that 
\begin{equation*}
\mbox{\textsc{renormalisation bypasses Haag's theorem}} 
\end{equation*}
in all cases by effectively rendering the field intertwiner non-unitary. This cannot be proved due to the mathematical elusiveness of the involved 
renormalisation $Z$ factors: per se only defined perturbatively to master their task of subtracting divergences, their nonperturbative status is totally 
obscure \cite{Ost86}. 

In particular, the wave-function renormalisation strikes us as a tenuously locked piece of the jigsaw puzzle that quantum field theory presents itself as.
We contend that this 'constant' cannot convincingly establish the link between the spectral representation and Haag-Ruelle (or LSZ) 
scattering theory\footnote{The LSZ reduction formula, being very instructive for amputating Green's functions in momentum space, makes total sense 
nonetheless. We do not take a nihilist stance here.} with standard perturbative quantum field theory, the biggest and best-understood jigsaw chunk.    
Interpreting this renormalisation constant as a probability amplitude, as standard textbooks do, is something we would like to ask posterity's opinion about.  

Another important theme are the canonical (anti)commutation relations for quantum fields. The germane triviality theorems we have discussed here suggest that these relations are 
incompatible with interactions in (flat) spacetimes of dimensions $d\geq 4$. With these results, we call into question the status of such relations in 
these spacetimes, in particular on the grounds that there is no reasonable analogue of the position operator in quantum field theory. 
This is intimately tied up with the Heisenberg uncertainty relations whose implementation is more or less obscure in QFT, a very interesting side issue 
we have only touched upon briefly and in passing.

\section*{Acknowledgements}

I am indebted to my supervisor Dirk Kreimer for supporting me throughout my PhD time and beyond. He suggested this interesting topic to me.

Furthermore, I thank Raimar Wulkenhaar and David Broadhurst for carefully reading the text, giving their positive feedback and encouraging me to 
publish it in the form of a review.  

\part{Appendices}

\begin{appendix}

\section{Baumann's theorem}\label{sec:AppBau} 
\begin{Theorem}[Baumann]
Let $n \geq 4$ be the space dimension and $\varphi(t,\cdot)$ a scalar field with 
conjugate momentum field $\pi(t,\cdot)=\partial_{t}\varphi(t,\cdot)$ such that the CCR (\ref{CCRSmear}) are 
obeyed and assume furthermore that $\dot{\pi}(t,\cdot):=\partial_{t}\pi(t,\cdot)$ exists. 
Then, if $\varphi(t,\cdot)$ has a 
vanishing vacuum expectation value and the provisions listed in the introduction of \cite{Bau87} are 
satified, one has
\begin{equation}
 \dot{\pi}(t,f) - \varphi(t,\Delta f) + m^{2} \varphi(t,f) = 0
\end{equation}
for all $f \in \De(\R^{n})$ and a parameter $m^{2}>0$. 
\end{Theorem}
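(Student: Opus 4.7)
The plan is to derive the free Klein--Gordon equation from the CCR by constructing the candidate equation--of--motion operator
\begin{equation*}
R(t,f) \,:=\, \dot{\pi}(t,f) - \varphi(t,\Delta f) + m^{2}\varphi(t,f)
\end{equation*}
for a suitably chosen $m^{2} \geq 0$, and then showing that $R(t,f)$ must vanish by a two--step argument: first that it is a c--number, and second that this c--number equals zero. The overall strategy mirrors the pattern used in Sinha's and the Jost--Schroer--type theorems discussed earlier in the text, but the identification of the right $m^{2}$ is what makes the dimension hypothesis $n\geq 4$ indispensable.

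First, I would compute the equal--time commutators $[R(t,f),\varphi(t,g)]$ and $[R(t,f),\pi(t,g)]$. Since $\pi(t,\cdot)=\partial_{t}\varphi(t,\cdot)$ and $\dot{\pi}(t,\cdot)$ is assumed to exist, one may differentiate the CCR $[\varphi(t,f),\pi(t,g)]=i(f,g)$ carefully in $t$ and combine the results with the Jacobi identity. The $-\varphi(t,\Delta f)$ term drops out of both commutators because spatial derivatives of $\varphi$ commute with $\varphi$ at equal times, and the mass term $m^{2}\varphi(t,f)$ is there precisely to cancel the residual c--number contribution produced by $[\dot{\pi}(t,f),\pi(t,g)]$. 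Granted this cancellation, irreducibility of the equal--time algebra generated by $\{\varphi(t,f),\pi(t,g)\}$ forces
\begin{equation*}
R(t,f) \,=\, c(t,f)\,\mathbf{1}
\end{equation*}
for some distribution $c(t,f)\in\mathbb{C}$. Taking the vacuum expectation value and using that $\la\Omega\,|\,\varphi(t,f)\,\Omega\ra=0$ by hypothesis (hence also $\la\Omega\,|\,\dot{\pi}(t,f)\,\Omega\ra=0$ by exchanging time derivative and VEV), one concludes $c(t,f)=0$ and therefore $R(t,f)=0$.

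The hard part is the identification of $m^{2}$ and the verification that the residual c--number in $[\dot{\pi}(t,f),\pi(t,g)]$ really has the form $-m^{2}(f,g)\cdot\mathbf{1} + (\Delta f,g)\cdot\mathbf{1}$ with a single, $f$--independent constant $m^{2}>0$. This is where the space--dimension assumption $n\geq 4$ and the auxiliary hypotheses from \cite{Bau87} (regularity of the sharp--time fields, bounds of the Fr\"ohlich--Herbst type on the energy and on the two--point function) enter. Concretely, one analyzes $\la\Omega\,|\,[\dot{\pi}(t,\mathbf{x}),\pi(t,\mathbf{y})]\,\Omega\ra$ through the K\"allen--Lehmann weight $\rho(\mu^{2})$; Poincar\'e invariance and the spectral condition dictate a decomposition into a Laplacian part and a mass part, and in dimensions $n\geq 4$ the integrability obstructions on $\rho$ force the weight to collapse onto a single $\delta(\mu^{2}-m^{2})$, fixing $m^{2}>0$ uniquely. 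In lower dimensions the spectral weight admits enough freedom to accommodate genuine interactions, which is why superrenormalisable models in $d=2,3$ such as $P(\varphi)_{2}$ escape the conclusion. Handling this spectral argument rigorously, and lifting it from matrix elements back to the operator identity $R(t,f)=0$, is the technical heart of the proof and the part I would relegate to the detailed estimates of \cite{Bau87}.
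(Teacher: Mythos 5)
Your closing moves are indeed the ones Baumann uses at the end: show the equation-of-motion operator $R(t,f)=\dot{\pi}(t,f)-\varphi(t,\Delta f)+m^{2}\varphi(t,f)$ commutes with the irreducible equal-time algebra generated by $\{\varphi(t,\cdot),\pi(t,\cdot)\}$, conclude it is a c-number, and kill that c-number with $\la \Omega | \varphi(t,f)\Omega\ra =0$. But your first step cannot be launched as written. Differentiating the CCR only gives $[\varphi(t,f),\dot{\pi}(t,g)]=0$; to evaluate $[R(t,f),\pi(t,g)]$ you must already know that $[\pi(t,g),\dot{\pi}(t,f)]$ is a c-number, and moreover the specific c-number $-i[(g,\Delta f)-m^{2}(g,f)]$, and nothing in the CCR or in Poincar\'e covariance hands you that. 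Establishing the c-number property of $[\pi,\dot{\pi}]$ is precisely the hard content of Baumann's paper and is exactly where $n\geq 4$ enters: the Jacobi identity cheaply gives $[\varphi(t,f),[\pi(t,h),\dot{\pi}(t,g)]]\Psi=0$, but the estimate $[\pi(t,f),[\pi(t,h),\dot{\pi}(t,g)]]\Psi=0$ is proved by covering the supports of $f,g,h$ with an $\epsilon$-partition of unity and showing the double commutator vanishes with a power of $\epsilon$ that depends on the space dimension and is favourable only for $n\geq 4$. Only after that does irreducibility yield the c-number property, and only then does the K\"allen--Lehmann representation (differentiated twice in time and taken at equal times) identify its value so that your intended cancellation against $-i(\Delta f,g)+im^{2}(f,g)$ can occur. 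Note also that the $\varphi(t,\Delta f)$ term does not ``drop out'' of the commutator with $\pi$: $[\varphi(t,\Delta f),\pi(t,g)]=i(\Delta f,g)$, and it is needed for the cancellation.

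Your localisation of the dimension hypothesis is therefore misplaced. In the actual proof the spectral weight $\rho(\mu^{2})$ is not forced to collapse to $\delta(\mu^{2}-m^{2})$ for $n\geq 4$, and no such collapse is needed for the stated conclusion: the argument uses only the zeroth moment $\int d\mu^{2}\,\rho(\mu^{2})=1$ (a consequence of the CCR) and the first moment $m^{2}=\int d\mu^{2}\,\mu^{2}\rho(\mu^{2})$, whose finiteness is guaranteed by the existence of the state $\dot{\pi}(g)\Omega$; the $m^{2}$ in the theorem is this first moment, not the location of a single mass shell. The ``technical heart'' you defer to the references is thus not an integrability constraint on $\rho$ but the dimension-dependent vanishing of the double commutator $[\pi,[\pi,\dot{\pi}]]$; without it the c-number step, and hence your whole scheme, has no foundation, which is also why the superrenormalisable models in low dimensions escape the conclusion even though their K\"allen--Lehmann weights are perfectly well behaved.
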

\begin{proof}
 We only provide a sketch. To see the technical details, the reader is referred to Baumann's paper 
 \cite{Bau87}. First, take any $\Psi \in \D$ from the dense domain of definition. Then
 \begin{equation}\label{firstC}
  [\varphi(t,f),\dot{\pi}(t,g)]\Psi = 0
 \end{equation}
is an easily obtained consequence of $\partial_{t}[\varphi(t,f),\pi(t,g)] = i \partial_{t}(f,g) = 0$ and
causality of the momentum field, ie $[\pi(t,f),\pi(t,g)]=0$ for any test functions $f, g \in \De(\R^{n})$.
The Jacobi identity then entails 
\begin{equation}
  [\varphi(t,f),[\pi(t,h),\dot{\pi}(t,g)]]\Psi = 0,
 \end{equation}
which is fairly straightforward. The hard part is to prove that 
$[\pi(t,f),[\pi(t,h),\dot{\pi}(t,g)]]\Psi = 0$. Baumann uses a partition of unity whose constituting 
test functions have compact support in $\epsilon$-neighbourhoods that cover the support of the test
functions $f,g$ and $h$. Then he essentially shows that this double commutator vanishes with a power
law for $\epsilon \downarrow 0$ that depends on the space dimension $n>0$. If $n \geq 4$, then it 
vanishes. By virtue of the irreducibility of the field algebra $\{ \varphi(t,\cdot), \pi(t,\cdot) \}$,
one concludes from $[\varphi(t,f),[\pi(t,h),\dot{\pi}(t,g)]]\Psi = 0$ and 
$[\pi(t,f),[\pi(t,h),\dot{\pi}(t,g)]]\Psi = 0$ that 
\begin{equation}\label{cnum}
 [\pi(t,h),\dot{\pi}(t,g)] = \la \Omega | [\pi(t,h),\dot{\pi}(t,g)]\Omega \ra,
\end{equation}
ie that $[\pi(t,h),\dot{\pi}(t,g)]$ is a c-number. The next step is to write the rhs of (\ref{cnum}) in 
terms of the K\"allen-Lehmann representation, ie a positive superposition of free commutator functions. 
Recall that the commutator function of the free field $\varphi_{0}$ with mass $m>0$ is given by
\begin{equation}\label{KaLCf}
[\varphi_{0}(t,\mathbf{x}),\varphi_{0}(s,\mathbf{y})] 
= \int \frac{d^{4}q}{(2\pi)^{3}} \ \delta_{+}(q^{2}-m^{2}) 
[e^{-iq \cdot (x-y)} - e^{+iq \cdot (x-y)} ]=:D(t-s,\mathbf{x}-\mathbf{y};m^{2}),
\end{equation}
where $\delta_{+}(q^{2}-m^{2}) := \theta(q_{0}) \delta(q^{2}-m^{2})$ makes sure the particle is
real and on-shell.  
One finds the corresponding representation of the commutator (\ref{cnum}) by first differentiating the 
K\"allen-Lehmann representation
\begin{equation}\label{KaLC}
\la \Omega | [\varphi(t,\mathbf{x}),\varphi(s,\mathbf{y})]\Omega \ra 
= \int d\mu^{2} \ \rho(\mu^{2}) D(t-s,\mathbf{x}-\mathbf{y};\mu^{2}) .
\end{equation}
twice with respect to $s$ which gives 
\begin{equation}
\la \Omega | [\varphi(t,\mathbf{x}),\dot{\pi}(s,\mathbf{y})]\Omega \ra 
= \int d\mu^{2} \ \rho(\mu^{2}) [\Delta-\mu^{2}]D(t-s,\mathbf{x}-\mathbf{y};\mu^{2})
\end{equation}
because of $(\partial_{s}^{2} - \Delta + \mu^{2})D(t-s,\mathbf{x}-\mathbf{y};\mu^{2})=0$. Differentiating
with respect to $t$ and letting $t \rightarrow s$ yields the distribution
\begin{equation}
\la \Omega | [\pi(t,\mathbf{x}),\dot{\pi}(t,\mathbf{y})]\Omega \ra 
= -i \int d\mu^{2} \ \rho(\mu^{2}) [\Delta-\mu^{2}]\delta(\mathbf{x}-\mathbf{y}) .
\end{equation}
Applying this to two test functions $f,g \in \De(\R^{n})$ gives 
\begin{equation}\label{KL}
\la \Omega | [\pi(t,f),\dot{\pi}(t,g)]\Omega \ra 
= -i \int d\mu^{2} \ \rho(\mu^{2}) (f,[\Delta-\mu^{2}]g) = -i [(f,\Delta g) - m^{2} (f,g)]
\end{equation}
with $m^{2} = \int d\mu^{2} \ \rho(\mu^{2}) \mu^{2}$ and $\int d\mu^{2} \ \rho(\mu^{2}) = 1$. The latter is
implied by the CCR, obtained from (\ref{KaLC}). Note that $m^{2} < \infty$ is warranted by the existence
of the state $\dot{\pi}(g)\Omega$. Using the CCR, we can write the rhs of (\ref{KL}) as the commutator 
\begin{equation}
-i [(f,\Delta g) - m^{2} (f,g)] = [\pi(t,f),\varphi(t,\Delta g)-m^{2}\varphi(t,g)]
\end{equation}
and arrive at $[\pi(t,f),\dot{\pi}(t,g)-\varphi(t,\Delta g)+m^{2}\varphi(t,g)]=0$ because the commutator
in the vacuum expectation value of the lhs of (\ref{KL}) is a c-number. The CCR in combination with 
(\ref{firstC}) entail 
\begin{equation}
[\varphi(t,f),\dot{\pi}(t,g)-\varphi(t,\Delta g)+m^{2}\varphi(t,g)]=0.
\end{equation}
By the irreducibility of the field algebra, this means that 
$C:=\dot{\pi}(t,g)-\varphi(t,\Delta g)+m^{2}\varphi(t,g)$ is a c-number. The normalisation 
$\la \Omega | \varphi(t,f) \Omega \ra = 0$ then yields $C=0$.  
\end{proof}

\section{Wightman's reconstruction theorem}\label{sec:Recon} 
We denote the algebra of Schwartz functions on Minkowski space $\Mi$ by $\Sw(\Mi)$.

\begin{Theorem}[Reconstruction theorem]
 Let $\{ W_{n} \colon \Sw(\Mi)^{n} \rightarrow \C \}$ be a family of tempered distributions satisfying the following set of properties.
 \begin{enumerate}
 
 \item \textsc{Poincaré invariance}. 
 $W_{n}(f_{1}, ... , f_{n})=W_{n}(\{a,\Lambda\}f_{1}, ... , \{a,\Lambda\}f_{n})$ for all Poincaré
 transformations $(a,\Lambda) \in \Po$, where $(\{ a, \Lambda \}f)(x):=f(\Lambda^{-1}(x-a))$.
 
 \item \textsc{Spectral condition}. $W_{n}$ vanishes if one test 
 function's Fourier transform has its support outside the forward light cone, that is,
 \begin{equation}
  \wt{W}_{n}(\wt{f}_{1}, ... , \wt{f}_{n}) = W_{n}(f_{1}, ... , f_{n})= 0
 \end{equation}
 if there is a $j$ such that $\wt{f}_{j}(p)=0$ for all $p \in \overline{V}_{+}$. This means $\wt{W}_{n}$ has
 its support inside the forward light cone $(\overline{V}_{+})^{n}$. 
 
 \item \textsc{Hermiticity}. 
 $W_{n}(f_{1}, ... , f_{n})=W_{n}(f^{*}_{n}, ... , f^{*}_{1})^{*}$.
 
 \item \textsc{Causality}. If $f_{j}$ and $f_{j+1}$ have mutually spacelike separated support, then
 \begin{equation}
  W_{n}(f_{1}, ... , f_{j}, f_{j+1}, ... , f_{n}) = W_{n}(f_{1}, ... , f_{j+1}, f_{j}, ... , f_{n}).
 \end{equation}
 
 \item \textsc{Positivity}. For all $f_{j,l} \in \Sw(\Mi)$ one has
 \begin{equation}
  \sum_{n\geq 0} \sum_{j+k=n} W_{n}(f^{*}_{j,j}, ... , f^{*}_{j,1}, f_{k,1}, ... , f_{k,k}) \geq 0 , 
 \end{equation}
 where $W_{0}=|f_{0}|^{2}\geq 0$.
 
 \item \textsc{Cluster decomposition} Let $a\in \Mi$ be spacelike ($a^{2}<0$), then
 \begin{equation}
  \lim_{\lambda \rightarrow \infty} 
  W_{n}((f_{1}, ... , f_{j}, \{\lambda a,1\}f_{j+1}, ... , \{\lambda a,1\}f_{n})
  = W_{j}(f_{1}, ... ,f_{j})W_{n-j}(f_{j+1}, ... , f_{n}).
 \end{equation}
\end{enumerate}

 Then there is a scalar field theory fulfilling the Wightman axioms 0 to IV. Any other theory is unitarily 
 equivalent.
\end{Theorem}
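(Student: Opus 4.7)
The plan is to construct the Hilbert space, the vacuum, the Poincaré representation, and the quantum field directly from the given family $\{W_n\}$ by means of the Borchers algebra construction sketched in §\ref{sec:WightAx}, and then to verify each Wightman axiom in turn. First I would assemble the Borchers algebra $\B = \bigoplus_{n \geq 0} \Sw(\Mi^n)$ equipped with the tensor-product multiplication $(f \times h)_n = \sum_{j+k=n} f_j \te h_k$ and the involution $(f^*)_n(x_1, \ldots, x_n) := \ol{f_n(x_n, \ldots, x_1)}$. Aggregating the $W_n$ into a single linear functional $W(f) := \sum_{n \geq 0} W_n(f_n)$ on $\B$, I would define a sesquilinear form $\la f, h \ra := W(f^* \times h)$.

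Property W5 is precisely the statement that this form is positive semi-definite, so setting $\mathcal{N} := \{f \in \B : \la f, f \ra = 0\}$, the quotient $\B/\mathcal{N}$ carries a genuine inner product, and its completion yields a Hilbert space $\Hi$. The vacuum $\Psi_0$ is defined as the image of $(1,0,0,\ldots) \in \B$. The scalar field $\varphi(h)$ for $h \in \Sw(\Mi)$ is defined as the operator induced on $\B/\mathcal{N}$ by left multiplication by $(0,h,0,\ldots)$; one has to check that this multiplication preserves $\mathcal{N}$ (so as to descend to the quotient), that it is densely defined on the image $\D_0$ of $\B$ in $\Hi$, and that $\varphi(h)^\dagger \supset \varphi(h^*)$ by the involution. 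Cyclicity of $\Psi_0$ for the polynomial algebra $\Al(\Mi)$ is then immediate, since by construction $\D_0 = \Al(\Mi) \Psi_0$ is the image of $\B$ in $\Hi$ and dense by completion. That $f \mapsto \la \Psi, \varphi(f) \Psi' \ra$ is a tempered distribution follows from the nuclear theorem applied to each $W_n$.

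The Poincaré representation $U(a, \Lambda)$ is defined componentwise on $\B$ by $(U(a,\Lambda)f)_n(x_1, \ldots, x_n) := f_n(\Lambda^{-1}(x_1 - a), \ldots, \Lambda^{-1}(x_n - a))$; Property W1 ensures that this descends to a unitary operator on $\Hi$, and strong continuity follows from the continuity of $W_n$ as tempered distributions. Poincaré covariance of $\varphi$ (Axiom III) is then a direct computation from the definitions; the spectral condition (Axiom I) is obtained from W2 via the SNAG theorem applied to the translation subgroup; and locality (Axiom IV) follows from W4 because the matrix element of a commutator $[\varphi(f), \varphi(g)]$ between arbitrary states in $\D_0$ reduces to a difference of Wightman functions with adjacent arguments swapped, which vanishes by causality when $\supp f$ and $\supp g$ are spacelike-separated; since $\D_0$ is dense, the commutator vanishes weakly and hence as an operator identity on the domain.

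The hard part will be uniqueness of the vacuum as the unique Poincaré-invariant state, which is where the cluster decomposition property W6 enters decisively. One shows that any translation-invariant vector $\Phi \in \Hi$, expressed as a limit of vectors in $\D_0$, must be proportional to $\Psi_0$ by exploiting W6 to factorise two-point expectations $\la \Phi, \varphi(\{\lambda a,1\} f_1) \cdots \varphi(\{\lambda a, 1\} f_n) \Psi_0 \ra$ in the limit $\lambda \to \infty$ along a spacelike direction; after subtracting the component along $\Psi_0$, what remains must have vanishing norm. Finally, uniqueness of the reconstructed theory up to unitary equivalence is the comparatively easy observation that, given a second quadruple $(\Hi', \varphi', U', \Psi_0')$ with the same Wightman functions, the prescription $V(\varphi(h_1) \cdots \varphi(h_n) \Psi_0) := \varphi'(h_1) \cdots \varphi'(h_n) \Psi_0'$ preserves inner products (both sides being computed from the same $W_n$), hence extends by density to a unitary intertwiner, conjugating $\varphi$ into $\varphi'$ and $U$ into $U'$.
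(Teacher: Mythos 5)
Your proposal follows essentially the same route as the paper's own proof: a GNS-type construction on the Borchers algebra $\B=\bigoplus_{n\geq 0}\Sw(\Mi^{n})$ with inner product given by the Wightman functional, the vacuum as $(1,0,0,\dotsc)$, the field as left multiplication by $(0,h,0,\dotsc)$, the componentwise Poincaré action made unitary by W1, cluster decomposition W6 for uniqueness of the vacuum, and the standard densely defined intertwiner for unitary equivalence. The only small caveat is that $\Al(\Mi)\Psi_{0}$ is the image of $\bigoplus_{n}\Sw(\Mi)^{\te n}$ rather than of all of $\B$, so cyclicity needs the standard remark that algebraic tensor products are dense in $\Sw(\Mi^{n})$ and the $W_{n}$ are continuous; with that noted, your argument matches the paper's.
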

\begin{proof} Axioms 0 \& I: we start by considering the vector space $\D$ given by terminating sequences
\begin{equation}
\Psi_{f} = (f_{0},f_{1},f_{2},...,f_{n}, 0, 0, ...) \ , \hs{2} n \in \N
\end{equation}
with elements $f_{k} \in \Sw(\Mi^{k})$ and hence $\D = \bigoplus_{n\geq 0}\Sw(\Mi^{n})$, where $\Sw(\Mi^{0}):=\C$.
We make use of the Wightman distributions to define an inner product on $\D$ by
\begin{equation}\label{preHilInn}
\la \Psi_{f} | \Psi_{g} \ra = \la (f_{0},f_{1},f_{2},...) | (g_{0},g_{1},g_{2},...) \ra 
                  := \sum_{n\geq 0} \sum_{j+k=n} W_{n}(f^{*}_{j} \te g_{k}),
\end{equation}
where for $n=0$ we set $W_{0}(f_{0},g_{0})=f_{0}g_{0}$, ie simply the product in $\C$. 
The sum in (\ref{preHilInn}) is finite because the sequences terminate. 
$\la \Psi_{f} | \Psi_{g} \ra = \la \Psi_{g} | \Psi_{f} \ra^{*}$ is guaranteed by property (3).  
A representation of the Poincaré group is established by introducing the linear map
\begin{equation}\label{Poinc}
 U(a,\Lambda)(f_{0},f_{1},f_{2},...) := (f_{0},\{a,\Lambda\}f_{1}, \{a,\Lambda \}f_{2}, ... ),
\end{equation}
where $(\{a,\Lambda\}f_{j})(x_{1},...,x_{n}):=f_{j}(\Lambda^{-1}(x_{1} - a),..., \Lambda^{-1}(x_{j} - a))$, 
ie all arguments are equally transformed. Poincaré invariance of the inner product (\ref{preHilInn}) is 
obvious from property (1). This implies in particular 
that the representation of the Poincaré group is unitary, ie
\begin{equation}
 \la U(a,\Lambda)\Psi | U(a,\Lambda)\Psi' \ra = \la \Psi | \Psi' \ra
\end{equation}
holds for any states $\Psi=(f_{0},f_{1},f_{2}, ...)$, $\Psi'=(f'_{0},f'_{1},f'_{2}, ...)$. 
Furthermore, we infer from (\ref{Poinc}) that the vacuum state is merely the vector
\begin{equation}
 \Psi_{0} = (1,0,0, ...).
\end{equation}
It is an easy exercise to show on states in $\D$ that that this representation obeys the group law
\begin{equation}
 U(a',\Lambda')U(a,\Lambda)=U(a'+\Lambda' a,\Lambda' \Lambda)
\end{equation}
and the property $||U(a,\Lambda)\Psi - \Psi|| \rightarrow 0$ as $(a,\Lambda) \rightarrow (0,1)$. This latter 
properties implies strong continuity of the representation by the Cauchy-Schwartz inequality. 
The cluster decomposition property (6) implies that the vacuum is unique: let there be another 
Poincaré-invariant state $\Psi_{0}' \in \D$. One may assume $\la \Psi_{0}' | \Psi_{0} \ra = 0$.  
Then, for spacelike $a \in \Mi$, we have 
\begin{equation}
\begin{split}
\la \Psi_{0}' | \Psi_{0}' \ra  &= 
\lim_{\lambda \rightarrow \infty} \la \Psi_{0}' | U(\lambda a,1) \Psi_{0}' \ra  
= \lim_{\lambda \rightarrow \infty} \sum_{n\geq 0}\sum_{j+k=n}W_{n}(f_{j}^{*} \te U(\lambda a,1)f_{k}) \\
& =  \sum_{n\geq 0}\sum_{j+k=n}W_{j}(f_{j}^{*}) W_{k}(f_{k}) 
 = \left(\sum_{n \geq 0}W_{j}(f_{j}^{*})\right) \left(\sum_{m \geq 0} W_{k}(f_{k})\right) 
 = \la \Psi_{0}' | \Psi_{0} \ra \la \Psi_{0} | \Psi_{0}' \ra .
\end{split}
\end{equation}
We refer the interested reader to \cite{StreatWi00} in which the completion of $\D$ with respect to 
Cauchy series, nullifying of zero norm states and the spectral property of $P^{\mu}$ are discussed,
the latter follows from property (2) .

Axiom II: a quantum field can now be defined through the assignment of $h \in \Sw(\Mi)$ to the operation
\begin{equation}
 \varphi(h) (f_{0},f_{1},f_{2},...) = (0,f_{0}h,h \te f_{1},h \te f_{2},...).
\end{equation}
All operators in the algebra $\Al(\Mi)$ are then of the form 
\begin{equation}
 A = h_{0} + \varphi(h_{1}) + \varphi(h_{2,1}) \varphi (h_{2,2}) +  ... 
+ \varphi(h_{n,1})  ... \varphi(h_{n,n}) \in \Al(\Mi).
\end{equation}
Applying this to the vacuum generates the state
$\Psi = (h_{0}, h_{1}, h_{2,1} \te h_{2,2}, ... , h_{n,1} \te ... \te h_{n,n}, 0, 0, ...)$ which is
an element in $\D$. It is obvious that this space is stable under the action of both the field
and the Poincaré representation:
\begin{equation}
 \varphi(\Sw(\Mi))\D \subset \D  \ , \hs{1}  U(\Po)\D \subset \D.
\end{equation} 
However, the algebra $\Al(\Mi)$ generates only the dense subspace
\begin{equation}
\D_{0}:= \Al(\Mi)\Psi_{0} = \bigoplus_{n \geq 0}\Sw(\Mi)^{\te n} \subset \D
\end{equation}
but not $\D$.  That the map 
$f \mapsto \la \Psi | \varphi(f) \Psi' \ra $ is a tempered distribution for all $\Psi,\Psi' \in \D$ 
is obvious from
\begin{equation}
 \la \Psi | \varphi(f) \Psi' \ra = \la (h_{0},h_{1},h_{2},...) | \varphi(f)(g_{0},g_{1},g_{2},...) \ra
 = \sum_{n\geq 0} \sum_{j+k=n} W_{n}(h_{j} \te f \te g_{k-1})
\end{equation}
because for all $n\in \N$ the assignment $f \mapsto W_{n}(h_{j} \te f \te g_{k-1})$ is a tempered 
distribution and the sum over all $n$ is finite (we set $g_{-1}=0$).
Poincaré covariance (\ref{smPoin}) on $\D$ and hence the validity of Axiom III is easy to see by
\begin{equation}
 \begin{split}
  U(a,\Lambda) \varphi(h) (f_{0},f_{1},f_{2},...) 
  &= U(a,\Lambda) (0,f_{0}h,h \te f_{1},h \te f_{2},...)  \\
  &=  (0,f_{0}\{ a,\Lambda\}h,\{ a,\Lambda\}h \te \{ a,\Lambda\}f_{1},
   \{ a,\Lambda\}h \te \{ a,\Lambda\}f_{2},...) \\
  &= \varphi(\{ a,\Lambda\}h) (f_{0},\{ a,\Lambda\}f_{1},\{ a,\Lambda\}f_{2},...) \\
  &= \varphi(\{ a,\Lambda\}h) U(a,\Lambda) (f_{0},f_{1},f_{2},...),
 \end{split}
\end{equation}
that is, $U(a,\Lambda) \varphi(h)U(a,\Lambda)^{\dagger}\D = U(a,\Lambda) \varphi(h)U(a,\Lambda)^{-1}\D 
= \varphi(\{a,\Lambda\}h)\D$. 

Local commutativity is guaranteed by property (4): let $A,B \in \Al(\Mi)$ generate the two states 
$\Psi=A\Psi_{0}$ and $\Phi=B\Psi_{0}$. If $f,g \in \Sw(\Mi)$ have mutually spacelike support, then
\begin{equation}
 \la \Psi | \varphi(f)\varphi(g) \Phi \ra = \la \Psi_{0} | B^{*}\varphi(f)\varphi(g) A \Psi_{0} \ra
  =  \la \Psi_{0} | B^{*}\varphi(g)\varphi(f) A \Psi_{0} \ra =  \la \Psi | \varphi(g)\varphi(f) \Phi \ra ,
\end{equation}
in which the second step makes use of property (4). Hence $[\varphi(f),\varphi(g)]=0$ on $\D_{0}$. Because $\D_{0} \subset \D$ is dense this also holds on $\D$.

To see that any other theory giving rise to the same Wightman distributions is unitarily equivalent to 
the one just constructed, let $\phi$ be the other field and $\Omega_{0}$ its vacuum state. We define a 
linear map by setting
\begin{equation}
 V \Psi_{0} := \Omega_{0} \ , \hs{1} 
 V \varphi(f_{1})...\varphi(f_{n})\Psi_{0} := \phi(f_{1})...\phi(f_{n})\Omega_{0}. 
\end{equation}
Then for a general state $\Psi_{f} = (f_{0}, f_{1} , ... , f_{n} ) \in \D_{0}$ with 
$f_{j}=f_{j,1} \te ... \te f_{j,j} \in \Sw(\Mi)^{\te j}$ we have 
\begin{equation}
 V \Psi_{f} = V [f_{0} \Psi_{0} + \sum_{n\geq 1} \varphi(f_{n,1}) ... \varphi(f_{n,n}) \Psi_{0} ]
 = f_{0} \Omega_{0} + \sum_{n\geq 1} \phi(f_{n,1}) ... \phi(f_{n,n}) \Omega_{0}. 
\end{equation}
It is unitary because $\la V \Psi_{f} | V \Psi_{g} \ra = \la \Psi_{f} |\Psi_{g} \ra$ follows from 
coinciding vacuum expectation values. Next, we consider
\begin{equation}
\begin{split}
 V \varphi(h) \Psi_{f} &= V [ f_{0} \varphi(h) \Psi_{0} 
  + \sum_{n\geq 1} \varphi(h) \varphi(f_{n,1}) ... \varphi(f_{n,n}) \Psi_{0} ] \\
  &= f_{0} \phi(h) \Omega_{0} + \sum_{n\geq 1} \phi(h) \phi(f_{n,1}) ... \phi(f_{n,n}) \Omega_{0} \\
  &= \phi(h) [ f_{0} \Omega_{0} + \sum_{n\geq 1} \phi(f_{n,1}) ... \phi(f_{n,n}) \Omega_{0} ]
   = \phi(h) V \Psi_{f}
\end{split}
\end{equation}
which means $V\varphi(h)=\phi(h)V$ on $\D_{0}$, ie $V\varphi(h)V^{-1}=\phi(h)$ on the dense subspace 
generated by the field $\phi$. 
\end{proof}

\section{Jost-Schroer theorem}\label{sec:AppJoSch} 
\begin{Theorem}[Jost-Schroer Theorem]
Let $\varphi$ be a scalar field whose two-point Wightman distribution coincides with that of a free
field with mass $m>0$, ie 
\begin{equation}
 \la \Psi_{0} |\varphi(f) \varphi(h) \Psi_{0} \ra
 =  \int \frac{d^{4}p}{(2\pi)^{4}} \  \wt{f}^{*}(p) 2 \pi \theta(p_{0})  \delta(p^{2}-m^{2}) \wt{h}(p) .
\end{equation}
Then $\varphi$ is itself a free field of the same mass. 
\end{Theorem}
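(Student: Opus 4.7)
The plan is to extend the short sketch already given in §\ref{sec:ProofHTHM} to establish that \emph{all} Wightman distributions of $\varphi$ coincide with those of the free field, not merely the two-point function. The strategy divides naturally into three stages: (i) upgrade the coincidence of two-point functions to the operator Klein-Gordon equation $\varphi([\Box+m^{2}]f)=0$; (ii) use this together with the spectral condition to obtain a creation/annihilation splitting $\varphi=\varphi^{+}+\varphi^{-}$ whose mixed commutator is a c-number; (iii) induct on $n$.

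Stage (i) is already contained in the proof of Theorem \ref{JoSchro}: positivity of the two-point function together with the hypothesis gives $\|\varphi([\Box+m^{2}]f)\Psi_{0}\|=0$, and then locality (Axiom IV) and Corollary \ref{Comu} of the Reeh--Schlieder theorem convert this into $\varphi([\Box+m^{2}]f)=0$ on the dense domain $\D$. In Fourier language this means the operator-valued distribution $\wt{\varphi}(p)$ has support on the mass shell $\{p^{2}=m^{2}\}$.

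For stage (ii), I would combine the mass-shell support with the spectral condition (Axiom I), which forces $\wt{\varphi}(p)\Psi_{0}$ to be supported in $\overline{V}_{+}$. Smearing against suitable test functions whose Fourier transforms isolate the forward versus backward mass hyperboloid yields well-defined operators $\varphi^{-}(f)$ and $\varphi^{+}(f)$ on $\D$, with $\varphi^{-}(f)\Psi_{0}=0$ and $\varphi^{+}(f)\Psi_{0}$ a one-particle state whose norm is controlled by the hypothesised (free) two-point function. A direct computation of $\la \Psi_{0}|\varphi^{-}(x)\varphi^{+}(y)\Psi_{0}\ra$ then gives $\Delta_{+}(x-y;m^{2})$, and the identity of this matrix element, together with cyclicity of the vacuum and the fact that $[\varphi^{-}(x),\varphi^{+}(y)]$ commutes with every element of $\Al(\Mi)$ by the same Reeh--Schlieder argument, allows us to conclude that $[\varphi^{-}(x),\varphi^{+}(y)]=\Delta_{+}(x-y;m^{2})\cdot\mathbf{1}$ as a c-number on $\D$.

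For stage (iii), I induct on $n$. Writing $\varphi(x_{1})=\varphi^{+}(x_{1})+\varphi^{-}(x_{1})$ inside $W_{n}(x_{1},\dots,x_{n})$, the $\varphi^{+}(x_{1})$ term vanishes because $\Psi_{0}^{\dagger}\varphi^{+}(x_{1})=(\varphi^{-}(x_{1})\Psi_{0})^{\dagger}=0$, while the $\varphi^{-}(x_{1})$ term is moved successively to the right past each $\varphi(x_{j})$, producing at each step the c-number $\Delta_{+}(x_{1}-x_{j};m^{2})$ multiplied by $W_{n-2}(\dots\hat{x}_{1}\dots\hat{x}_{j}\dots)$; the term with $\varphi^{-}(x_{1})$ hitting $\Psi_{0}$ on the far right vanishes. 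By the inductive hypothesis these lower Wightman distributions are the free ones, and the resulting sum is precisely the Wick expansion of the free $n$-point function. The main obstacle will be making the frequency decomposition $\varphi=\varphi^{+}+\varphi^{-}$ rigorous as operator-valued distributions on $\D$: sharp mass-shell projectors are singular, so one must verify carefully that smearing with appropriate test functions does produce operators whose products and commutators behave as heuristically expected, and that the domain $\D$ remains stable under these operations. This Fourier-analytic domain question, rather than the algebraic induction itself, is the technical heart of the argument.
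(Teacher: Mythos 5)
Your overall strategy is the same as the paper's (Klein--Gordon equation via Corollary \ref{Comu}, frequency splitting forced by the spectral condition, c-number commutator, Wick expansion of the higher $W_{n}$), but two essential steps are missing or wrongly justified. First, your claim that $[\varphi^{-}(x),\varphi^{+}(y)]$ is a c-number because it ``commutes with every element of $\Al(\Mi)$ by the same Reeh--Schlieder argument'' does not work: Reeh--Schlieder gives cyclicity of the vacuum for local algebras and (via Corollary \ref{Comu}) the absence of annihilators, but it does not show that this commutator commutes with the field algebra, and you give no argument for that. The correct route, which is the one the paper takes, is spectral: the vector $\varphi^{+}(f)\,\varphi^{-}(h)\,\Psi_{0}$ (creator acting after annihilator in your convention, i.e. the paper's $\varphi_{+}(f)\varphi_{-}(h)\Psi_{0}$) carries a total momentum that is the sum of a forward and a backward timelike mass-shell momentum, hence spacelike or zero; the spectral condition (Axiom I) then forces it to be a multiple of the vacuum, which gives $[\varphi^{-}(f),\varphi^{+}(h)]\Psi_{0}=W(f,h)\Psi_{0}$, and only afterwards does Corollary \ref{Comu} promote the resulting vacuum identity to an operator identity.

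Second, your induction in stage (iii) silently assumes $[\varphi^{-}(x_{1}),\varphi^{-}(x_{j})]=0$: when you move $\varphi^{-}(x_{1})$ to the right past $\varphi(x_{j})=\varphi^{+}(x_{j})+\varphi^{-}(x_{j})$, the piece $[\varphi^{-}(x_{1}),\varphi^{-}(x_{j})]$ is never controlled, so the claimed Wick expansion does not follow from what you have established. This is precisely the technically nontrivial part of the theorem. The paper disposes of it by combining locality of $\varphi$ at spacelike separation (which, together with the symmetry of $\Delta_{+}$ there, shows $\la\Psi\,|\,[\varphi_{-}(f),\varphi_{-}(h)]\Psi_{0}\ra$ vanishes for spacelike-separated supports), the momentum-space support of the negative-frequency parts (which permits analytic continuation of this matrix element into the forward tube), and the edge-of-the-wedge theorem (Theorem \ref{edwed}) to conclude it vanishes identically; a final application of Corollary \ref{Comu} then yields the operator identity $[\varphi(f),\varphi(h)]=W(f,h)-W(h,f)$, from which your inductive Wick expansion is legitimate. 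By contrast, the domain issue you single out as ``the technical heart'' is comparatively minor: the paper simply splits $\wt{\varphi}$ with smooth functions $\chi_{\pm}$ equal to $1$ near the two mass hyperboloids, which is unproblematic once the support of $\wt{\varphi}$ is known to lie on $H_{m}^{+}\cup H_{m}^{-}$.
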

\begin{proof}
We follow \cite{StreatWi00}. First note that because $ \la \Psi_{0} | \varphi([\Box + m^{2}]f) \varphi([\Box + m^{2}]h) \Psi_{0} \ra = 0$,
where $\Box + m^{2}$ is the Klein-Gordon operator, represented by the multiplication 
$\wt{f}(p) \mapsto (-p^{2}+m^{2})\wt{f}(p)$ in momentum space for a Schwartz function $f$. This means in particular 
\begin{equation}
|| j(f)\Psi_{0} ||^{2} =  \la \Psi_{0} | \varphi([\Box + m^{2}]f^{*}) \varphi([\Box + m^{2}]f) \Psi_{0} \ra = 0
\end{equation}
for the operator valued distribution $j(f):=\varphi([\Box + m^{2}]f)$ and thus $j(f) \Psi_{0} = 0$. 
By the Reeh-Schlieder theorem (Theorem \ref{Comu}) we have $j(f)= 0$ for all $f \in \Sw(\Mi)$. 
This means that $\varphi$ obeys the free Klein-Gordon equation. We define the Fourier transform of the 
operator field by
\begin{equation}
 \wt{\varphi}(\wt{f}):=\varphi(f),
\end{equation}
where $\wt{f} \in \Sw(\Mi)$ is the Fourier transform of $f \in \Sw(\Mi)$ (and vice versa). 
The d'Alembertian takes the form of a multiplication operator 
\begin{equation}
 [\wt{\Box}+m^{2}]\wt{f}(p)=[-p^{2}+m^{2}]\wt{f}(p).
\end{equation}
The Klein-Gordon equation for the Fourier transform of the field reads 
$\wt{\varphi}([\wt{\Box}+m^{2}]\wt{f})=0$ for all $\wt{f} \in \Sw(\Mi)$. This means that $\wt{\varphi}$ 
vanishes unless the test function has some of its support inside the two hyperbolae 
\begin{equation}
H_{m}^{\pm} = \{ \  p \in M \ | \ p^{2} = m^{2} , p_{0} \gtrless 0 \ \}.
\end{equation}
The spectrum therefore is contained in the pair of mass hyperbolae, ie $\sigma (\wt{\varphi}) \subset  H_{m} := H_{m}^{+} \cup H_{m}^{-}$. 
We introduce two functions $\chi_{\pm} \in \Sw(\Mi)$ with $\chi_{\pm}(p)=1$ 
on the hyperbola $H_{m}^{\pm}$ and vanishing outside some neighbourhood of $H_{m}^{\pm}$. This enables us to split
\begin{equation}
 \wt{\varphi}(\wt{f}) =  \wt{\varphi}(\chi_{-}\wt{f}) + \wt{\varphi}(\chi_{+}\wt{f})  
 = \wt{\varphi}_{+}(\wt{f}) + \wt{\varphi}_{-}(\wt{f}), 
\end{equation}
where $\wt{\varphi}_{\pm}(\wt{f}):=\wt{\varphi}(\chi_{\mp}\wt{f})$, admittedly a very confusing convention.
However, their Fourier transforms are 
\begin{equation}
 \varphi_{\pm}(f):= \wt{\varphi}_{\pm}(\wt{f}),
\end{equation}
that is, the positive (negative) and negative (positive) frequency (energy) parts of the field. 
Because there are no negative energy states, one has 
\begin{equation}\label{ann}
\varphi_{+}(f)\Psi_{0} = 0.
\end{equation}
Because the state $\varphi_{+}(f)\varphi_{-}(h)\Psi_{0}$ has a forward and a backward timelike momentum, its
sum, the total momentum, is spacelike. One cannot create such a state. Either this state vanishes or it 
is a multiple of the vacuum. The only acceptable answer is that the latter is the case, 
since otherwise $\varphi_{-}(f)$ would annihilate the vacuum or be trivial, ie a multiple of the identity,
neither of which makes sense. By (\ref{ann}) we see that
\begin{equation}
 \la \Psi_{0} | \varphi_{+}(f)\varphi_{-}(h) \Psi_{0} \ra 
 = \la \Psi_{0} | \varphi(f)\varphi(h) \Psi_{0} \ra = W(f,h)
\end{equation}
and hence $[\varphi_{+}(f),\varphi_{-}(h)] \Psi_{0} = \varphi_{+}(f)\varphi_{-}(h) \Psi_{0} 
= W(f,h) \Psi_{0}$ because the state $\varphi_{+}(f)\varphi_{-}(h) \Psi_{0}$ is a multiple of the vacuum. 
The commutator then acts on the vacuum as
\begin{equation}
 [\varphi(f),\varphi(h)]\Psi_{0} = \{ W(f,h)-W(h,f) \} \Psi_{0} + [\varphi_{-}(f),\varphi_{-}(h)] \Psi_{0} .
\end{equation}
The first thing we notice about the last term is 
$\la \Psi_{0} |[\varphi_{-}(f),\varphi_{-}(h)] \Psi_{0} \ra =0$. The next aspect is that the distribution
\begin{equation}
 F(f,h):= \la \Psi |[\varphi_{-}(f),\varphi_{-}(h)] \Psi_{0} \ra
\end{equation}
for any $\Psi \in \D$ vanishes when $f$ and $h$ have mutually spacelike support. Consequently, the 
analytic continuation $F(z_{1},z_{2})$ into the forward tube $\Tu_{2} = \Mi^{2} + i (V_{+})^{2}$ 
vanishes on the open subset of real spacelike $z_{1}-z_{2}$. Since these vectors comprise an open 
set $E \subset \Mi^{2}$, this distribution vanishes by the edge of the wedge theorem (Theorem \ref{edwed}). 
Therefore, we have
\begin{equation}
 ([\varphi(f),\varphi(h)] -\{ W(f,h)-W(h,f) \}) \Psi_{0} = 0,
\end{equation}
ie the operator $T=[\varphi(f),\varphi(h)] -\{ W(f,h)-W(h,f) \}$ annihilates the vacuum. Since by Theorem
\ref{Comu} no annihilator other than the null operator can be constructed from a locally generated
polynomial algebra of fields, one has $T=0$ and thus
\begin{equation}
[\varphi(f),\varphi(h)] = W(f,h)-W(h,f) \ , \hs{1} \varphi([\Box + m^{2}]f)=0
\end{equation}
for all $f,h \Sw(\Mi)$. This entails $W_{n}=0$ for $n$ odd and 
\begin{equation}
 W_{n}(f_{1}, ... , f_{n} ) = \sum_{(i,j)} W(f_{i_{1}},f_{j_{1}}) ... W(f_{i_{m}},f_{j_{m}})
\end{equation}
for even $n=2m$, where the sum is over all permutations of $1, 2, ... , n$ written as 
$i_{1},j_{1}, ... , i_{m},j_{m}$ such that 
$i_{1} < i_{2} < ... < i_{m} < j_{1} < ... < j_{m}$ (Wick contractions). 
\end{proof} 
 
\end{appendix}


\begin{thebibliography}{9}




\bibitem[AlHoMa08]{AlHoMa08} S.A. Albeverio, R.J. H\o egh-Krohn, S. Mazzucchi : \emph{Mathematical Theory of Feynman Path Integrals. An introduction.}
Springer (2008)

\bibitem[AMY12]{AMY12} K.V. Antipin, M.N. Mnatsakanova, Yu.S. Vernov: \emph{Haag's theorem in noncommutative
quantum field theory}. arXiv: math-ph/1202.0995v1 

\bibitem[Bar63]{Bar63} G. Barton: \emph{Introduction to Advanced Field Theory}. Wiley Interscience (1963)

\bibitem[Bau87]{Bau87} K. Baumann: \emph{On relativistic irreducible quantum fields fulfilling CCR}. 
J. Math. Phys. 28, 697-704 (1987)

\bibitem[Bau88]{Bau88} K. Baumann: \emph{On canonical irreducible quantum field theories describing 
bosons and fermions}.  J. Math. Phys. 29, 1225-1230 (1988) 

\bibitem[Blee81]{Blee81} D. Bleecker: \emph{Gauge Theory and Variational Principles}. Mineola, New York (1981)

\bibitem[Bleu50]{Bleu50} K. Bleuler : \emph{Eine neue Methode zur Behandlung der longitudinalen und 
skalaren Photonen}. Helv. Phys. Act. 23, 567-586 (1950)

\bibitem[BjoDre65]{BjoDre65} J.D. Bjorken, S.D. Drell : \emph{Relativistic Quantum Fields}. McGraw-Hill 
(1965)

\bibitem[Bogo75]{Bogo75} N.N. Bogoliubov, A.A. Logunov, I.T. Todorov: \emph{Introduction to Axiomatic 
Quantum Field Theory}. W.A. Benjamin, Inc. (1975)  

\bibitem[Bor62]{Bor62} H.-J. Borchers: \emph{On Structure of the algebra of field operators}. Nuovo Cimento
24, 214- 236 (1962)

\bibitem[BoHaSch63]{BoHaSch63} H.J. Borchers, R. Haag, B. Schroer : \emph{The Vacuum State in Quantum 
Field Theory}. Nuovo Cim. 29, 148 (1963)

\bibitem[BoRo33]{BoRo33} N. Bohr, L. Rosenfeld : \emph{Zur Frage der Messbarkeit der elektromagnetischen
Feldgr\"ossen}. Danske Videnskabernes Selskab Matematisk-Fysiske Meddelelser 12, 1-65 (1933)

\bibitem[BoRo50]{BoRo50} N. Bohr, L. Rosenfeld : \emph{Field and Charge Measurementa in Quantum Electrodynamics}. Phys. Rev. 78, 794-798 (1950)

\bibitem[BroK01]{BroK01} D.J. Broadhurst, D. Kreimer: \emph{Exact solution of Dyson-Schwinger equations for iterated one-loop integrals
and propagator coupling duality}, Nucl. Phys. B 600 (2001), 403 - 422, arXiv: hep-th/0012146 

\bibitem[BruFK96]{BruFK96} R. Brunetti, K. Fredenhagen, M. K\"ohler: \emph{The microlocal spectrum 
condition and the Wick polynomials of free fields}. Comm. Math. Phys. 180, 633 - 652 (1996) 

\bibitem[Bu75]{Bu75} D. Buchholz: \emph{Collision theory for massless Fermions}. Comm. Math. Phys. 42, 
269-279 (1975) 

\bibitem[Bu77]{Bu77} D. Buchholz: \emph{Collision theory for massless Bosons}. Comm. Math. Phys. 52, 
147-173 (1977) 

\bibitem[Bu00]{Bu00} D. Buchholz: \emph{Algebraic Quantum Field Theory: A Status Report}. 
arXiv: math-ph/0011044v1 (2000)

\bibitem[BuHa00]{BuHa00} D. Buchholz, R. Haag: \emph{The Quest for Understanding in Relativistic Quantum Physics}. J. Math. Phys. 41, 3674-3697 (2000),
arXiv: hep-th/9910243

\bibitem[Chai68]{Chai68} J.M. Chaiken : \emph{Number Operators for Representations of the Canonical 
Commutation Relations}. Comm. Math. Phys. 8, 164-184 (1968)

\bibitem[Co53]{Co53} J.M. Cook: \emph{The Mathematics of Second Quantization}. Trans. Amer. Math. Soc. 74, 222-245 (1953)

\bibitem[CoKrei98]{CoKrei98} A. Connes, D. Kreimer : \emph{Hopf algebras, renormalisation and noncommutative geometry}. Comm. Math. Phys. 199, 203-242 
(1998), arXiv: hep-th/9808042

\bibitem[CoKrei00]{CoKrei00} A. Connes, D. Kreimer: \emph{Renormalisation in quantum field theory and the
Riemann-Hilbert problem I: The Hopf algebra structure of graphs and the main theorem.} Comm. Math. Phys.
210, 249-273 (2000)

\bibitem[DeDoRu66]{DeDoRu66} G.-F. Dell'Antonio, S. Doplicher, D. Ruelle : \emph{A Theorem on Canonical
Commutation Relations}. Comm. Math. Phys. 2, 223 -230 (1966)

\bibitem[DeDo67]{DeDo67} G.-F. Dell'Antonio, S. Doplicher : \emph{Total Number of Particles and Fock
representation}. J. Math. Phys. 8, 663-666 (1967) 

\bibitem[Di73]{Di73} J. Dimock: \emph{Asymptotic Perturbation Expansion in the $P(\phi)_{2}$ Quantum 
Field Theory}. Comm. Math. Phys. 35, 347-356 (1974) 

\bibitem[Di11]{Di11} J. Dimock : \emph{Quantum Mechanics and Quantum Field Theory. A Mathematical Primer}.
Cambridge (2011)

\bibitem[Dir31]{Dir31} P. Dirac: \emph{Quantised Singularities in the Electromagnetic Field}. Proc. R. Soc. Lond. A 133, 60-72 (1931)

\bibitem[Dix58]{Dix58} J. Dixmier: \emph{Sur la relation $i(PQ-QP)=1$}. Comp. Math. 13, 263-270 (1958)

\bibitem[DresKa62]{DresKa62} M. Dresden, P.B. Kahn: \emph{Field theories with persistent one particle states. I. General formalism}. Rev. Mod. Phys. 34, 401 (1962)

\bibitem[Dunc21]{Dunc21} A. Duncan: \emph{The Conceptual Framework of Quantum Field Theory}. Oxford University Press (2012)


\bibitem[Dys49a]{Dys49a} F.J. Dyson: \emph{The Radiation Theories of Tomonaga, Schwinger, and Feynman}. 
Phys. Rev. 75, 486 - 502 (1949) 

\bibitem[Dys49b]{Dys49b} F.J. Dyson: \emph{The S Matrix in Quantum Electrodynamics}. Phys. Rev. 75, 
1736 -1755 (1949) 
 
\bibitem[Dys51]{Dys51} F.J. Dyson : \emph{Divergence of Perturbation Theory in Quantum Electrodynamics}, Phys. Rev. 85 (1951), 631



\bibitem[EaFra06]{EaFra06} J. Earman, D. Fraser: \emph{Haag's theorem and its implication for the foundations of quantum field theory.} 
Erkenntnis, 64, 305-344 (2006) 


\bibitem[Em09]{Em09} G.G. Emch: \emph{Algebraic Methods in Statistical Mechanics and Quantum Field Theory}. 
Dover Publications, Inc. (2009)

\bibitem[FePStro74]{FePStro74} R. Ferrari, L. Picasso, F. Strocchi : \emph{Some Remarks on Local Operators in Quantum Electrodynamics}.
Comm. Math. Phys. 35, 25-38 (1974)

\bibitem[FeJo60]{FeJo60} P.G. Federbush, K.A. Johnson: \emph{Uniqueness Properties of the Twofold 
Vacuum Expectation Value}. Phys. Rev. 120, 1926 (1960) 

\bibitem[Feyn49]{Feyn49} R.P. Feynman: \emph{Space-Time Approach to Quantum Electrodynamics}. Phys. Rev. 49,
769 -789 (1949)

\bibitem[Feyn06]{Feyn06} R.P. Feynman : \emph{QED. The strange theory of light and matter}. Princeton University Press (2006)

\bibitem[Fo32]{Fo32} V. Fock: \emph{Konfigurationsraum und zweite Quantelung}. Zeitschr. f. Phys. 75, 
622-647 (1932)

\bibitem[Fra06]{Fra06} D.L. Fraser: \emph{Haag's Theorem and the interpretation of quantum field theories
with interactions}. Thesis, University of Pittsburgh (2006)


\bibitem[Fred10]{Fred10} K. Fredenhagen: \emph{Quantum field theory}. Lecture notes, Universit\"at Hamburg 
(2010), available online at \texttt{unith.desy.de/research/aqft/lecture\_notes}

\bibitem[Fried53]{Fried53} K.O.Friedrichs: \emph{Mathematical Aspects of the Quantum Theory of Fields}. 
Interscience Publishers, Inc., New York (1953)

\bibitem[Fro82]{Fro82} J. Fr\"ohlich: \emph{On the triviality of $\lambda \varphi_{d}^{4}$ theories and the 
approach to the critical point in $d\geq 4$ dimensions}. Nucl. Phys. B 200, 281-296 (1982)

\bibitem[GaWi54]{GaWi54} L. G\aa rding, A. Wightman: \emph{Representations of the Commutation Relations}. Proc. Nat. Acad. Sci. 40, 622-626 (1954)

\bibitem[GeMLo51]{GeMLo51} M. Gell-Mann, F. Low: \emph{Bound States in Quantum Field Theory}. Phys. Rev. 84, 350-354 (1951)

\bibitem[GeMLo54]{GeMLo54} M. Gell-Mann, F. Low: \emph{Quantum Electrodynamics at Small Distances States}. Phys. Rev. 95, 1300 (1954)

\bibitem[GliJaf68]{GliJaf68} J. Glimm, A. Jaffe: \emph{A $\lambda \phi^{4}$ quantum theory without cutoffs I}.
Phys. Rev. 176, 1945-1951 (1968) 

\bibitem[GliJaf70]{GliJaf70} J. Glimm, A. Jaffe: \emph{A $\lambda \phi^{4}$ quantum theory without cutoffs II}.
Ann. Math. 91, 362-401 (1970) 

\bibitem[GliJaf81]{GliJaf81} J. Glimm, A. Jaffe: \emph{Quantum Physics. A Functional Integral Point of View}.
Springer (1981)
Ann. Math. 91, 362-401 (1970) 

\bibitem[GliJaSp74]{GliJaSp74} J. Glimm, A. Jaffe, T. Spencer: \emph{The Wightman axioms and particle 
structure in the $P(\phi)_{2}$ quantum field model}. Ann. Math. Phys. 100, 585-632 (1974)

\bibitem[Gre59]{Gre59} O.W. Greenberg: \emph{Haag's Theorem and Clothed Operators}. Phys. Rev. 115, 706-710
(1959)

\bibitem[Gre61]{Gre61} O.W. Greenberg: \emph{Generalized Free Field and Models of Local Field Theory}.
Ann. Phys. 16, 158-176 (1961)

\bibitem[Gu50]{Gu50} S. N. Gupta :\emph{Theory of Longitudinal Photons in Quantum Electrodynamics}. 
Proc. Phys. Soc. 63, 681-691 (1950)

\bibitem[Gue66]{Gue66} M. Guenin : \emph{On the Interaction Picture}. Comm. math. phys. 3, 120 - 132 (1966)

\bibitem[GueRoSi72]{GueRoSi72} F. Guerra, L. Rosen, B. Simon: \emph{Nelson's Symmetry and the infinite
Volume Behaviour of the vacuum in $P(\phi)_{2}$}. Comm. Math. Phys. 27, 10-22 (1972)

\bibitem[Ha55]{Ha55} R. Haag: \emph{On Quantum Field Theories}. Danske Videnskabernes Selskab 
Matematisk-Fysiske Meddelelser 29, 1-37 (1955)

\bibitem[Ha58]{Ha58} R. Haag: \emph{Quantum Field Theories with Composite Particles and Asymptotic Conditions}. Phys. Rev. 112, 669-673 (1958)

\bibitem[Ha96]{Ha96} R. Haag: \emph{Local Quantum Physics}. Springer (1996)

\bibitem[HaKa64]{HaKa64} R.Haag, D.Kasterl: \emph{An Algebraic Approach to Quantum Field Theory}. 
J. Math. Phys. 5, 848-861 (1964)

\bibitem[HeiPau29]{HeiPau29} W.Heisenberg, W.Pauli: \emph{Zur Quantendynamik der Wellenfelder}. Zeitschr.
f. Phys. 56, 1-61 (1929) 

\bibitem[He63]{He63} K. Hepp: \emph{Lorentz-kovariante analytische Funktionen}. Helv. Phys. Acta 36, 355 (1963)

\bibitem[He66]{He66} K. Hepp: \emph{Proof of the Bogoliubov-Parasiuk Theorem on Renormalisation}. Comm. Math. Phys. 2, 301-326 (1966)

\bibitem[ItZu80]{ItZu80} C. Itzykson, J.-B. Zuber: \emph{Quantum Field Theory}. Dover Publications, Inc.
(1980)

\bibitem[Jaff69]{Jaff69} A. Jaffe: \emph{Whither axiomatic quantum field theory?} Rev. Mod. Pyhs. 41, 
576 - 580 (1969) 

\bibitem[Jo61]{Jo61} R. Jost: \emph{Properties of Wightman Functions}. In: Lectures on Field Theory and 
the Many-Body Problem, edited by E.R. Caianiello, Academic Press (1961)

\bibitem[Jo65]{Jo65} R. Jost: \emph{The General Theory of Quantized Fields}. Amercian Mathematical Society (1965)

\bibitem[Kla15]{Kla15} L. Klaczynski: \emph{Haag's Theorem in Renormalisable Quantum Field Theories}. PhD thesis, Humboldt University, 
supervisor: Dirk Kreimer (2015)

\bibitem[Krei02]{Krei02} D. Kreimer: \emph{Combinatorics of (perturbative) quantum field theory}. Phys. Rep. 363, 387-424 (2002)

\bibitem[LeBlo67]{LeBlo67} J.-M. Lévy-Leblond: \emph{Galilean Quantum Field Theories and a Ghostless Lee
Model}. Comm. Math. Phys. 4, 157-176 (1967)

\bibitem[Lo61]{Lo61} J. Lopuszanski : \emph{A Criterion for the Free Field Character of Fields}. 
J. Math. Phys. 2, 743 - 747 (1961)

\bibitem[LSZ55]{LSZ55} H.Lehmann, K.Symanzik, W.Zimmermann: \emph{Zur Formulierung quantisierter Feldtheorien}. Il Nuovo Cimento 1, 205-225 (1955)

\bibitem[LSZ57]{LSZ57} H.Lehmann, K.Symanzik, W.Zimmermann: \emph{On the Formulation of Quantized Field Theories II}. Il Nuovo Cimento 6, 319-333 (1957)

\bibitem[Lu05]{Lu05} T. Lupher: \emph{Who proved Haag's Theorem?} International Journal of Theoretical
Physics 44, 1995 - 2005 (2005) 

\bibitem[OS76]{OS76} K. Osterwalder, R. Sénéor: \emph{The scattering matrix is nontrivial for weakly
coupled $P(\varphi)_{2}$ models}. Helv. Phys. Act. 49, 525-535 (1976)

\bibitem[OSchra73]{OSchra73} K.Osterwalder, R.Schrader: \emph{Axioms for Euclidean Green's Functions}.
Comm. Math. Phys. 31, 83-112 (1973)

\bibitem[Ost86]{Ost86} K. Osterwalder: \emph{Constructive quantum field theory: goals, methods, results}. Helv. Phys. Acta 59, 220-228 (1986)

\bibitem[PeSch95]{PeSch95} M.E. Peskin, D.V. Schroeder: \emph{An Introduction to Quantum Field Theory}.
Westview Press (1995)

\bibitem[Po69]{Po69} K. Pohlmeyer : \emph{The Jost-Schroer Theorem for Zero-Mass Fields}. 
Comm. Math. Phys. 12, 204-211 (1969)

\bibitem[Pow67]{Pow67} R.T. Powers : \emph{Absence of Interactions as a Consequence of Good 
Ultraviolet Behaviour in the Case of a Local Fermi Field}. Comm. Math. Phys. 4, 145-156 (1967)

\bibitem[Ree88]{Ree88} H. Reeh: \emph{A remark concerning commutation relations}. J.Math.Phys. 29 (7), 
1535 (1988)  

\bibitem[ReSi75]{ReSi75} M. Reed, B. Simon : \emph{Methods of Modern Mathematical Physics}. Vol. 2: 
Fourier Analysis, Self-Adjointness, Academic Press (1975)

\bibitem[Ri14]{Ri14} D. Rickles: \emph{A Brief History of String Theory}. Springer (2014) 

\bibitem[Ro69]{Ro69} P. Roman: \emph{Introduction to Quantum Field Theory}. John Wiley and Sons, Inc. 
(1969) 

\bibitem[Rue11]{Rue11} L. Ruetsche: \emph{Interpretating Quantum Theories}. Oxford University Press (2011), 
p.18

\bibitem[Rue62]{Rue62} D. Ruelle: \emph{On the Asymptotic Condition in Quantum Field Theory}. Helv. Phys. 
Acta, 35, 147 (1962)

\bibitem[Scha14]{Scha14} G. Scharf: \emph{Finite Quantum ELectrodynamics. The Causal Approach}. Dover Publications, Ic. (2014)

\bibitem[Scho08]{Scho08} M. Schottenloher: \emph{A Mathematical Introduction to Conformal Field Theory}. Springer, 2008

\bibitem[Schra74]{Schra74} R. Schrader: \emph{On the Euclidean Version of Haag's Theorem in $P(\varphi)_{2}$
field theories}. Comm. Math. Phys. 36, 133-136 (1974) 

\bibitem[Schra76]{Schra76} R. Schrader: \emph{A possible Constructive Approach to $(\phi^{4})_{4}$}. 
Comm. Math. Phys. 49, 131-153 (1976) 

\bibitem[Schw94]{Schw94} S. Schweber: \emph{QED and the men who made it}.  Princeton University Press (1994)

\bibitem[Seg67]{Seg67} I. Segal: \emph{Notes toward the construction of non-linear relativistic quantum 
fields I. The Hamiltonian in two space-time dimensions as the generator of a C*-automorphism group}. 
Proc. Nat. Am. Soc. 57, 1178 - 1183 (1967)

\bibitem[SinEm69]{SinEm69} K. Sinha, G.G. Emch: \emph{Adaptation of Powers' No-Interaction Theorem to 
Bose Fields}. Bull. Am. Phys. Soc. 14, 86 (1969)

\bibitem[Sok79]{Sok79} A.D. Sokal: \emph{An improvement of Watson's theorem on Borel summability.} J. Math. Phys. 21, 261-263 (1979) 

\bibitem[Sta62]{Sta62} H.P. Stapp: \emph{Axiomatic S-Matrix Theory}. Rev. Mod. Phys. 34, 390-394 (1962)

\bibitem[Stei00]{Stei00} O. Steinmann: \emph{Perturbative Quantum Electrodynamics and Axiomatic Field
Theory}. Springer (2000)

\bibitem[Ster93]{Ster93} G. Sterman : \emph{An introduction to Quantum Field Theory.} Cambridge University Press (1993)

\bibitem[Strau01]{Strau01} N. Straumann: \emph{Schr\"odingers Entdeckung der Wellenmechanik}. 
arXiv: quant-ph/0110097v1 (2001)

\bibitem[Strau13]{Strau13} N. Straumann: \emph{Quantenmechanik}. Springer Verlag (2013)  

\bibitem[Streat75]{Streat75} R.F. Streater: \emph{Outline of axiomatic quantum field theory}. Rep. Prog. Phys.
38, 771-846 (1975)

\bibitem[StreatWi00]{StreatWi00} R.F. Streater, A.S. Wightman: \emph{PCT, Spin and Statistics, and all that}.
Princeton University Press (2000)

\bibitem[Strei68]{Strei68} L. Streit : \emph{A Generalization of Haag's Theorem}. Il Nuovo Cimento 72,
673-680 (1969)

\bibitem[Stro67]{Stro67} F. Strocchi : \emph{Gauge Problem in Quantum Field Theory}. 
Phys. Rev. 162, 1429-1438 (1967)

\bibitem[Stro70]{Stro70} F. Strocchi : \emph{Gauge Problem in Quantum Field Theory III. Quantization of
Maxwell Equations and Weak Local Commutativity}.  
Phys. Rev. 162, 2334-2340 (1970)

\bibitem[Stro93]{Stro93} F. Strocchi: \emph{Selected topics on the general properties of quantum field
theory}. World Scientific (1993)

\bibitem[Stro13]{Stro13} F. Strocchi: \emph{An Introduction to Non-Perturbative Foundations of 
Quantum Field Theory}. Oxford (2013)

\bibitem[StroWi74]{StroWi74} F. Strocchi, A.S. Wightman: \emph{Proof of the charge superselection rule
in local relativistic quantum field theory}. J. Math. Phys. 15, 2198-2224 (1974)

\bibitem[Stue50]{Stue50} E.C.G. Stueckelberg: \emph{Relativistic Quantum Theory for Finite Time Intervals}.
Phys. Rev. 81, 130-133 (1950)

\bibitem[Su01]{Su01} S.J. Summers: \emph{On the Stone-von Neumann Theorem and its Ramifications}, in: 
\emph{John von Neumann and the Foundations of Quantum Physics}, edited by M.Rédei, M.St\"oltzner, Kluwer
Academic Publishers (2001)

\bibitem[Su12]{Su12} S.J. Summers: \emph{A Perspective on Constructive Quantum Field Theory}. 
arXiv: math-ph/1203.3991v1 (2012)

\bibitem[Ti99]{Ti99} Ticciati, R. : \emph{Quantum Field Theory for Mathematicians}. Cambridge University Press (1999)

\bibitem[Wei11]{Wei11} M. Weier : \emph{An Algebraic Version of Haag's Theorem}. Comm. Math. Phys. 
305, 469-485 (2011)

\bibitem[vHo52]{vHo52} L. van Hove: \emph{Les difficultés de divergences pour un modelle particulier de 
champ quantifié}. Physica 18, 145-159 (1952)

\bibitem[vNeu31]{vNeu31} J. von Neumann: \emph{Zur Eindeutigkeit der Schr\"odingerschen Operatoren}. Ann. 
Math. 104, 570-578 (1931)

\bibitem[Wein95]{Wein95} S. Weinberg: \emph{The Quantum Theory of Fields}. Vols.I-III,   
Cambridge University Press (1995)

\bibitem[Wic50]{Wic50} G.C. Wick: \emph{The Evaluation of the Collision Matrix}. Phys. Rev. 80, 268 (1950) 

\bibitem[Wi56]{Wi56} A. Wightman : \emph{Quantum Field Theory in Terms of Vacuum Expectation Values}.
Phys. Rev. 56, 860-866 (1956)

\bibitem[Wi64]{Wi64} A. Wightman: \emph{La théorie quantique locale et la théorie quantique des champs}.
Ann. Inst. H. Poincaré I, 403-420 (1964)

\bibitem[Wi67]{Wi67} A. Wightman : \emph{Introduction to some aspects of the relativistic dynamics of
quantised fields}. In: M. Lévy: \emph{Cargése Lectures in Theoretical Physics: High Energy Interactions
and Field Theory}. Gordon and Breach, pp.171-291 (1967) 

\bibitem[WiHa57]{WiHa57} D. Hall, A. Wightman: \emph{A theorem on invariant analytic functions with 
applications to relativistic Quantum Field Theory}. Danske Videnskabernes Selskab 
Matematisk-Fysiske Meddelelser 31, 1-41 (1957)

\bibitem[WiGa64]{WiGa64} A.S. Wightman, L. G\aa rding : \emph{Fields as operator-valued distributions in 
relativistic quantum theory}. Arkiv f\"or Fysik, 28, 129 - 184 (1964)

\bibitem[YaFe50]{YaFe50} C.N. Yang, D. Feldman: \emph{The S-Matrix in the Heisenberg Representation}. 
Phys. Rev. 79, 972 (1950) 

\bibitem[Yng77]{Yng77} J. Yngvason : \emph{Remarks on the reconstruction theorem for field theories with
indefinite metric}. Rep. Math. Phys. 12, 57 -64 (1977)

\bibitem[Zeid06]{Zeid06} E. Zeidler: \emph{Quantum Field Theory I: Basics in Mathematics and Physics}.
Springer (2006)


\end{thebibliography}
\end{document}